\newtheorem{prop}{Proposition}[section]
\newtheorem{lemma}{Lemma}[section]
\newtheorem{corollary}{Corollary}[section]
\newtheorem{assump}{Assumption}[section]
\newtheorem{remark}{Remark}[section]
\newtheorem*{proposition*}{Proposition}
\theoremstyle{definition}
\newcommand{\tN}{\tilde{N}}
\newcommand{\tiT}{\tilde{t}}
\newcommand{\stateSpace}{\mathcal{S}}
\newcommand{\prob}[1]{\mathbb{#1}}
\newcommand{\e}{\mathbb{E}}
\newcommand{\reals}{\mathbb{R}}
\newcommand{\ind}{\mathbf{1}}
\newcommand{\diff}{\,\mathrm{d}}
\newcommand{\tS}{\tilde{S}}
\newcommand{\XmM}{X^{(m,M)}}
\newcommand{\Rm}{R^{(m)}}
\newcommand{\diag}{\text{diag}}
\renewcommand{\tilde}{\widetilde}
\renewcommand{\hat}{\widehat}
\title[A Unifying Approach for the Pricing of Debt Securities]{A Unifying Approach for the Pricing of Debt Securities} %via Nonhomogeneous Continuous-Time Markov Chain Approximation}
\author{Marie-Claude Vachon$^{\ast\ddag}$ and Anne MacKay$^{\dag\ddag}$}
\address{$\dag$Universit\'e de Sherbooke, Sherbrooke, Qu\'ebec, Canada\\
$\ddag$ Universit\'e du Qu\'ebec \`a Montr\'eal, Montr\'eal, Qu\'ebec, Canada\\
}
\thanks{$^\ast$Corresponding author. Email: vachon.marie-claude.2@courrier.uqam.ca\\}
\date{\today}
\begin{document}
	
	\begin{abstract}
		We propose a unifying framework for the pricing of debt securities under general time-inhomogeneous short-rate diffusion processes. The pricing of bonds, bond options, %(loan commitments and deposit instruments), 
callable/putable bonds, and convertible bonds (CBs) is covered. Using continuous-time Markov chain (CTMC) approximations, we obtain closed-form matrix expressions to approximate the price of bonds and bond options under general one-dimensional short-rate processes. A simple and efficient algorithm is also developed to price callable/putable debt. The availability of a closed-form expression for the price of zero-coupon bonds allows for the perfect fit of the approximated model to the current market term structure of interest rates, regardless of the complexity of the underlying diffusion process selected. 
        We further consider the pricing of CBs under general bi-dimensional time-inhomogeneous diffusion processes to model equity and short-rate dynamics. Credit risk is also incorporated into the model using the approach of \cite{tsiveriotis1998valuing}. 
        Based on a two-layer CTMC method, an efficient algorithm is developed to approximate the price of convertible bonds. When conversion is only allowed at maturity, a closed-form matrix expression is obtained. 
        Numerical experiments show the accuracy and efficiency of the method across a wide range of model parameters and short-rate models. 
  \\
  \\
	\textit{Keywords}: Fixed income, debt securities, convertible bonds, stochastic interest rates, time-inhomogeneous CTMC, CTMC approximations.
	\\
	\textit{JEL Classifications}: C63, G12, G13
		
	\end{abstract}

	\maketitle	
%%%%%%%%%%%%%%%%%%%%%%%%%%%%%%%%%%%%%%%%%%%%%%%%%%%%%%%%%%%%%%%%%%%%%%%%%%%%%%%%%%%%%%%%%%%%%%%%%%%%%%%%%%%%%%%%%%%%%
%%%%%%%%%%%%%%%%%%%%%%%%%%%%%%%%%%%%%%%%%%%%%%%%%%%%%%%%%%%%%%%%%%%%%%%%%%%%%%%%%%%%%%%%%%%%%%%%%%%%%%%%%%%%%%%%%%%%%
%%%%%%%%%%%%%%%%%%%%%%%%%%%%%%%%%%%%%%%%%%%%%%%%%%%%%%%%%%%%%%%%%%%%%%%%%%%%%%%%%%%%%%%%%%%%%%%%%%%%%%%%%%%%%%%%%%%%%
\newpage
%\tableofcontents
 %%%%%%%%%%%%%%%%%%%%%%%%%%%%%%%%%%%%%%%%%%%%%%%%%%%%%%%%%%%%%%%%%%%%%%%%%%%%%%%%%%%%%%%%%%%%%%%%%%%%%%%%%%%%%%%%%%%
%%%%%%%%%%%%%%%%%%%%%%%%%%%%%%%%%%%%%%%%%%%%%%%%%%%%%%%%%%%%%%%%%%%%%%%%%%%%%%%%%%%%%%%%%%%%%%%%%%%%%%%%%%%%%%%%%%%%%
%%%%%%%%%%%%%%%%%%%%%%%%%%%%%%%%%%%%%%%%%%%%%%%%%%%%%%%%%%%%%%%%%%%%%%%%%%%%%%%%%%%%%%%%%%%%%%%%%%%%%%%%%%%%%%%%%%%%%
\section{Introduction}
 
This paper proposes a unifying framework based on continuous-time Markov chain (CTMC) approximations to price debt securities under general time-inhomogenous short-rate models. 
 Over the last few years, CTMC methods have garnered attention in option pricing literature, see for instance \cite{cui2018general}, \cite{ding2021markov}, and \cite{kirkby2022hybrid}, among others. In particular, \cite{cui2018general} developed a two-layer CTMC technique to price European, barrier, Bermudian, Asian, and occupation time derivatives under general stochastic volatility models, while \cite{ding2021markov} discussed the extension of the method to time-inhomogeneous processes for the pricing of European and barrier options. Recently, \cite{kirkby2022hybrid} approximated time-homogeneous bi-dimensional diffusion processes to model short-rates and equity for the pricing of hybrid securities such as equity swap and cap via CTMC approximation. The framework outlined below is an extension to the work of \cite{cui2018general} to time-inhomogeneous processes for the pricing of debt securities such as bonds, bond options (loan commitments and deposit instruments), callable/putable bonds, and convertible bonds (CBs).  
 
 The advantage of CTMC approximations over other numerical techniques resides in their ability to easily adapt to various diffusion processes (homogeneous as well as inhomogeneous processes), and their extension to higher dimension models is also relatively straightforward \cite{cui2018general} (for two-dimension), \cite{kirkby2020general} (for higher dimension). More importantly, they generally allow for an explicit formulation of expectations (resp. conditional expectations), see, for instance, \cite{cui2019continuous}, which facilitate the pricing of European (resp. American)-type derivatives. 
 In particular, the method allows for a closed-form matrix expression for the price of zero-coupon bonds regardless of the complexity of the short-rate dynamics selected, simplifying the calibration of the approximated model to the current market-term structure for a wide range of short-rate models. 
Calibration to the current market term structure is essential for practitioners since small deviations in the current short rates can result in significant differences in the value of the derivatives, see \cite{brigoMercurio2006}. Moreover, the method generally exhibits a second-order convergence rate; see, for instance, \cite{li2018error} and \cite{zhang2019analysis} for a one-dimensional setting, and \cite{ma2022convergence} for a two-dimensional setting. In this paper, we develop an easy and efficient algorithm to calibrate the approximated model to the current market term structure of interest rates under general one-dimensional time-inhomogeneous short-rate processes. A closed-form matrix expression is also obtained for the price of bond options (with coupons). 

Debt securities often include embedded options such as call and put options, also known as redeemable and retractable bonds, respectively. A callable bond grants the issuer the right to pay back the bond at a predetermined price in the future (the call or the strike price). This type of provision protects the issuer and reduces the value of the bond. 
Another type of common embedded option in a bond is a put option. A putable bond grants the bondholder the right to sell back the bond to the issuer at a predetermined price (the put or the strike price). These types of options can usually be exercised at any time during a given exercise period (exercise window), which makes them similar to American options with time-dependent strikes. 
A bond can have both a call and put options embedded simultaneously.
Because of the American style of the embedded options, the valuation of callable/putable debt does not admit a closed-form expression; thus, numerical procedures are necessary to price them.  

Classical numerical techniques\footnote{In this paper, we do not consider advance notices, that is, exercise decisions prior to exercise benefits, which complexify the problem significantly, see for instance \cite{buttler1996pricing}, \cite{dHalluin2001numerical}, \cite{ben2007dynamic} and \cite{ding2012pricing} for numerical techniques in that particular context.} for pricing interest rate derivatives include tree 
methods, numerical solutions to partial differential equations (PDEs),  
and Monte Carlo simulation, see for instance \cite{brigoMercurio2006}, Section 3.11. 
Trees are particularly interesting for valuing American-type derivatives because the continuation value can be calculated explicitly at each node of the tree. 
However, building trees for time-inhomogeneous mean reverting diffusion processes, such as those used for short-rate dynamics, is not always straightforward, and extension to higher dimensions may be difficult. The construction of the tree depends on the diffusion model selected (see among others, \cite{HoLee1986term}, \cite{blackDermanToy1990one}, \cite{blackKarasinski1991bond}, \cite{hull1994numerical}, \cite{hull1996using}, \cite{mercurio2001family} and \cite{brigoMercurio2006}, Appendix F, for a general formulation), and extra care is required for the transition probabilities to stay positive. Fitting the approximated model to the current market term structure can also be tricky \cite{hull1994numerical}, \cite{hull1996using}. Trees can also make options with path-dependent payoffs, such as Asian options, difficult to value. PDE approaches encounter similar challenges to trees in terms of flexibility in the modelization of the underlying diffusion process. They also make path-dependent payoff valuation challenging, and techniques are generally less intuitive than trees; see, for instance, \cite{duffy2013finite} for a general review of the approach for derivatives pricing.
In contrast, simulation methods can be implemented for various diffusion models and adapt easily to higher dimensions; however, this type of procedure is usually less efficient computationally than those based on trees or PDEs because the continuation value cannot be calculated explicitly and needs to be approximated. 
Different techniques are proposed in the literature to price American-type derivatives using simulation methods; see, for instance, \cite{fu2001pricing} for a comparison of these approaches, and \cite{glasserman2003monte}, Chapter 8, for a review.

In this paper, a simple and efficient algorithm for pricing callable/putable debt under general one-dimensional time-inhomogeneous short-rate processes is developed. Our methodology overcomes several of the drawbacks of other classical methods.
In particular, it is intuitive (similar to tree methods) and adapts to a wide range of diffusion processes (homogeneous as well as inhomogeneous processes). Moreover, because conditional expectations have closed-form matrix expressions, the continuation value can be calculated explicitly, making the approximation extremely efficient and accurate for the pricing of American-type derivatives. 
Using the methodology of \cite{cui2018general}, described in Section \ref{subsectCTMC_S} of this paper, the extension of these procedures to two-factor short rate models, see for instance \cite{brigoMercurio2006}, Chapter 4, is also straightforward.

Next, the pricing of CBs is considered. CBs are hybrid securities that possess features of both debt and equity. They are similar to bonds except that the investor has the right to convert the bond for a predetermined number of shares, known as the conversion ratio, of the issuing company during a certain exercise window prior to maturity. At maturity, if conversion is allowed and the bond has not been converted to shares, the holder has the right to convert the bond or receive its face value. In practice, additional features such as call and put options are also generally embedded in CBs, so numerical procedures are required to value these securities. 

Over the last 40 years, CB pricing has been studied extensively in the literature. Under the standard Black--Scholes setting, 
\cite{ingersoll1977contingent} proposes a structural approach under which the firm value is the underlying state variable.
Then follows the work of \cite{brennan1977convertible}, in which finite difference methods are used to solve a PDE. %whereas \cite{brennan1977savings} include stochastic interest rates. 
The structural approach has multiple drawbacks, \cite{batten2014convertible}. In particular, since the firm value is not a tradable asset, the calibration of the model may be challenging. Thus, \cite{mcconnell1986lyon} propose a reduced-form approach under which the issuing company stock price is the underlying state variable. Working in the Black--Scholes setting with a constant risk-free rate and volatility, \cite{mcconnell1986lyon} compensate for the credit risk by adding a constant (the credit spread) over the risk-free rate to discount the cash flows. Since then, multiple authors have incorporated credit risk into the valuation framework adequately. \cite{tsiveriotis1998valuing} split the bond value into two components: equity and debt. The equity part (when the debt is converted to stock) is discounted at a risk-free rate, whereas the debt part is discounted at a risky rate, where the risky rate can be deduced from the market-observed credit spreads. This approach to model credit risk is still widely used among practitioners for its simplicity and ability to incorporate the main feature of CBs with limited market information, \cite{gushchin2008pricing}. 
Following the approach of \cite{jarrow1995pricing} to model the credit risk, \cite{hung2002pricing} and \cite{chambers2007tree} use a binomial tree method and incorporate stochastic risk-free rates in the valuation model, whereas \cite{ayache2003valuation} and \cite{milanov2012binomial} incorporate default risk by modeling the stock price by a jump-diffusion process in a constant risk-free rate environment.

Diverse numerical methods have been proposed to value convertible debt, ranging from the classical tree methods (\cite{hung2002pricing},\cite{chambers2007tree}, \cite{milanov2012binomial}, among others),
 to finite-difference and finite element approaches (\cite{tsiveriotis1998valuing}, \cite{ayache2003valuation}, \cite{barone2003two}, among others) and simulation (\cite{ammann2008simulation}, \cite{batten2018pricing}).
Recently, \cite{lu2017simple} and \cite{ma2020valuation} developed a two-factor willow-tree approach to price CBs under stochastic interest rates and used the approach of \cite{jarrow1995pricing} as in \cite{hung2002pricing} and \cite{chambers2007tree} to include credit risk. On the other hand, \cite{lin2020numerically} propose a predictor-corrector scheme to solve a PDE under stochastic volatility or interest rate models, whereas \cite{lin2022pricing} use an integral approach under the Black--Scholes setting. Both \cite{lin2020numerically} and \cite{lin2022pricing} ignored credit risk in their valuation framework. 
This paper considers the CB pricing problem under general bi-dimensional time-inhomogeneous diffusion processes, where equity and risk-free rates are the two risk factors. 
 Default/credit risk is also included in the valuation model using the approach of \cite{tsiveriotis1998valuing}. An efficient algorithm to approximate the value of CBs using a two-layer continuous-time Markov chain approximation 
 is developed. When conversion is only permitted at maturity (European-style or European CBs), a closed-form matrix expression is obtained. Numerical experiments reveal that the method is highly efficient and accurate. 
 The advantage of the CTMC approximation over other classical methods resides in its ability to adapt to a wide range of time-inhomogeneous bi-dimensional models while preserving the simplicity of one-dimensional valuation models. The method is intuitive (similar to trees), and it is worth reiterating that it allows for the perfect fit of the current market term structure in a straightforward manner, regardless of the short-rate diffusion process selected.   

The CB pricing problem is also studied from a theoretical perspective.
In particular, when there is no credit risk, no dividends and other features such as call and put options are ignored, we show that early conversion is sub-optimal such that the problem is reduced to the pricing of a European-style CB. 
This result also holds for coupon-bearing convertible debt. On the other hand, when credit risk is considered, we show that the value of American-style CBs\footnote{The term American-style CB (or just CB) is used to refer to a bond under which the conversion option can be exercised at any time prior to maturity.} is bounded from below and above by those of European--style CBs with and without credit risk, respectively. %In that particular context, numerical experiments also reveal that the value of early conversion is negligible across a large range of model parameters. To the author's knowledge, this is the first time such results have been observed in the CB pricing literature. 

Finally, numerical experiments demonstrate the high level of accuracy of CTMC methods across a large range of model parameters and short-rate models. 
The efficiency and numerical convergence of the CTMC methodology in pricing debt securities are also studied empirically, and theoretical convergence is discussed.  

The main contributions of this paper are as follows:
\vspace{-3pt}
\begin{enumerate}
    \item 
    This paper extends the results of \cite{cui2018general} to a time-inhomogeneous framework for the pricing of debt securities, such as callable, putable, and convertible bonds.   
    \item A closed-form matrix expression is obtained to approximate the price of bonds under general time-inhomogeneous short-rate processes. 
    The availability of a closed-form expression to approximate the price of zero-coupon bonds, regardless of the complexity of the short-rate dynamics selected, makes the method attractive for practitioners since it allows to perfectly calibrate the approximated model to the current market term structure. 
    \item A closed-form matrix expression is also obtained for the price of bond (with or without coupons) options under general time-inhomogeneous short-rate processes, providing an alternative approximation formula to \cite{kirkby2022hybrid}, Proposition 12, which involved an integral to be solved numerically. 
    \item Efficient procedures are developed to approximate the price of convertible debt under general bi-dimensional time-inhomogeneous diffusion processes, and a closed-form matrix expression is obtained for the price of European-style CBs. 
    \item The pricing of convertible bonds is also considered from a theoretical perspective. When there is no credit risk and no dividend yield, we show that early conversion is sub-optimal. When credit risk is considered, lower and upper bounds are obtained. %In that particular case, we also observe numerically that the value of early conversion is negligible.
    %To the author's knowledge, it is the first time such an observation has been raised in the CB pricing literature. 
\end{enumerate} The remainder of the paper is organized as follows. In Section \ref{sectFinancialSett}, we introduce the market model. A brief introduction to CTMC approximation methods for two-dimensional time-inhomogeneous diffusion processes is provided in Section \ref{sectNonHomoCTMC}. In Section \ref{sectionApplicationDebtSecurities}, CTMC methods are used to approximate the price of bonds, bond options, and debt securities such as callable and putable bonds under general one-dimensional time-inhomogeneous short-rate processes. Section \ref{sectionConvBond} discusses the application of CTMC approximation to convertible debt valuation under bi-dimensional time-inhomogeneous diffusion processes. Section \ref{sectNumResults} provides numerical results and shows the high efficiency of CTMC methods over other common numerical techniques. Section \ref{sectConclu} concludes the paper. 
    %%%%%%%%%%%%%%%%%%%%%%%%%%%%%%%%%%%%%%%%%%%%%%%%%%%%%%%%%%%%%%%%%%%%%%%%%%%%%%%%%%%%%%%%%%%%%%%%%%%%%%%%%%%%%%%%%%%%%%%%%%%%%%%%%%%%%%%%%%%%%%%%%%%%%%%%%%%%%%%%%%%%%%%%%%%%%%%%%%%%%%%%%%%%%%%%%%%%%%%%%%%%%%%%%%%%%%%%%%%%%%%%%%%%%%%%%%%
    %%%%%%%%%%%%%%%%%%%%%%%%%%%%%%%%%%%%%%%%%%%%%%%%%%%%%%%%%%%%%%%%%%%%%%%%%%%%%%%%%%%%%%%%%%%%%%%%%%%%%%%%%%%%%%%%%%%%%%
	\section{Financial Setting}\label{sectFinancialSett}
    %%%%%%%%%%%%%%%%%%%%%%%%%%%%%%%%%%%%%%%%%%%%%%%%%%%%%%%%%%%%%%%%%%%%%%%%%%%%%%%%%%%%%%%%%%%%%%%%%%%%%%%%%%%%%%%%%%%%%%%
	\subsection{Market Model}\label{sectionMarketModel}
    %%%%%%%%%%%%%%%%%%%%%%%%%%%%%%%%%%%%%%%%%%%%%%%%%%%%%%%%%%%%%%%%%%%%%%%%%%%%%%%%%%%%%%%%%%%%%%%%%%%%%%%%%%%%%%%%%%%%%%%
	We consider a filtered probability space $(\Omega,\,\mathcal{F},\,\mathbb{F},\prob{Q})$, where $\mathbb F$ denotes a complete and right-continuous filtration and where
	$\prob{Q}$ denotes the pricing measure for our market. We consider a stochastic short-rate process $R$ correlated with the price of a risky asset (or stock) $S$, which can be described by a two-dimensional process $(S, R)=\{(S_t, R_t)\}_{t\geq 0}$ satisfying 
    \begin{equation}
        \begin{aligned}
         \diff S_t &=(R_t-q_t) S_t\diff t+\sigma_S(R_t)S_t \diff W^{(1)}_t,\\
	   \diff R_t &=\mu_R(t,R_t)\diff t+\sigma_R(R_t) \diff W^{(2)}_t,\label{eqEDS_S}
         \end{aligned}
	\end{equation}
    with $S_0>0$ and $R_0\in\mathcal{S}_R$, where $\mathcal{S}_R$ denotes the state-space of $R$ (generally $\reals$ or $\reals_+$\footnote{Throughout this paper, $\reals_+^\star$ denotes the strictly positive real numbers, %that is, $\reals_+^\star=\{x\in\reals | x>0\}$, whereas 
    and $\reals_+$ denotes the non-negative real numbers.} %, that is, $\reals_+=\{x\in\reals | x\geq0\}$.}
    depending on the model, see Tables \ref{tblmodelsHomo} and \ref{tblmodelsNonHomo} for details), $q:\reals_+\rightarrow [0,1]$ is a continuous function representing a time-deterministic dividend yield, and $W=\{(W^{(1)}_t, W^{(2)}_t)\}_{t\geq0}$ is a two-dimensional correlated Brownian motion with cross-variation $[W^{(1)},W^{(2)}]_t=\rho t$, with $\rho\in[-1,1]$. 
    We assume that $\mu_R: \reals_+\times \mathcal{S}_R\rightarrow\reals$ is continuous and that $\sigma_R, \sigma_S:\mathcal{S}_R\rightarrow \reals_+$ are continuously differentiable with $\sigma_R(\cdot),\sigma_S(\cdot)>0$ on $\mathcal{S}_R$. 
	Further, we suppose that $\mu_R$, $\sigma_R$ and $\sigma_S$ are defined such that \eqref{eqEDS_S} has a unique-in-law weak solution. 
 
    The function $\sigma_S$ is often set to a constant $\sigma_S(\cdot)=\sigma>0$ ( \cite{lu2017simple}, \cite{ma2020valuation}, \cite{kirkby2022hybrid}), such that the stock price follows a geometric Brownian motion with stochastic interest rate. A list of common short-rate models is provided in Tables \ref{tblmodelsHomo}, \ref{tblmodelsNonHomo}, and \ref{tblmodelsNonHomoBrigo}.
	\begin{table}[h]
		\centering
        \resizebox{\columnwidth}{!}{
		\begin{tabular}{c c c}
        \hline
		\multicolumn{3}{c}{\textbf{Time-Homogeneous Models}}\\
			\hline
			\textbf{Model} &\textbf{Dynamics} & \textbf{Parameters}\\
		\hline
		\hline
       Vasicek, \cite{vasicek1977equilibrium} & $\diff R_t=\kappa(\theta -R_t) \diff t+\sigma \diff W_t$ & $\kappa,\theta,\sigma>0$, $R_0\in\reals$\\
       & & \\
       Cox--Ingersoll--Ross (CIR),& \multirow{2}{*}{ $\diff R_t = \kappa(\theta- R_t)\diff t +\sigma \sqrt{R_t}\diff W_t$} & $\kappa,\theta,\sigma, R_0>0$,\\
       \cite{CIR1985} & & with $2\kappa\theta>\sigma^2$\\
        & &\\
        Dothan, \cite{dothan1978term} &  $\diff R_t=\kappa R_t \diff t+\sigma R_t \diff W_t$ & $\sigma$, $\kappa \in \reals$, $R_0>0$\\
         & & \\
         Exponential Vasicek (EV) & \multirow{2}{*}{ $\diff R_t = R_t(\eta- \alpha \ln R_t)\diff t +\sigma R_t\diff W_t$} & \multirow{2}{*}{$\eta,\alpha,\sigma, R_0>0$}\\
         \cite{brigoMercurio2006}, Section 3.2.5 & &\\
         \hline
       \end{tabular}}
		\caption{Example of time-homogeneous short-rate models}
		\label{tblmodelsHomo}
    	\end{table}
     %%%%%%%%%%%%%%%%%%
	\begin{table}[h!]
		\centering
  \resizebox{\columnwidth}{!}{
		\begin{tabular}{c c c}
        \hline
		\multicolumn{3}{c}{\textbf{Time-Inhomogeneous Models}}\\
			\hline
			\textbf{Model} &\textbf{Dynamics} & \textbf{Parameters}\\
		\hline
		\hline
        Ho--Lee (HL),& $\diff R_t=\theta(t) \diff t+\sigma \diff W_t$& $\sigma>0$, $R_0\in \reals$\\
        \cite{HoLee1986term}& & \\
        &&\\
        Black--Derman--Toy (BDT), &\multirow{2}{*}{ $\diff R_t=\theta(t) R_t\diff t+\sigma R_t \diff W_t$} & \multirow{2}{*}{$\sigma>0$, $R_0>0$}\\
        \cite{blackDermanToy1990one} & & \\
        & & \\
       Hull--White (HW), & \multirow{2}{*}{$\diff R_t=\left(\theta(t) -\kappa R_t \right)\diff t+\sigma \diff W_t$} & $\kappa,\sigma>0$, and\\
        \cite{hull1990pricing}& & $R_0\in\reals$ \\
        & & \\
        Black--Karasinski (BK), & \multirow{2}{*}{$\diff R_t=R_t\left(\theta(t) -\kappa \ln R_t\right) \diff t+\sigma R_t\diff W_t$} & \multirow{2}{*}{$\kappa,\sigma, R_0>0$}\\
        \cite{blackKarasinski1991bond}& & \\
        & & \\
        Mercurio and Moraleda (MM), & $\diff R_t=R_t\left[\theta(t)-\left(\lambda-\frac{\gamma}{1+\gamma t}\ln R_t\right)\right]\diff t+\sigma R_t\diff W_t$ & $\lambda,\gamma\in\reals_+,\text{ and}$\\
        \cite{mercurio2001family} & &$\sigma,R_0>0$\\
        & & \\
         Extended CIR (CIR+), & $ \multirow{2}{*}{$\diff R_t=\theta(t) -\kappa R_t \diff t+\sigma\sqrt{R_t} \diff W_t$}$
            
      & $\kappa,\sigma, R_0>0$\tablefootnote{\label{tblFootnote}The short rates are positive under some conditions on function $\theta$.} \\
        \cite{hull1990pricing} &  & \\
        & & \\
        \hline
       \end{tabular}}
		\caption{Example of time-inhomogeneous short-rate models}
		\label{tblmodelsNonHomo}
\end{table}
%%%%%%%%
        \begin{table}[h!]
		\centering
  \resizebox{\columnwidth}{!}{
		\begin{tabular}{c c c}
        \hline
		\multicolumn{3}{c}{\textbf{Time-Inhomogeneous Models}}\\
			\hline
			\textbf{Model} &\textbf{Dynamics} & \textbf{Parameters}\\
		\hline
		\hline
         Extended Vasicek, (EV+) & $\diff Y_t = \kappa(\alpha-Y_t)\diff t +\sigma \diff W_t$    
      & $\kappa,\alpha,\sigma>0$,  \\
        \cite{brigoMercurio2006}, Section 3.8.4& $R_t=Y_t+\theta(t)$  & $Y_0,R_0\in\reals$\\%with $\theta\in C^1$ \\
        &&\\
          Extended CIR, (CIR++) & $\diff Y_t=\kappa(\alpha-Y_t)\diff t+\sigma\sqrt{Y_t} \diff W_t$    
      & $\kappa,\alpha,\sigma>0$, \\
        \cite{brigoMercurio2006}, Section 3.9& $R_t=Y_t+\theta(t)$  & $Y_0,R_0>0$\footref{tblFootnote}\\%with $\theta\in C^1$ \\
        &&\\
         Extended Exponential Vasicek, (EEV+) & $\diff Y_t = Y_t(\eta- \alpha \ln Y_t)\diff t +\sigma Y_t\diff W_t$    
      & $\eta,\alpha,\sigma\in \reals$, \\
         \cite{brigoMercurio2006}, Section 3.8& $R_t=Y_t+\theta(t)$  & $Y_0,R_0>0$\footref{tblFootnote}\\ 
        \hline
       \end{tabular}}
		\caption{Extended time-homogeneous models of \cite{brigoMercurio2006}, Section 3.8}
		\label{tblmodelsNonHomoBrigo}
    	\end{table}
 
The time-homogeneous models listed in Table \ref{tblmodelsHomo} are popular because of their analytical tractability. However, they are less used by practitioners because they cannot adequately replicate the term structure of interest rates. Indeed, to be able to capture the discount curve appropriately, models need to have at least one time-dependent parameter. This important feature of interest rate dynamics gives rise to the time-inhomogeneous models; see Table \ref{tblmodelsNonHomo} for examples. In these models, the yield curve is provided exogenously (as an input to the model).
The extended models of \cite{brigoMercurio2006}, listed in Table \ref{tblmodelsNonHomoBrigo}, allow fitting of the initial term structure and reproduce important stylized facts while preserving the analytical tractability of the model through the auxiliary homogeneous process $Y$.

In Tables \ref{tblmodelsNonHomo} and \ref{tblmodelsNonHomoBrigo}, the initial term structure of interest rates is captured through the time deterministic function $\theta$. When using CTMC approximation, an easy recursive procedure is used to find the function $\theta$ that makes the approximated models fit the initial discount curve perfectly. This will be discussed further in Section \ref{subsecFitTermStruc}.
%%%%%%%%%%%%%%%%%%%%%%%%%%%%%%%%%%%%%%%%%%%%%%%%%%%%%%%%%%%%%%%%%%%%%%%%%%%%%%%%%%%%%%%%%%%%%%%%%%%%%%%%%%%%%%%%%%%%%%%
%%%%%%%%%%%%%%%%%%%%%%%%%%%%%%%%%%%%%%%%%%%%%%%%%%%%%%%%%%%%%%%%%%%%%%%%%%%%%%%%%%%%%%%%%%%%%%%%%%%%%%%%%%%%%%%%%%%%%%%
%%%%%%%%%%%%%%%%%%%%%%%%%%%%%%%%%%%%%%%%%%%%%%%%%%%%%%%%%%%%%%%%%%%%%%%%%%%%%%%%%%%%%%%%%%%%%%%%%%%%%%%%%%%%%%%%%%%%%%%
\section{Continuous-Time Markov Chain Approximation (CTMC) of Nonhomogeneous Processes}\label{sectNonHomoCTMC}
%%%%%%%%%%%%%%%%%%%%%%%%%%%%%%%%%%%%%%%%%%%%%%%%%%%%%%%%%%%%%%%%%%%%%%%%%%%%%%%%%%%%%%%%%%%%%%%%%%%%%%%%%%%%%%%%%%%%%%%
The CTMC framework outlined in this section was first proposed by \cite{cui2018general} for the pricing of exotic equity options under general stochastic local volatility models. Subsequently, the technique has been extended to time-inhomogeneous processes by \cite{ding2021markov}.
%%%%%%%%%%%%%%%%%%%%%%%%%%%%%%%%%%%%%%%%%%%%%%%%%%%%%%%%%%%%%%%%%%%%%%%%%%%%%%%%%%%%%%%%%%%%%%%%%%%%%%%%%%%%%%%%%%%%%
\subsection{Approximation of the Short-Rate Process $\{R_t\}_{t\geq0}$}	\label{subsectCTMC_R}
%%%%%%%%%%%%%%%%%%%%%%%%%%%%%%%%%%%%%%%%%%%%%%%%%%%%%%%%%%%%%%%%%%%%%%%%%%%%%%%%%%%%%%%%%%%%%%%%%%%%%%%%%%%%%%%%%%%%%
\begin{sloppypar}
 The objective is to construct a continuous-time Markov chain $\{R_t^{(m)}\}_{t\geq 0}$ taking values on a finite state-space ${\stateSpace_R^{(m)}:=\{r_1,r_2,\dots\, r_{m}\}}$, with $r_i\in\mathcal{S}_R$, $m\in\mathbb{N}$ and a time-dependent generator ${\tilde{\mathbf{Q}}^{(m)}(t)=[\tilde{q}_{ij}(t)]_{m\times m}}$ that converges weakly to $\{R_t\}_{t\geq 0}$ as $N,m\rightarrow\infty$. To denote the weak convergence of $R^{(m)}$ to $R$, we write $R^{(m)}\Rightarrow R$.   
\end{sloppypar}
 The first step is to approximate the state-space of the short-rate process. Several approaches are available in the literature to construct the finite state-space $\stateSpace_R^{(m)}$ of $R^{(m)}$, from simple uniform to non-uniform grids (see for instance \cite{tavellapricing}, \cite{mijatovic2013continuously}, and \cite{lo2014improved} for examples of non-uniform grids). The specific grid choice for this work is discussed in greater detail in Section \ref{sectNumResults}. 

   The next step is to construct the time-dependent generator $\tilde{\mathbf{Q}}^{(m)}(t)$. 
   For analytical tractability, we suppose further that $\tilde{\mathbf{Q}}^{(m)}(t)$ is piecewise constant in time\footnote{This assumption ensures that the transition probability matrix has a simple expression in terms of its generator, see \cite{rindos1995exact} Equation (8.4) for the general formulation.}, that is, 
\begin{equation}
	\tilde{\mathbf{Q}}^{(m)}(t)=\sum_{n=1}^{N}\mathbf{Q}_n^{(m)}\ind_{[t_{n-1},t_n)}(t). \label{eqQform} %+\mathbf{Q}_n\ind_{[t_{N-1},t_N]}(t),
\end{equation}
for some time partition ${0=t_0<t_1<\ldots<t_N=T}$ of $[0,T]$, with $T>0$, $t_n:=n\Delta_N$ and $\Delta_N=T/N$, and where $\mathbf{Q}_n^{(m)}=[q_{ij}^{(n)}]_{m\times m}$ denotes the generator on the time interval $[t_{n-1},t_n)$, whose elements $q_{ij}^{(n)}$, $ 1\leq i,j\leq m$, satisfy $q_{ij}^{(n)}\geq 0$ when $i\neq j$, and $q_{ij}^{(n)}\leq 0$ when $i=j$, $n=1,2,\ldots N$. Under this assumption, the transition probability matrix $\mathbf{P}(s,t)$ from time $s$ to $t$ has the following matrix representation
\begin{equation}
	\mathbf{P}(s,t)=e^{\mathbf{Q}_{i+1}^{(m)}(t_{i+1}-s)}e^{\mathbf{Q}_{i+2}^{(m)}(t_{i+2}-t_{i+1})}\cdots e^{\mathbf{Q}_{j+1}^{(m)}(t-t_j)},\quad t_{i}\leq s<t_{i+1},\, t_{j}\leq t<t_{j+1},\, s< t,\label{eqTransProb}
\end{equation}
where
	\begin{equation}
	\exp\{\mathbf{Q}_n^{(m)} t\}=\sum_{k=0}^\infty \frac{(\mathbf{Q}_n^{(m)} t)^k}{k!}, \quad  0 \leq t\leq T,\label{EqP}
	\end{equation}
	see \cite{rindos1995exact}, p.123--124 for details.
 
Following the work of \cite{lo2014improved} and \cite{ding2021markov}, the generator $\mathbf{Q}_n^{(m)}=[q_{ij}^{(n)}]_{m\times m}$ on the time interval $[t_{n-1},t_n)$ is constructed as follows
	\begin{equation}\label{eq:generatorR}
		q_{ij}^{(n)}=\begin{cases}
			\frac{\sigma_R^2(r_i)-\delta_i\mu_R (t_{n-1}, r_i)}{\delta_{i-1}(\delta_{i-1}+\delta_i)}, & \text{$j=i-1$},\\
			-q_{i,i-1}^{(n)}-q_{i,i+1}^{(n)}, & \text{$j=i$},\\
			\frac{\sigma_R^2(r_i)+\delta_{i-1}\mu_R(t_{n-1}, r_i)}{\delta_{i}(\delta_{i-1}+\delta_i)}, & \text{$j=i+1$},\\
			0,&\text{$j\neq i,\,i-1,\,i+1$},
		\end{cases}
	\end{equation}
 \begin{sloppypar}
   for ${n\in{1,2,\ldots,N}}$, ${2\leq i\leq m-1}$, ${1\leq j\leq m}$, and where ${\delta_i=r_{i+1}-r_i}$, ${i=1,2,\ldots, m-1}$. On the borders, we set ${q_{12}^{(n)}=\frac{|\mu_R(t_{n-1}, r_1)|}{\delta_1}}$, ${q_{11}^{(n)}=-q_{12}^{(n)}}$, ${q_{m,m-1}^{(n)}=\frac{|\mu_R(t_{n-1}, r_m)|}{\delta_{m-1}}}$, ${q_{m,m}^{(n)}=-q_{m,m-1}^{(n)}}$, and $0$ elsewhere. At endpoints, other schemes could have also been employed, see \cite{chourdakis2004non} and \cite{mijatovic2013continuously} for examples. However, we note that these schemes are equivalent numerically.
   %At endpoints, other schemes could have also been employed. In \cite{chourdakis2004non}, the first and last states are made reflective, whereas in \cite{mijatovic2013continuously}, the border states are assumed to be absorbent. We observed numerically that all of these schemes are equivalent. 
 \end{sloppypar}
\begin{remark}[Weak convergence of the approximation]\label{remakWeakConvergenceR}
		Such a construction ensures that the process $R^{(m)}$ converges weakly to $R$ as $N,m\rightarrow\infty$, 
        see \cite{mijatovic2009continuously}, Section 4 and Corollary 2 for details, and \cite{ding2021markov}, Section 2.1.
  \end{remark}
%%%%%%%%%%%%%%%%%%%%%%%%%%%%%%%%%%%%%%%%%%%%%%%%%%%%%%%%%%%%%%%%%%%%%%%%%%%%%%%%%%%%%%%%%%%%%%%%%%%%%%%%%%%%%%%%%%%%%%%
\subsection{Approximation of the Stock Process $\{S_t\}_{t\geq0}$}\label{subsectCTMC_S}
%%%%%%%%%%%%%%%%%%%%%%%%%%%%%%%%%%%%%%%%%%%%%%%%%%%%%%%%%%%%%%%%%%%%%%%%%%%%%%%%%%%%%%%%%%%%%%%%%%%%%%%%%%%%%%%%%%%%%%%
The idea behind the CTMC approximation of two-dimensional processes is similar to the approximation of one-dimensional processes. The first step is to replace the CTMC approximation of the short-rate process $R^{(m)}$ into an auxiliary process \eqref{eqEDS_X2} obtained from a transformation of \eqref{eqEDS_S}. This results in a regime-switching diffusion process. The regime-switching diffusion process is then approximated by a regime-switching CTMC. The final step and the key to the approximation consists of mapping the two-dimensional regime-switching CTMC onto a one-dimensional CTMC defined on an enlarged state-space. Thus, working with an approximation of two-dimensional processes is similar to working with an approximation of one-dimensional processes.

The next lemma allows the removal of the correlation between the two Brownian motions in \eqref{eqEDS_S}.
\begin{lemma}[\cite{cui2018general}, Lemma 1]\label{lemmaStoX}
\begin{sloppypar}
        Let $(S,R)$ be defined as in \eqref{eqEDS_S}. Define $f(r):=\int_{\cdot}^r \frac{\sigma_S(u)}{\sigma_R(u)} \diff u$, and $X_t:=\ln (S_t)-\rho f(R_t)$ for $t\geq 0$. Then, $X$ satisfies
    \begin{equation}
    \begin{split}\label{eqEDS_X2}
        \diff X_t & =\mu_X(t,R_t)\diff t+\sigma_X(R_t)\diff W^\star_t\\
        \diff R_t & = \mu_R(t,R_t)\diff t+\sigma_R(R_t)\diff W^{(2)}_t,
    \end{split}
    \end{equation}
    where $W^\star:=\frac{W^{(1)}_t-\rho W^{(2)}_t}{\sqrt{1-\rho^2}}$ denotes a standard Brownian motion independent of $W^{(2)}$,
    ${\sigma_X(r):=\sigma_S(r)\sqrt{1-\rho^2}}$ and ${\mu_X(t,r):=r-q_t-\frac{\sigma_S^2(r)}{2}-\rho\psi(t,r)}$, and
    $$\psi(t,r):=\mu_R(t,r)\frac{\sigma_S(r)}{\sigma_R(r)}+\frac{1}{2}\left[\sigma_S'(r)\sigma_R(r)-\sigma_S(r)\sigma_R'(r)\right],$$
    for $r\in\mathcal{S}_R$.
\end{sloppypar}
\end{lemma}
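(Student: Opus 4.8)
The plan is to apply Itô's formula separately to the two terms making up $X_t=\ln S_t-\rho f(R_t)$ and then recombine, exactly along the lines of the proof of Lemma~1 in \cite{cui2018general}. First I would record the regularity needed: since $\sigma_S,\sigma_R$ are continuously differentiable with $\sigma_R>0$ on $\mathcal{S}_R$, the function $f$, characterised up to an additive constant by $f'(r)=\sigma_S(r)/\sigma_R(r)$, is of class $C^2$ with $f''(r)=\bigl(\sigma_S'(r)\sigma_R(r)-\sigma_S(r)\sigma_R'(r)\bigr)/\sigma_R^2(r)$; and since $S$ solves a linear SDE, $S_t>0$ for all $t\ge 0$, so $\ln S_t$ is well defined. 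Both applications of Itô's formula below are therefore licit for any weak solution of \eqref{eqEDS_S}.

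Next I would carry out the two computations. Itô's formula applied to $\ln(\cdot)$ gives
$$\diff (\ln S_t)=\bigl(R_t-q_t-\tfrac12\sigma_S^2(R_t)\bigr)\diff t+\sigma_S(R_t)\diff W^{(1)}_t.$$
Using $f'(r)\sigma_R(r)=\sigma_S(r)$ and $\tfrac12 f''(r)\sigma_R^2(r)=\tfrac12\bigl(\sigma_S'(r)\sigma_R(r)-\sigma_S(r)\sigma_R'(r)\bigr)$, Itô's formula applied to $f(R_t)$ gives
$$\diff (f(R_t))=\psi(t,R_t)\diff t+\sigma_S(R_t)\diff W^{(2)}_t,$$
with $\psi$ as in the statement. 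Subtracting $\rho$ times the second display from the first, the drift collapses to $R_t-q_t-\tfrac12\sigma_S^2(R_t)-\rho\psi(t,R_t)=\mu_X(t,R_t)$, so
$$\diff X_t=\mu_X(t,R_t)\diff t+\sigma_S(R_t)\bigl(\diff W^{(1)}_t-\rho\,\diff W^{(2)}_t\bigr).$$

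Finally I would rewrite the martingale part through $W^\star_t:=\bigl(W^{(1)}_t-\rho W^{(2)}_t\bigr)/\sqrt{1-\rho^2}$, so that $\sigma_S(R_t)\bigl(\diff W^{(1)}_t-\rho\,\diff W^{(2)}_t\bigr)=\sigma_S(R_t)\sqrt{1-\rho^2}\,\diff W^\star_t=\sigma_X(R_t)\diff W^\star_t$, and it remains only to identify the law of $W^\star$. It is a continuous local martingale null at $0$, and from $[W^{(1)}]_t=[W^{(2)}]_t=t$, $[W^{(1)},W^{(2)}]_t=\rho t$ one computes $[W^\star]_t=t$ and $[W^\star,W^{(2)}]_t=0$; by the (multidimensional) Lévy characterisation, $(W^\star,W^{(2)})$ is a two-dimensional standard Brownian motion, so $W^\star$ is a standard Brownian motion independent of $W^{(2)}$. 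Since the $R$-equation in \eqref{eqEDS_X2} is unchanged from \eqref{eqEDS_S}, this finishes the proof. The computation is entirely routine; the only points deserving a remark are the $C^2$-regularity of $f$ (to justify applying Itô to $f(R_t)$) and the use of Lévy's theorem to upgrade the vanishing covariation into genuine \emph{independence} of $W^\star$ and $W^{(2)}$.
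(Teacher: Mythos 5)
Your proof is correct and follows the same standard Itô-formula computation used in the cited source (\cite{cui2018general}, Lemma 1); the paper itself gives no proof, deferring entirely to that reference. The only implicit caveat, shared with the statement itself, is that the normalisation defining $W^\star$ requires $|\rho|<1$.
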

By replacing the short-rate process in \eqref{eqEDS_X2} by its CTMC approximation $R^{(m)}$, we obtain the following regime-switching diffusion process $\{X^{(m)}_t\}_{t\geq 0}$ satisfying
\begin{equation}\label{eqEDS_Xm}
    \diff X^{(m)}_t=\mu_X(t,R^{(m)}_t) \diff t+ \sigma_X(R^{(m)}_t) \diff W^\star_t,
\end{equation}
\begin{sloppypar}
    where the regimes correspond to the state of the approximated short rate process, ${\mathcal{S}_R^{(m)}:=\{r_1,r_2,\ldots,r_m\}}$.

The regime-switching diffusion process $(X^{(m)},R^{(m)})$ is then approximated by a regime-switching CTMC $(X^{(m,M)}, R^{(m)})$. This is done by fixing a state for the short-rate process $R^{(m)}$ and then constructing a CTMC approximation for $X^{(m)}$ given $R^{(m)}$ is in that state. For this step, the same procedure as that described in Section \ref{subsectCTMC_R} can be used. More precisely, let $X^{(m,M)}$ be the CTMC approximation of $X^{(m)}$ taking values on a finite state-space ${\mathcal{S}_X:=\{x_1,x_2,\ldots,x_M\}}$, $M\in\mathbb{N}$. 

For each $r_k\in\mathcal{S}_R^{(m)}$, we define the time-dependant generator $\mathbf{\tilde{\Lambda}}^{(N,M)}_k(t):=\sum_{n=1}^N \mathbf{\Lambda}^{(n,M)}_{k}\ind_{[t_{n-1},t_n)}(t),$ where
$\mathbf{\Lambda}^{(n,M)}_{k}=[\lambda_{ij}^{(n,k)}]_{M\times M}$ and
\end{sloppypar}
\begin{equation}\label{eq:generatorX}
		\lambda_{ij}^{(n,k)}=\begin{cases}
			\frac{\sigma_X^2(r_k)-\delta_i^X\mu_X (t_{n-1}, r_k)}{\delta^X_{i-1}(\delta^X_{i-1}+\delta_i^X)} & \text{$j=i-1$}\\
			-\lambda_{i,i-1}^{(n,k)}-\lambda_{i,i+1}^{(n,k)} & \text{$j=i$}\\
			\frac{\sigma_X^2(r_k)+\delta^X_{i-1}\mu_X(t_{n-1}, r_k)}{\delta^X_{i}(\delta^X_{i-1}+\delta^X_i)} & \text{$j=i+1$}\\
			0&\text{$j\neq i,\,i-1,\,i+1$},
		\end{cases}
\end{equation}
for $2\leq i\leq M-1$, $1\leq j\leq M$, where $\delta^X_{i}=x_{i+1}-x_i$, $i=1,2, \ldots, M-1$. On the boundaries, we set
$\lambda_{12}^{(n,k)}=\frac{|\mu_X(t_{n-1},r_k)|}{\delta_1^X}$, $\lambda_{11}^{(n,k)}=-\lambda_{12}^{(n,k)}$, $\lambda_{M,M-1}^{(n,k)}=\frac{|\mu_X(t_{n-1},r_k)|}{\delta_{M-1}^X}$, $\lambda_{M,M}^{(n,k)}=-\lambda_{M,M-1}^{(n,k)}$, and $0$ elsewhere.

Using the regime-switching approximation of $(X, R)$ and the relation between $X$ and $S$ provided in Lemma \ref{lemmaStoX}, the CTMC approximation of the stock process $S^{(m,M)}$ is defined by
\begin{equation}\label{eq:SapproxCTMC}
    S^{(m,M)}_t:=\exp{\left\{X^{(m,M)}_t+\rho f(R_t^{(m)})\right\}},\quad t\geq 0.
\end{equation}
The final step consists in mapping the two-dimensional regime-switching CTMC onto a one-dimensional CTMC process $Z^{(mM)}$ on an enlarged state-space $\mathcal{S}^{(mM)}_Z:=\{1,2,\ldots, mM\}$. This is done in Proposition \ref{propCaiCTMC}, reproduced from Proposition 2 of \cite{ding2021markov}. %and extends the results of Theorem 1 of \cite{cai2019unified} to time-inhomogeneous processes.

\begin{prop}\label{propCaiCTMC}[Proposition 2 of \cite{ding2021markov}]
\begin{sloppypar}
    Consider a regime-switching CTMC $(X^{(m,M)}, R^{(m)})$ taking values in $\mathcal{S}_X^{(M)}\times\mathcal{S}_R^{(m)}$, where $\mathcal{S}_X^{(M)}=\{x_1,x_2,\ldots,x_M\}$ and $\mathcal{S}_R^{(m)}=\{r_1,r_2,\ldots,r_m\}$, and another one-dimensional CTMC, $\{Z_t^{(mM)}\}_{t\geq 0}$, taking values in $\mathcal{S}_Z^{(mM)}:=\{1,2,\ldots, mN\}$ and its time-dependent generator defined by $\mathbf{\tilde{G}}^{(mM)}(t):=\sum_{n=1}^N \mathbf{G}^{(mM)}_{n}\ind_{[t_{n-1},t_n)}(t),$ where
\end{sloppypar}
\begin{equation}
	\mathbf{G}^{(mM)}_n:=\left(\begin{array}{cccc}
		q_{11}^{(n)}\mathbf{I}_M+\mathbf{\Lambda}_{1}^{(n,M)} & q_{12}^{(n)}\mathbf{I}_M & \cdots & q_{1m}^{(n)}\mathbf{I}_M\\
		q_{21}^{(n)}\mathbf{I}_M & q_{22}^{(n)}\mathbf{I}_M+\mathbf{\Lambda}_{2}^{(n,M)} & \cdots & q_{2m}^{(n)}\mathbf{I}_M\\
		\vdots & \vdots & \ddots & \vdots\\
		q_{m1}^{(n)}\mathbf{I}_M & q_{m2}^{(n)}\mathbf{I}_M & \cdots & q_{mm}^{(n)}\mathbf{I}_M+\mathbf{\Lambda}_{m}^{(n,M)}\\
	\end{array}\right),\label{eq:GeneratorY}
\end{equation}
\begin{sloppypar}
    where $\mathbf{I}_M$ denotes the $M\times M$ identity matrix, ${\mathbf{Q}^{(m)}_n=[q_{ij}^{(n)}]_{m\times m}}$ and ${\mathbf{\Lambda}_{k}^{(n,M)}=[\lambda_{ij}^{(n,k)}]_{M\times M}}$, ${k=1,2,\ldots, m}$, ${n=1,2,\ldots, N}$ denote the generators defined in \eqref{eq:generatorR} and \eqref{eq:generatorX}, respectively. Define the function $\psi:\mathcal{S}_X^{(M)}\times\mathcal{S}_R^{(m)}\rightarrow \mathcal{S}_Z^{(mM)}$ by $\psi(x_l,r_k)=(k-1)M+l$ and its inverse $\psi^{-1}:\mathcal{S}_Z^{(mM)}\rightarrow \mathcal{S}_X^{(M)}\times \mathcal{S}_R^{(m)}$ by $\psi^{-1}(n_z)= (x_l,r_k)$ for $n_z\in\mathcal{S}_Z^{(mM)}$, and $k=\lceil n_z/M\rceil$, $l=n_z-(k-1)M$, where $\lceil x\rceil$ denotes the largest integer less than $x$. Then, we have
$$
\e\left[\Psi(X^{(m,N)}, R^{(m)})\Big|X_0^{(m,N)}=x_i,R_0^{(m)}=r_j\right]=\e\left[\Psi(\psi^{-1}(Z^{(mN)}))\Big|Z_0^{(mN)}=(j-1)M+i\right],
$$
for any path-dependent function $\Psi$ such that the expectation on the left-hand side is finite.
\end{sloppypar}
\end{prop}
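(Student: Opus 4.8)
The plan is to show that, up to the relabelling $\psi$, the one-dimensional chain $Z^{(mM)}$ \emph{is} the regime-switching chain $(X^{(m,M)},R^{(m)})$: precisely, that the push-forward under $\psi$ of the law of $\{(X^{(m,M)}_t,R^{(m)}_t)\}_{t\ge 0}$ on path space coincides with the law of $\{Z^{(mM)}_t\}_{t\ge 0}$, whenever the initial states are matched by $\psi$. Granting this, for any measurable path functional $\Psi$ we get $\Psi\big(\psi^{-1}(Z^{(mM)})\big)\stackrel{d}{=}\Psi\big(X^{(m,M)},R^{(m)}\big)$ (since $\psi^{-1}$ is a bijection on the state spaces, it induces a measurable bijection on path space), and hence the two conditional expectations agree whenever one of them is finite. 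So the whole statement reduces to an identity of laws.

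To establish that identity, I would use the fact that both $\{(X^{(m,M)}_t,R^{(m)}_t)\}_{t\ge0}$ and $\{Z^{(mM)}_t\}_{t\ge0}$ are time-inhomogeneous Markov chains on finite state spaces with generators that are piecewise constant on the partition $\{[t_{n-1},t_n)\}_{n=1}^N$; such a process is determined in law by its initial distribution together with the list of generators on the successive intervals, via the transition-semigroup representation \eqref{eqTransProb}. It therefore suffices to check two things: (i) that the initial states correspond, which is immediate from $\psi(x_i,r_j)=(j-1)M+i$ and the identification $Z_0^{(mM)}=(j-1)M+i$; and (ii) that on each interval $[t_{n-1},t_n)$ the generator of the $\psi$-image of $(X^{(m,M)},R^{(m)})$ equals the block matrix $\mathbf{G}^{(mM)}_n$ of \eqref{eq:GeneratorY}.

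Step (ii) is the substance. On $[t_{n-1},t_n)$ the pair $(X^{(m,M)},R^{(m)})$ is a homogeneous Markov-modulated chain: the coordinate $R^{(m)}$ moves autonomously with generator $\mathbf{Q}^{(m)}_n=[q^{(n)}_{kk'}]_{m\times m}$, and, conditionally on being in regime $r_k$, the coordinate $X^{(m,M)}$ moves with generator $\mathbf{\Lambda}^{(n,M)}_k=[\lambda^{(n,k)}_{ll'}]_{M\times M}$, with no simultaneous jumps of the two coordinates. Hence the joint generator $\mathbf{A}_n$ acts on a function $g$ on $\mathcal{S}_X^{(M)}\times\mathcal{S}_R^{(m)}$ by
\[
(\mathbf{A}_n g)(x_l,r_k)=\sum_{l'=1}^M \lambda^{(n,k)}_{ll'}\,g(x_{l'},r_k)+\sum_{k'=1}^m q^{(n)}_{kk'}\,g(x_l,r_{k'}).
\]
Ordering the product states lexicographically by $(k,l)$ — exactly the order induced by $\psi$, with the regime index $k$ labelling blocks of size $M$ and $l$ labelling positions within a block — the first sum contributes the within-block operator $\mathbf{\Lambda}^{(n,M)}_k$ in the $k$-th diagonal block, the $k'=k$ term of the second sum contributes $q^{(n)}_{kk}\mathbf{I}_M$ to that same diagonal block, and the $k'\neq k$ terms contribute $q^{(n)}_{kk'}\mathbf{I}_M$ in the $(k,k')$ off-diagonal block. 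This is precisely $\mathbf{G}^{(mM)}_n$, so (ii) holds.

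The only genuinely delicate point is the ``no simultaneous jumps'' claim underlying the displayed form of $\mathbf{A}_n$, i.e. that the joint process really is a CTMC with exactly this generator. I would settle it either by constructing $(X^{(m,M)},R^{(m)})$ explicitly through competing exponential clocks — one family of rates driving the $R$-coordinate, and a regime-dependent family driving the $X$-coordinate, so that the almost-sure disjointness of jump times is built in — and reading the generator off directly, or by invoking the standard theory of Markov-modulated (regime-switching) finite chains. Everything else is routine: the semigroup representation is \eqref{eqTransProb}, the bijectivity of $\psi$ is immediate from its definition, and the passage from equality of laws to equality of (finite) expectations is elementary.
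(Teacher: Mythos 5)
Your argument is correct, and it is essentially the standard proof of this result: the paper itself offers no proof (the proposition is quoted verbatim from Proposition 2 of \cite{ding2021markov}), and the argument given there likewise reduces to checking that, under the lexicographic relabelling $\psi$, the generator of the joint chain on each interval $[t_{n-1},t_n)$ is exactly the block matrix $\mathbf{G}^{(mM)}_n$, so that the two processes agree in law on path space. Your identification of the ``no simultaneous jumps'' property as the one genuinely substantive point, settled by the competing-clocks construction of the regime-switching chain, is exactly right.
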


\begin{remark}[Weak convergence of the approximation]\label{remakWeakConvergenceS}
\begin{sloppypar}
		Such a construction ensures that ${(X^{(m,M)},R^{(m)}) \Rightarrow (X,R)}$ and ${S^{(m,M)}\Rightarrow S}$ as $N,m,M\rightarrow\infty$,
        see \cite{ding2021markov}, Proposition 3, for details.
\end{sloppypar} 
  \end{remark}
%%%%%%%%%%%%%%%%%%%%%%%%%%%%%%%%%%%%%%%%%%%%%%%%%%%%%%%%%%%%%%%%%%%%%%%%%%%%%%%%%%%%%%%%%%%%%%%%%%%%%%%%%%%%%%%%%%%%%%%
%%%%%%%%%%%%%%%%%%%%%%%%%%%%%%%%%%%%%%%%%%%%%%%%%%%%%%%%%%%%%%%%%%%%%%%%%%%%%%%%%%%%%%%%%%%%%%%%%%%%%%%%%%%%%%%%%%%%%%%
%%%%%%%%%%%%%%%%%%%%%%%%%%%%%%%%%%%%%%%%%%%%%%%%%%%%%%%%%%%%%%%%%%%%%%%%%%%%%%%%%%%%%%%%%%%%%%%%%%%%%%%%%%%%%%%%%%%%%%%
\section{Application to the Pricing of Interest Rate Securities}\label{sectionApplicationDebtSecurities}
%%%%%%%%%%%%%%%%%%%%%%%%%%%%%%%%%%%%%%%%%%%%%%%%%%%%%%%%%%%%%%%%%%%%%%%%%%%%%%%%%%%%%%%%%%%%%%%%%%%%%%%%%%%%%%%%%%%%%%%
This section provides closed-form matrix expressions for the prices of zero-coupon bonds and European bond options. We also develop an efficient recursive procedure for the pricing of American-type financial instruments such as callable and putable bonds. Calibration to the current term structure of interest rates is also discussed. 
\begin{sloppypar}
    Let $0=t_0<t_1<\ldots<t_N =T$ be a time partition of $[0,T]$, where $T>0$ denotes the maturity of the financial instrument, $t_n=n\Delta_N$, $n=0,1,2,\ldots, N$, and $\Delta_N=T/N$, $N\in\mathbb{N}$. Recall that $R^{(m)}$ is the CTMC approximation of $R$ taking values in a finite state-space ${\mathcal{S}_R^{(m)}=\{r_1,r_2,\ldots,r_m\}}$, $m\in\mathbb{N}$, and its time dependent generator, ${\mathbf{\tilde{Q}}^{(m)}(t)=\sum_{n=1}^N \mathbf{Q}_n^{(m)}\ind_{[t_{n-1},t_n)}(t)}$, is defined in \eqref{eq:generatorR}.
\end{sloppypar}

Throughout this section, we denote by $\{\mathbf{e}_{k}\}_{k=1}^{m}$ the standard basis in $\reals^{m}$, that is, $\mathbf{e}_{k}$ represents a row vector of size $1\times m$ with a value of $1$ in the $k$-th entry and $0$ elsewhere, $\mathbf{1}_{m\times1}$ denotes an $m\times 1$ unit vector, and ${\mathbf{D}_m:=\diag(\bm{r})}$ denotes an $m\times m$ diagonal matrix with the vector $\bm{r}=(r_1,r_2,\ldots,r_m)$ on its diagonal. 

The following results often require Assumption \ref{assumpStateSpaceR} to hold. 
\begin{assump}\label{assumpStateSpaceR}
    There exist a $r^\star\in\reals$ such that $R_t\geq r^\star$ for all $t\geq 0$.
\end{assump}
This assumption restricts the state-space of the short-rate process for the discount factor to be bounded. This allows the use of some convergence theorems. Note that Vasicek, Ho--Lee, Hull--White, and EV+ models, listed in Tables \ref{tblmodelsHomo}, \ref{tblmodelsNonHomo}, or \ref{tblmodelsNonHomoBrigo}, do not satisfy this condition. However, numerical results in Section \ref{sectNumResults} and Appendix \ref{appendixSupplMatNumExperiment} (available online as supplemental material) show that the theoretical results of this section still hold for the Hull--White and Vasicek models, suggesting that Assumption \ref{assumpStateSpaceR} can be relaxed under a certain set of parameters. In that case, theoretical results must be shown on a case-by-case basis for each particular model. 

%%%%%%%%%%%%%%%%%%%%%%%%%%%%%%%%%%%%%%%%%%%%%%%%%%%%%%%%%%%%%%%%%%%%%%%%%%%%%%%%%%%%%%%%%%%%%%%%%%%%%%%%%%%%%%%%%%%%%%%
\subsection{Zero-Coupon Bond}\label{sectZeroBond}
%%%%%%%%%%%%%%%%%%%%%%%%%%%%%%%%%%%%%%%%%%%%%%%%%%%%%%%%%%%%%%%%%%%%%%%%%%%%%%%%%%%%%%%%%%%%%%%%%%%%%%%%%%%%%%%%%%%%%%%
The CTMC approximation of zero-coupon bond prices has been previously examined in the literature for time-homogeneous short-rate processes; see, for instance, \cite{kirkby2022hybrid}, Proposition 3.  In this section, we extend these findings to time-inhomogeneous processes. % More precisely, 
The first result, presented in Lemma \ref{lemmaZeroCouponBondApprox}, concerns the Laplace transform of some additive functions, extending Proposition 8 of \cite{cui2018general} to time-inhomogeneous processes. This result will be used thereafter to obtain a closed-form matrix expression for the price of zero-coupon bonds. 

\begin{lemma}\label{lemmaZeroCouponBondApprox}
	Consider $0=\tiT_0<\tiT_1<\ldots<\tiT_{\tN}=T$ a partition of $[0,T]$, with $\tN=kN$ for some $k\in\mathbb{N}$, $\Delta_{\tN}=T/{\tN}$ and $\tilde{t}_n=n\Delta_{\tN}$, %Let $R^{(m)}$ be the CTMC approximation of $R$ with the state space ${\mathcal{S}_R^{(m)}=\{r_1,r_2,\ldots,r_m\}}$, $m\in\mathbb{N}$ and a time dependent generator ${\mathbf{Q}^{(m,N)}(t)=\sum_{n=1}^N \mathbf{Q}_n^{(m)}\ind_{[t_{n-1},t_n)}(t)}$ defined in \eqref{eq:generatorR}. 
      and let ${R_{t_i}^{(m)}=R_{t_i}=r_j\in \mathcal{S}_R^{(m)}}$, for some $i\in\{0,1,2,\ldots,N\}$. It holds that
    \begin{equation}
       \e\left[e^{-\sum_{n=ki+1}^{\tN} R_{\tilde{t}_n}^{(m)}\Delta_{\tN}} \big| R_{\tilde{t}_{ki}}^{(m)}=r_j\right]
 = \mathbf{e}_{j} \left(\prod_{n=i+1}^{N} \left(e^{\mathbf{Q}_{n}^{(m)}\Delta_{\tN}}e^{\mathbf{-D}\Delta_{\tN}}\right)^k\right)\mathbf{1}_{m\times1},\label{eqExpSumApprox}
  \end{equation}
  and 
  \begin{equation}
	    \e\left[e^{-\sum_{n=ki}^{{\tN}} R_{\tilde{t}_n}^{(m)}\Delta_{\tN}} \big| R_{\tilde{t}_{ki}}^{(m)}=r_j\right]
     = \mathbf{e}_{j}\left(\prod_{n=i+1}^{{N}} \left(e^{-\mathbf{D}_m\Delta_{\tN}}e^{\mathbf{Q}_{n}^{(m)}\Delta_{\tN}}\right)^k \right)e^{-\mathbf{D}_m\Delta_{\tN}}\mathbf{1}_{m\times1}.\label{eqExpSumApproxInAdvance}
	\end{equation}
 %where $\mathbf{Q}_n^{(m)}$ is defined in \eqref{eq:generatorR}, $n=1,2,\ldots,N$.
\end{lemma}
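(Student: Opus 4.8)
The plan is to exploit the Markov property of the CTMC $R^{(m)}$ together with the piecewise-constant structure of its generator. I would work with the embedded discrete-time chain observed at the fine grid points $\tilde t_0,\tilde t_1,\ldots,\tilde t_{\tN}$ and express the expectation of the exponential sum as an iterated conditional expectation, peeling off one time step at a time from the right. Concretely, writing $v(\tilde t_n, r)$ for the conditional expectation of $e^{-\sum_{\ell=n+1}^{\tN} R^{(m)}_{\tilde t_\ell}\Delta_{\tN}}$ given $R^{(m)}_{\tilde t_n}=r$, a one-step tower-property argument gives the backward recursion $v(\tilde t_{n}, r_j) = \sum_{l=1}^m \mathbf P_{jl}(\tilde t_n,\tilde t_{n+1})\, e^{-r_l\Delta_{\tN}}\, v(\tilde t_{n+1}, r_l)$, which in matrix form reads $\bm v(\tilde t_n) = \mathbf P(\tilde t_n,\tilde t_{n+1})\, e^{-\mathbf D_m\Delta_{\tN}}\, \bm v(\tilde t_{n+1})$, with terminal condition $\bm v(\tilde t_{\tN})=\mathbf 1_{m\times 1}$ for \eqref{eqExpSumApprox} (and $\bm v(\tilde t_{\tN})=e^{-\mathbf D_m\Delta_{\tN}}\mathbf 1_{m\times 1}$ for the ``in advance'' version \eqref{eqExpSumApproxInAdvance}, since there the sum includes the endpoint term $R^{(m)}_{\tilde t_{ki}}$).

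The second ingredient is to identify the one-step transition matrix $\mathbf P(\tilde t_n,\tilde t_{n+1})$. Since $\tN=kN$, each coarse interval $[t_{n-1},t_n)$ contains exactly $k$ fine subintervals, and on all of them the generator equals the constant $\mathbf Q_n^{(m)}$; hence by \eqref{eqTransProb}–\eqref{EqP} each fine-step transition matrix within the $n$-th coarse block is $e^{\mathbf Q_n^{(m)}\Delta_{\tN}}$. Iterating the recursion over the $k$ fine steps inside coarse block $n$ therefore contributes the factor $\bigl(e^{\mathbf Q_n^{(m)}\Delta_{\tN}} e^{-\mathbf D_m\Delta_{\tN}}\bigr)^k$, and composing over the coarse blocks $n=i+1,\ldots,N$ yields $\mathbf e_j \prod_{n=i+1}^N \bigl(e^{\mathbf Q_n^{(m)}\Delta_{\tN}} e^{-\mathbf D_m\Delta_{\tN}}\bigr)^k \mathbf 1_{m\times 1}$, which is \eqref{eqExpSumApprox}. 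For \eqref{eqExpSumApproxInAdvance} the same computation applies but, because the running sum starts at the left endpoint, one carries an extra trailing factor $e^{-\mathbf D_m\Delta_{\tN}}$ and the order of the two exponentials inside each fine step is swapped ($e^{-\mathbf D_m\Delta_{\tN}}e^{\mathbf Q_n^{(m)}\Delta_{\tN}}$ rather than $e^{\mathbf Q_n^{(m)}\Delta_{\tN}}e^{-\mathbf D_m\Delta_{\tN}}$), reflecting whether the discount factor for a step is applied before or after the transition over that step.

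I expect the only genuinely delicate point to be the careful bookkeeping of \emph{which} rate values enter the exponential sum and at which grid points, so that the endpoint terms and the placement of the diagonal discount factors $e^{-\mathbf D_m\Delta_{\tN}}$ relative to the transition factors $e^{\mathbf Q_n^{(m)}\Delta_{\tN}}$ come out correctly in the two formulas; everything else is a routine application of the Markov property and of the matrix exponential representation \eqref{eqTransProb} of $\mathbf P$. A clean way to avoid sign/index errors is to prove the matrix recursion by backward induction on $n$ from $\tN$ down to $ki$, stating the induction hypothesis as $\e\bigl[e^{-\sum_{\ell=n+1}^{\tN} R^{(m)}_{\tilde t_\ell}\Delta_{\tN}}\mid R^{(m)}_{\tilde t_n}=r_j\bigr] = \mathbf e_j\,\mathbf A_n\,\mathbf 1_{m\times 1}$ for the appropriate partial product $\mathbf A_n$, and checking the base case and the single inductive step via the tower property and time-homogeneity of the chain on each fine subinterval.
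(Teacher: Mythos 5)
Your proposal is correct and is essentially the paper's own argument: the paper expands the conditional expectation as an explicit sum over all intermediate states of the embedded chain (with one-step transition matrices $e^{\mathbf{Q}_n^{(m)}\Delta_{\tN}}$ coming from the piecewise-constant generator) and regroups the factors into the interleaved matrix product, which is exactly the unrolled form of your backward tower-property recursion. The only cosmetic difference is that you organize the bookkeeping as a backward induction with terminal conditions $\mathbf{1}_{m\times 1}$ versus $e^{-\mathbf{D}_m\Delta_{\tN}}\mathbf{1}_{m\times 1}$, while the paper does the regrouping in one display; both handle the placement of the discount factors and the $k$ fine steps per coarse block identically.
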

The proof provided in Appendix \ref{appendixProof} is intuitive and follows essentially by noticing that \eqref{eqExpSumApprox} and \eqref{eqExpSumApproxInAdvance} are matrix representations of the conditional expectation of a function of a discrete one-dimensional random process whose conditional probabilities are given by \eqref{eqTransProb}. %The proof in this paper provides an alternative way to \cite{cui2018general}, Proposition 8, of showing this result using simple probabilistic arguments.

\begin{remark}\label{rmkCommutZeroCoupon}
    Using arguments similar to those of the proof of Lemma \ref{lemmaZeroCouponBondApprox}, we can also find that matrices $e^{\mathbf{D}_m}$ and $e^{\mathbf{Q}_n^{(m)}}$ commute under multiplication in \eqref{eqExpSumApproxInAdvance}. More precisely, using the notation of Lemma \ref{lemmaZeroCouponBondApprox}, we have that
    \begin{multline*}
        \e\left[e^{-\sum_{n=ki}^{\tN} R_{\tilde{t}_n}^{(m)}\Delta_{\tN}} \big| R_{\tilde{t}_{ki}}^{(m)}
        =r_j\right]= \mathbf{e}_{j} \left(\prod_{n=i+1}^{N} \left(e^{-\mathbf{D}_m\Delta_{\tN}} e^{\mathbf{Q}_{n}^{(m)}\Delta_{\tN}}\right)^{k}\right) e^{-\mathbf{D}_m\Delta_{\tN}} \times\mathbf{1}_{m\times1} \\
        =\mathbf{e}_{j} e^{-\mathbf{D}_m\Delta_{\tN}}\left(\prod_{n=i+1}^{N} \left(e^{\mathbf{Q}_{n}^{(m)}\Delta_{\tN}}e^{-\mathbf{D}_m\Delta_{\tN}}\right)^k\right)\times\mathbf{1}_{m\times1}.
    \end{multline*}
\end{remark}

Proposition \ref{propZeroCouponBondApprox} provides a closed-form matrix expression for the price of a zero-coupon bond under general time-inhomogeneous CTMCs. The result is a natural extension of Proposition 3 of \cite{kirkby2022hybrid} for time-inhomogeneous diffusion processes.

\begin{prop} \label{propZeroCouponBondApprox}
\begin{sloppypar}
    Let Assumption \ref{assumpStateSpaceR} hold. Given that ${R_{t_i}^{(m)}=R_{t_i}=r_j\in \mathcal{S}_R^{(m)}}$, for some $i\in\{0,1,2,\ldots,N\}$, the price at time $t_i$ of a zero-coupon bond with maturity $T\geq t_i$ can be approximated by
	\begin{equation}
	    P_j^{(m)}(t_i,T):=\e\left[e^{-\int_{t_i}^T R_{s}^{(m)}\diff s} \big| R_{t_i}^{(m)}=r_j\right]
     = \mathbf{e}_{j}\left(\prod_{n=i+1}^{N} e^{\left(\mathbf{Q}_{n}^{(m)}-\mathbf{D}_m\right)\Delta_N}\right) \mathbf{1}_{m\times1}.\label{eqZeroCouponApprox}
	\end{equation}
 \end{sloppypar}
\end{prop}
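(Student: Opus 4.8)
The plan is to derive \eqref{eqZeroCouponApprox} from Lemma \ref{lemmaZeroCouponBondApprox} by a limiting argument in which the Riemann sum approximating $\int_{t_i}^T R_s^{(m)}\,\mathrm{d}s$ is refined, i.e.\ we let $k\to\infty$ in \eqref{eqExpSumApprox} with $N$ (and $m$) held fixed. First I would fix the coarse partition $0=t_0<\cdots<t_N=T$ used to define the piecewise-constant generator $\tilde{\mathbf Q}^{(m)}$, and for each $k\in\mathbb N$ consider the refined partition $\{\tiT_n\}_{n=0}^{kN}$ with $\Delta_{\tN}=T/(kN)$ as in the lemma. Since $s\mapsto R_s^{(m)}$ has piecewise-constant (right-continuous) sample paths with finitely many jumps on $[t_i,T]$ almost surely, the Riemann sums $\sum_{n=ki+1}^{\tN} R_{\tiT_n}^{(m)}\Delta_{\tN}$ converge pathwise to $\int_{t_i}^T R_s^{(m)}\,\mathrm{d}s$ as $k\to\infty$. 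Under Assumption \ref{assumpStateSpaceR} the short rate is bounded below by $r^\star$, so $R_s^{(m)}\geq r^\star\wedge\min_i r_i =: c$ on the (finite) state space, hence $e^{-\sum R_{\tiT_n}^{(m)}\Delta_{\tN}}\leq e^{-c(T-t_i)}$ uniformly in $k$; dominated convergence then gives
\begin{equation*}
  \lim_{k\to\infty}\e\!\left[e^{-\sum_{n=ki+1}^{\tN} R_{\tiT_n}^{(m)}\Delta_{\tN}}\,\big|\,R_{t_i}^{(m)}=r_j\right]
  = \e\!\left[e^{-\int_{t_i}^T R_s^{(m)}\,\mathrm{d}s}\,\big|\,R_{t_i}^{(m)}=r_j\right] = P_j^{(m)}(t_i,T).
\end{equation*}

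Next I would evaluate the limit of the right-hand side of \eqref{eqExpSumApprox}. Write $\Delta_{\tN}=\Delta_N/k$, so the matrix inside the product is $\bigl(e^{\mathbf Q_n^{(m)}\Delta_N/k}\,e^{-\mathbf D_m\Delta_N/k}\bigr)^k$. Here the Lie--Trotter product formula applies: for any square matrices $A,B$, $\bigl(e^{A/k}e^{B/k}\bigr)^k\to e^{A+B}$ as $k\to\infty$. Applying this with $A=\mathbf Q_n^{(m)}\Delta_N$ and $B=-\mathbf D_m\Delta_N$ gives $\bigl(e^{\mathbf Q_n^{(m)}\Delta_N/k}e^{-\mathbf D_m\Delta_N/k}\bigr)^k \to e^{(\mathbf Q_n^{(m)}-\mathbf D_m)\Delta_N}$ for each $n$. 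Since the product over $n=i+1,\dots,N$ is a finite product of matrices and matrix multiplication is continuous, the whole right-hand side converges to $\mathbf e_j\bigl(\prod_{n=i+1}^N e^{(\mathbf Q_n^{(m)}-\mathbf D_m)\Delta_N}\bigr)\mathbf 1_{m\times 1}$. Equating the two limits yields \eqref{eqZeroCouponApprox}.

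An alternative, perhaps cleaner, route avoids Lemma \ref{lemmaZeroCouponBondApprox} entirely and uses a Feynman--Kac / Kolmogorov argument directly: the function $u(t,r_j):=\e[e^{-\int_t^T R_s^{(m)}\,\mathrm{d}s}\mid R_t^{(m)}=r_j]$ solves, on each interval $[t_{n-1},t_n)$ where the generator is the constant $\mathbf Q_n^{(m)}$, the backward system $\partial_t u = \mathbf D_m u - \mathbf Q_n^{(m)} u$ (in vector form $u(t)=\bigl(u(t,r_1),\dots,u(t,r_m)\bigr)^\top$), with terminal condition $u(T)=\mathbf 1_{m\times1}$ and continuity at the breakpoints $t_n$; solving this piecewise-linear ODE gives $u(t_i)=\bigl(\prod_{n=i+1}^N e^{(\mathbf Q_n^{(m)}-\mathbf D_m)\Delta_N}\bigr)\mathbf 1_{m\times1}$, and reading off the $j$-th entry gives the claim. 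I would likely present the Trotter-limit proof as the main one since Lemma \ref{lemmaZeroCouponBondApprox} is already available, and mention the ODE viewpoint as a remark.

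The main obstacle is justifying the interchange of limit and expectation together with the pathwise Riemann-sum convergence rigorously: one must argue that $R_s^{(m)}$ a.s.\ has only finitely many jumps on $[t_i,T]$ (true since it is a CTMC on a finite state space, hence the jump times form a locally finite set) so that the integrand is a.s.\ Riemann integrable with converging Riemann sums, and then invoke Assumption \ref{assumpStateSpaceR} to get the uniform domination $e^{-c(T-t_i)}$ needed for dominated convergence. The Trotter-formula step is standard and the finite-product continuity is immediate, so the analytic care really sits in the measure-theoretic passage to the limit; everything else is bookkeeping with the notation $\mathbf e_j$, $\mathbf D_m$, $\mathbf Q_n^{(m)}$ and $\mathbf 1_{m\times1}$ already fixed in the text.
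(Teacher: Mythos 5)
Your proposal is correct and follows essentially the same route as the paper's proof: write the integral as a limit of Riemann sums, interchange limit and expectation by dominated convergence (using Assumption \ref{assumpStateSpaceR} and the finiteness of $\mathcal{S}_R^{(m)}$ for the uniform bound), apply Lemma \ref{lemmaZeroCouponBondApprox}, and pass to the limit in the matrix product via the Lie--Trotter formula. Your added care about the a.s.\ finiteness of the jump set of $R^{(m)}$ (justifying pathwise Riemann-sum convergence) and the ODE/Feynman--Kac remark are sensible refinements but do not change the argument.
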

The proof of Proposition \ref{propZeroCouponBondApprox}, detailed in Appendix \ref{appendixProof}, relies on Lemma \ref{lemmaZeroCouponBondApprox}, the dominated convergence theorem, and the Lie product formula for the limit of matrix exponentials. For time-homogeneous models, an elegant proof can also be found in \cite{kirkby2022hybrid}. The proof presented in this paper employs straightforward and intuitive probabilistic arguments, making it applicable to both time-homogeneous and time-inhomogeneous models. However, it requires Assumption \ref{assumpStateSpaceR} to be satisfied for the use of dominated convergence, which is also necessary to ensure the convergence of the approximated price to the true price (as in Proposition 3 (iii) of \cite{kirkby2022hybrid}), as discussed in Remark \ref{rmkConvBondPrices} below. %\footnote{To clarify, the proof Proposition 3 (ii) of \cite{kirkby2022hybrid} does not necessitate the use of Assumption \ref{assumpStateSpaceR}. However, the convergence to the true price in their Proposition 3 (iii) does.}.
%The proof of Proposition \ref{propZeroCouponBondApprox}, reported in Appendix \ref{appendixProof}, relies on the results of Lemma \ref{lemmaZeroCouponBondApprox}, dominated convergence theorem, and the Lie product formula for the limit of matrix exponentials.  for time-homogeneous models, another proof of this result can be found in \cite{kirkby2022hybrid}. Their proof relies mainly on the results of \cite{buffington2002american} for the conditional characteristic function of $\int_{t_i}^{T}R_s^{(m)}\diff s$. The proof of this paper is more general since it holds for time-inhomogeneous models using intuitive probabilistic arguments. However, the proof of \cite{kirkby2022hybrid} does not require Assumption \ref{assumpStateSpaceR} to hold. However, the convergence of the approximated price to the true price require Assumption \ref{assumpStateSpaceR} to hold as discussed below.
\begin{remark}[Convergence of zero-coupon bonds]\label{rmkConvBondPrices}
  Under the condition of Proposition \ref{propZeroCouponBondApprox}, the convergence of the approximated bond prices
  follows from the weak convergence of $R^{(m)}\Rightarrow R$, see Remark \ref{remakWeakConvergenceR}. Indeed, from there, we have that $\int_0^t R_s^{(m)}\diff s\Rightarrow \int_0^t R_s\diff s$ from Proposition 4 of \cite{cui2021efficient}. Then, the convergence of the expectation follows directly from the Portmanteau theorem and Assumption \ref{assumpStateSpaceR} since, for $x\geq r$, the function $e^{-x}$ is continuous and bounded. 

Detailed error and convergence analysis for CTMC methods applied to option pricing in one-dimensional settings are discussed in \cite{li2018error} and \cite{zhang2019analysis}. Convergence behavior and rates are discussed further in Appendix \ref{appendixNumZBpriceHomo} (available online as supplemental material).
\end{remark}

\begin{remark}\label{rmk:ConditionZeroBondPriceApprox}
    Assumption \ref{assumpStateSpaceR} ensures that the dominated convergence and Portmanteau theorems can be used in the proof Proposition \ref{propZeroCouponBondApprox} and Remark \ref{rmkConvBondPrices}, respectively. However, numerical results in Section \ref{sectNumResults} and Appendix \ref{appendixSupplMatNumExperiment} (available online as supplemental material) show these results still hold when $\mathcal{S}_R=\reals$ under some specific models and set of parameters.
\end{remark}

%%%%%%%%%%%%%%%%%%%%%%%%%%%%%%%%%%%%%%%%%%%%%%%%%%%%%%%%%%%%%%%%%%%%%%%%%%%%%%%%%%%%%%%%%%%%%%%%%%%%%%%%%%%%%%%%%%%%%%%
%%%%%%%%%%%%%%%%%%%%%%%%%%%%%%%%%%%%%%%%%%%%%%%%%%%%%%%%%%%%%%%%%%%%%%%%%%%%%%%%%%%%%%%%%%%%%%%%%%%%%%%%%%%%%%%%%%%%%%%
%%%%%%%%%%%%%%%%%%%%%%%%%%%%%%%%%%%%%%%%%%%%%%%%%%%%%%%%%%%%%%%%%%%%%%%%%%%%%%%%%%%%%%%%%%%%%%%%%%%%%%%%%%%%%%%%%%%%%%%
\subsection{Bond Option}
%%%%%%%%%%%%%%%%%%%%%%%%%%%%%%%%%%%%%%%%%%%%%%%%%%%%%%%%%%%%%%%%%%%%%%%%%%%%%%%%%%%%%%%%%%%%%%%%%%%%%%%%%%%%%%%%%%%%%%%
Proposition \ref{propCallPutInhomogeneousModels} provides an explicit closed-form matrix expression to approximate the prices of call and put options on zero-coupon bonds under general time-inhomogeneous short-rate models. To the best of the author's knowledge, this method of approximating the price of zero-coupon bond options is a novel contribution to the literature.

\begin{prop}\label{propCallPutInhomogeneousModels}
\begin{sloppypar}
    Let Assumption \ref{assumpStateSpaceR} hold. Given that ${R_{t_{n_1}}^{(m)}=R_{t_{n_1}}=r_j\in \mathcal{S}_R^{(m)}}$, for some ${n_1}\in\{0,1,2,\ldots,N\}$, the price at $t_{n_1}\geq 0 $ of a European call (resp. put) option with maturity $t_{n_2}>t_{n_1}$ on a zero-coupon bond maturing at time $T>t_{n_2}$ with strike $K>0$ can be approximated by
    \begin{equation}\label{eqCallPutZero}
        \e\left[e^{-\int_{t_{n_1}}^{t_{n_2}} R_s^{(m)}\diff s }h\left(P^{(m)}(t_{n_2},T)\right)\Big|R^{(m)}_{t_{n_1}}=r_j\right]= \mathbf{e}_{j}\left(\prod_{n=n_1+1}^{n_2} e^{\left(\mathbf{Q}^{(m)}_{n}-\mathbf{D}_m\right)\Delta_N}\right)\mathbf{H},
    \end{equation}
       where $h(x)=\max(x-K,0)$ (resp. $h(x)=\max(K-x,0)$) denotes the payoff function, ${P^{(m)}(t_{n_2},T):=\e\left[e^{-\int_{t_{n_2}}^TR_s^{(m)}\diff s}\Big|R_{t_{n_2}}^{(m)}\right]}$ denotes the approximated zero-coupon bond price at $t_{n_2}$, and $\mathbf{H}$ denotes a column vector of size $m\times 1$, whose $k$-th entry is given by 
       %\begin{equation}
           $h_k= h\left(P_k^{(m)}(t_{n_2},T)\right),$ %\label{eq:hk}
       %\end{equation}
       with $P_k^{(m)}(t_{n_2},T)$ defined in \eqref{eqZeroCouponApprox}.  
    \end{sloppypar}
\end{prop}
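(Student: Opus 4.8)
The plan is to condition on the value of the short rate at time $t_{n_2}$ and apply the tower property of conditional expectations, reducing the bond-option price to an expression of the form already handled by Proposition \ref{propZeroCouponBondApprox}. Concretely, starting from the left-hand side of \eqref{eqCallPutZero}, I would first insert the inner conditioning $\e[\,\cdot\,|\mathcal{F}_{t_{n_2}}]$ and use the Markov property of $R^{(m)}$: conditionally on $R^{(m)}_{t_{n_2}}=r_k$, the factor $h(P^{(m)}(t_{n_2},T))$ becomes the deterministic constant $h_k = h(P_k^{(m)}(t_{n_2},T))$, where $P_k^{(m)}(t_{n_2},T)$ is given by the closed-form expression \eqref{eqZeroCouponApprox} of Proposition \ref{propZeroCouponBondApprox}. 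This turns the problem into computing $\e[e^{-\int_{t_{n_1}}^{t_{n_2}} R_s^{(m)}\diff s}\,\ind\{R^{(m)}_{t_{n_2}}=r_k\}\,|\,R^{(m)}_{t_{n_1}}=r_j]$ and then summing against the $h_k$'s.

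The second step is to recognize that the quantity $\e[e^{-\int_{t_{n_1}}^{t_{n_2}} R_s^{(m)}\diff s}\,\ind\{R^{(m)}_{t_{n_2}}=r_k\}\,|\,R^{(m)}_{t_{n_1}}=r_j]$ is exactly a "discounted transition" quantity: it admits the same matrix representation as in Proposition \ref{propZeroCouponBondApprox} but with the terminal vector $\mathbf{1}_{m\times 1}$ replaced by the $k$-th standard basis vector $\mathbf{e}_k^\top$. Indeed, the Feynman--Kac / matrix-exponential argument behind \eqref{eqZeroCouponApprox} (namely Lemma \ref{lemmaZeroCouponBondApprox} together with the Lie product formula and dominated convergence, the latter justified by Assumption \ref{assumpStateSpaceR}) goes through verbatim when one tracks the terminal state rather than summing over it; one simply does not multiply by $\mathbf{1}_{m\times 1}$ at the end. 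Thus
\[
\e\!\left[e^{-\int_{t_{n_1}}^{t_{n_2}} R_s^{(m)}\diff s}\ind\{R^{(m)}_{t_{n_2}}=r_k\}\Big|R^{(m)}_{t_{n_1}}=r_j\right]=\mathbf{e}_j\left(\prod_{n=n_1+1}^{n_2} e^{(\mathbf{Q}^{(m)}_n-\mathbf{D}_m)\Delta_N}\right)\mathbf{e}_k^\top.
\]
Summing this over $k$ with weights $h_k$ and noting $\sum_k h_k\,\mathbf{e}_k^\top = \mathbf{H}$ yields \eqref{eqCallPutZero} directly.

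A cleaner alternative, which I would actually write up, avoids re-deriving the discounted-transition identity: apply the tower property as $\e[e^{-\int_{t_{n_1}}^{t_{n_2}}R_s^{(m)}\diff s} h(P^{(m)}(t_{n_2},T))\,|\,R^{(m)}_{t_{n_1}}=r_j] = \e[e^{-\int_{t_{n_1}}^{t_{n_2}}R_s^{(m)}\diff s}\,g(R^{(m)}_{t_{n_2}})\,|\,R^{(m)}_{t_{n_1}}=r_j]$ where $g(r_k):=h_k$, then observe that this is the price at $t_{n_1}$ of a ``bond-like'' claim paying $g(R^{(m)}_{t_{n_2}})$ at $t_{n_2}$; the same argument as in Proposition \ref{propZeroCouponBondApprox}, now with terminal payoff vector $\mathbf{H}$ in place of $\mathbf{1}_{m\times1}$, gives the result. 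Boundedness of $g$ (hence applicability of dominated convergence) follows because $0\le P_k^{(m)}(t_{n_2},T)\le 1$ under Assumption \ref{assumpStateSpaceR}, so $h_k$ is bounded uniformly in $k$ and $m$.

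The main obstacle is essentially bookkeeping rather than a genuine analytic difficulty: one must be careful that the inner bond price $P^{(m)}(t_{n_2},T)$ is itself a (random) function of $R^{(m)}_{t_{n_2}}$ through the closed form \eqref{eqZeroCouponApprox} of Proposition \ref{propZeroCouponBondApprox}, so that the composition $h(P^{(m)}(t_{n_2},T))$ is genuinely $\sigma(R^{(m)}_{t_{n_2}})$-measurable and the Markov property applies cleanly at $t_{n_2}$; and one must confirm that the discrete-time matrix approximation and its continuous-time limit interchange correctly with the outer expectation, which is where Assumption \ref{assumpStateSpaceR} is invoked exactly as in the proof of Proposition \ref{propZeroCouponBondApprox}. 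Once these measurability and limiting points are in place, the matrix identity falls out immediately.
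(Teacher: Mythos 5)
Your proof is correct and follows essentially the same route the paper intends: the paper states only that the result ``follows using arguments similar to'' the proof of Proposition \ref{propZeroCouponBondApprox}, and your conditioning at $t_{n_2}$ combined with the discounted-transition matrix identity, with the terminal vector $\mathbf{H}$ in place of $\mathbf{1}_{m\times 1}$, is exactly that argument. One cosmetic point: under Assumption \ref{assumpStateSpaceR} the lower bound $r^\star$ may be negative, so $P_k^{(m)}(t_{n_2},T)$ is bounded by $e^{-r^\star(T-t_{n_2})}$ rather than by $1$; this does not affect your conclusion, since the entries $h_k$ remain uniformly bounded.
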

The proof follows using arguments similar to that of the proof of Proposition \ref{propZeroCouponBondApprox}. %Note that \cite{kirkby2022hybrid} uses a different methodology to approximate the price of zero-coupon bond options in their Proposition 12. Their result relies on the knowledge of some transitional density and involves an integral that needs to be solved numerically. 

\begin{remark}[Convergence of zero-coupon bond options]
\begin{sloppypar}
 From Remark \ref{rmkConvBondPrices}, we have that ${\int_t^T R^{(m)}_s\diff s \Rightarrow \int_t^T R^{(m)}\diff s}$ for all $t\leq T$, as $m,N\rightarrow\infty$, and we conclude that ${P_j^{(m)}(t,T)\rightarrow P_j(t,T):=\e\left[e^{-\int_t^T R_s\diff s}|R_t=r_j\right]}$ for all $r_j\in\mathcal{S}_R^{(m)}$. However, this is not sufficient to prove the convergence of zero-coupon bond option prices. Consequently, one first needs to show that $P^{(m)}(t,T)=\e\left[e^{-\int_t^T R^{(m)}_s\diff s}|R^{(m)}_t\right] \Rightarrow P(t,T):=\e\left[e^{-\int_t^T R_s\diff s}|R_t\right]$, that is, weak convergence of random variables also implies weak convergence of conditional expectations. This has been studied in the context of filtering theory by \cite{goggin1994convergence}, \cite{kouritzin2005weak}, and \cite{crimaldi2005convergence}. 
 However, their results are inapplicable in that particular context since it requires finding a measure under which processes $e^{-\int_t^T R_s \diff s}$ and $R_t$ are independent. 
 An alternative way to prove the weak convergence of conditional expectations is to show that the estimation errors converge; see \cite{goggin1994convergence}, Lemma 2.2. Showing this property is, however, out of the scope of this paper. Numerical experiments in the next section demonstrate the accuracy and efficiency of Proposition \ref{propCallPutInhomogeneousModels} empirically. 

 Detailed error and convergence analysis for CTMC methods applied to option pricing in one-dimensional settings are discussed in \cite{li2018error} and \cite{zhang2019analysis}. Extensions to zero-bond option pricing are left for future investigations. Convergence behavior and rates are studied empirically in the next section.
\end{sloppypar}
\end{remark}
 \begin{remark}[Extension to coupon-bearing bonds]\label{rmkCouponBearingBond}
  The extension of \eqref{eqCallPutZero} to coupon-bearing bonds is straightforward. Indeed, suppose that the underlying bond pays a periodic coupon $\alpha>0$ at time $t_{n_2+z}<t_{n_2+2z}<\ldots<t_{n_2+\tilde{N}z}=T$, with $z=(N-n_2)/\tilde{N}$. Then, it suffices to replace $P_k^{(m)}(t_{n_2},T)$ in Proposition \ref{propCallPutInhomogeneousModels} by
  $$P_k^{(m)}(t_{n_2},T)+\sum_{n=1}^{\tilde{N}} \alpha P_k^{(m)}(t_{n_2},t_{n_2+nz}),$$
  with $P_k^{(m)}(\cdot,\cdot)$, defined in \eqref{eqZeroCouponApprox}. 
\end{remark}
%%%%%%%%%%%%%%%%%%%%%%%%%%%%%%%%%%%%%%%%%%%%%%%%%%%%%%%%%%%%%%%%%%%%%%%%%%%%%%%%%%%%%%%%%%%%%%%%%%%%%%%%%%%%%%%%%%%%%%%
%%%%%%%%%%%%%%%%%%%%%%%%%%%%%%%%%%%%%%%%%%%%%%%%%%%%%%%%%%%%%%%%%%%%%%%%%%%%%%%%%%%%%%%%%%%%%%%%%%%%%%%%%%%%%%%%%%%%%%%
%%%%%%%%%%%%%%%%%%%%%%%%%%%%%%%%%%%%%%%%%%%%%%%%%%%%%%%%%%%%%%%%%%%%%%%%%%%%%%%%%%%%%%%%%%%%%%%%%%%%%%%%%%%%%%%%%%%%%%%
\subsection{Callable/Putable Bond}\label{subsectionCallablePutableBond}
%%%%%%%%%%%%%%%%%%%%%%%%%%%%%%%%%%%%%%%%%%%%%%%%%%%%%%%%%%%%%%%%%%%%%%%%%%%%%%%%%%%%%%%%%%%%%%%%%%%%%%%%%%%%%%%%%%%%%%%
In the following sections, we develop simple and efficient algorithms for pricing callable and putable debt under general one-dimensional time-inhomogeneous short-rate processes. To the author's knowledge, this type of approximation for callable and putable bond pricing is novel in the literature.

Let $K_{t}^p, K_t^c\geq 0$ be constants representing the put and call prices (or strike prices) at time $t\leq T$, respectively, and let $F>0$ be the face value of the bond, and $T>0$ the maturity of the bond. In the following, we assume that there is no coupon. However, adjustment to coupon-bearing bonds is straightforward, and it is discussed in greater detail below. Furthermore, we denote by $\mathcal{T}_{t,T}$, the (admissible) set of all stopping times taking values on the interval $[t,T]$. 

When the put option can be exercised at any time prior to maturity, the value of a putable debt 
is equivalent to solving the following optimal stopping problem
\begin{equation}\label{eq:valueFctputableDebt}
    v_p(t,r):=\sup_{\tau\in\mathcal{T}_{t,T}}\e\left[e^{-\int_t^\tau R_u\diff u}\varphi_p(\tau)\Big|R_t=r\right], 
\end{equation}
where the reward (or payoff) function $\varphi_p:[0,T]\rightarrow \reals_+$ is defined by
\begin{equation}\label{eq:rewardFctputableDebt}
\varphi_p(t) =\left \{\begin{array}{ll}
                 K^p_t & \textrm{if } t<T,\\
                 \max (F,K^p_T ) & \textrm{if } t=T.
                \end{array}\right.
\end{equation}
On the other hand, the value of callable debt is given by
\begin{equation}\label{eq:valueFctCallableDebt}
    v_c(t,r):=\inf_{\tau\in\mathcal{T}_{t,T}}\e\left[e^{-\int_t^\tau R_u\diff u}\varphi_c(\tau)\Big|R_t=r\right], 
\end{equation}
where the reward function $\varphi_c:[0,T]\rightarrow \reals_+$ is defined by
\begin{equation}\label{eq:rewardFctCallableebt}
\varphi_p(t) =\left \{\begin{array}{ll}
                 K^c_t & \textrm{if } t<T,\\
                 \min (F,K^c_T ) & \textrm{if } t=T.
                \end{array}\right.
\end{equation}

Typically, a bond can have both a call and put options embedded; thus, the two problems in \eqref{eq:rewardFctputableDebt} and \eqref{eq:valueFctCallableDebt} need to be solved simultaneously. We denote by $v_{cp}:[0,T]\times\mathcal{S}_R\rightarrow \reals_+$ the value function of the problem when \eqref{eq:rewardFctputableDebt} and \eqref{eq:valueFctCallableDebt} are solved together.  Often, options are only exercisable during a certain period (exercise period or window). This is discussed further below. Numerical techniques are thus required to solve the problem.  Commonly used techniques, such as trees%and the least-squares Monte Carlo simulation approach of \cite{longstaff2001valuing}
, are based on the Bermudan\footnote{The Bermudan contract refers to a contract under which the embedded options can be exercised on a finite number of predetermined dates, whereas an American contract refers to a contract under which the embedded options can be exercised at any time from the inception to the maturity date.\label{footnoteBermContract}} approximation of $v_{cp}$ %are based on a time-discretization of the underlying process and the exercise windows of $v_p$ and $v_c$, 
and the dynamic programming principle (see, for instance, \cite{lamberton1998}, Theorem 10.1.3). %, \cite{lamberton1998}, Theorem 10.1.3. 
Proposition \ref{propCallableputableDebt} relies on the same ideas. 

\begin{prop}\label{propCallableputableDebt}
 Let Assumption \ref{assumpStateSpaceR} hold. The value of a callable and putable bond with maturity $T>0$ and face value $F>0$ can be approximated recursively by
\begin{equation}\label{eq:CallablePutableDebt}
    \left\{\begin{array}{lll}
        \mathbf{V}_N^{(m)} & = \max\left(\min\left(\mathbf{F},\mathbf{K}_N^c\right),\mathbf{K}_N^p\right)&  \\
         \mathbf{V}_n^{(m)}& = \max\left(\min\left(\mathbf{K}_n^c, e^{(\mathbf{Q}^{(m)}_{n+1}- \mathbf{D}_m)\Delta_N}\mathbf{V}_{n+1}^{(m)}\right),\mathbf{K}_n^p\right) & 0\leq n\leq N-1.
    \end{array}\right.
\end{equation}
for a sufficiently large $N\in\mathbb{N}$, and where $\mathbf{K}_n^a=K_{t_n}^a\mathbf{1}_{m\times 1}$, $a\in\{p,c\}$, $\mathbf{F}=F\mathbf{1}_{m\times 1}$, and the maximum (resp. minimum) is taken element by element (also known as the parallel maxima (resp. minima)). 
Specifically, given $R_0^{(m)}=R_0=r_j$, the approximated price %, $v_{cp}^{(m)}:[0,T]\times\mathcal{S}_R^{(m)}\rightarrow \reals_+$, 
of a callable and putable debt is given by
$$v^{(m)}_{cp}(0,R_0)=\mathbf{e}_j\mathbf{V}_0^{(m)}.$$
\end{prop}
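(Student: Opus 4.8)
The plan is to combine a Bermudan approximation of the optimal stopping problem with the dynamic programming principle, and then to translate the resulting backward recursion into the matrix form of \eqref{eq:CallablePutableDebt} using the closed-form expressions for one-step discounted conditional expectations already established.

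First I would set up the Bermudan version of the combined problem: restrict the admissible stopping times in the min-max formulation (the simultaneous treatment of \eqref{eq:valueFctputableDebt} and \eqref{eq:valueFctCallableDebt}) to those taking values in the grid $\{t_0,t_1,\ldots,t_N\}$, and work with the CTMC-approximated short rate $R^{(m)}$ in place of $R$. Denote the resulting value at $t_n$ by $v^{(m),N}_{cp}(t_n,r)$. By the dynamic programming principle for optimal stopping with two players (issuer minimizing via the call, holder maximizing via the put) — see \cite{lamberton1998}, Theorem 10.1.3 — this value satisfies the backward recursion: at maturity $v^{(m),N}_{cp}(T,r)=\max(\min(F,K^c_T),K^p_T)$, and for $0\leq n\leq N-1$,
\[
v^{(m),N}_{cp}(t_n,r)=\max\!\left(\min\!\left(K^c_{t_n},\;\e\!\left[e^{-\int_{t_n}^{t_{n+1}}R^{(m)}_u\diff u}\,v^{(m),N}_{cp}(t_{n+1},R^{(m)}_{t_{n+1}})\,\Big|\,R^{(m)}_{t_n}=r\right]\right),\;K^p_{t_n}\right).
\]
The ordering of the $\max$ and $\min$ here encodes that at a non-maturity date the holder decides first whether to exercise the put, and the issuer's call caps the continuation value; one should check (a short argument, or a reference) that for the payoff structure at hand the two operations can be nested in this way without ambiguity.

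Next I would evaluate the one-step conditional expectation. Writing $\mathbf{V}_n^{(m)}$ for the column vector with $k$-th entry $v^{(m),N}_{cp}(t_n,r_k)$, Proposition \ref{propZeroCouponBondApprox} (or, more directly, the same Feynman--Kac / matrix-exponential argument underlying Lemma \ref{lemmaZeroCouponBondApprox}) gives, since $\tilde{\mathbf{Q}}^{(m)}$ is constant equal to $\mathbf{Q}^{(m)}_{n+1}$ on $[t_n,t_{n+1})$,
\[
\e\!\left[e^{-\int_{t_n}^{t_{n+1}}R^{(m)}_u\diff u}\,g(R^{(m)}_{t_{n+1}})\,\Big|\,R^{(m)}_{t_n}=r_k\right]=\mathbf{e}_k\,e^{(\mathbf{Q}^{(m)}_{n+1}-\mathbf{D}_m)\Delta_N}\,\mathbf{g},
\]
for any bounded $g$ with value vector $\mathbf{g}$. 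Applying this entrywise to $g=v^{(m),N}_{cp}(t_{n+1},\cdot)$ turns the recursion above into exactly \eqref{eq:CallablePutableDebt}, with the parallel $\max$ and $\min$ acting coordinatewise and $\mathbf{K}^a_n=K^a_{t_n}\mathbf{1}_{m\times1}$, $\mathbf{F}=F\mathbf{1}_{m\times1}$. Reading off the $j$-th coordinate at $n=0$ gives $v^{(m)}_{cp}(0,R_0)=\mathbf{e}_j\mathbf{V}_0^{(m)}$.

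The main obstacle is not the matrix algebra but justifying that the Bermudan approximation converges to the true (American, continuous-exercise) value of the callable/putable bond as $N\to\infty$ — i.e.\ that the quantity computed by \eqref{eq:CallablePutableDebt} actually approximates $v_{cp}(0,R_0)$ and not merely its Bermudan restriction. Here Assumption \ref{assumpStateSpaceR} does the work: it makes the discount factor $e^{-\int_t^\tau R^{(m)}_u\diff u}$ uniformly bounded, so that standard results on the convergence of Bermudan to American optimal stopping values (continuity of the reward functions $\varphi_p,\varphi_c$ on $[0,T)$, together with dominated convergence and the weak convergence $R^{(m)}\Rightarrow R$ from Remark \ref{remakWeakConvergenceR}, and $\int_0^tR^{(m)}_s\diff s\Rightarrow\int_0^tR_s\diff s$ as in Remark \ref{rmkConvBondPrices}) can be invoked; the jump of $\varphi_p$ at $T$ is harmless because the terminal date is always in the grid. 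I would state this convergence as the reason for the phrase ``for a sufficiently large $N$'' and otherwise keep the proof focused on deriving the recursion, relegating the technical convergence estimates to the cited error-analysis literature (\cite{li2018error}, \cite{zhang2019analysis}).
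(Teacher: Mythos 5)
Your proposal is correct and follows essentially the same route as the paper: the paper's (very brief) proof likewise invokes the dynamic programming principle for the Bermudan approximation and identifies $e^{(\mathbf{Q}^{(m)}_{n+1}-\mathbf{D}_m)\Delta_N}\mathbf{V}^{(m)}_{n+1}$ as the matrix representation of the one-step discounted conditional expectation, deferring the details to the arguments of Lemma \ref{lemmaZeroCouponBondApprox} and Proposition \ref{propZeroCouponBondApprox}. Your additional paragraph on the convergence of the Bermudan value to the American one is more explicit than anything in the paper, which simply writes ``for a sufficiently large $N$'' and leaves that step unaddressed.
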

\begin{sloppypar}
The proof of Proposition \ref{propCallableputableDebt} follows from the dynamic programming principle and by noting that $\exp\left(\mathbf{Q}^{(m)}_{n+1}- \mathbf{D}_m)\Delta_N\right)\mathbf{V}_{n+1}^{(m)}$ (the continuation value) is the matrix representation of the conditional expectation of a function of a discrete random variable whose conditional probability mass function is given by the transitional probability $p_{ij}(t_n,t_{n+1})$, $1\leq i,j\leq m$, with ${\mathbf{P}(t_n,t_{n+1})=[p_{ij}(t_n,t_{n+1})]_{m\times m}=\exp\left({\mathbf{Q}_{n+1}^{(m)}\Delta_N}\right)}$ as per \eqref{eqTransProb}. The remainder of the proof follows the same reasoning used in the proofs of Lemma \ref{lemmaZeroCouponBondApprox} and Proposition \ref{propZeroCouponBondApprox}, detailed in Appendix \ref{appendixProof}.
\end{sloppypar}
The accuracy of \eqref{eq:CallablePutableDebt} in pricing callable and putable debt is demonstrated numerically in Section \ref{sectNumResults}. 
To price a callable only bond, it suffices to set $K^p_{t}=0$, and for a putable only bond, one must let $K^c_{t}\rightarrow\infty$, $0\leq t\leq T$. Different exercise windows can also be incorporated using a similar logic. 

\begin{remark}[Extension to coupon-bearing bonds]\label{rmk:ExtensionCouponCallable}
Proposition \ref{propCallableputableDebt} is set up for zero-coupon debt. However, extension to coupon-bearing bonds is straightforward. %Indeed, when a coupon is paid at time $t_{n+1}$, it just needs to be discounted at time $t_n$ with the value of the bond at $t_{n+1}$, $0\leq n\leq N-1$.
Indeed, when a coupon $\alpha>0$ is paid $t_{n+1}$, it just needs to be discounted back at time $t_n$ with the value of the bond at $t_{n+1}$. More precisely, \eqref{eq:CallablePutableDebt} becomes
$$\begin{array}{lll}
        %\mathbf{V}_N^{(m)} & = \max\left(\min\left(\mathbf{F},\mathbf{K}_N^c\right),\mathbf{K}_N^p\right)&  \\
         \mathbf{V}_n^{(m)}& = \max\left(\min\left(\mathbf{K}_n^c, e^{(\mathbf{Q}^{(m)}_{n+1}- \mathbf{D}_m)\Delta_N}\left(\mathbf{V}_{n+1}^{(m)}+\alpha\mathbf{1}_{m\times 1}\right)\right),\mathbf{K}_n^p\right)
    \end{array}$$
    for $n\in\{1,2,\ldots,N-1\}$.
\end{remark}
 The results of Lemma \ref{lemmaZeroCouponBondApprox}, and Propositions \ref{propZeroCouponBondApprox}, \ref{propCallPutInhomogeneousModels}, and \ref{propCallableputableDebt} can be simplified when the short-rate process is time-homogeneous such as for the models listed in Table \ref{tblmodelsHomo}, or when it can be expressed as a sum of an auxiliary time-homogeneous process and a deterministic function of time as for the models listed in Table \ref{tblmodelsNonHomoBrigo}. This is discussed further in Appendices \ref{appendixTimeHomoModels} and \ref{appendixBrigoModels}.

Extension of these results to two-factor short-rate models can be accomplished using the procedure of Section \ref{sectNonHomoCTMC}, along with Proposition \ref{propCaiCTMC}. This is left as future research.
%%%%%%%%%%%%%%%%%%%%%%%%%%%%%%%%%%%%%%%%%%%%%%%%%%%%%%%%%%%%%%%%%%%%%%%%%%%%%%%%%%%%%%%%%%%%%%%%%%%%%%%%%%%%%%%%%%%%%%%
%%%%%%%%%%%%%%%%%%%%%%%%%%%%%%%%%%%%%%%%%%%%%%%%%%%%%%%%%%%%%%%%%%%%%%%%%%%%%%%%%%%%%%%%%%%%%%%%%%%%%%%%%%%%%%%%%%%%%%%
%%%%%%%%%%%%%%%%%%%%%%%%%%%%%%%%%%%%%%%%%%%%%%%%%%%%%%%%%%%%%%%%%%%%%%%%%%%%%%%%%%%%%%%%%%%%%%%%%%%%%%%%%%%%%%%%%%%%%%%
\subsection{Calibration to the Initial Term Structure of Interest Rates}\label{subsecFitTermStruc}
%%%%%%%%%%%%%%%%%%%%%%%%%%%%%%%%%%%%%%%%%%%%%%%%%%%%%%%%%%%%%%%%%%%%%%%%%%%%%%%%%%%%%%%%%%%%%%%%%%%%%%%%%%%%%%%%%%%%%%%
Using the closed-form formula for the price of a zero-coupon bond in \eqref{eqZeroCouponApprox}, we can develop an efficient algorithm such that the zero-bond curve\footnote{The term zero-bond curve refers to the term structure of discount factors (or zero-coupon bonds).} of the approximated model fits the market curve. %To the authors' knowledge, this is the first time such a calibration procedure is discussed in the literature.

To do so, we choose a time partition of $[0,T]$, $0=t_0<t_1<\ldots<t_N=T$, with $N\in\mathbb{N}$, $T>0$, $t_n=n\Delta_N$, $n\leq N$ and $\Delta_N=T/N$. We suppose that there is a time deterministic function $\theta$ that appears in the drift of  \eqref{eqEDS_S} such that ${\mu_R(t,r)=\tilde{\mu}_R(\theta(t),r)}$, for $(t,r)\in [0,T]\times \mathcal{S}_R$, as it may often be the case for time-inhomogeneous short-rate models, see the models listed in Table \ref{tblmodelsNonHomo} for examples. Moreover, we assume that $\theta$ is piecewise constant in time, such that
\begin{equation}
    \theta(t)= \sum_{n=1}^N\theta_n\ind_{[t_{n-1},t_n)}(t).\label{eqFctTheta}
\end{equation}
for some $\bm{\theta}=(\theta_1,\theta_2,\ldots,\theta_N)\in \reals^N$.

Let $t\mapsto P^\star(0,t)$ represent the current market zero-bond curve.
The objective is to find the parameters $\bm{\theta}$ that make the zero-coupon bond prices under the approximated model equal to the market zero-coupon bond prices. Henceforth, we denote these calibrated parameters by $\bm{\theta}^\star$. %to fit the approximated model zero-bond prices $P^{(m)}_j(0,t_i)$ to the current market prices $P^\star(0,t_i)$, $i=1,2\ldots,N$. 
Note that matrix $\mathbf{Q}_n$ in \eqref{eq:generatorR} depends on $\theta_n$ via function $\mu_R$, $n=1,2,\ldots, N$. In this subsection, we write $\mathbf{Q}_n^{(m)}(\theta_n)$ for $\mathbf{Q}^{(m)}_n$ to make this relation clearer. By inspecting \eqref{eqZeroCouponApprox}, we also note that the zero-coupon bond price at $t_1$ only depends on $\mathbf{Q}^{(m)}_1(\theta_1$), and the price at $t_2$ depends on $\mathbf{Q}^{(m)}_1(\theta_1)$ and $\mathbf{Q}_2^{(m)}(\theta_2)$; and so on. Thus, the calibrated parameters $\bm{\theta}^\star$, which make $P^{(m)}_j(0,t_i)=P^\star(0,t_i)$, $i=1,2\ldots,N$, can be obtained recursively starting from $t_1$ to $t_N$. Algorithm \ref{algoThetasCalibration} provides an efficient recursive procedure to find $\bm{\theta}^\star$. In Algorithm \ref{algoThetasCalibration}, $\mathbf{I}_{m \times m}$ denotes the identity matrix of size $m\times m$.

\begin{algorithm}[h!]
   \caption{Calibration of $\bm{\theta}$ to the Current Market Term-Structure}
	\label{algoThetasCalibration}
	\DontPrintSemicolon
	\KwInput{Let $\mathbf{Q}^{(m)}_n(\theta_n)$ be defined as in \eqref{eq:generatorR} and $t\mapsto P^\star(0,t)$ be the current market zero-bond curve, $n=1,2,\ldots,N$\;
		$N\in\mathbb{N}$, the number of time steps \;
		$\Delta_N\leftarrow T/N$, the size of a time step}
    Set $t_n=n\Delta_N$, $n=1,2,\ldots,N$\;
    Set $\mathbf{D}_m\leftarrow \diag(\bm{r})$ with $\bm{r}=(r_1,r_2,\ldots,r_m)$, $r_k\in\mathcal{S}_R^{(m)}$, $k=1,2,\ldots,m$\;
	\tcc{Calibration to the current market zero-bond curve $t\mapsto P^\star(0,t)$}
    $\mathbf{A}^\star\leftarrow\mathbf{I}_{m\times m }$ %, auxiliary variable containing matrices product\;
	\For{$n=1,\ldots N$}{
        Find $\theta_n^\star$ such that $P^\star(0,t_n)-\mathbf{e}_j \mathbf{A}^\star \times e^{(\mathbf{Q}_n^{(m)}(\theta_n)-\mathbf{D}_m)\Delta_N}\mathbf{1}_{m\times 1}=0$\;
        $\mathbf{A}^\star\leftarrow\mathbf{A}^\star \times e^{(\mathbf{Q}^{(m)}_n(\theta_n^\star)-\mathbf{D}_m)\Delta_N}$\;
        %$\mathbf{P}(t_{n})\leftarrow \mathbf{A}^\star \times \mathbf{1}_{m\times 1} $
	}	
	\KwRet $\{\theta_n^\star\}_{n=1}^N$\; 
\end{algorithm}
When the short-rate process can be modeled as a deterministic shift of a homogeneous process, such as the models listed in Table \ref{tblmodelsNonHomoBrigo}, the calibrated parameters $\bm{\theta}^\star$ have an explicit closed-form expression. This is discussed further in Appendix \ref{appendixBrigoThetaFit}.
%%%%%%%%%%%%%%%%%%%%%%%%%%%%%%%%%%%%%%%%%%%%%%%%%%%%%%%%%%%%%%%%%%%%%%%%%%%%%%%%%%%%%%%%%%%%%%%%%%%%%%%%%%%%%%%%%%%%%%%
%%%%%%%%%%%%%%%%%%%%%%%%%%%%%%%%%%%%%%%%%%%%%%%%%%%%%%%%%%%%%%%%%%%%%%%%%%%%%%%%%%%%%%%%%%%%%%%%%%%%%%%%%%%%%%%%%%%%%%%
%%%%%%%%%%%%%%%%%%%%%%%%%%%%%%%%%%%%%%%%%%%%%%%%%%%%%%%%%%%%%%%%%%%%%%%%%%%%%%%%%%%%%%%%%%%%%%%%%%%%%%%%%%%%%%%%%%%%%%%
\section{Application to the Pricing of Convertible Bonds}\label{sectionConvBond}
%%%%%%%%%%%%%%%%%%%%%%%%%%%%%%%%%%%%%%%%%%%%%%%%%%%%%%%%%%%%%%%%%%%%%%%%%%%%%%%%%%%%%%%%%%%%%%%%%%%%%%%%%%%%%%%%%%%%%%%
In this section, we develop efficient algorithms for pricing CBs using CTMC approximations. When the conversion feature is only permitted at maturity, a closed-form matrix expression is obtained. In this paper, we use the term European (resp. American)-style CB to refer to a CB under which the investor has the right to convert the bond at maturity only (resp. at any time prior to maturity). The use of CTMC approximation for pricing convertible debt is a novel contribution to the literature.

The frameworks outlined below consider two risk factors: equity and risk-free rate. Default/credit risk is incorporated into the model using the methodology of \cite{tsiveriotis1998valuing}. Their approach consists of splitting the debt into two components: a cash-only and an equity part. The \textit{cash-only part} consists of coupons and principal payments, whereas the \textit{equity part} consists of equity payments (when the debt is converted to stock). Each part is subject to different credit risks. Indeed, the cash-only part can be seen as a standard bond and is subject to the issuer default risk. Cash-flows are thus discounted at a risky rate. The equity part can be interpreted as an equity derivative and must thus be discounted at the risk-free rate.

In the following, we consider zero-coupon CBs since the extension to coupon-bearing convertible debt is straightforward. Indeed, when conversion can only occur at maturity (European-style), coupons can be added to the price. For American-style CBs, the procedure is similar to callable and putable bonds. That is, when a coupon is paid at time $t_{n+1}$, then it just needs to be discounted back to time $t_n$ with the value of the cash-only part of the bond at $t_{n+1}$, $0\leq n\leq N-1$. This is discussed further in Remarks \ref{rmk:EuroCBsCoupons} and \ref{rmkAmCBcoupon}. Further, we suppose that the risky rate $\{\tilde{R}_t\}_{0\leq t\leq T}$ is obtained by adding a time-deterministic credit spread, $c:[0,T]\rightarrow [0,1]$, over the risk-free rate, that is, $\tilde{R}_t=R_t+c_t$, $0\leq t\leq T$. Finally, the face value of the bonds is denoted by $F>0$, and $\eta>0$ represents the conversion ratio.
\begin{sloppypar}
Recall that $0=t_0<t_1<\ldots<t_N =T$ is a time partition of $[0,T]$, where $T>0$ denotes the maturity of the financial instrument, $t_n=n\Delta_N$, $n=0,1,2,\ldots, N$, and $\Delta_N=T/N$, $N\in\mathbb{N}$. $(X^{(m,M)},R^{(m)})$ denotes the regime-swiching CTMC approximation of $(X,R)$, see Section \ref{subsectCTMC_S}, taking values of a finite state-space $\mathcal{S}_X^{(M)}\times\mathcal{S}_R^{(m)}$ with $\mathcal{S}_X^{(M)}=\{x_1,x_2,\ldots,x_M\}$ and ${\mathcal{S}_R^{(m)}=\{r_1,r_2,\ldots,r_m\}}$, $m,M\in\mathbb{N}$. We have also defined $S^{(m,M)}$ in terms of $(X^{(m,M)},R^{(m)})$ in \eqref{eq:SapproxCTMC}, and the generator $\mathbf{G}_n^{(mM)}$ is defined in \eqref{eq:GeneratorY}, $n=1,2, \ldots, N$.
\end{sloppypar}

Throughout this section, we denote by $\{\mathbf{e}_{kl}\}_{k,l=1}^{m,M}$ the standard basis in $\reals^{mM}$, that is, $\mathbf{e}_{kl}$ represents a row vector of size $1\times mM$ with a value of 1 in the $(k-1)M+l$-th entry and $0$ elsewhere. ${\mathbf{D}_{mM}:=\diag\left(\bm{d}\right)}$ is an $mM\times mM$ diagonal matrix with vector $\bm{d}=(d_1,d_2,\ldots,d_{mM})$ on its diagonal, where ${d_{(k-1)M+l}=r_k\in\mathcal{S}_R^{(m)}}$, $k=1,2,,\ldots,m$, $l=1,2,,\ldots,M$.
\subsection{European-Style Convertible Bond}
Under the approach of \cite{tsiveriotis1998valuing}, the risk-neutral value of a European-style convertible debt, $v_e:[0,T]\times\reals_+^\star\times\mathcal{S}_R\rightarrow \reals_+$, is given by
\begin{equation}\label{eq:CBeuroTF}
\begin{split}
     v_e(t,x,r)  =\e \Big[e^{-\int_t^T R_u\diff u} \eta S_T \ind_{\{S_T\geq F/\eta\}} 
     %& \quad\quad\quad 
     + e^{-\int_t^T R_u+c_u\diff u} F  \ind_{\{S_T< F/\eta\}} \big| S_t=x, R_t=r\Big]. 
      %& \quad\quad\quad+\sum_{i=1}^N cP_r(0,t_i) e^{-\int_t^{t_i} c_u\diff u}\ind_{\{t_i\geq t\}}
 \end{split}
\end{equation}
The cash-only $v_e^{CO}:[0,T]\times\reals_+^\star\times\mathcal{S}_R\rightarrow\reals_+$ and equity $v_e^{E}:[0,T]\times\reals_+^\star\times\mathcal{S}_R\rightarrow\reals_+$ parts of the debt can then be defined as 
\begin{equation}
     v_e^{E}(t,x,r)  :=\e \Big[e^{-\int_t^T R_u\diff u} \eta S_T \ind_{\{S_T\geq F/\eta\}}| S_t=x, R_t=r\Big],
\end{equation}
and
\begin{equation}
     v_e^{CO}(t,x,r)  := \e \Big[ e^{-\int_t^T R_u+c_u\diff u} F  \ind_{\{S_T< F/\eta\}} \big| S_t=x, R_t=r\Big],
      %& \quad\quad\quad+\sum_{i=1}^N cP_r(0,t_i) e^{-\int_t^{t_i} c_u\diff u}\ind_{\{t_i\geq t\}}
\end{equation}
respectively.

Under the assumption that the volatility parameter of $S$ in \eqref{eqEDS_S} is constant, $\sigma_S(r)=\tilde{\sigma}_S>0$ for all $r\in\mathcal{S}_R$, and the short-rate process is Gaussian; we can find an explicit expression for \eqref{eq:CBeuroTF}. This is the case for the Vasicek, Ho--Lee, and Hull--White models (see Tables \ref{tblmodelsHomo} and \ref{tblmodelsNonHomo} for details). This is discussed further in Appendix \ref{appendixCBeuroTF}, available online as supplemental material.

\begin{prop}\label{prop:vanillaCB_CTMC}
    Let Assumption \ref{assumpStateSpaceR} hold. Given that $S_0>0$, and $X^{(m,M)}=\ln(S_0)-\rho f(R_0)=x_i\in\mathcal{S}_X^{(M)}$, with $R_0=r_j\in\mathcal{S}_R^{(m)}$, the value of the European-style CB with maturity $T>0$, face value $F>0$, and conversion ratio $\eta>0$ can be approximated by
    \begin{align}
        v_e^{(m,M)}(0,S_0,R_0)&:=\e \Big[e^{-\int_0^T R^{(m)}_u\diff u} \eta S_T^{(m,M)} \ind_{\{S_T^{(m,M)}\geq F/\eta\}} \nonumber\\
     & \quad\quad\quad 
     + e^{-\int_0^T R_u^{(m)}+c_u\diff u} F  \ind_{\{S_T^{(m,M)}< F/\eta\}} \big| S_0^{(m,M)}=S_0, R^{(m)}_0=R_0\Big] \nonumber \\
    %& \quad\quad\quad+\sum_{i=1}^N cP_j^{(m)}(0,t_i) e^{-\int_t^{t_i} c_u\diff u}\nonumber\\
    & = \mathbf{e}_{ji} \prod_{n=1}^N e^{\left(\mathbf{G}^{(mM)}_n-\mathbf{D}_{mM}\right)\Delta_N}\mathbf{H}.\label{eq:vanillaCB_CTMC} %+\sum_{i=1}^N cP_j^{(m,N)}(0,t_i) e^{-\int_t^{t_i} c_u\diff u},
    \end{align}
    \begin{sloppypar}
         where $\mathbf{H}$ denotes a column vector of size $mM\times 1$ whose $(k-1)M+l$-th entry is given by
    \end{sloppypar}
    \begin{equation}\label{eq:hlk_ConvBondEuroTF}
        h_{(k-1)M+l}=\eta e^{x_l+\rho f(r_k)}\ind_{\{e^{x_l+\rho f(r_k)}\geq F/\eta\}}+e^{-\int_0^Tc_u\diff u}F\ind_{\{e^{x_l+\rho f(r_k)}< F/\eta\}},
    \end{equation}
    for $k=1,2,,\ldots,m$, $l=1,2,,\ldots,M$. 
\end{prop}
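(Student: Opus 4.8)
The plan is to recognize the right-hand side of \eqref{eq:vanillaCB_CTMC} as a discrete-time matrix representation of the conditional expectation on the left, exactly as was done for zero-coupon bonds in Lemma \ref{lemmaZeroCouponBondApprox} and Proposition \ref{propZeroCouponBondApprox}, but now for the two-dimensional chain. First I would note that, by Proposition \ref{propCaiCTMC}, the regime-switching CTMC $(X^{(m,M)},R^{(m)})$ can be replaced by the one-dimensional CTMC $Z^{(mM)}$ on $\mathcal{S}_Z^{(mM)}=\{1,\dots,mM\}$ with piecewise-constant generator $\mathbf{G}^{(mM)}_n$ on $[t_{n-1},t_n)$, and that the terminal payoff is a function of the terminal state only: reading off $\psi^{-1}$, if $Z^{(mM)}_T$ corresponds to $(x_l,r_k)$ then $S^{(m,M)}_T=e^{x_l+\rho f(r_k)}$ by \eqref{eq:SapproxCTMC}, so the payoff $\eta S_T^{(m,M)}\ind_{\{S_T^{(m,M)}\geq F/\eta\}}+e^{-\int_0^T c_u\diff u}F\ind_{\{S_T^{(m,M)}<F/\eta\}}$ is precisely the $((k-1)M+l)$-th entry $h_{(k-1)M+l}$ of $\mathbf{H}$ in \eqref{eq:hlk_ConvBondEuroTF} (with the deterministic factor $e^{-\int_0^T c_u\diff u}$ pulled out of the expectation, since $c$ is deterministic).

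Next I would handle the discretization of the stochastic discount factor. As in Lemma \ref{lemmaZeroCouponBondApprox}, introduce a refined partition $\tiT_n=n\Delta_{\tN}$ of $[0,T]$ with $\tN=kN$ and write $e^{-\int_0^T R^{(m)}_u\diff u}=\lim_{\tN\to\infty}e^{-\sum R^{(m)}_{\tiT_n}\Delta_{\tN}}$; the same argument shows $e^{-\int_0^T R^{(m)}_u+c_u\diff u}$ differs from this only by the deterministic factor $e^{-\int_0^T c_u\diff u}$, which is why both pieces of $\mathbf{H}$ carry at most that one extra constant. Because the discount rate associated to state $(x_l,r_k)$ is $r_k$, the relevant diagonal matrix is exactly $\mathbf{D}_{mM}=\diag(\bm{d})$ with $d_{(k-1)M+l}=r_k$. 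Then the Markov property of $Z^{(mM)}$ and the transition-probability representation \eqref{eqTransProb} give, for the discretized functional,
\begin{equation*}
\e\Big[e^{-\sum_{n} R^{(m)}_{\tiT_n}\Delta_{\tN}}\,\widehat h(Z^{(mM)}_T)\,\big|\,Z^{(mM)}_0=(j-1)M+i\Big]
=\mathbf{e}_{ji}\Big(\prod_{n=1}^N\big(e^{\mathbf{G}^{(mM)}_n\Delta_{\tN}}e^{-\mathbf{D}_{mM}\Delta_{\tN}}\big)^k\Big)\mathbf{H},
\end{equation*}
by inserting one factor $e^{-\mathbf{D}_{mM}\Delta_{\tN}}$ per sub-step and one transition factor $e^{\mathbf{G}^{(mM)}_n\Delta_{\tN}}$, telescoping the conditional expectations from $T$ back to $0$; this is the two-dimensional verbatim analogue of the computation in Appendix \ref{appendixProof} for Lemma \ref{lemmaZeroCouponBondApprox}. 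Letting $\tN\to\infty$ (equivalently $k\to\infty$) and applying the Lie product formula $\lim_{k}(e^{A/k}e^{B/k})^k=e^{A+B}$ on each interval $[t_{n-1},t_n)$ collapses $(e^{\mathbf{G}^{(mM)}_n\Delta_{\tN}}e^{-\mathbf{D}_{mM}\Delta_{\tN}})^k$ to $e^{(\mathbf{G}^{(mM)}_n-\mathbf{D}_{mM})\Delta_N}$, yielding \eqref{eq:vanillaCB_CTMC}. To pass the limit inside the expectation I would invoke dominated convergence, which is where Assumption \ref{assumpStateSpaceR} enters: it makes $e^{-\int_0^T R^{(m)}_u\diff u}$ (and likewise with the $+c_u$ term) bounded by $e^{-r^\star T}\cdot e^{\|c\|_\infty T}$-type constants, and the payoff $\mathbf{H}$ is bounded on the finite state-space, so the integrands are uniformly bounded.

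The main obstacle is essentially bookkeeping rather than a genuine difficulty: one must keep the index map $\psi(x_l,r_k)=(k-1)M+l$ consistent throughout — in the initial vector $\mathbf{e}_{ji}$, in the diagonal $\mathbf{D}_{mM}$, and in the terminal vector $\mathbf{H}$ — and verify that the discount rate seen in state $Z=(k-1)M+l$ is the short rate $r_k$ and not the equity coordinate, so that $\mathbf{D}_{mM}$ has the block-constant structure claimed. A secondary point worth a sentence is justifying that only the short rate $R^{(m)}$, and not $X^{(m,M)}$, drives discounting, so that the off-diagonal regime-switching terms in $\mathbf{G}^{(mM)}_n$ interact with $\mathbf{D}_{mM}$ correctly under the Lie product formula; this follows because $\mathbf{D}_{mM}$ depends only on the $R^{(m)}$-coordinate of the enlarged state, exactly as in \eqref{eqEDS_Xm}. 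Everything else is the one-dimensional argument of Lemma \ref{lemmaZeroCouponBondApprox} and Proposition \ref{propZeroCouponBondApprox} transported to the $mM$-dimensional chain of Proposition \ref{propCaiCTMC}.
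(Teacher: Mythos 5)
Your proposal is correct and follows essentially the same route as the paper: the paper likewise reduces to the one-dimensional CTMC on the enlarged state space via Proposition \ref{propCaiCTMC}, identifies \eqref{eq:vanillaCB_CTMC} as the matrix representation of a conditional expectation with transition matrices built from $\mathbf{G}^{(mM)}_n$, and then reuses the discretization, dominated-convergence, and Lie-product-formula argument from the proofs of Lemma \ref{lemmaZeroCouponBondApprox} and Proposition \ref{propZeroCouponBondApprox}. Your additional bookkeeping on the index map, the block-constant structure of $\mathbf{D}_{mM}$, and the factoring out of the deterministic credit spread is exactly what the paper leaves implicit.
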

\begin{sloppypar}
The proof follows by noting that \eqref{eq:vanillaCB_CTMC} is the matrix representation of the conditional expectation of a function of a discrete one-dimensional random variable whose conditional probability mass function is given by the transitional probability $p_{kl}(t_n,t_{n+1})$, $1\leq k,l\leq mM$, with ${\mathbf{P}(t_n,t_{n+1})=[p_{kl}(t_n,t_{n+1})]_{mM\times mM}}$ as defined in  \eqref{eqTransProb}, with the generators $\mathbf{Q}^{(m)}_n$ replaced by $\mathbf{G}^{(mM)}_n$. The remainder of the proof follows the same reasoning used in the proofs of Lemma \ref{lemmaZeroCouponBondApprox} and Proposition \ref{propZeroCouponBondApprox}, detailed in Appendix \ref{appendixProof}.
\end{sloppypar}
\begin{remark}[Extension to coupon-bearing European CBs]\label{rmk:EuroCBsCoupons}
To extend the valuation to coupon-bearing bonds, it suffices to add the discounted value of the future coupons to the value obtained in \eqref{eq:vanillaCB_CTMC}, similar to the approach in Remark \ref{rmkCouponBearingBond}. More precisely, assume that a periodic coupon, $\alpha>0$, is paid at times $t_{z}<t_{2z}<\ldots<t_{\tilde{N}z}=T$, with $z=N/\tilde{N}$. The present value of future coupons is given by
$$\sum_{n=1}^{\tilde{N}} \alpha P_j^{(m)}(0,t_{nz}),$$
 with $P_j^{(m)}(\cdot,\cdot)$ defined in \eqref{eqZeroCouponApprox}. Adding this to the value of the European-style CB without coupons obtained in \eqref{eq:vanillaCB_CTMC} completes the extension.
    %Extension to coupon-bearing bonds can be performed by adding the discounted value of the future coupons to the value obtained in \eqref{eq:vanillaCB_CTMC}. %similarly to Remark \ref{rmkCouponBearingBond}. 
    %Indeed, suppose that a periodic coupon $\alpha>0$ is paid at time $t_{z}<t_{2z}<\ldots<t_{\tilde{N}z}=T$, with $z=N/\tilde{N}$. 
    %Then, it suffices to add the present value of future coupons given by
    % $\sum_{n=1}^{\tilde{N}} \alpha P_j^{(m,N)}(0,t_{nz}),$
  %with $P_j^{(m,N)}(\cdot,\cdot)$ defined in \eqref{eqZeroCouponApprox}, to the value of the European CB without coupon obtained in \eqref{eq:vanillaCB_CTMC}. 
\end{remark}
\begin{remark}[Convergence of European-style CBs]
    From Remark \ref{remakWeakConvergenceS}, we have that $S^{(m,M)}\Rightarrow S$. The convergence of derivatives with a continuous and bounded payoff function then follows directly from the Portmanteau theorem, see for instance \cite{billingsley1999convergence}, Theorem 2.1. However, for discontinuous and unbounded payoff functions such as that involved in \eqref{eq:CBeuroTF}\footnote{The payoff function exhibits a discontinuity in the state variable because of the difference between the risky and the risk-free rates.}, the convergence of the prices is not as straightforward. For a continuous and unbounded payoff function $g$, \cite{mijatovic2009continuously}, Remark 3 and \cite{cui2018general}, Remark 5, suggest replacing the original payoff function by a truncated payoff $g\wedge L$, with a constant $L>0$ chosen to be sufficiently large such that the numerical results are not altered. \cite{kirkby2022hybrid}, Proposition 6, shows the convergence of the derivative prices for continuous bounded payoffs, such as equity cap and floor, that is, when the payoff function is bounded from above and below. 
    
    Detailed error and convergence analysis in the context of European option pricing under two-dimensional stochastic local volatility models are discussed in \cite{ma2022convergence}. Extensions to European-style CBs under two-dimensional stochastic interest rate models are left for future research. Numerical experiments in Section \ref{sectNumResults} demonstrate the accuracy and efficiency of the approximation empirically.
\end{remark}
The closed-form matrix expression in \eqref{eq:vanillaCB_CTMC} can be implemented in a straightforward manner. However, as highlighted by \cite{mackay2023analysis}, several numerical issues can be encountered when dealing with medium/long time-horizon derivatives because of the size of generator $\mathbf{G}^{(mM)}_n$. Hence, based on Propostion 4.3 of \cite{mackay2023analysis}, a new algorithm that speeds up the pricing of European-style CBs is developed. This fast version of Proposition \ref{prop:vanillaCB_CTMC} is presented in Appendix \ref{appendixAlgoEuroCBfast}. 
\subsection{Convertible Bond (American-Style)}
When the conversion option can be exercised at any time prior to maturity (and call and put features are ignored), the valuation of CBs is equivalent to solving the following optimal stopping problem
\begin{equation}\label{eq:valueFctCBamerican}
    v(t,x,r)=\sup_{\tau\in\mathcal{T}_{t,T}}\e\left[e^{-\int_t^\tau R_u +c_u \ind_{\left\{\tau=T,S_T< F/\eta\right\}}\diff u}\varphi(\tau,S_{\tau})\Big|S_t=x, R_t=r\right], %+\sum_{i=1}^Nc e^{-\int_t^{t_i} R(s) + \cds_u \diff u}\ind_{\{t_i\leq \tau,t_i\geq t\}}\right]
\end{equation}
where $\mathcal{T}_{t,T}$ denotes the (admissible) set of all stopping times taking values on the interval $[t,T]$, and the reward (or gain) function $\varphi:[0,T]\times \reals_+^\star\rightarrow \reals_+^\star$ is defined by
\begin{equation}\label{eq:rewardFctCB}
\varphi(t,x) =\left \{\begin{array}{ll}
                 \eta x & \textrm{if } t<T,\\
                 \max (\eta x, F) & \textrm{if } t=T.
                \end{array}\right.
\end{equation}
\begin{remark}\label{rmkCBtimeDiscountinuity}
    When $x<F/\eta$, the reward function is discontinuous at $T$ since
    $$\lim_{t\rightarrow T^-}\varphi(t,x)=\eta x < F=\varphi(T,x).$$
\end{remark}
Assuming that an optimal stopping time\footnote{An admissible stopping time $\tau_t^\star\in\mathcal{T}_{t,T}$ is said to be optimal for \eqref{eq:valueFctCBamerican} if ${v(t,x,r)=\e\left[e^{-\int_t^{\tau_t^\star} R_u +c_u \ind_{\{\tau_t^\star=T,\eta S_T< F\}}\diff u}\varphi(\tau_t^\star,S_{\tau_t^\star})\Big|S_t=x, R_t=r\right]}$.} $\tau_t^\star$ exists, the cash-only part  $v^{CO}:[0,T]\times \reals_+^\star\times\mathcal{S}_R\rightarrow \reals_+$ and equity part $v^E:[0,T]\times \reals_+^\star\times\mathcal{S}_R\rightarrow \reals_+$ of the CB can be defined by
\begin{equation*}\label{eq:vCO_CB}
\begin{split}
    v^{CO}(t,x,r) &=\e\left[e^{-\int_t^{T} R_u +c_u \diff u}F\ind_{A}\Big|S_t=x, R_t=r\right], \text{ and}\\
     v^{E}(t,x,r) &= \e\left[e^{-\int_t^{\tau_t^\star} R_u \diff u}\eta S_{\tau_t^\star}\ind_{A^c}\Big|S_t=x, R_t=r\right],
\end{split}
\end{equation*}
respectively, where $A:=\left\{\tau^\star_t=T,S_T< F/\eta\right\}$, and $A^c:=\left\{\tau^\star_t< T\right\}\cup\left\{\tau^\star_t=T, S_T\geq F/\eta\right\}$ denotes the complement of $A$. It follows that $v(t,x,r)=v^{CO}(t,x,r)+v^{E}(t,x,r)$.

%In the literature, we rarely observe the CB written formally in terms of its optimal stopping problem. 
%Studying the problem in the form of \eqref{eq:valueFctCBamerican} allows to obtain these next results, which concern the existence of a trivial stopping time for American-style CBs. 
%Studying the problem in the form of \eqref{eq:valueFctCBamerican} allows to obtain these next results, which concerns some bounds for the value of American-type CBs under the assumption that no dividends are paid.
When no dividends are paid, and credit risk is assumed to be nil ($q_t=c_t=0$ for all $t\in[0,T]$), the value of the CB in \eqref{eq:valueFctCBamerican}, which can be exercised at any time prior to maturity, is equal to the value of the European-style CB \eqref{eq:CBeuroTF}, meaning that an optimal stopping time for \eqref{eq:valueFctCBamerican} is at the maturity of the bond. On the other hand, when credit risk is considered, the value of American-style CBs is bounded from below and above by those of European-style CBs with and without credit risk, respectively. This is formalized in the following.  
%When periodic coupons are paid, these results still hold. This is discussed further in Appendix \ref{appendixExtensionPropTrivialCB}.
\begin{prop}\label{propCBtrivialTF}
\begin{sloppypar}
    Assume that $q_t=c_t=0$ for all $t\in[0,T]$ and $\sigma_S(r)=\tilde{\sigma}_S>0$ for all $r\in\mathcal{S}_R$. We have that ${v(t,x,r)=v_e(t,x,r)}$ for all $(t,x,r)\in [0,T]\times \reals_+^\star\times\mathcal{S}_R$. 
\end{sloppypar}
\end{prop}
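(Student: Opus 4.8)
The plan is to show the two inequalities $v_e(t,x,r) \leq v(t,x,r)$ and $v(t,x,r) \leq v_e(t,x,r)$ separately. The first is immediate from the definitions: under $q_t = c_t = 0$, the reward in \eqref{eq:valueFctCBamerican} evaluated at the deterministic stopping time $\tau \equiv T$ gives exactly the integrand of \eqref{eq:CBeuroTF} (the indicator $c_u \ind_{\{\tau = T,\, S_T < F/\eta\}}$ collapses since $c_u = 0$, and $\varphi(T, S_T) = \max(\eta S_T, F) = \eta S_T \ind_{\{S_T \geq F/\eta\}} + F\ind_{\{S_T < F/\eta\}}$). Since $\tau \equiv T$ is one admissible choice in $\mathcal{T}_{t,T}$, taking the supremum gives $v(t,x,r) \geq v_e(t,x,r)$.

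The reverse inequality is the substance of the argument. First I would reduce the discounting: with $c_t = 0$, the discount factor in \eqref{eq:valueFctCBamerican} is simply $e^{-\int_t^\tau R_u \diff u}$ for every admissible $\tau$, so $v(t,x,r) = \sup_{\tau} \e[e^{-\int_t^\tau R_u\diff u}\varphi(\tau, S_\tau) \mid S_t = x, R_t = r]$. Because $\varphi(s,y) = \eta y$ for $s < T$ and $\varphi(T,y) = \max(\eta y, F) \geq \eta y$, it suffices to compare against the discounted equity position $e^{-\int_t^\tau R_u\diff u}\eta S_\tau$, handling the extra mass $F$ that appears only on $\{\tau = T,\, S_T < F/\eta\}$ at the very end. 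The key structural fact is that under the dynamics \eqref{eqEDS_S} with $q_t = 0$, the discounted stock price $\widetilde{S}_s := e^{-\int_t^s R_u\diff u} S_s$ is a $\prob{Q}$-martingale (its drift is $(R_s - q_s)S_s\diff s - R_s S_s\diff s = 0$ after discounting; this uses that $\prob{Q}$ is the pricing measure and requires a mild integrability/true-martingale check given $\sigma_S$ constant and $R$ bounded below by Assumption \ref{assumpStateSpaceR}). I would then invoke the optional stopping theorem for the bounded stopping times in $\mathcal{T}_{t,T}$ to get $\e[e^{-\int_t^\tau R_u\diff u}\eta S_\tau \mid \mathcal{F}_t] = \eta S_t = \eta x$ for every $\tau \in \mathcal{T}_{t,T}$.

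Assembling this: for any $\tau \in \mathcal{T}_{t,T}$,
\begin{align*}
\e\left[e^{-\int_t^\tau R_u\diff u}\varphi(\tau, S_\tau)\,\Big|\,S_t = x, R_t = r\right]
&= \e\left[e^{-\int_t^\tau R_u\diff u}\eta S_\tau\,\Big|\,\cdots\right] + \e\left[e^{-\int_t^T R_u\diff u}(F - \eta S_T)\ind_{\{\tau = T,\, S_T < F/\eta\}}\,\Big|\,\cdots\right]\\
&\leq \eta x + \e\left[e^{-\int_t^T R_u\diff u}(F - \eta S_T)\ind_{\{S_T < F/\eta\}}\,\Big|\,\cdots\right],
\end{align*}
where the inequality drops the restriction $\{\tau = T\}$ from the (nonnegative) second term. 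The right-hand side no longer depends on $\tau$; moreover $\eta x = \e[e^{-\int_t^T R_u\diff u}\eta S_T \mid \cdots]$ again by the martingale property, so the bound equals $\e[e^{-\int_t^T R_u\diff u}(\eta S_T \vee F) \mid \cdots]$, which is precisely $v_e(t,x,r)$ after splitting on $\{S_T \gtrless F/\eta\}$. Taking the supremum over $\tau$ yields $v(t,x,r) \leq v_e(t,x,r)$, completing the proof.

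\textbf{Main obstacle.} The delicate point is the true-martingale property of $\widetilde{S}$ and the legitimacy of optional stopping: a priori the discounted stock is only a local martingale, and $R$ may be unbounded above, so $e^{-\int_t^s R_u\diff u}$ is between $0$ and $e^{-r^\star(s-t)}$ by Assumption \ref{assumpStateSpaceR} but $S_s$ itself has exponential moments that interact with $R$. I expect this to be dispatched using the constant-volatility hypothesis $\sigma_S(r) = \widetilde\sigma_S$ together with the lower bound on $R$: one can bound $\e[\sup_{t \leq s \leq T}\widetilde S_s]$ or verify a Novikov-type / uniform-integrability condition on $\{\widetilde S_\tau : \tau \in \mathcal{T}_{t,T}\}$, which is exactly where Assumption \ref{assumpStateSpaceR} (invoked in the statement) does its work. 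Everything else is bookkeeping with the payoff decomposition.
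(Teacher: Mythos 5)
Your proof is correct, and it rests on exactly the same pillar as the paper's: the martingale property of the discounted stock $e^{-\int_t^s R_u\diff u}S_s$ under $q\equiv 0$, combined with $\max(\eta S_T,F)\ge \eta S_T$. The difference is one of packaging. The paper verifies that the discounted reward $e^{-\int_0^t R_u\diff u}\varphi(t,S_t)$ is a submartingale (a martingale on $[0,T)$, with an upward jump at $T$ coming from the $\max$) and then cites the standard optimal-stopping result that the terminal time is optimal for submartingale rewards (\cite{bjork2009}, Proposition 21.2). You instead unroll that theorem by hand: the decomposition $\varphi(\tau,S_\tau)=\eta S_\tau+(F-\eta S_T)\ind_{\{\tau=T,\,S_T<F/\eta\}}$ plus optional stopping gives a $\tau$-independent upper bound that you identify with $v_e$. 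Your route is self-contained and makes the upper bound explicit; the paper's is shorter because it outsources the last step to a cited theorem. Both arguments carry the same caveat, which you correctly flag as the main obstacle: a priori the discounted stock is only a local martingale, and the constant-volatility hypothesis is there precisely to secure the true-martingale property. The paper does not resolve this either --- it defers it to Remark \ref{rmkMartingalePropS}, acknowledging that model-specific integrability conditions are left for future work --- so your proof is on the same footing as the published one on this point.
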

\begin{proof}
We first show that the discounted reward process $\{e^{-\int_0^t R_u\diff u} \varphi(t,S_t)\}_{0\leq t\leq T}$ is a submartingale. For $0\leq s\leq t <T$, we have that
$$\e\left[e^{-\int_0^t R_u\diff u} \varphi(t,S_t) \Big|\mathcal{F}_s\right]=\e\left[e^{-\int_0^t R_u\diff u} \eta S_t|\mathcal{F}_s\right] =e^{-\int_0^s R_u\diff u} \eta S_s=e^{-\int_0^s R_u\diff u} \varphi(s,S_s),$$
where the second equality follows from the martingale property of the discounted stock process under the risk-neutral measure (see Remark \ref{rmkMartingalePropS}).
On the other hand, if $0\leq s< t =T$, we have that
\begin{align*}
    \e\left[e^{-\int_0^T R_u\diff u} \varphi(T,S_T) \Big|\mathcal{F}_s\right]&=\e\left[e^{-\int_0^T R_u\diff u}\max (\eta S_T, F) \Big|\mathcal{F}_s\right]\\
    &\geq \e\left[e^{-\int_0^T R_u\diff u}\eta S_T \Big|\mathcal{F}_s\right]
    %=e^{-\int_0^s R_u\diff u}\eta S_s
    =e^{-\int_0^s R_u\diff u} \varphi(s,S_s).
\end{align*}
The final assertion follows for well-known results in optimal stopping theory, which states that if the discounted reward process is a submartingale, then the maturity of the contract is an optimal stopping time, see \cite{bjork2009}, Proposition 21.2.
\end{proof}
When periodic coupons are paid, the results of Proposition \ref{propCBtrivialTF} still hold. This is demonstrated in Corollary \ref{corCBtrivialTFcoupon}.

Let $\tilde{v}_e:[0,T]\times\reals_+^\star\times\mathcal{S}_R\rightarrow \reals_+$ denote the value function of European-style CBs when $c_t=0$ for all $t\in[0,T]$. Using \eqref{eq:CBeuroTF}, it follows that
\begin{equation}\label{eqEuroCBnoCreditRisk}
    \begin{split}
        \tilde{v}_e(t,x,r)&:=\e \Big[e^{-\int_t^T R_u\diff u} \max\left(\eta S_T,F\right) \big| S_t=x, R_t=r\Big]\\
        &=\e \Big[e^{-\int_t^T R_u\diff u} \varphi(T,S_T) \big| S_t=x, R_t=r\Big].
    \end{split}
\end{equation}
\begin{corollary}\label{corrCBsUpperBound}
\begin{sloppypar}
        Assume that $q_t=0$ for all $t\in[0,T]$ and $\sigma_S(r)=\tilde{\sigma}_S>0$ for all $r\in\mathcal{S}_R$. We have that ${v_e(t,x,r)\leq v(t,x,r)\leq \tilde{v}_e(t,x,r)}$ for all $(t,x,r)\in[0,T]\times\reals_+^\star\times\mathcal{S}_R$.
\end{sloppypar}
\end{corollary}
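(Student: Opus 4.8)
The plan is to prove the two inequalities separately, both by elementary comparison arguments that reduce everything to Proposition \ref{propCBtrivialTF}.

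\textbf{Lower bound.} For $v_e(t,x,r)\le v(t,x,r)$, I would use that the constant stopping time $\tau\equiv T$ is admissible, i.e.\ $\tau\equiv T\in\mathcal{T}_{t,T}$, hence a valid candidate in the supremum defining $v$ in \eqref{eq:valueFctCBamerican}. Inserting $\tau=T$, the indicator $\ind_{\{\tau=T,\,S_T<F/\eta\}}$ collapses to $\ind_{\{S_T<F/\eta\}}$ and the reward becomes $\varphi(T,S_T)=\max(\eta S_T,F)$. A short case split — on $\{S_T\ge F/\eta\}$ the $c$-term in the discount vanishes and $\varphi(T,S_T)=\eta S_T$, while on $\{S_T<F/\eta\}$ the discount is $e^{-\int_t^T (R_u+c_u)\diff u}$ and $\varphi(T,S_T)=F$ — shows that the expectation evaluated at $\tau=T$ is exactly the right-hand side of \eqref{eq:CBeuroTF}, namely $v_e(t,x,r)$. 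Since $v$ is a supremum over $\mathcal{T}_{t,T}$, the inequality follows at once.

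\textbf{Upper bound.} For $v(t,x,r)\le\tilde{v}_e(t,x,r)$, I would use $c_u\ge 0$ and $\varphi\ge 0$ to get the pathwise bound $e^{-\int_t^\tau (R_u+c_u\ind_{\{\tau=T,\,S_T<F/\eta\}})\diff u}\le e^{-\int_t^\tau R_u\diff u}$ for every $\tau\in\mathcal{T}_{t,T}$, whence
\[
v(t,x,r)\le\sup_{\tau\in\mathcal{T}_{t,T}}\e\left[e^{-\int_t^\tau R_u\diff u}\varphi(\tau,S_\tau)\,\big|\,S_t=x,\,R_t=r\right].
\]
The right-hand side is precisely the value function in \eqref{eq:valueFctCBamerican} in the special case $c\equiv 0$. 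Since the standing hypothesis $q_t=0$ and $\sigma_S(r)=\tilde{\sigma}_S>0$ are in force, Proposition \ref{propCBtrivialTF} applied with $c_t\equiv 0$ shows this supremum equals $v_e(t,x,r)$ computed with $c\equiv 0$, which is exactly $\tilde{v}_e(t,x,r)$ as defined in \eqref{eqEuroCBnoCreditRisk}. Combining the two bounds yields $v_e(t,x,r)\le v(t,x,r)\le\tilde{v}_e(t,x,r)$.

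I do not anticipate a genuine obstacle here: the only points needing care are the bookkeeping of the indicator $\ind_{\{\tau=T,\,S_T<F/\eta\}}$ inside the discount integral when $\tau=T$, so that the $\tau=T$ candidate truly reproduces $v_e$ rather than something slightly different, and the observation that the credit-spread-free American convertible bond problem is literally an instance of Proposition \ref{propCBtrivialTF}, so that its value collapses onto the European one.
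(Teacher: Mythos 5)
Your proof is correct and follows essentially the same route as the paper: the lower bound by taking the admissible candidate $\tau=T$ in the supremum (which reproduces $v_e$ exactly after the case split on $\{S_T\ge F/\eta\}$), and the upper bound by the pathwise discount comparison $c_u\ge 0$ followed by Proposition \ref{propCBtrivialTF} to identify the credit-spread-free supremum with $\tilde{v}_e$. The only difference is that you spell out the bookkeeping of the indicator in the discount and the nonnegativity of $\varphi$, which the paper leaves implicit.
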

\begin{proof}
    The first inequality follows directly since $T\in \mathcal{T}_{t,T}$. For the second inequality, it suffices to note that
    \begin{align*}
        v(t,x,r)& = \sup_{\tau\in\mathcal{T}_{t,T}}\e\left[e^{-\int_t^\tau R_u +c_u \ind_{\{\tau=T,\S_T<F/\eta\}}\diff u}\varphi(\tau,S_{\tau})\Big|S_t=x, R_t=r\right]\\
        &\leq \sup_{\tau\in\mathcal{T}_{t,T}}\e\left[e^{-\int_t^\tau R_u \diff u}\varphi(\tau,S_{\tau})\Big|S_t=x, R_t=r\right] = \tilde{v}_e(t,x,r)
    \end{align*}
    where the last equality follows from Proposition \ref{propCBtrivialTF}.
\end{proof}
When periodic coupons are paid, the result of Corollary \ref{corrCBsUpperBound} still holds. This is discussed in Corollary \ref{corrCBsUpperBoundcoupon}.

\begin{remark}\label{rmkMartingalePropS}
    Condition $\sigma_S(r)=\tilde{\sigma}_S>0$ for all $r\in\mathcal{S}_R$ in Propositions \ref{propCBtrivialTF} can be relaxed provided that the discounted stock process remains a true martingale under the risk-neutral measure. Indeed, when no additional condition is added, the discounted stock process $\{e^{-\int_0^t R_u\diff u} S_t\}_{t\geq 0}$ is a local martingale. For $\{e^{-\int_0^t R_u\diff u} S_t\}_{t\geq 0}$ to be a true martingale, some restrictions must be added to the parameters of the short-rate dynamics. This has been studied in stochastic volatility models for some specific time-homogeneous diffusion processes; see \cite{sin1998complications}, \cite{jourdain2004loss}, and \cite{cui2013martingale}. Conditions under which the discounted stock process is a true martingale for the particular short-rate models listed in Tables \ref{tblmodelsHomo}, \ref{tblmodelsNonHomo}, and \ref{tblmodelsNonHomoBrigo} are left as future research.

   Note also that the aforementioned results are applicable only to stocks that do not pay dividends, that is $q_t=0$ for all $t\geq 0$. When dividends are distributed, the discounted stock process becomes a supermartingale, which makes the arguments in the proof of Proposition \ref{propCBtrivialTF} invalid.
    %Note also that the preceding results only hold for non-dividend paying stock. When dividends are paid, the discounted stock process becomes a supermartingale, which invalidates the arguments in the proof of Proposition \ref{propCBtrivialTF}.
    %Extensions of the preceding results to coupon-bearing bonds
\end{remark}

When the conditions of Proposition \ref{propCBtrivialTF} are not satisfied, or the debt includes other specific features such as call and/or put options, as is often the case in practice, numerical techniques are required to solve the optimal stopping problem in \eqref{eq:valueFctCBamerican}. Commonly used methods, such as trees or least-squares Monte Carlo (see \cite{longstaff2001valuing}), are based on the Bermudan approximation\footref{footnoteBermContract} of $v$
and the dynamic programming principle. The same ideas are used in Proposition \ref{propConvBondCTMC}.

Consequently, we define $\mathbf{H}^{CO}$ (resp. $\mathbf{H}_n^{E}$, $0\leq n\leq N$) as column vectors of size $mM\times 1$ representing the cash-only (resp. equity) part of the reward, whose $(k-1)M+l$-entry are respectively given by
\begin{equation}\label{eq:defH_CO_ctmc}
    h^{CO}_{(k-1)M+l} = F\ind_{\{e^{x_l+\rho f(r_k)}< F/\eta\}},
\end{equation}
and
\begin{equation}
    h^{E}_{(k-1)M+l,n} = \left\{\begin{array}{ll} 
                                 \eta e^{x_l+\rho f( r_k)},& \textrm{ if } 0\leq n\leq N-1,\\
                                 \eta e^{x_l+\rho f(r_k)}\ind_{\{e^{x_l+\rho f(r_k)}\geq F/\eta\}}, &                                                                                                         \textrm{ if } n=N,\\ 
                                \end{array}\right.
\label{eq:defH_SO_ctmc}
\end{equation}
for $1 \leq k\leq m$, $1\leq l\leq M$.
Furthermore, let $\mathbf{B}_n^{CO}$, $\mathbf{B}^{E}_n$, and $\mathbf{B}_n$, $0\leq n\leq N$, be column vectors of size $mM\times 1$, representing the cash-only part, the equity part, and the total value of the CB at time $t_n$, respectively. %, by
We denote by $b_{k,n}$ the $k$-th entry of $\mathbf{B}_n$, $1\leq k\leq mM$, and define the indicator vector $\mathbf{1}_{\{\mathbf{B}_n=\mathbf{H}_n^{E}\}}$, where the $k$-th entry of $\mathbf{1}_{\{\mathbf{B}_n=\mathbf{H}_n^{E}\}}$, denoted by $\mathbf{1}_{\{\mathbf{B}_n=\mathbf{H}_n^{E}\}}(k)$, is given by
$\mathbf{1}_{\{\mathbf{B}_n=\mathbf{H}_n^{E}\}}(k)= \ind_{\{b_{k,n}=h^{E}_{k,n}\}},$
for each $k\in\mathcal{S}_Z^{mN}$. Finally, $\bm{1}_{mM\times 1}$ denotes the unit vector of size $mM\times 1$.

\begin{prop}\label{propConvBondCTMC}  Let Assumption \ref{assumpStateSpaceR} hold. The value of a CB with maturity $T>0$, face value $F>0$, and conversion ratio $\eta>0$, can be approximated recursively by
\begin{align*}
\begin{array}{llll}
    \multirow{2}{*}{$\hat{\mathbf{B}}_n^{CO}$},&\multirow{2}{*}{$=\Bigg\{$}
                                                 & \mathbf{H}^{CO}, & \textrm{ if } n=N,\\ 
                                                 &                           &  e^{-\int_{t_n}^{t_{n+1}}c_u\diff u}\exp\left\{\left(\mathbf{G}^{(mN)}_{n+1}-\mathbf{D}_{mM}\right)\Delta_N\right\}\mathbf{B}^{CO}_{n+1}, &  \textrm{ if }0\leq n\leq N-1,\\[6pt]
\multirow{2}{*}{$\mathbf{B}_n^{CO}$}&\multirow{2}{*}{$=\Bigg\{$}
                                                 &  \widehat{\mathbf{B}}_N^{CO}, & \textrm{ if } n=N,\\ 
                                    &            &  \hat{\mathbf{B}}^{CO}_n\left(\mathbf{1}_{mM\times 1}-\ind_{\{\mathbf{B}_n=\mathbf{H}_n^{E}\}} \right), & \textrm{ if }0\leq n\leq N-1,\\[6pt]
\multirow{2}{*}{$\hat{\mathbf{B}}_n^{E}$}&\multirow{2}{*}{$=\Bigg\{$}
                                                 & \mathbf{H}^{E}_N, & \textrm{ if } n=N,\\ 
                                    &            &  \exp\left\{\left(\mathbf{G}^{(mN)}_{n+1}-\mathbf{D}_{mM}\right)\Delta_N\right\}\mathbf{B}^{E}_{n+1}, & \textrm{ if }0\leq n\leq N-1,\\[6pt] 
\multirow{2}{*}{$\mathbf{B}_n^{E}$}&\multirow{2}{*}{$=\Bigg\{$}
                                                 & \hat{\mathbf{B}}_N^{E}, & \textrm{ if } n=N,\\ 
                                    &            &  \mathbf{B}_n-\mathbf{B}_n^{CO}, & \textrm{ if }0\leq n\leq N-1,%\text{ and}
                                    \\[6pt] 
%&&&\\
%\text{and},  & & & \\[6pt] 
%&&&\\
\multirow{2}{*}{$\mathbf{B}_n$}&\multirow{2}{*}{$=\Bigg\{$}
                                                 & \hat{\mathbf{B}}_N^{CO}+ \hat{\mathbf{B}}_N^{E}, & \textrm{ if } n=N,\\ 
                                    &            &  \max\left(\mathbf{H}_n^{E}, \hat{\mathbf{B}}_n^{CO}+ \hat{\mathbf{B}}_n^{E} \right) & \textrm{ if }0\leq n\leq N-1.
\end{array}
\end{align*}
\begin{sloppypar}
for a sufficiently large $N\in\mathbb{N}$, and where the maximum is taken element by element. %(also known as parallel maxima).
Specifically, given that ${X_0^{(m,M)}=\ln(S_0)-\rho f(R_0)=x_i\in\mathcal{S}^{(M)}_X}$ and ${R_0^{(m)}=R_0=r_j\in\mathcal{S}^{(m)}_R}$, the the value of an American-style CB can be approximated by
$$v^{(m,M)}(0,S_0, R_0)=\mathbf{e}_{ji} \mathbf{B}_0^{(mM)}.$$
\end{sloppypar}
\end{prop}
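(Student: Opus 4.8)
The plan is to establish Proposition \ref{propConvBondCTMC} by a backward induction on the time steps $n=N, N-1, \ldots, 0$, mirroring the dynamic programming principle for the optimal stopping problem \eqref{eq:valueFctCBamerican} applied to its Bermudan approximation on the grid $\{t_n\}_{n=0}^N$. The key structural observation, as in Proposition \ref{propCallableputableDebt}, is that for a column vector $\mathbf{V}$ of size $mM\times1$ representing the (conditional) value of a payoff at time $t_{n+1}$ as a function of the state of the joint chain, the quantity $\exp\{(\mathbf{G}^{(mM)}_{n+1}-\mathbf{D}_{mM})\Delta_N\}\mathbf{V}$ is exactly the matrix representation of $\e[e^{-\int_{t_n}^{t_{n+1}}R^{(m)}_u\diff u}\,g(\psi^{-1}(Z^{(mM)}_{t_{n+1}}))\mid Z^{(mM)}_{t_n}]$, using the generator $\mathbf{G}^{(mM)}_{n+1}$ from Proposition \ref{propCaiCTMC} together with the diagonal discount matrix $\mathbf{D}_{mM}$ whose entries record the short rate $r_k$ in each regime. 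The extra factor $e^{-\int_{t_n}^{t_{n+1}}c_u\diff u}$ in the cash-only recursion accounts for discounting the cash-only part at the risky rate $\tilde R_t=R_t+c_t$, since $c$ is deterministic and pulls out of the expectation.

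First I would set up the Bermudan approximation: replace the stopping time $\tau\in\mathcal{T}_{t,T}$ in \eqref{eq:valueFctCBamerican} by stopping times valued in $\{t_0,\ldots,t_N\}$, and invoke the dynamic programming principle (as cited, \cite{lamberton1998}, Theorem 10.1.3) to obtain the backward recursion: the value at $t_N$ is the terminal reward $\varphi(T,S_T)=\max(\eta S_T, F)$, and at $t_n$ one takes the maximum of the immediate conversion payoff $\varphi(t_n,S_{t_n})=\eta S_{t_n}$ and the continuation value. Then I would decompose this total value into its cash-only and equity parts following \cite{tsiveriotis1998valuing}: on the event that the holder does not convert at $t_n$ (i.e.\ continuation is chosen, equivalently $\mathbf{B}_n\neq\mathbf{H}^E_n$ in each state), the value carried forward splits additively into the cash-only piece — which continues to be discounted at the risky rate — and the equity piece — discounted at the risk-free rate; on the event of conversion, the entire value becomes equity (the holder receives $\eta S_{t_n}$ shares) and the cash-only component is zeroed out, which is exactly what the factor $(\mathbf{1}_{mM\times1}-\ind_{\{\mathbf{B}_n=\mathbf{H}^E_n\}})$ encodes. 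At maturity the split is dictated by whether $S_T\geq F/\eta$ (conversion, pure equity) or $S_T<F/\eta$ (redemption, pure cash-only), giving the terminal vectors $\mathbf{H}^{CO}$ and $\mathbf{H}^E_N$ in \eqref{eq:defH_CO_ctmc}--\eqref{eq:defH_SO_ctmc}.

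Next I would translate this recursion into the matrix form using Proposition \ref{propCaiCTMC} to identify expectations of functions of the regime-switching chain $(X^{(m,M)},R^{(m)})$ with expectations of functions of the one-dimensional chain $Z^{(mM)}$ on $\mathcal{S}_Z^{(mM)}$; the map $\psi(x_l,r_k)=(k-1)M+l$ explains why entries are indexed by $(k-1)M+l$ and why the state variable enters the reward through $S^{(m,M)}=\exp\{X^{(m,M)}+\rho f(R^{(m)})\}$, giving $e^{x_l+\rho f(r_k)}$ in the reward vectors. The identification of the discounted continuation value with $\exp\{(\mathbf{G}^{(mM)}_{n+1}-\mathbf{D}_{mM})\Delta_N\}$ applied to the time-$t_{n+1}$ value vector then follows from the same Lie-product-formula / dominated-convergence argument as in the proof of Proposition \ref{propZeroCouponBondApprox} (Appendix \ref{appendixProof}), applied with the generator $\mathbf{G}^{(mM)}_{n+1}$ in place of $\mathbf{Q}^{(m)}_{n+1}$ and with $\mathbf{D}_{mM}$ recording the short rate per regime; Assumption \ref{assumpStateSpaceR} guarantees the discount factor is bounded so the convergence theorems apply. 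Composing the one-step relations from $t_N$ back to $t_0$ and reading off the $\mathbf{e}_{ji}$ component — corresponding to the initial state $Z^{(mM)}_0=(j-1)M+i$, i.e.\ $R^{(m)}_0=r_j$ and $X^{(m,M)}_0=x_i$ — yields $v^{(m,M)}(0,S_0,R_0)=\mathbf{e}_{ji}\mathbf{B}_0^{(mM)}$.

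The main obstacle I anticipate is \emph{not} the matrix bookkeeping but the self-referential nature of the cash-only update: $\mathbf{B}^{CO}_n$ depends on the exercise region $\{\mathbf{B}_n=\mathbf{H}^E_n\}$, which in turn is determined by $\mathbf{B}_n$, which is built from $\hat{\mathbf{B}}^{CO}_n$ and $\hat{\mathbf{B}}^E_n$. One must check that the recursion is nonetheless well-posed at each step — i.e.\ that $\mathbf{B}_n=\max(\mathbf{H}^E_n,\hat{\mathbf{B}}^{CO}_n+\hat{\mathbf{B}}^E_n)$ is computable from quantities already known at stage $n$ (namely $\hat{\mathbf{B}}^{CO}_n,\hat{\mathbf{B}}^E_n$, both obtained by applying the one-step operators to the stage-$(n+1)$ vectors), after which the exercise indicator and hence $\mathbf{B}^{CO}_n,\mathbf{B}^E_n$ are determined — and that this faithfully reproduces the Tsiveriotis--Fernandes splitting of the Bermudan value. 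A secondary subtlety, worth a remark rather than a full argument, is the time-discontinuity of $\varphi$ at $T$ when $x<F/\eta$ (Remark \ref{rmkCBtimeDiscountinuity}) and the discontinuity of the terminal payoff in the state variable, which means convergence of $v^{(m,M)}$ to the true price $v$ does not follow from the Portmanteau theorem alone; as in the European case this is left to the numerical experiments, and the proposition is stated as an approximation identity for the CTMC model rather than a convergence claim.
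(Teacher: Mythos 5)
Your proposal is correct and follows essentially the same route the paper takes: the paper gives no detailed proof of Proposition \ref{propConvBondCTMC}, stating only that it rests on the Bermudan approximation, the dynamic programming principle, and the identification of $\exp\{(\mathbf{G}^{(mM)}_{n+1}-\mathbf{D}_{mM})\Delta_N\}\mathbf{V}_{n+1}$ as the matrix form of the discounted one-step conditional expectation (as in the proofs of Lemma \ref{lemmaZeroCouponBondApprox}, Propositions \ref{propZeroCouponBondApprox}, \ref{propCallableputableDebt} and \ref{prop:vanillaCB_CTMC}), combined with the Tsiveriotis--Fernandes splitting. Your elaboration of the cash-only/equity bookkeeping, the deterministic credit-spread factor, the mapping via Proposition \ref{propCaiCTMC}, and the well-posedness of the seemingly self-referential update all match the intended argument and the ordering used in Algorithm \ref{algoCBpriceTF_CTMC}.
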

Proposition \ref{propConvBondCTMC} is presented as an algorithm in Appendix \ref{appendixCallablePutableBondAlgo}. Similar to the European-style CB, the performance of the procedure in Proposition \ref{propConvBondCTMC} can be significantly increased using the technique of \cite{mackay2023analysis}, Proposition 4.3. This new fast version of the procedure is also reported in Appendix \ref{appendixCallablePutableBondAlgo}. %Note that all numerical experiments in Section \ref{sectConvBondNum} are performed using this new fast version of the procedure.
\begin{remark}[Extension to coupon-bearing bonds]\label{rmkAmCBcoupon}
    Recall that Proposition \ref{propConvBondCTMC} is set up for zero-coupon CBs. However, as mentioned previously, adding coupons to the previous procedure is straightforward. Indeed, when a coupon is paid at time $t_{n+1}$, it just needs to be discounted back to time $t_n$ with the cash-only part of the CB. More precisely, when a coupon $\alpha>0$ is paid at time $t_{n+1}$, then the continuation value of the cash-only part at $t_n$, $\hat{\mathbf{B}}_n^{CO}$, must be calculated as follow
    
    $$\hat{\mathbf{B}}_n^{CO}= e^{-\int_{t_n}^{t_{n+1}}c_u\diff u}\exp\left\{\left(\mathbf{G}^{(mN)}_{n+1}-\mathbf{D}_{mM}\right)\Delta_N\right\}\left(\mathbf{B}^{CO}_{n+1}+\alpha\mathbf{1}_{mN\times 1}\right),
    $$  
    for $n\in\{0,1,\ldots,N-1\}$.
\end{remark}
Additional features, such as call and put options, can also be added, similarly as in tree methods (see, for instance, \cite{hung2002pricing}, Exhibit 2), by modifying the value of the convertible debt, the equity part, and the cash-only part accordingly at each time step.
\begin{remark}[Convergence of convertible bonds (American-style)]\label{rmk:ConvergenceCB}
    When there is no credit risk, the convergence follows as in Remark 4.8 of \cite{mackay2023analysis}. %, which relies mainly on the results of \cite{song2013weak}, Theorem 9, and requires the discounted payoff function to be continuous. 
    Indeed, in that particular case, we can rely on the continuous-reward representation of \cite{mackay2023optimal}, Theorem 3.4, and use the results of \cite{song2013weak}, Theorem 9, to establish the convergence. When credit risk is considered, the convergence is less clear because of the discontinuity in the discounted reward process created by the difference between the risky and risk-free rates. 
    
    Detailed error and convergence analysis for American-style CBs are left for future research.
    The accuracy and efficiency of Proposition \ref{propConvBondCTMC} in pricing American-style CBs is demonstrated empirically in Section \ref{sectNumResults}. 
\end{remark}

%%%%%%%%%%%%%%%%%%%%%%%%%%%%%%%%%%%%%%%%%%%%%%%%%%%%%%%%%%%%%%%%%%%%%%%%%%%%%%%%%%%%%%%%%%%%%%%%%%%%%%%%%%%%%%%%%%%%%%%%%%%%%%%%%%%%%%%%%%%%%%%%%%%%%%%%%%%%%%%%%%%%%%%%%%%%%%%%%%%%%%%%%%%%%%%%%%%%%%%%%%%%%%%%%%%%%%%%%%%%%%%%%%%%%%%%%%%%%%%%%%
%%%%%%%%%%%%%%%%%%%%%%%%%%%%%%%%%%%%%%%%%%%%%%%%%%%%%%%%%%%%%%%%%%%%%%%%%%%%%%%%%%%%%%%%%%%%%%%%%%%%%%%%%%%%%%%%%%%%%%%
\section{Numerical Experiments}\label{sectNumResults}
%%%%%%%%%%%%%%%%%%%%%%%%%%%%%%%%%%%%%%%%%%%%%%%%%%%%%%%%%%%%%%%%%%%%%%%%%%%%%%%%%%%%%%%%%%%%%%%%%%%%%%%%%%%%%%%%%%%%%%%%%%%
In this section, numerical experiments are conducted to analyze the performance of the methodology proposed in the previous sections under models listed in Tables \ref{tblmodelsNonHomo} and \ref{tblmodelsNonHomoBrigo}. For the testing, we selected the Hull--White model, which is widely used in practice, and the extended CIR model (CIR++) for its analytical traceability. 
    More precisely, we analyze the accuracy and efficiency\footnote{The term ``efficiency'' refers to the ratio of the computation time of a procedure to the precision of its numerical result.} of CTMC approximations in valuing different debt securities. Numerical convergence is also investigated. Note that Assumption \ref{assumpStateSpaceR} is not respected under the Hull--White model.

In Appendix \ref{appendixSupplMatNumExperiment} (available online as supplemental material), a similar analysis is performed for the Vasicek and CIR models, two time-homogeneous short-rate processes of Tables \ref{tblmodelsHomo}. The accuracy and efficiency of the CTMC methods in approximating zero-bond prices, Proposition \ref{propZeroCouponBondApprox}, are also investigated, and numerical convergence is analyzed. 
Additional examples with Dothan, exponential Vasicek, EV+, and EEV+ models are also available upon request. Results under these models are similar to those obtained under the Hull--White and CIR++ short-rate processes documented below.

All the numerical experiments are conducted with Matlab R2015a on a Core i7 desktop with 16GB RAM and a speed of 2.40 GHz. Matrix exponentials are calculated using the function fastExpm for Matlab, see \cite{matlabFastExpm}, which is designed to accelerate the calculation of large (sparse and full) matrices. Column ``CTMC'' reports the CTMC approximated value calculated using the results of Sections \ref{sectionApplicationDebtSecurities} or \ref{sectionConvBond}. Column ``Benchmark'' presents the benchmark value, column ``Abs. error'' documents the absolute error\footnote{The \textit{absolute error} is defined as the absolute value of the difference between the CTMC approximated value and the benchmark value.}, whereas column ``Rel. error'' provides the relative error. The convergence rate about the number of grid points $m$ is approximated using the following formula: 
$$\textrm{Rate}\approx \frac{\log \left(e_{m_2}/e_{m_1}\right)}{\log \left(m_1/m_2\right)},$$
where $e_m$ is the absolute error using the number of grid points $m$. Throughout this paper, log refers to the natural logarithm.

In all the following numerical experiments, the model is calibrated to the market risk-free discount curve\footnote{The market discount curve is obtained from Bloomberg and corresponds to the US Dollar curve 23 as of March 31, 2023, from the Swap Curve Builder (ICVS) page.} reported in Table \ref{tblmarketZBcurve}. The calibration to the market curve is performed using Algorithm \ref{algoThetasCalibration} for the Hull--White model and Algorithm \ref{algoThetasCalibrationBrigo} for the CIR++ model.

\begin{table}[h!]
\scalebox{0.7}{
	\begin{tabular}{c cccccccccccccccc}
		\hline
		\textbf{t} &  0.26 & 	 0.47& 	 0.72& 	 0.97 &	 1.22 &	 1.47 	& 1.72 &	2 &	3 &	4 \\
       %\textbf{zero-rate}\tablefootnote{The zero-rate corresponds to the continuously compounded spot interest rate calculated as $-\ln(P^\star(0,t))/t$.} & 0.0498012 &	 0.0514021 	& 0.0505889 &	 0.0493518& 	 0.0478816 &	 0.0462932 &	                           0.0446580 &	 0.0430225 &	 0.0393168 &	 0.0371395 \\
       $\mathbf{P^\star(0,t)}$ & 0.986944 &	 0.976019 &	 0.964123 	& 0.953152 &	 0.943283 &	 0.934357 &	 0.926202 &	                            0.917553 &	 0.888740 &	 0.861950\\ 
		\hline
		%\hline
	\end{tabular}}
	\caption{Market zero-bond curve, $t\mapsto P^\star(0,t)$.}
	\label{tblmarketZBcurve}
\end{table}
\begin{sloppypar}
For the Hull-White model, the state-space of the approximated short-rate process, ${\mathcal{S}^{(m)}_R=\{r_1,r_2,\ldots,r_m\}}$ with $m\in\mathbb{N}$, is constructed using the non-uniform grid proposed by Tavella and Randall (\cite{tavellapricing}, Chapter 5.3). That is, we first select the grid lower and upper bounds, $r_1,r_m\in \mathcal{S}_R$, and set the other grid points as follows
%\begin{equation}
	$r_k=R_0+\tilde{\alpha}_R\sinh\left(c_2 \frac{k}{m}+c_1\left[1-\frac{k}{m}\right]\right),\quad k=2,\,\ldots,\,m-1,$
%\end{equation}\label{eqGridTavRand}
where $c_1=\sinh^{-1}\left(\frac{r_1-R_0}{\tilde{\alpha}}\right)$, $c_2=\sinh^{-1}\left(\frac{r_m-R_0}{\tilde{\alpha}}\right),$ and $\tilde{\alpha}_R\geq 0$, controls the degree of non-uniformity of the grid. For the CIR++ models, the same procedure is applied to the auxiliary process $Y^{(m)}$, and we denote by $\tilde{\alpha}_Y$ the grid non-uniformity parameter. Note that when $R_0$ (or $Y_0$ for the CIR++ model) is not in the grid, it is inserted (see, for instance, \cite{cui2019continuous}, Section 2.3 for details).
Unless stated otherwise, all model experiments are conducted using the model and the CTMC parameters summarized in Table \ref{tblModelandCTMCparamNonHomo}. 
\end{sloppypar}
\begin{table}[h!]
		\centering
	\begin{tabular}{ccccc | ccccc}
		\hline
		     & ${R_0}$ & $\alpha$  &${\kappa}$ & ${\sigma}$ & ${m}$  & ${r_1}$ & ${r_m}$ & $\tilde{\alpha}_R$& $\Delta_N$ \\
		\hline 
		Hull--White  & $0.04$ & N/A & $1$  & $0.20$ &$160$ & $-30 r_0$& $25 r_0$ & $0.5$ & 1/252 \\
        \hline\hline
        & ${Y_0=R_0}$ & $\alpha$ & ${\kappa}$ & ${\sigma}$ & ${m}$  & ${y_1}$ & ${y_m}$ & $\tilde{\alpha}_Y$& $\Delta_N$ \\
		\hline 
        CIR++       & 0.04 & 0.035 & 2 & 0.20 &$160$ & $y_0/100$& $7 y_0$ & $0.5$  & 1/252 \\
		\hline
\end{tabular}
\caption[Model and CTMC parameters Hull--White and CIR++]{Model and CTMC parameters}\label{tblModelandCTMCparamNonHomo}
\end{table}
%%%%%%%%%%%%%%%%%%%%%%%%%%%%%%%%%%%%%%%%%%%%%%%%%%%%%%%%%%%%%%%%%%%%%%%%%%%%%%%%%%%%%%%%%%%%%%%%%%%%%%%%%%%%%%%%%%%%%%%
\subsection{Approximation of Zero-Coupon Bond Option Prices}\label{sectionNumExpBond}
%%%%%%%%%%%%%%%%%%%%%%%%%%%%%%%%%%%%%%%%%%%%%%%%%%%%%%%%%%%%%%%%%%%%%%%%%%%%%%%%%%%%%%%%%%%%%%%%%%%%%%%%%%%%%%%%%%%%%%%
In this section, we study the accuracy and efficiency of \eqref{eqCallPutZero}, as well as the numerical convergence of the approximated zero-coupon bond option prices, under the Hull--White and CIR++ models\footnote{For the CIR++ model, \eqref{eqCallPutZero} can be greatly simplified using the time-homogeneous property of the auxiliary process $Y^{(m)}$, see \eqref{eqCallPutBrigo} in Appendix \ref{appendixBrigoModels} for details.}, respectively. Under these particular models, the price of zero-coupon bond options have a closed-form expression, which can be found in \cite{brigoMercurio2006}, Section 3, and thus, can serve as a benchmark in our analysis. 

We test the accuracy of the approximated option prices for different levels of moneyness and volatilities.  
The results are summarized in Table \ref{tblAccuracyZBoptionStrikesNonHomo}. 
Column ``price-to-strike'' shows the price-to-strike ratio, calculated as the actual zero-coupon bond price over the option strike price $K>0$. The price-to-strike ratio is a measure indicating the degree of moneyness of an option. A ratio above (resp. below) one shows that the call option is in the money (resp. out of the money), whereas a value of one indicates that the option is at the money\footnote{The price-to-strike ratios differ between the Hull-White and CIR++ models because the short rate in the CIR++ model cannot become negative, limiting the zero-bond price to $1$.}. 
\begin{table}[h]
	\begin{subtable}[c]{0.495\linewidth}
		\centering
		\scalebox{0.70}{
			\begin{tabular}{ccccc}
		\hline
		$\mathbf{\sigma}$&\textbf{price-to-strike}  & \textbf{CTMC} & \textbf{Benchmark}  & \textbf{Abs. error} \\
		\hline 
		\multirow{5}{*}{$\mathbf{0.1}$}& 1.67&	 0.38741911& 	0.38741911&	4.95E-14\\
        &1.25&	 0.22924215 &	0.22924215&	3.43E-10 \\
        &1.00&	 0.07281370 &	0.07281784&	4.14E-06 \\
        &0.83&	 0.00130678& 	0.00130552&	1.26E-06\\
        &0.71&	 0.00000021 &	0.00000023&	2.49E-08\\
		\hline
		%\hline
	\end{tabular}}
	\end{subtable}
	\begin{subtable}[c]{0.495\linewidth}
		\centering
		\scalebox{0.70}{
			\begin{tabular}{ccccc}
		\hline
		$\mathbf{\sigma}$&\textbf{price-to-strike} & \textbf{CTMC} & \textbf{Benchmark} & \textbf{Abs. error} \\
		\hline 
		\multirow{5}{*}{$\mathbf{0.1}$} &1.67&	 0.38741911& 	 0.38741911& 	2.46E-11\\
                                        &1.25&	 0.22924215& 	 0.22924215& 	2.18E-11\\
                                        &1.00&	 0.07106519& 	 0.07106519& 	1.90E-11\\
                                        &0.95&	 0.03153223& 	 0.03153224& 	9.24E-09\\
                                        &0.92&	 0.00150899& 	 0.00150815& 	8.40E-07\\
		\hline
	\end{tabular}}
	\end{subtable}
 \begin{subtable}[c]{0.49\linewidth}
		\centering
		\scalebox{0.70}{
			\begin{tabular}{ccccc}
		\hline
		$\mathbf{\sigma}$&\textbf{price-to-strike} & \textbf{CTMC} & \textbf{Benchmark} & \textbf{Abs. error} \\
		\hline 
		\multirow{5}{*}{$\mathbf{0.2}$}&1.67&	 0.38741912& 	0.38741912&	4.50E-10\\
                                       &1.25&	 0.22939397 &	0.22939439&	4.21E-07\\
                                       &1.00&	 0.08510628 &	0.08509821&	8.07E-06\\
                                        &0.83&	 0.01328299 &	0.01328294&	4.45E-08\\
                                        &0.71&	 0.00083528 &	0.00083714&	1.86E-06\\
		\hline
	\end{tabular}}
	\end{subtable}
 \begin{subtable}[c]{0.49\linewidth}
		\centering
		\scalebox{0.70}{
			\begin{tabular}{ccccc}
		\hline
		$\mathbf{\sigma}$&\textbf{price-to-strike} & \textbf{CTMC} & \textbf{Benchmark} & \textbf{Abs. error} \\
		\hline 
		\multirow{5}{*}{$\mathbf{0.2}$}&1.67	& 0.38741911 &	 0.38741911 &	3.58E-10\\
                                       & 1.25	 &0.22924215 &	 0.22924215 &	3.12E-10\\
                                       & 1.00	 &0.07106519 &	 0.07106519 &	2.86E-10\\
                                        &0.95	& 0.03153223 &	 0.03153224 &	9.24E-09\\
                                        &0.92	 &0.00299047 &	 0.00299063 &	1.62E-07\\
		\hline
	\end{tabular}}
	\end{subtable}
  \begin{subtable}[c]{0.49\linewidth}
		\centering
		\scalebox{0.70}{
			\begin{tabular}{ccccc}
		\hline
		$\mathbf{\sigma}$&\textbf{price-to-strike} & \textbf{CTMC} & \textbf{Benchmark} & \textbf{Abs. error} \\
		\hline 
		\multirow{5}{*}{$\mathbf{0.3}$}&1.67&	 0.38743466& 	0.38743476&	9.94E-08\\
                                       &1.25&	 0.23168156 &	0.23168115&	4.13E-07\\
                                       &1.00&	 0.10193411 &	0.10192793&	6.17E-06\\
                                       &0.83&	 0.03096382 &	0.03096201&	1.81E-06\\
                                       &0.71&	 0.00681115 &	0.00681259&	1.44E-06\\

		\hline
	\end{tabular}}
	\end{subtable}
  \begin{subtable}[c]{0.49\linewidth}
		\centering
		\scalebox{0.70}{
			\begin{tabular}{ccccc}
		\hline
		$\mathbf{\sigma}$&\textbf{price-to-strike} & \textbf{CTMC} & \textbf{Benchmark} & \textbf{Abs. error} \\
		\hline 
		\multirow{5}{*}{$\mathbf{0.3}$}& 1.67&	 0.38741912 &	 0.38741911 &	1.61E-10\\
                                       & 1.25&	 0.22924215 &	 0.22924215& 	1.39E-10\\
                                       & 1.00&	 0.07106791 &	 0.07106800& 	8.86E-08\\
                                       & 0.95&	 0.03153223 &	 0.03153224& 	9.24E-09\\
                                       & 0.92&	 0.00432070 &	 0.00431963& 	1.07E-06\\
		\hline
	\end{tabular}}
	\end{subtable}
  \begin{subtable}[c]{0.49\linewidth}
		\centering
		\scalebox{0.70}{
			\begin{tabular}{ccccc}
		\hline
		$\mathbf{\sigma}$&\textbf{price-to-strike} & \textbf{CTMC} & \textbf{Benchmark} & \textbf{Abs. error} \\
		\hline 
		\multirow{5}{*}{$\mathbf{0.4}$}& 1.67&	 0.38776664& 	0.38776489&	1.75E-06\\
                                        &1.25&	 0.23777915& 	0.23777104&	8.11E-06\\
                                        &1.00&	 0.12017293& 	0.12017104&	1.89E-06\\
                                        &0.83&	 0.05053076& 	0.05052852&	2.24E-06\\
                                        &0.71&	 0.01840799& 	0.01841005&	2.05E-06\\

		\hline
	\end{tabular}}
		\subcaption{Hull--White model}
	\end{subtable}
  \begin{subtable}[c]{0.49\linewidth}
		\centering
		\scalebox{0.70}{
			\begin{tabular}{ccccc}
		\hline
		$\mathbf{\sigma}$&\textbf{price-to-strike} & \textbf{CTMC} & \textbf{Benchmark} & \textbf{Abs. error} \\
		\hline 
		\multirow{5}{*}{$\mathbf{0.4}$}& 1.67	& 0.38741912& 	 0.38741911& 	1.55E-10\\
                                       &  1.25	 &0.22924215& 	 0.22924215& 	1.23E-10\\
                                       &  1.00	 &0.07110811& 	 0.07111282& 	4.71E-06\\
                                       &  0.95	 &0.03153223& 	 0.03153224& 	9.24E-09\\
                                       &  0.92	 &0.00545130& 	 0.00544519& 	6.11E-06\\
		\hline
	\end{tabular}}
		\subcaption{CIR++ model}
	\end{subtable}
	\caption[Accuracy of the approximated price of zero-coupon bond call options under the Hull--White and CIR++ models]{\small{Accuracy of the approximated price of zero-coupon bond call options, Proposition \ref{propCallPutInhomogeneousModels} and Corollary \ref{corrCallPutBrigo}, under the Hull--White and CIR++ models, respectively. Benchmark prices are calculated using closed-form analytical formulas. Except for the number of grid points set to $m=200$, model and CTMC parameters are as listed in Table \ref{tblModelandCTMCparamNonHomo}. Zero-coupon bond call option parameters using the notation of Proposition \ref{propCallPutInhomogeneousModels}: $t_{n_1}=0$, $t_{n_2}=2$, and $T=4$.}}
	\label{tblAccuracyZBoptionStrikesNonHomo}
\end{table} 

We observe that the two models achieve high degrees of accuracy across all parameters and strikes, with absolute errors below $8.11\text{E-}06$. It is also worth noticing the high precision of the approximation for deep-out-of-the-money options, indicating good approximations of the left tails of the underlying short-rate process. Such a high level of accuracy can be difficult to attain when using other numerical techniques, particularly for out-of-the-money options. 

Analogous experiments have been conducted with put options%in Appendix \ref{appendixNumExampleExtra}
, with similar results for out-of-the-money put options, indicating good approximations of the right tails of the short-rate diffusion process. The accuracy of the approximation across different model parameter values has also been tested. The methods exhibit a high level of precision across all parameters. Results are available upon request. 

\begin{figure}[t]
	\centering
	%\begin{tabular}{cc}
		\includegraphics[scale=0.35]{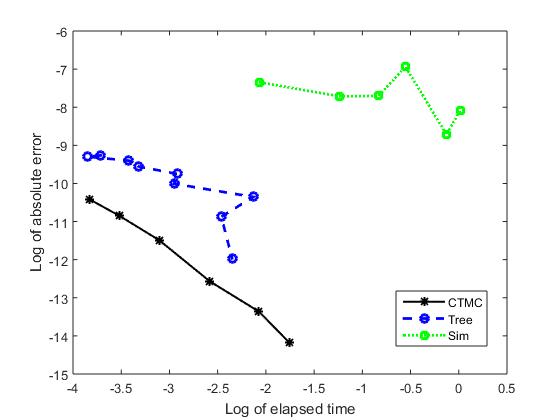} 
	%\end{tabular}
	\caption[Efficiency of the CTMC approximations compared to trees and simulation methods in approximating the price of zero-coupon bond call options under the Hull--White model]{\small{Efficiency of the CTMC method \eqref{eqCallPutZero} compared to trees and simulation methods in approximating the price of zero-coupon bond call options under the Hull--White model. Except for the number of grid points $m$, which range from $100$ to $350$, model and CTMC parameters are as listed in Table \ref{tblModelandCTMCparamNonHomo}. For the tree, we use between 300 and 700 time steps per year. Zero-coupon bond call option parameters using the notation of Proposition \ref{propCallPutInhomogeneousModels}: $t_{n_1}=0$, $t_{n_2}=2$, $T=4$, and $K=0.9$.}}\label{figEfficiencyHW}
\end{figure}
\begin{figure}[t]
	\centering
	\begin{tabular}{cc}
		\includegraphics[scale=0.3]{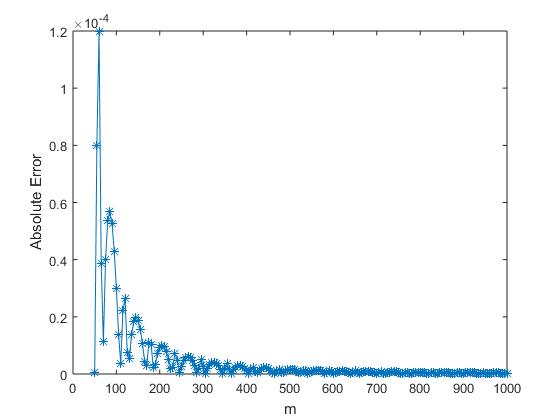} & \includegraphics[scale=0.3]{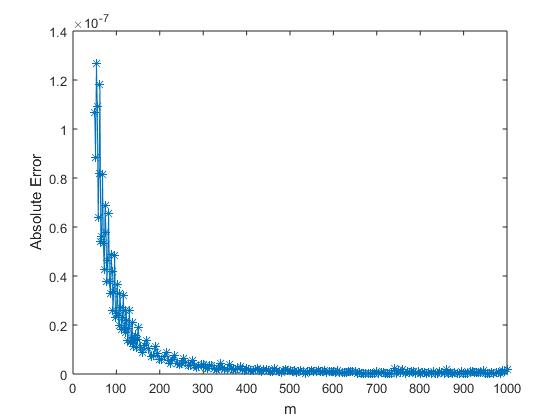}\\
        \small{\textsc{(a)} Hull--White} & \small{\textsc{(b)} CIR++}
	\end{tabular}
	\caption[Convergence pattern of the approximated zero-coupon bond call prices under Hull--White and CIR++ models]{\small{Convergence pattern of the approximated zero-coupon bond call prices under Hull--White and CIR++ models. Except for the number of grid points $m$, model and CTMC parameters are as listed in Table \ref{tblModelandCTMCparamNonHomo}. Zero-coupon bond call option parameters using the notation of Proposition \ref{propCallPutInhomogeneousModels}: $t_{n_1}=0$, $t_{n_2}=2$, $T=4$, and $K=0.9$.}}\label{figConvergenceZBoptionNonHomo}
\end{figure}

\begin{table}[h]
	\begin{subtable}[c]{0.495\linewidth}
		\centering
		%\scalebox{0.70}{
			\begin{tabular}{ccccc}
		\hline
		\textbf{m}  & \textbf{Abs. error} & \textbf{Rate}  \\
		\hline 
		60&	1.20E-04 &	- \\
        80&	5.36E-05&	2.791 \\
        100&	2.98E-05&	2.637 \\
        120&	2.64E-05&	0.661\\
        160&	1.07E-05&	3.130\\
		\hline
		%\hline
	\end{tabular}%}
    \subcaption{Hull--White model}
	\end{subtable}
	\begin{subtable}[c]{0.495\linewidth}
		\centering
		%\scalebox{0.70}{
			\begin{tabular}{ccccc}
		\hline
		\textbf{m}  & \textbf{Abs. error} & \textbf{Rate}  \\
        \hline 
		60	&8.19E-08&	- \\
        80	&4.64E-08&	1.977\\
        100	&2.45E-08&	2.854\\
        120	&1.71E-08&	1.973\\
        160	&8.61E-09&	2.387\\
		\hline
		%\hline
	\end{tabular}%}
    \subcaption{CIR++ model}
	\end{subtable}
 
	\caption[Convergence Rate Approximation Zero-Coupon Bond Call Options]{\small{Approximation of the convergence rate of the price of zero-coupon bond call options, Proposition \ref{propCallPutInhomogeneousModels} and Corollary \ref{corrCallPutBrigo}, under the Hull--White and CIR++ models, respectively. Benchmark prices are calculated using closed-form analytical formulas. Except for the number of grid points, model and CTMC parameters are as listed in Table \ref{tblModelandCTMCparamNonHomo}. Zero-coupon bond call option parameters using the notation of Proposition \ref{propCallPutInhomogeneousModels}: $t_{n_1}=0$, $t_{n_2}=2$, and $T=4$.}}
	\label{tblConvRateZBoptionNonHomo}
\end{table} 

We compare the efficiency of the CTMC methodology \eqref{eqCallPutZero} to the trinomial tree method (``Tree'') of \cite{hull1994numerical} and \cite{hull1996using}, and Monte Carlo simulation (``Sim'') using the procedure of \cite{Ostrovski2013Efficient}. For the Monte Carlo simulation, we used an Euler discretization scheme, with a number of simulations ranging from $10,000$ to $100,000$ and $252$-time steps per year. Figure \ref{figEfficiencyHW} shows the results. The calibration time is not included in the log elapsed time in Figure \ref{figEfficiencyHW}, that is, for the tree, the calculation time does not include the construction of the interest rate tree that perfectly fits the market data, and for the CTMC approximation, the calculation time does not include the calibration of the model to the market zero-bond curve using Algorithm \ref{algoThetasCalibration}. Note that the construction of the interest rate tree is generally much faster than the CTMC calibration process with Algorithm \ref{algoThetasCalibration}. When using 252-time steps per year, the tree is built in a fraction of a second, whereas the calibration process with Algorithm \ref{algoThetasCalibration} takes on average 3.2 seconds with $m=160$. 
Figure \ref{figEfficiencyHW} shows the high efficiency of CTMC methods compared to other methods. CTMC approximation clearly outperforms these other techniques in terms of both calculation time and precision. The speed of the approximation for the extended models of \cite{brigoMercurio2006}, such as the CIR++ model, is similar to that obtained for the homogeneous models with an average calculation time of less than $0.015$ seconds (excluding calibration).    

The convergence patterns of the value of the call option to the analytical price as the number of grid points $m$ increases are displayed in Figure \ref{figConvergenceZBoptionNonHomo}, whereas Table \ref{tblConvRateZBoptionNonHomo} shows the convergence rate. We observe that the approximation achieves superquadratic convergence on average. The CTMC approximation is known to achieve a theoretical quadratic convergence rate (rate = 2) for barrier and European options in one-dimensional diffusion models with particular grid designs; see \cite{zhang2019analysis} for details. We also note that the two models converge rapidly to their analytical values but exhibit a sawtooth pattern. 
Such oscillatory behavior has also been observed by \cite{zhang2019analysis} in the context of double barrier knock-out options pricing within the CTMC approximation framework. In particular, they observe that constructing a grid with the strike placed precisely in the middle of two grid points removes oscillations (see Section 4.7 of their paper for details). However, their grid design is not directly applicable to the present context since, in this paper, the grid represents the state-space of the short-rate process $R^{(m)}$, and the option (and the strike) depends on the zero-bond price $P^{(m)}(\cdot,T)$, whose approximation also depends on the grid design. In the context of tree methods approximation, a detailed study of the oscillatory behavior of European vanilla options has been performed in \cite{diener2004asymptotics}, whereas \cite{tavellapricing}, Chapter 5, observes that convergence oscillation of the finite difference method can be reduced in a non-uniform grid design when the strike is placed midway between two grid points. Further investigation into how grid design can improve convergence is left for future research.
%Such an oscillatory behavior has also been observed by \cite{zhang2019analysis} in the context of double barrier knock-out options pricing within the CTMC approximation framework. In particular, they observe that constructing a grid with the strike placed precisely in the middle of two grid points removes oscillations, see Section 4.7 of their paper for details. However, their grid design is not directly applicable to the present context since, in this paper, the grid represents the state-space of the short-rate process $R^{(m)}$, and the option (and the strike) depends on the zero-bond price $P^{(m)}(\cdot,T)$, whose approximation also depends on the grid design. In the context of tree methods approximation, a detailed study of the oscillatory behavior of European vanilla options has been performed in \cite{diener2004asymptotics}, whereas \cite{tavellapricing}, Chapter 5, observe that convergence oscillation of finite difference method can be reduced in a non-uniform grid design when the strike is placed midway between two grid points. Further investigation into how grid design can improve convergence is left as further research. 
Finally, since Assumption \ref{assumpStateSpaceR} is not satisfied under the Hull-White model, the preceding experiments show that the results of Section \ref{sectionApplicationDebtSecurities} can hold under less restrictive conditions for a specific set of parameters\footnote{Some testing has also been performed for the EV+ model, and the results are available upon request. These experiments also indicate that Assumption \ref{assumpStateSpaceR} could potentially be relaxed.}. %confirming that theoretical convergence is possible under less restrictive conditions. 
%%%%%%%%%%%%%%%%%%%%%%%%%%%%%%%%%%%%%%%%%%%%%%%%%%%%%%%%%%%%%%%%%%%%%%%%%%%%%%%%%%%%%%%%%%%%%%%%%%%%%%%%%%%%%%%%%%%%%%%
\subsection{Approximation of Callable/Putable Bond Prices}\label{sectionNumExpCallableBond}
%%%%%%%%%%%%%%%%%%%%%%%%%%%%%%%%%%%%%%%%%%%%%%%%%%%%%%%%%%%%%%%%%%%%%%%%%%%%%%%%%%%%%%%%%%%%%%%%%%%%%%%%%%%%%%%%%%%%%%%
We now examine the accuracy of Proposition \ref{propCallableputableDebt} in approximating callable/putable bonds under the Hull--White model. Accordingly, we consider a coupon-bearing bond with semi-annual coupons that mature in $4$ years, $T=4$. The coupon rate, denoted by $\alpha$, is set to $5\%$ per annum compounded semi-annually. The notional of the debt is set to $F=100$, and we assume that it can be called at any time between the second and fourth year for no additional cost, that is, $K^c_t=100$ for $2\leq t\leq T$, and we let $K_t^c\rightarrow\infty$ when $t<2$ (as exercise is not allowed). Moreover, since there is no put feature, $K^p:=K^p_t=0$ for $0\leq t\leq T$. Finally, we assume that accrued interest is paid to the bondholder upon redemption\footnote{This is a standard assumption in practice, meaning that the call price $K_t^c$ is increased by accrued interest upon redemption.}.
The results are summarized in Table \ref{tblAccuracyCallableDebtHW}. The column ``CTMC'' shows the approximated value of the debt using CTMCs with $m=160$ as specified in Table \ref{tblModelandCTMCparamNonHomo}, whereas the column ``Benchmark'' shows the CTMC approximated value with $m=350$. The value using the tree method of \cite{hull1994numerical} is reported in the column ``Tree''.  
\begin{table}[h]
	\begin{subtable}[c]{0.9\linewidth}
		\centering
		\scalebox{0.9}{
			\begin{tabular}{cc|ccc|ccc}
		\hline
		\multirow{2}{*}{$\mathbf{\kappa}$} & \multirow{2}{*}{\textbf{Benchmark}}&\multirow{2}{*}{\textbf{CTMC}} & \textbf{Rel.} & \textbf{Elapsed} & \multirow{2}{*}{\textbf{Tree}} & \textbf{Rel.} & \textbf{Elapsed}\\
          &   &  &\textbf{Error}& \textbf{Time (sec)} & & \textbf{Error} & \textbf{Time (sec)}\\
		\hline 
        \textbf{0.5}&	  \textbf{91.6418214}&	 91.6426071 &	8.57E-06&   0.0936 	 &91.6070243& 	3.80E-04 &   0.0411 \\
        \textbf{1}&	     \textbf{95.6073132}&	 95.6072144 &	1.03E-06 &	0.0897   &95.5614168& 	4.80E-04 &	 0.0301 \\
        \textbf{2}&	     \textbf{98.5647947}&	 98.5648058 &	1.13E-07&  0.0961 	 &98.5108693& 	5.47E-04 &  0.0167 \\
        \textbf{3}&	     \textbf{99.7010757}&	 99.7004280 &	6.50E-06&  0.0916    &99.6443817& 	5.69E-04 &  0.0157 \\
		\hline
	\end{tabular}}
	\end{subtable}
 
   %  \begin{subtable}[c]{0.9\linewidth}
	%	\centering
	%	\scalebox{0.9}{
	%		\begin{tabular}{cccccc}
	%	\hline
	%	$\mathbf{R_0}$ & \textbf{Benchmark}& \textbf{CTMC} & \textbf{Rel. Error} & \textbf{Tree} & \textbf{Rel. Error}  \\
	%	\hline 
     %   \textbf{0.02}&	\textbf{95.5970155}&	 95.5971765& 	1.68E-06&	 95.5614168& 	3.72E-04\\
     %   \textbf{0.03}&	\textbf{95.6024450}&	 95.6020960& 	3.65E-06&	 95.5614168& 	4.29E-04\\
     %   \textbf{0.04}&	\textbf{95.6073132}&	 95.6072144& 	1.03E-06&	 95.5614168& 	4.80E-04\\
     %   \textbf{0.05}&	\textbf{95.6118070}&  95.6116498& 	1.65E-06&	 95.5614168& 	5.27E-04\\

	%	\hline
	%\end{tabular}}
	%\end{subtable}
     \begin{subtable}[c]{0.9\linewidth}
		\centering
		\scalebox{0.9}{
			\begin{tabular}{cc|ccc|ccc}
		\hline
        %$\mathbf{\sigma}$ & &&&&& &\\
		\multirow{2}{*}{$\mathbf{\sigma}$} & \multirow{2}{*}{\textbf{Benchmark}}&\multirow{2}{*}{\textbf{CTMC}} & \textbf{Rel.} & \textbf{Elapsed} & \multirow{2}{*}{\textbf{Tree}} & \textbf{Rel.} & \textbf{Elapsed}\\
          &   &  &\textbf{Error}& \textbf{Time (sec)} & & \textbf{Error} & \textbf{Time (sec)}\\
		\hline 
		\textbf{0.1}&	 \textbf{98.9838248 }&	 98.9818310 &	2.01E-05&  0.0780  &98.9454895 &	3.87E-04&  0.0238 \\
        \textbf{0.2}&	 \textbf{95.6073132 }&	 95.6072144 &	1.03E-06&  0.0927 	 &95.5614168 &	4.80E-04&  0.0258 \\
        \textbf{0.3}&	 \textbf{92.2381361 }&	 92.2382928 &	1.70E-06&	 0.1011  &92.1871120 &	5.53E-04&  0.0244 \\
        \textbf{0.4}&	 \textbf{88.9338108 }&	 88.9340669 &	2.88E-06&  0.1198 	 &88.8822619 &	5.80E-04&  0.0289 \\
		\hline
	\end{tabular}}
	\end{subtable}
	\caption[Accuracy CTMC methods in approximating the price of callable bonds under the Hull--White model]{\small{Accuracy of Proposition \ref{propCallableputableDebt} in approximating the price of callable bonds under the Hull--White model. Benchmark is calculated using CTMC approximation with $m=350$. Model and CTMC parameters are as listed in Table \ref{tblModelandCTMCparamNonHomo}. For the tree method, we use $252$-time steps per year. Contract specifications are $F=100$, $\alpha=0.05$, $T=4$, $K_t^c=100$, and $K^p=0$, with the call option exercise window starting from $t=2$ to $T=4$.}}
	\label{tblAccuracyCallableDebtHW}
\end{table} 
%\begin{figure}
%  \begin{minipage}[b]{0.60\linewidth}
%    \centering
%    \includegraphics[scale=0.4]{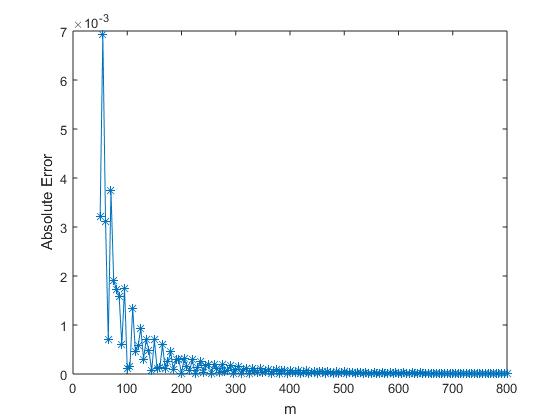}
%    \par\vspace{0pt}
%  \end{minipage}%
%  \begin{minipage}[b]{0.30\linewidth}
%    \centering%
%    			\begin{tabular}{ccccc}
%		\hline
%		\textbf{m}  & \textbf{Abs. error} & \textbf{Rate}  \\
%		\hline 
%		100	& 2.98E-05	& - \\
%        200	& 9.37E-06	& 1.668\\
%        500	& 1.68E-06	& 1.877\\
%        800	& 6.61E-07	& 1.980\\
%        950 &  2.82E-07	&4.950\\
%		\hline
		%\hline
%	\end{tabular}%
%    \par\vspace{0pt}
%  \end{minipage}
%\caption[Convergence pattern of the approximated price of callable bonds under the Hull--White model]{\small{Convergence pattern of the approximated price of callable bonds, Proposition \ref{propCallableputableDebt}, as the number of grid points $m$ increases. Benchmark is calculated using CTMC approximation with $m=1,000$. Except for the number of grid points $m$, model and CTMC parameters are as listed in Table \ref{tblModelandCTMCparamNonHomo}.  Contract specifications are $F=100$, $\alpha=0.05$, $T=4$, $K_t^c=100$, and $K^p=0$, with the call option exercise window starting from $t=2$ to $T$.}}\label{figConvergenceCallableHW}
%\end{figure}

\begin{figure}[h]
	\centering
		\includegraphics[scale=0.3]{}
	\caption[Convergence pattern of the approximated price of callable bonds under the Hull--White model]{\small{Convergence pattern of the approximated price of callable bonds, Proposition \ref{propCallableputableDebt}, as the number of grid points $m$ increases. Benchmark is calculated using CTMC approximation with $m=1,000$. Except for the number of grid points $m$, model and CTMC parameters are as listed in Table \ref{tblModelandCTMCparamNonHomo}.  Contract specifications are $F=100$, $\alpha=0.05$, $T=4$, $K_t^c=100$, and $K^p=0$, with the call option exercise window starting from $t=2$ to $T$.}}\label{figConvergenceCallableHW}
\end{figure}
\begin{table}[h]
	%\begin{subtable}[c]{0.495\linewidth}
		\centering
		%\scalebox{0.70}{
			\begin{tabular}{ccccc}
		\hline
		\textbf{m}  & \textbf{Abs. error} & \textbf{Rate}  \\
		\hline 
        50 &	3.21E-03 &	- \\
        80 &	1.73E-03 &	 1.31\\ 
        120 &	5.81E-04 &	 2.70\\ 
        140	&4.75E-04	 &1.31\\ 
        175	&2.47E-04	 &2.93\\ 
		\hline
		%\hline
	\end{tabular}%}
    %\subcaption{Hull--White model}
	%\end{subtable}
	\caption[Approximation of the convergence rate of price of callable bonds under the Hull--White model]{\small{ - Approximation of the convergence rate of the price of callable bonds, Proposition \ref{propCallableputableDebt}, as the number of grid points $m$ increases. Benchmark is calculated using CTMC approximation with $m=1,000$. Except for the number of grid points $m$, model and CTMC parameters are as listed in Table \ref{tblModelandCTMCparamNonHomo}.  Contract specifications are $F=100$, $\alpha=0.05$, $T=4$, $K_t^c=100$, and $K^p=0$, with the call option exercise window starting from $t=2$ to $T$.}}
	\label{tblConvRateCallableDebtHW}
\end{table} 

We observe that the approximated values from the CTMC and tree methods are close to each other, confirming the adequacy of Proposition \ref{propCallableputableDebt}. Additionally, the relative error of the CTMC approximation is lower than that of the tree method. However, the tree method is shown to be 2 to 5 times faster than the CTMC approximation. The efficiency of the two approaches is shown in the context of zero-coupon bond option pricing in Figure \ref{figEfficiencyHW}.

The convergence pattern of the approximated callable price as $m$ grows is displayed in Figure \ref{figConvergenceCallableHW}, whereas Table \ref{tblConvRateCallableDebtHW} shows the convergence rates. %As for zero-coupon bond option prices, the approximation achieves superquadratic convergence on average. 
The absolute error decreases rapidly to $0$ but exhibits a sawtooth pattern. As mentioned in Section \ref{sectionNumExpBond}, previous work in different contexts has shown that grid design can improve convergence and that placing the strike midway between two grid points can reduce or remove oscillatory behavior (see \cite{tavellapricing}  and \cite{zhang2019analysis} for details). However, the methodologies proposed in these studies are not directly applicable to the present context. Therefore, further investigation into how grid design can improve convergence is left for future research.

%%%%%%%%%%%%%%%%%%%%%%%%%%%%%%%%%%%%%%%%%%%%%%%%%%%%%%%%%%%%%%%%%%%%%%%%%%%%%%%%%%%%%%%%%%%%%%%%%%%%%%%%%%%%%%%%%%%%%%%
\subsection{Approximation of Convertible Bond Prices}\label{sectConvBondNum}
%%%%%%%%%%%%%%%%%%%%%%%%%%%%%%%%%%%%%%%%%%%%%%%%%%%%%%%%%%%%%%%%%%%%%%%%%%%%%%%%%%%%%%%%%%%%%%%%%%%%%%%%%%%%%%%%%%%%%%%
We now investigate the accuracy and efficiency of Proposition \ref{propConvBondCTMC} in approximating CB prices under the Black--Scholes--Hull--White model, as well as the numerical convergence of the price estimates. That is, we suppose that the stock price dynamics follow a geometric Brownian motion with stochastic interest rate satisfying
 \begin{equation}
        \begin{aligned}
         \diff S_t &=(R_t-q_t) S_t\diff t+\tilde{\sigma}_S S_t \diff W^{(1)}_t,\\
	   \diff R_t &=(\theta(t)-\kappa R_t)\diff t+\tilde{\sigma}_R\diff W^{(2)}_t,\label{eqEDS_BSvasicek}
         \end{aligned}
	\end{equation}
  with $\kappa$, $\tilde{\sigma}_S$, $\tilde{\sigma}_R$>0, and $[W^{(1)},W^{(2)}]_t=\rho t$, $\rho\in[-1,1]$. 
  
  From Lemma \ref{lemmaStoX}, we find that $f(r)=\frac{\tilde{\sigma}_S}{\tilde{\sigma}_R}r$. The dynamics of the auxiliary process ${X_t=\ln(S_t)-\rho f(R_t)}$ can then be derived as
\begin{equation}
    \begin{split}\label{eqEDS_XBSvasicek}
        \diff X_t & =\mu_X(t,R_t)\diff t+\sigma_X(R_t)\diff W^\star_t\\
        \diff R_t & = (\theta(t)-\kappa R_t)\diff t+\tilde{\sigma}_R\diff W^{(2)}_t,
    \end{split}
\end{equation}
with $\mu_X(t,R_t)=R_t-q_t-\frac{\tilde{\sigma}_S^2}{2}-\rho\frac{\tilde{\sigma}_S}{\tilde{\sigma}_R}(\theta(t)-\kappa R_t)$, $\sigma_X=\tilde{\sigma}_S\sqrt{1-\rho^2}$, and $X_0=\ln(S_0)-\rho f(R_0)$.

  Unless stated otherwise, the model parameters for the short-rate process are the same as those used in previous examples, summarized in Table \ref{tblModelandCTMCparamNonHomo}. Recall also that function $\theta$ is calibrated to the market zero-bond curve in Table \ref{tblmarketZBcurve} using Algorithm \ref{algoThetasCalibration}, as explained in Section \ref{subsecFitTermStruc}. Unless stated otherwise, we suppose that $\tilde{\sigma}_S=0.2$, $q_t=0.02$ for all $t\in[0,T]$, and $\rho=-0.2$. The model parameters are summarized in Table \ref{tblModelParamBSHW}.
  \begin{table}[h!]
	\begin{tabular}{ccccccccc}
		\hline
		Model & ${R_0}$ & $q_t$ & ${\kappa}$  & ${\tilde{\sigma}_R}$ & $S_0$ & $\tilde{\sigma}_S$ & $\rho$\\
		\hline 
		Black--Scholes--Hull--White  & $0.04$ & 0.02 & $1$ & $0.20$  &$100$& $0.2$ & $-0.2$\\
		\hline
		%\hline
	\end{tabular}
	\caption[Model parameters Black--Scholes--Hull--White]{Model parameters}
	\label{tblModelParamBSHW}
\end{table} 
  
  The grid used to approximate the short-rate process, $\mathcal{S}^{(m)}_R=\{r_1,r_2,\ldots,r_m\}$, and the auxiliary process $\mathcal{S}^{(M)}_X=\{x_1,x_2,\ldots,x_M\}$, are constructed using the methodology of Tavella and Randall (\cite{tavellapricing}, Chapter 5), as explained at the beginning of this section, with $\tilde{\alpha}_R$ (resp.  $\tilde{\alpha}_X$) representing the non-uniformity parameter of the grid of $R^{(m)}$ (resp. $X^{(m)}$). Unless otherwise indicated, all numerical experiments are conducted using the CTMC parameters listed in Table \ref{tblCTMCParamConvBondBSHW}. Note also that the fast versions of Propositions \ref{prop:vanillaCB_CTMC} and \ref{propConvBondCTMC} reported in Appendix \ref{appendixCallablePutableBondAlgo} have been used in all numerical examples.

  \begin{table}[h]
	\begin{tabular}{cccccccccc}
		\hline
		 Model & ${m}$  & $M$ & ${r_1}$ & ${r_m}$ & $\tilde{\alpha}_R$& ${x_1}$ & ${x_M}$ & $\tilde{\alpha}_X$& $\Delta_N$ \\
       \hline
		Black--Scholes--Hull--White  &$160$ & $160$ & $-30 R_0$& $25 R_0$ & $0.5$ & $0.64 X_0$ & $1.42 X_0$& $2$ &$1/252$\\
        %Black-Scholes-CIR  &$160$  & $100$  & $R_0/100$& $7 R_0$ & $0.5$  & $0.64 X_0$ & $1.42 X_0$& $2$ &$1/252$ \\
		\hline
		%\hline
	\end{tabular}
	\caption[CTMC parameters Black--Scholes--Hull--White]{CTMC parameters}
	\label{tblCTMCParamConvBondBSHW}
\end{table}
For the testing, we consider a convertible bond that pays semi-annual coupons with an annual rate $\alpha=0.05$ and has a notional value $F=100$. We suppose that the bond can be converted at any time from inception to maturity ($T=1$) at a conversion rate $\eta=1$. %and we set the dividend yield to nil ($q_t=0$ for all $t\geq 0$)\footnote{Including a dividend yield in the valuation process is straightforward, as it only affects the generator of $X^{(m,N)}$ in \eqref{eq:generatorX}. However, with a dividend yield, the discounted stock process ceases to be a martingale, rendering the results of Proposition  \ref{propCBtrivialTF} and Corollary \ref{corrCBsUpperBound} invalid (see Remark \ref{rmkMartingalePropS} for details). Consequently, the numerical study of convergence becomes less precise because we cannot rely on the analytical formula derived in Appendix \ref{appendixCBeuroTF} for the valuation of American-type CBs. For these reasons, dividend yield is set to nil in all numerical examples.}.
Under this set of parameters and assuming that both the dividend yield and credit risk are nil ($q_t=c_t=0$ for all $t\in[0,T]$), the valuation of American-style CBs simplifies to the valuation of European-style convertible bonds, as stated in Proposition \ref{propCBtrivialTF} and generalized to coupon paying bonds in Proposition \ref{corCBtrivialTFcoupon}. In that particular case, the results of Proposition \ref{prop:ExactFormulaVanillaCB_TF}\footnote{The expected present value of future coupons should be added to the formula obtained in  Proposition \ref{prop:ExactFormulaVanillaCB_TF}.}, available online as supplemental material, can thus serve as a benchmark in our analysis. When $q_t,c_t>0$ for some $t\in [0,T]$, the benchmark is calculated using CTMC approximation with $M=300$. All other parameters are as stated in Table \ref{tblCTMCParamConvBondBSHW}. The results are summarized in Table \ref{tblAccuracyCB_BSHW2}. 
\begin{table}[t]
	\begin{subtable}[c]{0.495\linewidth}
		\centering
		\scalebox{0.90}{
			\begin{tabular}{cccc}
		\hline
		$\mathbf{S_0}$ & \textbf{CTMC} & \textbf{Benchmark} & \textbf{Rel. error} \\
		\hline 

           \textbf{90}	& 105.15488 &	 105.15732 	&2.32E-05\\
           \textbf{95}	& 107.56087 &	 107.57350 	&1.17E-04\\
           \textbf{100}& 110.48565 &	 110.50458 	&1.71E-04\\
           \textbf{105}& 113.87792 &	 113.89940 	&1.89E-04\\
           \textbf{110}& 117.66876 &	 117.68977 	&1.79E-04\\

		\hline
		%\hline
	\end{tabular}}
	\end{subtable}
	\begin{subtable}[c]{0.495\linewidth}
		\centering
		\scalebox{0.90}{
			\begin{tabular}{cccc}
		\hline
		$\mathbf{S_0}$ & \textbf{CTMC} & \textbf{Benchmark} & \textbf{Rel. error} \\
		\hline 
            \textbf{90}	&101.96277 &	 101.93530 &	2.69E-04\\
            \textbf{95}	&104.84800 &	 104.82384 &	2.30E-04\\
            \textbf{100}&108.25070 &	 108.21547 &	3.26E-04\\
            \textbf{105}&112.07490 &	 112.03748 &	3.34E-04\\
             \textbf{110}&116.24312& 	 116.19981& 	3.73E-0\\
		\hline
	\end{tabular}}
	\end{subtable}
 \begin{subtable}[c]{0.495\linewidth}
		\centering
		\scalebox{0.90}{
			\begin{tabular}{cccc}
		\hline
		$\mathbf{\sigma_S}$ & \textbf{CTMC} & \textbf{Benchmark} & \textbf{Rel. error} \\
		\hline 
        \textbf{0.1}  &107.36113 &	 107.38133 &	1.88E-04\\
         \textbf{0.15}& 108.82431& 	 108.84084& 	1.52E-04\\
        \textbf{0.2}  &110.48565 &	 110.50458 &	1.71E-04\\
        \textbf{0.3}  &114.04675 &	 114.07719 &	2.67E-04\\
        \textbf{0.4}  &117.72017 &	 117.76496 &	3.80E-04\\

		\hline
	\end{tabular}}
	\end{subtable}
 \begin{subtable}[c]{0.49\linewidth}
		\centering
		\scalebox{0.90}{
			\begin{tabular}{cccc}
		\hline
		$\mathbf{\sigma_S}$ & \textbf{CTMC} & \textbf{Benchmark} & \textbf{Rel. error} \\
		\hline 
         \textbf{0.1} &105.52046 &	 105.48338 &	3.52E-04\\
        \textbf{0.15}&106.75571 &	 106.71897 	&3.44E-04\\
        \textbf{0.2}  &108.25070 &	 108.21547 	&3.26E-04\\
        \textbf{0.3}  &111.58183 &	 111.55168 	&2.70E-04\\
        \textbf{0.4}  &115.08975 &	 115.06539 	&2.12E-04\\
		\hline
	\end{tabular}}
	\end{subtable}
  \begin{subtable}[c]{0.49\linewidth}
		\centering
		\scalebox{0.90}{
			\begin{tabular}{cccc}
		\hline
		$\mathbf{\rho}$ & \textbf{CTMC} & \textbf{Benchmark} & \textbf{Rel. error} \\
		\hline 
        \textbf{-0.3}&110.20950 &	 110.22594 &	1.49E-04\\
        \textbf{-0.2}&110.48565 &	 110.50458 &	1.71E-04\\
        % \textbf{0.0}&111.00445 	&   111.03456	&2.71E-04\\
        \textbf{0.2}&111.50772 	&   111.53335 	&2.30E-04\\
        \textbf{ 0.3}&111.74601 &	 111.77262 	&2.38E-04\\
		\hline
	\end{tabular}}
		\subcaption{$q_t=c_t= 0$ for all $t\in[0,T]$}
	\end{subtable}
  \begin{subtable}[c]{0.49\linewidth}
		\centering
		\scalebox{0.90}{
			\begin{tabular}{cccc}
		\hline
		$\mathbf{\rho}$ & \textbf{CTMC} & \textbf{Benchmark} & \textbf{Rel. error} \\
		\hline 
        \textbf{-0.3}&107.99824 &	 107.96134 	&3.42E-04\\
         \textbf{-0.2}&108.25070& 	 108.21547 &	3.26E-04\\
        %\textbf{0.0}&108.81932 	& 108.67024 	&1.37E-03\\
        \textbf{0.2}&109.19534 	& 109.16259 	&3.00E-04\\
        \textbf{ 0.3}&109.41008 &	 109.38793 	&2.03E-04\\
		\hline
	\end{tabular}}
		\subcaption{$q_t=0.02$, $c_t= 0.05$ for all $t\in[0,T]$}
	\end{subtable}
	\caption[Accuracy of the approximation of American-style CB prices under Black--Scholes--Hull--White model.]{\small{Accuracy of the approximation of American-style CB prices, Algorithm \ref{algoCBpriceTF_CTMCfast}, under Black--Scholes--Hull--White model. Model and CTMC parameters are as listed in Tables \ref{tblModelParamBSHW} and \ref{tblCTMCParamConvBondBSHW}, respectively. Contract specifications are $F=100$, $T=1$, and $\eta=1$, with an annual coupon rate $\alpha=0.05$ paid semi-annually.}}\label{tblAccuracyCB_BSHW2}
\end{table} 
We note that the model achieves a high level of accuracy across all model parameters, 
with an average calculation time of less than 10 seconds. 
When the short-rate process is time-homogeneous, the matrix exponential can be calculated only once at the beginning of the procedure, which speeds up the procedure significantly. For instance, under the Black--Scholes--Vasicek model, the average calculation time for the CTMC approximated prices is less than 1.7 seconds. Numerical results for that particular model are reported in Appendix \ref{appendixSupplMatNumExperiment}, available online as supplemental material.

\begin{figure}[h]
	\centering
	\begin{tabular}{cc}
		\includegraphics[scale=0.3]{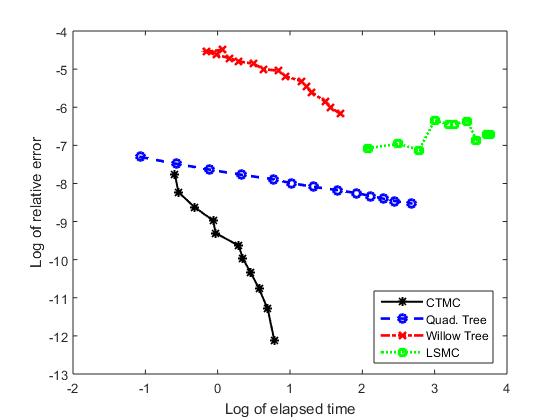} & \includegraphics[scale=0.3]{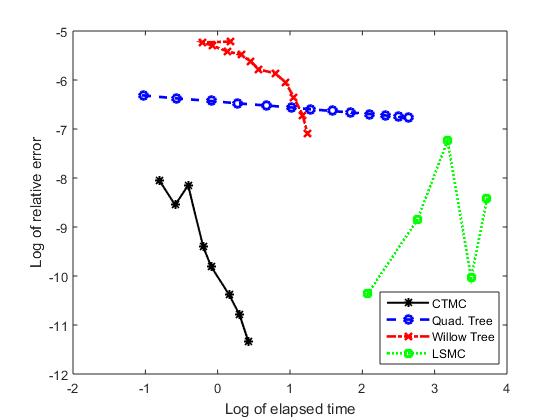}\\
  \small{\textsc{(a)} $q_t=c_t=0$ for all $t\in[0,T]$} & \small{\textsc{(b)} $q_t=0,c_t=0.05$ for all $t\in[0,T]$}  \\
	\end{tabular}
	\caption[Efficiency of the CTMC method in approximating CB prices under the Black—Scholes--Vasicek model]{\small{Efficiency of the CTMC method in approximating CB prices, Algorithm \ref{algoCBpriceTF_CTMCfast}, under the Black—Scholes--Vasicek model. Except for the number of grid points $M$, $\Delta_N=1/100$ and $\theta(t)=\kappa R_0$ for all $t\in[0,T]$, the model and CTMC parameters are as listed in Tables \ref{tblModelParamBSHW} and \ref{tblCTMCParamConvBondBSHW}, respectively. Contract specifications are $F=100$, $T=1$, $\eta=1$, and $\alpha=0$.}}\label{figEfficiencyCBpricesV2}
\end{figure}

Figure \ref{figEfficiencyCBpricesV2} shows the efficiency of the methodology compared to other recently developed numerical approaches. For this testing, we considered zero-coupon CBs ($\alpha=0$), and we set $\theta(t)=\kappa R_0$ for all $t\in[0,T]$, such that the short-rate process collapses to the Vasicek model with a long-term mean level equal to $R_0$. We compare the CTMC approximated prices to the Willow tree approach (``Willow Tree'') of \cite{lu2017simple}, the quadrinomial tree (``Quad. Tree'') of \cite{battauz2022american}, and the LSMC method of \cite{longstaff2001valuing}. All methodologies listed above have been adapted to incorporate credit risk as in the work of \cite{tsiveriotis1998valuing} for a better comparison. When both credit spread and dividend yield are set to nil, the benchmark is obtained using the closed-form formula derived in Proposition \ref{prop:ExactFormulaVanillaCB_TF} (available online as supplemental material); otherwise, CTMC approximation is used as a benchmark with $M=1,000$, and all other CTMC parameters are as listed in Table \ref{tblCTMCParamConvBondBSHW}.
Figure \ref{figEfficiencyCBpricesV2} clearly shows the high efficiency of the CTMC methodologies compared to other methods. CTMC approximation significantly outperforms these other techniques in terms of both precision and calculation time.

The convergence pattern of the approximation as $M$ increases is illustrated in Figure \ref{figConvergenceCB_BSHW}, whereas Table \ref{tblConvRateCBNonHomo} shows the convergence rate.
\begin{figure}[t]
	\centering
	\begin{tabular}{cc}
		\includegraphics[scale=0.3]{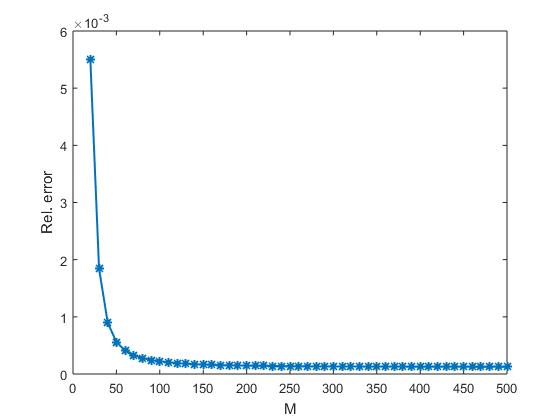} & \includegraphics[scale=0.3]{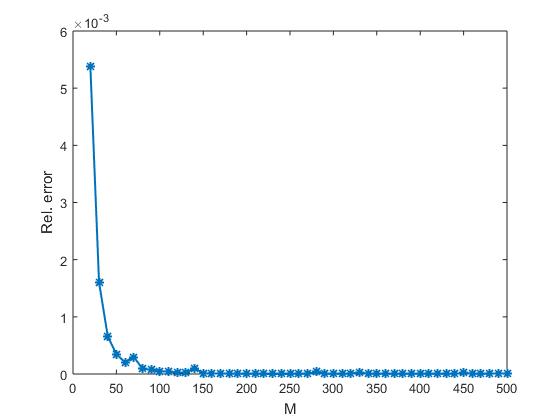}\\
  \small{\textsc{(a)} $q_t=c_t=0$ for all $t\in[0,T]$} & \small{\textsc{(b)} $q_t=0.02,c_t=0.05$ for all $t\in[0,T]$}  \\
	\end{tabular}
	\caption[Convergence pattern of the approximated CB prices using CTMC method under Black—Scholes--Hull--White model]{\small{Convergence pattern of the approximated CB prices using CTMC method, Algorihtm \ref{algoCBpriceTF_CTMCfast}, under Black—Scholes--Hull--White model. Except for the number of grid points $M$ of the auxiliary process, the model and CTMC parameters are as listed in Tables \ref{tblModelParamBSHW} and \ref{tblCTMCParamConvBondBSHW}, respectively. Contract specifications are $F=100$, $T=1$, and $\eta=1$, with an annual coupon rate $\alpha=0.05$ paid semi-annually.}}\label{figConvergenceCB_BSHW}
\end{figure}
\begin{table}[h]
	\begin{subtable}[c]{0.495\linewidth}
		\centering
		%\scalebox{0.70}{
			\begin{tabular}{ccccc}
		\hline
		\textbf{m}  & \textbf{Rel. error} & \textbf{Rate}  \\
		\hline 
		20	&5.50E-03 & -\\		
        50	&5.59E-04 &	2.494\\
        100	&2.21E-04 &		1.340\\
        150	&1.65E-04 &		0.714\\
        500	&1.25E-04 &		0.231\\
		\hline
		%\hline
	\end{tabular}%}
    \subcaption{$q_t=c_t=0$ for all $t\in[0,T]$}
	\end{subtable}
	\begin{subtable}[c]{0.495\linewidth}
		\centering
		%\scalebox{0.70}{
			\begin{tabular}{ccccc}
		\hline
		\textbf{m}  & \textbf{Rel. error} & \textbf{Rate}  \\
        \hline 
		20&	    5.38E-03 & - \\	
        50&	    3.43E-04 & 3.004\\
        100&	5.17E-05& 	2.731\\
        150&	1.30E-05&	3.396\\
        500&	2.45E-06&	1.390\\
		\hline
		%\hline
	\end{tabular}%}
    \subcaption{$q_t=0.02$, $c_t=0.05$ for all $t\in[0,T]$}
	\end{subtable}
 
	\caption[Convergence Rate Approximation Zero-Coupon Bond Call Options]{\small{Approximation of the convergence rate of the approximated CB prices using CTMC method, Algorithm \ref{algoCBpriceTF_CTMCfast}, under Black--Scholes--Hull-White model. Except for the number of grid points $M$ of the auxiliary process, the model and CTMC parameters are as listed in Tables \ref{tblModelParamBSHW} and \ref{tblCTMCParamConvBondBSHW}, respectively. Contract specifications are $F=100$, $T=1$, and $\eta=1$, with an annual coupon rate $\alpha=0.05$ paid semi-annually.}}
	\label{tblConvRateCBNonHomo}
\end{table} 
We observe that the approximated prices converge rapidly and smoothly to the benchmark prices, and the approximation achieves superquadratic convergence on average when accounting for both the dividend yield and credit risk.

Appendix \ref{appendixSupplMatNumExperiment}, available online as supplemental material, shows similar results under the Black--Scholes--Vasicek model. Analogous numerical analysis has also been performed under the Black--Scholes--CIR model with similar results, which are available upon request.  Finally, since Assumption \ref{assumpStateSpaceR} is not satisfied under the Black-Scholes--Hull--White model, the preceding experiments demonstrate that the results of Section \ref{sectionConvBond} can still be valid under less restrictive conditions under a certain set of parameters. Theoretical proof is left as future research.

%%%%%%%%%%%%%%%%%%%%%%%%%%%%%%%%%%%%%%%%%%%%%%%%%%%%%%%%%%%%%%%%%%%%%%%%%%%%%%%%%%%%%%%%%%%%%%%%%%%%%%%%%%%%%%%%%%%%%%%
%%%%%%%%%%%%%%%%%%%%%%%%%%%%%%%%%%%%%%%%%%%%%%%%%%%%%%%%%%%%%%%%%%%%%%%%%%%%%%%%%%%%%%%%%%%%%%%%%%%%%%%%%%%%%%%%%%%%%%%
%%%%%%%%%%%%%%%%%%%%%%%%%%%%%%%%%%%%%%%%%%%%%%%%%%%%%%%%%%%%%%%%%%%%%%%%%%%%%%%%%%%%%%%%%%%%%%%%%%%%%%%%%%%%%%%%%%%%%%%
\section{Conclusion}\label{sectConclu}
%%%%%%%%%%%%%%%%%%%%%%%%%%%%%%%%%%%%%%%%%%%%%%%%%%%%%%%%%%%%%%%%%%%%%%%%%%%%%%%%%%%%%%%%%%%%%%%%%%%%%%%%%%%%%%%%%%%%%%%
In this paper, we provide a general pricing framework based on continuous-time Markov chain approximations to value debt securities under general time-inhomogenous short-rate models. One advantage of CTMC methods over other commonly used numerical techniques is that they allow for a closed-form matrix expression for the price of zero-coupon bonds regardless of the complexity of the short-rate dynamics, making the calibration of the approximated model to the current market-term structure straightforward for a wide range of models. 
Closed-form matrix expressions are also obtained for the price of bond options and European CBs, 
and simple and efficient algorithms are developed to approximate the values of callable/putable bonds and CBs (American-style). Numerical results show the high accuracy and great efficiency of the methodology. Theoretical convergence is also discussed. 
The problem of CB pricing is also studied from a theoretical perspective. When both credit spread and dividend yield are set to nil and under some conditions on the parameters of the equity process (constant volatility), we show that early conversion is sub-optimal. When default risk is considered, we obtain lower and upper bounds for the value of American-style CBs. %We also observed numerically that the value of early conversion is negligible over a large range of model parameters.

Alternative approaches to credit risk modeling, such as the one of \cite{hung2002pricing} and \cite{chambers2007tree} or \cite{milanov2012binomial}, can also be explored using CTMCs. The methodologies proposed by \cite{hung2002pricing} and \cite{chambers2007tree} can be easily integrated. However, \cite{milanov2012binomial} introduce default risk by incorporating a jump into the equity process. Jump processes in CTMCs have been studied in a one-dimensional setting by \cite{lo2014improved}, while  \cite{kirkby2022hybrid}, Appendix B, discusses some extensions to two-dimensional processes. Other developments to the present work can also be considered. For example,  building on insights from \cite{brigoMercurio2006}, Sections 2.6.1 and 3.3.2., the closed-form expression obtained for the price of bond options in \eqref{eqCallPutZero} and Remark \ref{rmkCouponBearingBond} can be used to approximate other derivatives like caps, floors, and swaptions. The pricing of these derivatives, as well as the associated model calibration, should be studied in greater detail. 

\section*{Disclosure statement}
The authors report there are no competing interests to declare.
\section*{Acknowledgement}
The first author acknowledges support from the Natural Science and Engineering Research Council of Canada (Grant number RGPIN-2024-05794).
The second author was supported by the Fonds de recherche du Qu\'ebec - Nature et Technologies (Grant number 257320) and Faculté des sciences de l'Université du Québec à Montréal (Grant number 1993-(A2022)).

\bibliographystyle{abbrvnat}
\bibliography{myBib}

\appendix
%%%%%%%%%%%%%%%%%%%%%%%%%%%%%%%%%%%%%%%%%%%%%%%%%%%%%%%%%%%%%%%%%%%%%%%%%%%%%%%%%%%%%%%%%%%%%%%%%%%%%%%%%%%%%%%%%%%%%%%
%%%%%%%%%%%%%%%%%%%%%%%%%%%%%%%%%%%%%%%%%%%%%%%%%%%%%%%%%%%%%%%%%%%%%%%%%%%%%%%%%%%%%%%%%%%%%%%%%%%%%%%%%%%%%%%%%%%%%%%
%%%%%%%%%%%%%%%%%%%%%%%%%%%%%%%%%%%%%%%%%%%%%%%%%%%%%%%%%%%%%%%%%%%%%%%%%%%%%%%%%%%%%%%%%%%%%%%%%%%%%%%%%%%%%%%%%%%%%%%
\section{Proofs}\label{appendixProof}
%%%%%%%%%%%%%%%%%%%%%%%%%%%%%%%%%%%%%%%%%%%%%%%%%%%%%%%%%%%%%%%%%%%%%%%%%%%%%%%%%%%%%%%%%%%%%%%%%%%%%%%%%%%%%%%%%%%%%%%
\subsection{Proof of Lemma \ref{lemmaZeroCouponBondApprox}}
%%%%%%%%%%%%%%%%%%%%%%%%%%%%%%%%%%%%%%%%%%%%%%%%%%%%%%%%%%%%%%%%%%%%%%%%%%%%%%%%%%%%%%%%%%%%%%%%%%%%%%%%%%%%%%%%%%%%%%%
Without loss of generality, suppose that $i=0$ and that $\tN=N$. Since $R^{(m)}$ is a discrete random process whose transitional probabilities $p_{k_1k_2}(t_{n-1},t_{n})$, $1 \leq k_1,k_2\leq m$, $1\leq n\leq N$ are given by ${\mathbf{P}(t_{n-1},t_{n})=\exp\left\{\mathbf{Q}_{n}^{(m)}\Delta_N\right\}}$ as per \eqref{eqTransProb}, we have that
\begin{align*}
&\e\left[e^{-\sum_{n=1}^{N} R_{t_n}^{(m)}\Delta_N} \big| R_0^{(m)}=r_j\right] \\
&\quad= \sum_{k_1,k_2,\ldots, k_N =1}^{m} e^{-\sum_{l=1}^N r_{k_l}\Delta_N}p_{jk_1}(t_0,t_1)p_{k_1k_2}(t_1,t_2)\ldots p_{k_{N-1}k_N}(t_{N-1},t_N)\\
		&=\sum_{k_1,k_2,\ldots, k_N =1}^{m} p_{jk_1}(t_0,t_1)e^{-r_{k_1}\Delta_N}p_{k_1k_2}(t_1,t_2)e^{-r_{k_2}\Delta_N}\ldots
        p_{k_{N-1}k_N}(t_{N-1},t_N)e^{-r_{k_N}\Delta_N}\\
		&=\mathbf{e}_j \left(\prod_{n=1}^{N} \mathbf{P}(t_{n-1},t_{n})e^{-\mathbf{D}_m\Delta_N}\right)\mathbf{1}_{m\times 1}\\
        &= \mathbf{e}_j \left(\prod_{n=1}^{N} e^{\mathbf{Q}_n^{(m)}\Delta_N}e^{-\mathbf{D}_m\Delta_N}\right)\mathbf{1}_{m\times 1},
\end{align*}
where the fourth line is the matrix representation of the third line. %and can be seen easily by using $m=2$ or $m=3$. 
The result follows similarly when $\tN=kN$.
The proof for \eqref{eqExpSumApproxInAdvance} follows using  analogous arguments.
\hfill\qed
%%%%%%%%%%%%%%%%%%%%%%%%%%%%%%%%%%%%%%%%%%%%%%%%%%%%%%%%%%%%%%%%%%%%%%%%%%%%%%%%%%%%%%%%%%%%%%%%%%%%%%%%%%%%%%%%%%%%%%%
\subsection{Proof of Proposition \ref{propZeroCouponBondApprox}}
%%%%%%%%%%%%%%%%%%%%%%%%%%%%%%%%%%%%%%%%%%%%%%%%%%%%%%%%%%%%%%%%%%%%%%%%%%%%%%%%%%%%%%%%%%%%%%%%%%%%%%%%%%%%%%%%%%%%%%%
Using the notation of Lemma \ref{lemmaZeroCouponBondApprox}, we have that
	\begin{align*}
		 & \e\left[e^{-\int_{t_i}^T R_s^{(m)}\diff s} \big| R_{t_i}^{(m)}=r_j\right] \\
   & \qquad\qquad= \e\left[\lim_{\tN\rightarrow\infty}e^{-\sum_{n=ki+1}^{\tN} R_{\tilde{t}_n}^{(m)}\Delta_{\tN}} \big| R_{\tilde{t}_{ki}}^{(m)}=r_j\right]\\
		& \qquad\qquad= \lim_{\tN\rightarrow\infty}\e\left[e^{-\sum_{n=ki+1}^{\tN} R^{(m)}_{\tilde{t}_n}\Delta_{\tN} } | R_{\tilde{t}_{ki}}^{(m)}=r_j\right]\textrm{ (by the dominated convergence theorem)}\\
		& \qquad\qquad = \lim_{\tN\rightarrow\infty}   \mathbf{e}_{j} \left(\prod_{n=i+1}^{N} \left(e^{\mathbf{Q}_{n}^{(m)}\Delta_{\tN}}e^{\mathbf{-D}\Delta_{\tN}}\right)^{\frac{\tN}{N}}\right)\mathbf{1}_{m\times 1} \textrm{ (by \eqref{eqExpSumApprox})}\\
		& \qquad\qquad = \mathbf{e}_j \left(\prod_{n=i+1}^{N} e^{(\mathbf{Q}_n^{(m)}-\mathbf{D})\Delta_N}\right) \mathbf{1}_{m\times 1}\textrm{ (by the Lie product formula).}  
  \end{align*}

  %%%%%%%%%%%%%%%%%%%%%%%%%%%%%%%%%%%%%%%%%%%%%%%%%%%%%%%%%%%%%%%%%%%%%%%%%%%%%%%%%%%%%%%%%%%%%%%%%%%%%%%%%%%%%%%%%%%%%%%
%%%%%%%%%%%%%%%%%%%%%%%%%%%%%%%%%%%%%%%%%%%%%%%%%%%%%%%%%%%%%%%%%%%%%%%%%%%%%%%%%%%%%%%%%%%%%%%%%%%%%%%%%%%%%%%%%%%%%%%
%%%%%%%%%%%%%%%%%%%%%%%%%%%%%%%%%%%%%%%%%%%%%%%%%%%%%%%%%%%%%%%%%%%%%%%%%%%%%%%%%%%%%%%%%%%%%%%%%%%%%%%%%%%%%%%%%%%%%%%
\subsection{Extension of Proposition \ref{propCBtrivialTF} and Corollary \ref{corrCBsUpperBound}}\label{appendixExtensionPropTrivialCB}
This section shows that the results of Proposition \ref{propCBtrivialTF} and Corollary \ref{corrCBsUpperBound} still hold when periodic coupons are paid. 
We suppose that $\tilde{N}>0$ periodic coupons $\alpha>0$ are paid at time $0<t_{z}<t_{2z}<\ldots<t_{z\tilde{N}}=T$, with $z=N/\tilde{N}$. The time-$t$ risk-neutral value of a European-style CB is given by
\begin{equation}\label{eq:CBeuroTFcoupon}
\begin{split}
     v_e^\alpha(t,x,r)  &:=\e \Big[e^{-\int_t^T R_u\diff u} \eta S_T \ind_{\{S_T\geq F/\eta\}} %\\
     %& \quad\quad\quad 
     + e^{-\int_t^T R_u+c_u\diff u} F  \ind_{\{S_T< F/\eta\}} \big| S_t=x, R_t=r\Big] \\
      & \quad\quad\quad+ \alpha\sum_{n=1}^{\tilde{N}} \e\left[e^{-\int_t^{t_{zn}} R_u + c_u \diff u}\Big|S_t=x, R_t=r\right]\ind_{\{t_{zn}> t\}}\\
      &=v_e(t,x,r)+\alpha\sum_{n=1}^{\tilde{N}} e^{-\int_t^{t_{zn}} c_u\diff u} P_r(t,t_{zn})\ind_{\{t_{zn}> t\}},
 \end{split}
\end{equation}
where $P_r(t,t_{zn}):=\e\left[e^{-\int_t^{t_{zn}} R_u\diff u}\big|R_t=r\right]$ with $P(t,t_{zn})=1$ whenever $t_{zn}\leq t$. Hence, the value of European-style CBs with coupons is equal to the value of European-style CBs without coupons to which the present value of future coupons is added. When conversion can occur at any time prior to maturity, the time-$t$ risk-neutral value of the CB is given by
\begin{align}\label{eq:valueFctCBamericanCoupon}
    \begin{split}
    v^\alpha(t,x,r)&:=\sup_{\tau\in\mathcal{T}_{t,T}}\e\Bigg[e^{-\int_t^\tau R_u +c_u \ind_{\{\tau=T,S_T< F/\eta\}}\diff u}\varphi(\tau,S_{\tau})\\
    &\qquad\qquad\qquad+\alpha \sum_{n=1}^{\tilde{N}}e^{-\int_t^{t_{zn}} R_u + c_u \diff u}\ind_{\{t< t_{zn}\leq \tau\}}\Big|S_t=x, R_t=r\Bigg],
    \end{split}
\end{align}
with function $\varphi$ defined in \eqref{eq:rewardFctCB}.

The next corollary extends Proposition \ref{propCBtrivialTF} to coupon-bearing bonds. Under the assumption that no dividend yield is paid out and credit risk is nil, the value of coupon-bearing American-type CBs is equivalent to the value of coupon-bearing European-type CBs.
\begin{corollary}\label{corCBtrivialTFcoupon}
\begin{sloppypar}
    Assume $q_t=c_t=0$ for all $t\in[0,T]$ and $\sigma_S(r)=\tilde{\sigma}_S>0$ for all $r\in\mathcal{S}_R$. We have that ${v^\alpha(t,x,r)=v_e^\alpha(t,x,r)}$ for all $(t,x,r)\in [0,T]\times \reals_+^\star\times\mathcal{S}_R$. 
\end{sloppypar}
\end{corollary}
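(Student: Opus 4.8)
The plan is to reduce the coupon-bearing case to the zero-coupon case already settled in Proposition \ref{propCBtrivialTF}. The key structural observation is that the stream of coupon payments is deterministic in timing and amount, so its contribution to the value is the same whether we exercise early or at maturity \emph{up to} the coupons accrued before the stopping time — but crucially, under the assumption $q_t=c_t=0$ the discounted stock process $\{e^{-\int_0^t R_u\diff u}S_t\}_{0\le t\le T}$ is a true martingale (see Remark \ref{rmkMartingalePropS}), and this is exactly what drives the submartingale argument. So first I would split the value functional in \eqref{eq:valueFctCBamericanCoupon}: for any $\tau\in\mathcal{T}_{t,T}$, write
\[
\e\Big[e^{-\int_t^\tau R_u\diff u}\varphi(\tau,S_\tau)+\alpha\sum_{n=1}^{\tilde N}e^{-\int_t^{t_{zn}}R_u\diff u}\ind_{\{t<t_{zn}\le\tau\}}\Big]
= \e\big[e^{-\int_t^\tau R_u\diff u}\widetilde\varphi(\tau,S_\tau)\big] + \alpha\sum_{n=1}^{\tilde N}P_r(t,t_{zn})\ind_{\{t_{zn}>t\}},
\]
where I absorb the "already-paid" coupons into an augmented reward. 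Actually the cleaner route is to note the coupon sum telescopes: the full set of coupons has value $\alpha\sum_{n}P_r(t,t_{zn})\ind_{\{t_{zn}>t\}}$ regardless of $\tau$, and the coupons \emph{not yet paid} at time $\tau$ have, by the tower property and the fact that $R$ (not $\tilde R=R$) discounts them, value $\e[e^{-\int_t^\tau R_u\diff u}\sum_{n: t_{zn}>\tau}\alpha P_{R_\tau}(\tau,t_{zn})\,|\,\mathcal F_t]$; so the early-exercise payoff at $\tau$ is effectively $\varphi(\tau,S_\tau)$ plus the value of the \emph{remaining} coupons. This means the "stopped" reward process to analyze is $Y_t:=e^{-\int_0^t R_u\diff u}\big(\varphi(t,S_t)+\Gamma(t,R_t)\big)$, where $\Gamma(t,r):=\alpha\sum_{n:\,t_{zn}>t}P_r(t,t_{zn})$ is the ex-coupon value of future coupons.

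The second step is to show $\{Y_t\}_{0\le t\le T}$ is a submartingale, mirroring the proof of Proposition \ref{propCBtrivialTF}. For $0\le s\le t<T$: on this range $\varphi(t,S_t)=\eta S_t$, and $e^{-\int_0^t R_u\diff u}\eta S_t$ is a martingale; meanwhile $e^{-\int_0^t R_u\diff u}\Gamma(t,R_t)=\e[\alpha\sum_{n:\,t_{zn}>t}e^{-\int_0^{t_{zn}}R_u\diff u}\,|\,\mathcal F_t]$ is a martingale \emph{minus} the already-dropped coupon terms, hence the full discounted coupon bank $e^{-\int_0^t R_u\diff u}\Gamma(t,R_t)+\alpha\sum_{n:\,t_{zn}\le t}e^{-\int_0^{t_{zn}}R_u\diff u}$ is a martingale, so $e^{-\int_0^t R_u\diff u}\Gamma(t,R_t)$ is a supermartingale that jumps \emph{down} by exactly the coupon at each $t_{zn}$ — which is fine, because at the coupon dates the actual cash received compensates, keeping the \emph{total} contract value a submartingale across coupon dates. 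I would handle the coupon-date jumps carefully here: between coupon dates $e^{-\int_0^t R_u\diff u}(\eta S_t+\Gamma(t,R_t))$ is a martingale (sum of a martingale and a supermartingale that is actually a martingale between jumps), and at a coupon date $t_{zn}$ the holder's position goes from (pre-coupon value) to (coupon $\alpha$ + post-coupon value), with no loss, so no downward jump in the value-including-cash process. For the terminal step $0\le s<t=T$: $\varphi(T,S_T)=\max(\eta S_T,F)\ge\eta S_T$ and $\Gamma(T,R_T)=0$, so
\[
\e\big[e^{-\int_0^T R_u\diff u}\max(\eta S_T,F)\,\big|\,\mathcal F_s\big]\ge\e\big[e^{-\int_0^T R_u\diff u}\eta S_T\,\big|\,\mathcal F_s\big]=e^{-\int_0^s R_u\diff u}\eta S_s,
\]
which dominates $e^{-\int_0^s R_u\diff u}(\eta S_s+\Gamma(s,R_s))$ only if $\Gamma(s,R_s)$ is correctly bookkept — this is where I need the coupons already counted on $[s,T]$ on the right side too, so the inequality should be stated for the full cash-inclusive process. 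Once submartingality is established, the same optimal-stopping fact used before (\cite{bjork2009}, Proposition 21.2) gives that $\tau=T$ is optimal, hence $v^\alpha(t,x,r)=v_e^\alpha(t,x,r)$ by \eqref{eq:CBeuroTFcoupon}.

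The main obstacle I anticipate is the careful treatment of the coupon-payment dates: the discounted reward-plus-accumulated-cash process is only a \emph{càdlàg} submartingale, and one must verify there is no downward jump at each $t_{zn}$ (the cash inflow exactly offsets the drop in the ex-coupon continuation value, since coupons are discounted at $R$, matching the discount rate applied in the martingale). A clean way to sidestep delicate jump arguments is to reduce to Proposition \ref{propCBtrivialTF} directly: by \eqref{eq:CBeuroTFcoupon} and the decomposition of the coupon stream, one shows $v^\alpha(t,x,r)=v(t,x,r)+\alpha\sum_{n}P_r(t,t_{zn})\ind_{\{t_{zn}>t\}}$ — i.e. the coupons contribute a $\tau$-independent additive term to \emph{both} the American and European problems — and then invokes $v=v_e$ from Proposition \ref{propCBtrivialTF} together with \eqref{eq:CBeuroTFcoupon}. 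Establishing that additive decomposition rigorously (that $\sup_\tau$ of the sum equals $\sup_\tau$ of the $\varphi$-part plus the constant) is the real content, and it follows because for every admissible $\tau$ the coupon contribution equals the fixed constant $\alpha\sum_n P_r(t,t_{zn})\ind_{\{t_{zn}>t\}}$ once one adds and subtracts the discounted value of the post-$\tau$ coupons, using the tower property and $\tilde R=R$.
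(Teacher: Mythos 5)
Your overall strategy---isolate the coupon stream and reduce to Proposition \ref{propCBtrivialTF}---is the right one, and your ``clean'' second route is, once tidied up, essentially the paper's proof. The paper's version is shorter: with $c_t=q_t=0$, set $\tilde Z_s:=\alpha\sum_{n}e^{-\int_t^{t_{zn}}R_u\diff u}\ind_{\{t<t_{zn}\le s\}}$, bound $v^\alpha\le\sup_{\tau}\e\bigl[e^{-\int_t^\tau R_u\diff u}\varphi(\tau,S_\tau)\bigr]+\sup_{\tau}\e\bigl[\tilde Z_\tau\bigr]$, identify the first supremum as $v_e$ via Proposition \ref{propCBtrivialTF} and the second as $\e[\tilde Z_T]=\alpha\sum_n P_r(t,t_{zn})\ind_{\{t_{zn}>t\}}$ because $\tilde Z$ is pathwise nondecreasing, and get the reverse inequality for free from $T\in\mathcal T_{t,T}$. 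Two points in your write-up need attention. First, the assertion that ``for every admissible $\tau$ the coupon contribution equals the fixed constant'' is false as literally stated: under \eqref{eq:valueFctCBamericanCoupon} the holder forfeits coupons falling after $\tau$, so $\e[\tilde Z_\tau]$ is in general strictly below that constant. What your add-and-subtract actually yields is $\e\bigl[e^{-\int_t^\tau R_u\diff u}(\varphi(\tau,S_\tau)-\Gamma(\tau,R_\tau))\bigr]+\alpha\sum_n P_r(t,t_{zn})\ind_{\{t_{zn}>t\}}$ with $\Gamma\ge 0$ and $\Gamma(T,\cdot)=0$, after which you still need the (easy, but not stated) observation that $\sup_\tau\e\bigl[e^{-\int}(\varphi-\Gamma)\bigr]\le\sup_\tau\e\bigl[e^{-\int}\varphi\bigr]=v_e$ with equality attained at $\tau=T$. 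Second, your first route carries more machinery than the problem requires: the discounted gain process of \eqref{eq:valueFctCBamericanCoupon} is exactly $e^{-\int_t^s R_u\diff u}\varphi(s,S_s)+\tilde Z_s$, and since $\tilde Z$ is nondecreasing it is automatically a submartingale, so submartingality of the total gain follows by adding it to the submartingale already established in Proposition \ref{propCBtrivialTF}---no coupon-date jump bookkeeping or $\Gamma$-compensation is needed.
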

\begin{proof}
Define process $\left\{\tilde{Z}_s:=\alpha\sum_{n=1}^{\tilde{N}}  e^{-\int_t^{t_{zn}} R_u \diff u}\ind_{\{t< t_{zn}\leq s\}}\right\}_{t\leq s\leq T}$, representing the discounted value of future coupons, and note that
    \begin{align*}
         v^\alpha(t,x,r)&\leq \sup_{\tau\in\mathcal{T}_{t,T}}\e\Bigg[e^{-\int_t^\tau R(u) \diff u}\varphi(\tau,S_{\tau}) \Big|S_t=x, R_t=r\Bigg]\\
    &\quad\quad\quad+\sup_{\tau\in\mathcal{T}_{t,T}}\e\left[\tilde{Z}_\tau\Big|S_t=x, R_t=r\right]\\
    &= v_e(t,x,r)+ \e\left[ \tilde{Z}_T\Big|S_t=x, R_t=r\right]\\
    &=v_e^\alpha(t,x,r),
    \end{align*}
    where the second equality follows from Proposition \ref{propCBtrivialTF} and by noticing that $\tilde{Z}_s\uparrow \tilde{Z}_T$.
We conclude the proof by observing that $v^\alpha(t,x,r)\geq v_e^\alpha(t,x,r)$ for all $(t,x,r)\in [0,T]\times \reals_+^\star\times\mathcal{S}_R$, since $T\in\mathcal{T}_{t,T}$. 
\end{proof}
Let $\tilde{v}_e^{\alpha}:[0,T]\times\reals_+^\star\times\mathcal{S}_R\rightarrow \reals_+$ denote the value of European-style CBs (with periodic coupon $\alpha$) when $c_t=0$ for all $t\in[0,T]$. Using \eqref{eq:CBeuroTFcoupon}, it follows that
\begin{align*}
\tilde{v}_e^\alpha(t,x,r)&:=\e \Big[e^{-\int_t^T R_u\diff u} \max\left(\eta S_T,F\right) \big| S_t=x, R_t=r\Big]+ \alpha\sum_{n=1}^{\tilde{N}} P_r(t,t_{zn})\ind_{\{t_{zn}> t\}}\\
&=\tilde{v}_e(t,x,r)+\alpha\sum_{n=1}^{\tilde{N}} P_r(t,t_{zn})\ind_{\{t_{zn}> t\}},
\end{align*}
where $\tilde{v}_e$ is defined in \eqref{eqEuroCBnoCreditRisk} and represents the value of European-style CBs when no coupons are paid and when $c_t=0$ for all $t\in[0,T]$.

The next corollary extends the results of Corollary \ref{corrCBsUpperBound} to coupon-bearing bonds. It provides an upper bound for American-type CBs under the assumption that no dividends are distributed.
\begin{corollary}
\begin{sloppypar}
    Assume $q_t=0$ for all $t\geq 0$ and $\sigma_S(r)=\tilde{\sigma}_S>0$ for all $r\in\mathcal{S}_R$. We have that ${v_e^\alpha(t,x,r)\leq v^\alpha(t,x,r)\leq \tilde{v}_e^\alpha(t,x,r)}$ for all $(t,x,r)\in[0,T]\times\reals_+^\star\times\mathcal{S}_R$.
\end{sloppypar}
\end{corollary}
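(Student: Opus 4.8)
The plan is to mirror the proof of Corollary \ref{corrCBsUpperBound}, carrying along the discounted-coupon bookkeeping. For the lower bound $v_e^\alpha(t,x,r)\le v^\alpha(t,x,r)$, I would simply evaluate the supremum in \eqref{eq:valueFctCBamericanCoupon} at the admissible stopping time $\tau=T$. With $\tau=T$ the gain $\varphi(T,S_T)=\max(\eta S_T,F)$ splits as $\eta S_T\ind_{\{S_T\ge F/\eta\}}+F\ind_{\{S_T<F/\eta\}}$, while the discount factor $e^{-\int_t^T R_u+c_u\ind_{\{S_T<F/\eta\}}\diff u}$ equals $e^{-\int_t^T R_u\diff u}$ on $\{S_T\ge F/\eta\}$ and $e^{-\int_t^T R_u+c_u\diff u}$ on $\{S_T<F/\eta\}$; moreover $\ind_{\{t<t_{zn}\le T\}}=\ind_{\{t_{zn}>t\}}$ since every coupon date lies in $[0,T]$, and the deterministic spread factors out. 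Matching this with \eqref{eq:CBeuroTFcoupon} shows the value at $\tau=T$ is exactly $v_e^\alpha(t,x,r)$, hence $v^\alpha\ge v_e^\alpha$.

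For the upper bound $v^\alpha(t,x,r)\le \tilde v_e^\alpha(t,x,r)$, I would use $c_u\ge 0$ to discard the credit spread everywhere: the reward discount obeys $e^{-\int_t^\tau R_u+c_u\ind_{\{\tau=T,S_T<F/\eta\}}\diff u}\le e^{-\int_t^\tau R_u\diff u}$ and each coupon discount obeys $e^{-\int_t^{t_{zn}}R_u+c_u\diff u}\le e^{-\int_t^{t_{zn}}R_u\diff u}$. Introducing the nondecreasing process $\tilde Z_s:=\alpha\sum_{n=1}^{\tilde N}e^{-\int_t^{t_{zn}}R_u\diff u}\ind_{\{t<t_{zn}\le s\}}$, exactly as in the proof of Corollary \ref{corCBtrivialTFcoupon}, and bounding the supremum of a sum by the sum of suprema, I get
\[
v^\alpha(t,x,r)\le \sup_{\tau\in\mathcal{T}_{t,T}}\e\!\left[e^{-\int_t^\tau R_u\diff u}\varphi(\tau,S_\tau)\Big|S_t=x,R_t=r\right]+\sup_{\tau\in\mathcal{T}_{t,T}}\e\!\left[\tilde Z_\tau\Big|S_t=x,R_t=r\right].
\]
The first supremum equals $\tilde v_e(t,x,r)$ by Proposition \ref{propCBtrivialTF} together with \eqref{eqEuroCBnoCreditRisk} (the discounted conversion reward is a submartingale when $q\equiv0$ and $\sigma_S$ is constant, so $\tau=T$ is optimal); since $\tilde Z_\tau\le\tilde Z_T$ a.s.\ for $\tau\le T$, the second supremum equals $\e[\tilde Z_T\mid S_t=x,R_t=r]=\alpha\sum_{n=1}^{\tilde N}P_r(t,t_{zn})\ind_{\{t_{zn}>t\}}$. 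Adding these and invoking the decomposition of $\tilde v_e^\alpha$ stated just before the corollary gives $v^\alpha\le\tilde v_e^\alpha$.

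I do not expect a genuine obstacle here: all the analytic substance is already contained in Proposition \ref{propCBtrivialTF}, and what remains is careful bookkeeping — checking that $\tau=T$ reproduces the indicator-split form of $v_e^\alpha$ and that dropping the nonnegative spread is legitimate in both the reward and the coupon terms. I would also remark, as in Remark \ref{rmkMartingalePropS}, that the constant-volatility hypothesis on $\sigma_S$ may be replaced by any condition ensuring that the discounted stock price is a true (rather than merely local) martingale under $\prob{Q}$.
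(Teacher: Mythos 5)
Your proof is correct and follows exactly the route the paper intends: the paper only remarks that the proof is ``akin to that of Corollary \ref{corrCBsUpperBound}'', and your argument combines that corollary's two steps (take $\tau=T$ for the lower bound; drop the nonnegative spread and invoke Proposition \ref{propCBtrivialTF} for the upper bound) with the coupon bookkeeping via the nondecreasing process $\tilde Z$ already used in Corollary \ref{corCBtrivialTFcoupon}. No gaps; your closing remark about relaxing the constant-volatility hypothesis matches Remark \ref{rmkMartingalePropS}.
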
 \label{corrCBsUpperBoundcoupon}   
The proof is akin to that of Corollary \ref{corrCBsUpperBound}.

 \section{Time-Homogeneous Models} \label{appendixTimeHomoModels}
 %%%%%%%%%%%%%%%%%%%%%%%%%%%%%%%%%%%%%%%%%%%%%%%%%%%%%%%%%%%%%%%%%%%%%%%%%%%%%%%%%%%%%%%%%%%%%%%%%%%%%%%%%%%%%%%%%%%%%%%
When the short-rate process is time-homogeneous (see, for instance, models listed in Table \ref{tblmodelsHomo}), the generator is time-independent, and the results obtained in Section \ref{sectionApplicationDebtSecurities} can be simplified. The construction of the generator remains the same as in Section \ref{subsectCTMC_R}, except that we now have 
%\begin{equation}\label{eq:GeneratorTimeHomo}
    $\mathbf{Q}^{(m)}:=\mathbf{Q}^{(m)}_1=\mathbf{Q}^{(m)}_2=\ldots = \mathbf{Q}^{(m)}_{N},$
%\end{equation}
which follows because the drift and volatility parameters of $R$ are now time-independent. 
In this appendix, we summarize the different results obtained previously under the assumption that the short-rate process is time-homogeneous.

Recall that $\{\mathbf{e}_{k}\}_{k=1}^{m}$ denotes the standard basis in $\reals^{m}$, that is, $\mathbf{e}_{k}$ represents a row vector of size $1\times m$ with a value of $1$ in the $k$-th entry and $0$ elsewhere, $\mathbf{1}_{m\times1}$ denotes an $m\times 1$ unit vector, and ${\mathbf{D}_m:=\diag(\bm{r})}$, is an $m\times m$ diagonal matrix with vector $\bm{r}=(r_1,r_2,\ldots,r_m)$ on its diagonal, $r_k\in\mathcal{S}^{(m)}_R,$ $k=1,2\ldots,m$.

\subsection{Zero-Coupon Bond}\label{appendixZeroBondHomo}
The first result of this section concerns the price of zero-coupon bonds.
\begin{corollary}\label{corrZeroCouponBondHomo}
    Consider a time partition of $[0,T]$, $0=\tilde{t}_0<\tilde{t}_1<\ldots<\tilde{t}_{\tN}=T$, with $\tN\in\mathbb{N}$, $\Delta_{\tN}=T/{\tN}$ and $\tilde{t}_n=n\Delta_{\tN}$. Given that $R_{\tiT_i}^{(m)}=R_{\tiT_i}=r_j\in\mathcal{S}^{(m)}_R$, it holds that
     \begin{equation}
        \e\left[e^{-\sum_{n=i+1}^{\tN} R_{\tilde{t}_n}^{(m)}\Delta_{\tN}} \big| R_{\tilde{t}_{i}}^{(m)}=r_j\right]=  \mathbf{e}_{j} \left( e^{\mathbf{Q}^{(m)}\Delta_{\tN}}e^{-\mathbf{D}_m{\Delta_{\tN}}}\right)^{\tN-i} \times\mathbf{1}_{m\times1},\label{eqExpSumHomo1}
    \end{equation}
    and
    \begin{equation}
        \begin{split}\label{eqExpSumHomo2}
            \e\left[e^{-\sum_{n=i}^{\tN} R_{\tilde{t}_n}^{(m)}\Delta_{\tN}} \big| R_{\tilde{t}_i}^{(m)}
                  =r_j\right] & = \mathbf{e}_{j}  \left(e^{-\mathbf{D}_m\Delta_{\tN}} e^{\mathbf{Q}^{(m)}\Delta_{\tN}}\right)^{\tN-i} e^{-\mathbf{D}_m\Delta_{\tN}} \times\mathbf{1}_{m\times1}\\
                  & =\mathbf{e}_{j} e^{-\mathbf{D}_m\Delta_{\tN}}\left(e^{\mathbf{Q}^{(m)}\Delta_{\tN}}e^{-\mathbf{D}_m\Delta_{\tN}}\right)^{\tN-i}\times\mathbf{1}_{m\times1},
        \end{split}
    \end{equation}
Moreover, if Assumption \ref{assumpStateSpaceR} holds, the price at time $t\geq 0$ of a zero-coupon bond with maturity $T\geq t$ can be approximated by
	\begin{equation}
	    P_j^{(m)}(t,T):=\e\left[e^{-\int_t^T R_{s}^{(m)}\diff s} \big| R_{t}^{(m)}=r_j\right]
     = \mathbf{e}_{j} e^{\left(\mathbf{Q}^{(m)}-\mathbf{D}_m\right)(T-t)} \mathbf{1}_{m\times1},\label{eqZeroCouponHomo}
	\end{equation}
 given that $R_{t}^{(m)}=R_t =r_j\in\mathcal{S}^{(m)}_R$.
\end{corollary}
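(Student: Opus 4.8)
The plan is to observe that Corollary \ref{corrZeroCouponBondHomo} is nothing more than the time-homogeneous specialization of Lemma \ref{lemmaZeroCouponBondApprox} and Proposition \ref{propZeroCouponBondApprox}, so the proof should simply invoke those results and substitute $\mathbf{Q}^{(m)}_n=\mathbf{Q}^{(m)}$ for all $n$. First I would recall that when $R$ is time-homogeneous, the drift $\mu_R$ and volatility $\sigma_R$ do not depend on $t$, so the generator construction in \eqref{eq:generatorR} produces the same matrix on every subinterval; hence $\mathbf{Q}^{(m)}_1=\cdots=\mathbf{Q}^{(m)}_{\tN}=:\mathbf{Q}^{(m)}$, as noted at the start of this appendix.

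For \eqref{eqExpSumHomo1} and \eqref{eqExpSumHomo2}, I would apply Lemma \ref{lemmaZeroCouponBondApprox} with the partition $0=\tilde t_0<\cdots<\tilde t_{\tN}=T$ playing the role of the fine partition there (taking, say, $k=1$ and relabeling), so that the products $\prod_{n} (e^{\mathbf{Q}^{(m)}_n\Delta_{\tN}}e^{-\mathbf{D}_m\Delta_{\tN}})$ and $\prod_n (e^{-\mathbf{D}_m\Delta_{\tN}}e^{\mathbf{Q}^{(m)}_n\Delta_{\tN}})$ collapse into the $(\tN-i)$-th powers $(e^{\mathbf{Q}^{(m)}\Delta_{\tN}}e^{-\mathbf{D}_m\Delta_{\tN}})^{\tN-i}$ and $(e^{-\mathbf{D}_m\Delta_{\tN}}e^{\mathbf{Q}^{(m)}\Delta_{\tN}})^{\tN-i}$ once every factor is identical. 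The second equality in \eqref{eqExpSumHomo2} (pulling an $e^{-\mathbf{D}_m\Delta_{\tN}}$ factor from the right to the left) is the commutation identity recorded in Remark \ref{rmkCommutZeroCoupon}, specialized to the homogeneous case; alternatively it follows directly from the probabilistic argument in the proof of Lemma \ref{lemmaZeroCouponBondApprox}, since both sides are matrix representations of the same discrete conditional expectation with the summation index shifted by one.

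For \eqref{eqZeroCouponHomo}, I would apply Proposition \ref{propZeroCouponBondApprox} verbatim: under Assumption \ref{assumpStateSpaceR}, the bond price equals $\mathbf{e}_j\big(\prod_{n=i+1}^{N} e^{(\mathbf{Q}^{(m)}_n-\mathbf{D}_m)\Delta_N}\big)\mathbf{1}_{m\times1}$, and with $\mathbf{Q}^{(m)}_n\equiv\mathbf{Q}^{(m)}$ the product of $N-i$ equal exponential factors is $e^{(\mathbf{Q}^{(m)}-\mathbf{D}_m)(N-i)\Delta_N}=e^{(\mathbf{Q}^{(m)}-\mathbf{D}_m)(T-t)}$ when $t=t_i$. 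Since there is no dependence on the number of steps anymore, one may even take the partition as fine as desired; the statement with a generic starting time $t$ follows because the homogeneous generator makes the expression depend only on the elapsed time $T-t$. I do not anticipate any genuine obstacle here: the only thing to be careful about is bookkeeping — making sure the index shifts, the exponent $\tN-i$ versus $N-i$, and the placement of the $e^{-\mathbf{D}_m\Delta_{\tN}}$ endpoint factor all match the statements exactly — but there is no new mathematical content beyond what Lemma \ref{lemmaZeroCouponBondApprox}, Remark \ref{rmkCommutZeroCoupon}, and Proposition \ref{propZeroCouponBondApprox} already provide.
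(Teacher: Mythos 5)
Your proposal is correct and matches the paper's own proof, which likewise just sets $\mathbf{Q}^{(m)}=\mathbf{Q}^{(m)}_1=\cdots=\mathbf{Q}^{(m)}_{N}$ and $\tN=N$ in Lemma \ref{lemmaZeroCouponBondApprox}, Remark \ref{rmkCommutZeroCoupon}, and Proposition \ref{propZeroCouponBondApprox}. The bookkeeping you flag (collapsing the product into the $(\tN-i)$-th power and using the commutation identity for the endpoint factor) is exactly what the paper relies on.
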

The proof follows by setting $\tN=N$ and $\mathbf{Q}^{(m)}=\mathbf{Q}^{(m)}_1=\mathbf{Q}^{(m)}_2=\ldots = \mathbf{Q}^{(m)}_{N}$ in Lemma \ref{lemmaZeroCouponBondApprox}, Remark \ref{rmkCommutZeroCoupon}, and Proposition \ref{propZeroCouponBondApprox}.
\eqref{eqZeroCouponHomo} was previously obtained by \cite{cui2018general} Proposition 8 (ii) and \cite{kirkby2022hybrid}, Proposition 3, whereas the first equality of \eqref{eqExpSumHomo2} is provided, in a more general form, in \cite{cui2018general} Proposition 8 (i). The proof in Appendix \ref{appendixProof} differs from these previous proofs and provides a simple and intuitive way of obtaining these results using basic probabilistic arguments. 
\subsection{Zero-Coupon Bond Option}\label{appendixZeroBondOptionHomo}
The next result concerns the price of European call and put options on zero-coupon bonds.
\begin{corollary}\label{corCallPutHomo}
   Let Assumption \ref{assumpStateSpaceR} hold. Given that $R^{(m)}_{t_{n_1}}=r_j\in\mathcal{S}_R^{(m)}$, the price at $t_{n1}\geq 0 $ of a European call (resp. put) option with maturity $t_{n_2}>t_{n_1}$ on a zero-coupon bond maturing at time $T>t_{n_2}$ with a strike $K>0$ can be approximated by
    \begin{equation}\label{eqCallPutHomo}
        \e\left[e^{-\int_{t_{n_1}}^{t_{n_2}} R_s^{(m)}\diff s }h\left(P^{(m)}(t_{n_2},T)\right) \Big|R^{(m)}_{t_{n_1}}=r_j\right]= \mathbf{e}_{j}\left(e^{\left(\mathbf{Q}^{(m)}-\mathbf{D}_m\right)(t_{n_2}-t_{n_1})}\right)\mathbf{H},
    \end{equation}
    \begin{sloppypar}
     where $h(x)=\max(x-K,0)$ (resp. $h(x)=\max(K-x,0)$) denotes the payoff function, ${P^{(m)}(t_{n_2},T):=\e\left[e^{-\int_{t_{n_2}}^TR_s^{(m)}\diff s}\Big| R_{t_{n_2}}^{(m)}\right]}$ denotes the approximated zero-coupon bond price at $t_{n_2}$, and $\mathbf{H}$ denotes a column vector of size $m\times 1$ whose $k$-th, $h_k$, entry is given by 
       $$h_k= h\left(P_k^{(m)}(t_{n_2},T)\right),$$
       with $P_k^{(m)}(t_{n_2},T)$ defined in \eqref{eqZeroCouponHomo}.
    \end{sloppypar}
\end{corollary}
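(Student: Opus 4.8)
The plan is to mirror the proofs of Lemma \ref{lemmaZeroCouponBondApprox} and Proposition \ref{propCallPutInhomogeneousModels}, specializing to the time-homogeneous case where all the generators coincide. First I would recall the statement and proof of Proposition \ref{propCallPutInhomogeneousModels}, which already establishes that for time-inhomogeneous generators,
\begin{equation*}
\e\left[e^{-\int_{t_{n_1}}^{t_{n_2}} R_s^{(m)}\diff s }h\left(P^{(m)}(t_{n_2},T)\right)\Big|R^{(m)}_{t_{n_1}}=r_j\right]= \mathbf{e}_{j}\left(\prod_{n=n_1+1}^{n_2} e^{\left(\mathbf{Q}^{(m)}_{n}-\mathbf{D}_m\right)\Delta_N}\right)\mathbf{H}.
\end{equation*}
The corollary then follows immediately by imposing the time-homogeneity of the short-rate process, which forces the generators to be time-independent, so that $\mathbf{Q}^{(m)}:=\mathbf{Q}^{(m)}_1=\mathbf{Q}^{(m)}_2=\cdots=\mathbf{Q}^{(m)}_{n_2}$, exactly as noted in the opening of Appendix \ref{appendixTimeHomoModels}.

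The key computational step is then to simplify the matrix product: since every factor in $\prod_{n=n_1+1}^{n_2} e^{(\mathbf{Q}^{(m)}-\mathbf{D}_m)\Delta_N}$ is now the same matrix $e^{(\mathbf{Q}^{(m)}-\mathbf{D}_m)\Delta_N}$, the product collapses to $\left(e^{(\mathbf{Q}^{(m)}-\mathbf{D}_m)\Delta_N}\right)^{n_2-n_1}$. Using $\Delta_N=T_{\text{step}}$ and $n_2-n_1$ steps gives a total elapsed time of $(n_2-n_1)\Delta_N=t_{n_2}-t_{n_1}$, so that this reduces to $e^{(\mathbf{Q}^{(m)}-\mathbf{D}_m)(t_{n_2}-t_{n_1})}$, which is the claimed formula. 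The definition of the payoff vector $\mathbf{H}$ is inherited verbatim from Proposition \ref{propCallPutInhomogeneousModels}, with $P_k^{(m)}(t_{n_2},T)$ now computed via the time-homogeneous zero-coupon bond formula \eqref{eqZeroCouponHomo} of Corollary \ref{corrZeroCouponBondHomo} rather than \eqref{eqZeroCouponApprox}; but since \eqref{eqZeroCouponHomo} is itself just the specialization of \eqref{eqZeroCouponApprox}, this is consistent.

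I would present the proof very briefly: state that it is an immediate consequence of Proposition \ref{propCallPutInhomogeneousModels} upon setting $\mathbf{Q}^{(m)}=\mathbf{Q}^{(m)}_1=\cdots=\mathbf{Q}^{(m)}_{n_2}$ and collapsing the resulting matrix power, noting that $(n_2-n_1)\Delta_N=t_{n_2}-t_{n_1}$. The role of Assumption \ref{assumpStateSpaceR} is, as in the parent proposition, only to guarantee that the discount factor $e^{-\int R^{(m)}_s\diff s}$ is bounded so that the relevant dominated convergence and Lie product formula arguments (inherited from the proofs of Lemma \ref{lemmaZeroCouponBondApprox} and Proposition \ref{propZeroCouponBondApprox}) go through; no new analytic input is needed. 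There is no real obstacle here—the only thing to be careful about is bookkeeping of indices, namely that the number of exponential factors is exactly $n_2-n_1$ and that each carries time increment $\Delta_N$, so that the exponents add up correctly to $t_{n_2}-t_{n_1}$; and that the entries of $\mathbf{H}$ are evaluated using the homogeneous bond price formula.
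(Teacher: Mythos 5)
Your proposal is correct and matches the paper's approach: the paper treats this corollary as an immediate specialization of Proposition \ref{propCallPutInhomogeneousModels} obtained by setting $\mathbf{Q}^{(m)}=\mathbf{Q}^{(m)}_1=\cdots=\mathbf{Q}^{(m)}_{N}$ and collapsing the product of $n_2-n_1$ identical factors $e^{(\mathbf{Q}^{(m)}-\mathbf{D}_m)\Delta_N}$ into $e^{(\mathbf{Q}^{(m)}-\mathbf{D}_m)(t_{n_2}-t_{n_1})}$, exactly as you describe. Your index bookkeeping and the observation that $\mathbf{H}$ is evaluated via the homogeneous bond-price formula \eqref{eqZeroCouponHomo} are both consistent with the paper.
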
   

 %%%%%%%%%%%%%%%%%%%%%%%%%%%%%%%%%%%%%%%%%%%%%%%%%%%%%%%%%%%%%%%%%%%%%%%%%%%%%%%%%%%%%%%%%%%%%%%%%%%%%%%%%%%%%%%%%%%%%%%
 %%%%%%%%%%%%%%%%%%%%%%%%%%%%%%%%%%%%%%%%%%%%%%%%%%%%%%%%%%%%%%%%%%%%%%%%%%%%%%%%%%%%%%%%%%%%%%%%%%%%%%%%%%%%%%%%%%%%%%%
 %%%%%%%%%%%%%%%%%%%%%%%%%%%%%%%%%%%%%%%%%%%%%%%%%%%%%%%%%%%%%%%%%%%%%%%%%%%%%%%%%%%%%%%%%%%%%%%%%%%%%%%%%%%%%%%%%%%%%%%
\section{Extended Models of \cite{brigoMercurio2006}}\label{appendixBrigoModels}
%%%%%%%%%%%%%%%%%%%%%%%%%%%%%%%%%%%%%%%%%%%%%%%%%%%%%%%%%%%%%%%%%%%%%%%%%%%%%%%%%%%%%%%%%%%%%%%%%%%%%%%%%%%%%%%%%%%%%%%
In this section, we summarize the results of Section \ref{sectionApplicationDebtSecurities} under particular time-inhomogeneous short-rate models, for which the short-rate process is obtained by a time-deterministic shift of a time-homogeneous auxiliary process. More precisely, we suppose that the short-rate process can be decomposed as 
\begin{equation}
R_t= Y_t + \theta(t),\label{eqShortRateExtendedModels}
\end{equation}
for $t\geq 0$, where $\theta$ denotes a continuous deterministic function of time and $Y$ denotes an auxiliary time-homogeneous diffusion process with the following dynamics:
\begin{equation}
  \diff Y_t = \mu_Y(Y_t)\diff t + \sigma_Y(Y_t)\diff W_t, \label{eqEDS_Y}  
\end{equation}
where $\mu_Y,\sigma_Y:\mathcal{S}_Y\rightarrow \reals$ are well-behaved functions such that \eqref{eqEDS_Y} has a unique in-law weak solution with $\mathcal{S}_Y$ the state-space of $Y$. Examples of such diffusion processes are listed in Table \ref{tblmodelsNonHomoBrigo}.

%As explained in \cite{brigoMercurio2006}, Section 3.9, such a form for the short rate process allows to fit the initial term structure while keeping the analytical tractability of homogeneous models. This also holds when using CTMC approximation. Examples of such models are provided in Table \ref{tblmodelsNonHomoBrigo}. Any of the models listed in Table \ref{tblmodelsHomo} can be used as an auxiliary process and shift with a time-deterministic function $\theta$, which is used to fit the initial term structure of interest rate. The fitting of the function $\theta$ to the yield curve is discussed in greater detail below.

In this section, the auxiliary process $Y$ is approximated by a CTMC. We denote by $Y^{(m)}$ the CTMC approximation of $Y$ taking values in a finite state-space ${\mathcal{S}_Y^{(m)}=\{y_1,y_2,\ldots,y_m\}}$. The time-independent generator of $Y^{(m)}$, denoted by $\mathbf{Q}^{(m)}_Y$, is constructed as in \eqref{eq:generatorR}, with functions $\mu_R$ and $\sigma_R$ replaced by functions $\mu_Y$ and $\sigma_Y$ of \eqref{eqEDS_Y}, respectively. Moreover, since the auxiliary process is time-homogenous, the generator of $Y^{(m)}$ does not depend on time, such that
$\mathbf{Q}^{(m)}_Y:=\mathbf{Q}^{(m)}_1=\mathbf{Q}^{(m)}_2=\ldots=\mathbf{Q}^{(m)}_N.$
Using \eqref{eqShortRateExtendedModels}, the CTMC approximation of the short-rate process $R^{(m)}$ is given by
  %\begin{equation}
     $$ R^{(m)}_t=  Y^{(m)}_t+\theta(t),$$ %\label{eqCTMCapproxShortRateBrigo}
  %\end{equation}
  for $t\geq 0$, with $\theta(0)=0$.
\begin{remark}[Weak convergence of the approximation]\label{rmkConvCTMCBrigo}
     The weak convergence of $Y^{(m)}$ to $Y$ follows from Theorem 5.1 of \cite{mijatovic2013continuously}. Then, since $R$ is a continuous transformation of $Y$, we conclude that $R^{(m)}\Rightarrow R$ by the continuous mapping Theorem.
\end{remark} 

Recall that $\{\mathbf{e}_{k}\}_{k=1}^{m}$ denotes the standard basis in $\reals^{m}$, that is, $\mathbf{e}_{k}$ represents a row vector of size $1\times m$ with a value of $1$ in the $k$-th entry and $0$ elsewhere, $\mathbf{1}_{m\times1}$ denotes an $m\times 1$ unit vector, and $\mathbf{D}_Y:=\diag(\bm{y})$ denotes an $m\times m$ diagonal matrix with vector $\bm{y}=(y_1,y_2,\ldots,y_m)$ on its diagonal, with $y_k\in\mathcal{S}_Y^{(m)}$, $k=1,2,\ldots,m$.

\subsection{Zero-Coupon Bond}\label{appendixZeroBondBrigo}
The next corollary shows that the price of zero-coupon bonds inherits the analytical tractability of the homogeneous auxiliary process $Y$. 
\begin{corollary}\label{corrZeroCouponBrigo}
Suppose that there exists $y^\star\in\mathcal{S}_Y$ such that $Y_t\geq y^\star$ for all $t\geq 0$. Then, the price at time $t\geq 0$ of a zero-coupon bond with maturity $T\geq t$ can be approximated by   
 \begin{equation}
     P_j^{(m)}(t,T):=\e[e^{-\int_t^TR^{(m)}_s\diff s}\big | R_t^{(m)}=y_j+\theta(t)]=e^{-\int_t^T \theta(s)\diff s}\tilde{P}_j^{(m)}(t,T),\label{eqZeroCouponBrigo}
 \end{equation}
 with 
 \begin{equation}
	    \tilde{P}_{j}^{(m)}(t,T):=\e\left[e^{-\int_t^T Y_{s}^{(m)}\diff s} \big| Y_{t}^{(m)}=y_j\right]
     = \mathbf{e}_{j} e^{\left(\mathbf{Q}^{(m)}_Y-\mathbf{D}_Y\right)(T-t)} \mathbf{1}_{m\times 1},\label{eqZeroCouponHomoBrigo}
	\end{equation}
where $Y_t=Y_{t}^{(m)}=y_{j}\in\mathcal{S}_Y^{(m)}$.
\end{corollary}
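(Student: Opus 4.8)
The plan is to reduce Corollary \ref{corrZeroCouponBrigo} to results already established for time-inhomogeneous short-rate processes, namely Lemma \ref{lemmaZeroCouponBondApprox} and Proposition \ref{propZeroCouponBondApprox}, by exploiting the fact that the short rate is simply a deterministic shift of the homogeneous auxiliary process $Y$. The key observation is that $R^{(m)}_s = Y^{(m)}_s + \theta(s)$ for all $s\geq 0$, so that
\begin{equation*}
    e^{-\int_t^T R^{(m)}_s\diff s} = e^{-\int_t^T \theta(s)\diff s}\, e^{-\int_t^T Y^{(m)}_s\diff s}.
\end{equation*}
Since $e^{-\int_t^T \theta(s)\diff s}$ is a deterministic constant, it factors out of the conditional expectation, and conditioning on $R_t^{(m)} = y_j + \theta(t)$ is equivalent to conditioning on $Y_t^{(m)} = y_j$ because $\theta$ is a fixed deterministic function. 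This immediately yields \eqref{eqZeroCouponBrigo} once \eqref{eqZeroCouponHomoBrigo} is proved.

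For \eqref{eqZeroCouponHomoBrigo}, I would first verify that the hypotheses of Proposition \ref{propZeroCouponBondApprox} are met for the process $Y^{(m)}$ rather than a generic short-rate CTMC: the assumption that there exists $y^\star\in\mathcal{S}_Y$ with $Y_t\geq y^\star$ for all $t\geq 0$ is precisely Assumption \ref{assumpStateSpaceR} applied to $Y$, which is what licenses the dominated convergence step. Since $Y^{(m)}$ is a time-homogeneous CTMC, its generator satisfies $\mathbf{Q}_Y^{(m)} = \mathbf{Q}_1^{(m)} = \cdots = \mathbf{Q}_N^{(m)}$, so the product $\prod_{n=i+1}^N e^{(\mathbf{Q}_n^{(m)} - \mathbf{D})\Delta_N}$ in \eqref{eqZeroCouponApprox} collapses to $e^{(\mathbf{Q}_Y^{(m)} - \mathbf{D}_Y)(T-t)}$ (this is exactly the specialization recorded in Corollary \ref{corrZeroCouponBondHomo}, equation \eqref{eqZeroCouponHomo}, with $\mathbf{D}_m$ replaced by $\mathbf{D}_Y$ and $r_j$ by $y_j$). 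Thus \eqref{eqZeroCouponHomoBrigo} is just \eqref{eqZeroCouponHomo} restated for the auxiliary process, and the weak convergence $R^{(m)}\Rightarrow R$ guaranteeing that the approximation is meaningful is supplied by Remark \ref{rmkConvCTMCBrigo}.

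Concretely, the steps in order are: (i) write $e^{-\int_t^T R^{(m)}_s\diff s} = e^{-\int_t^T\theta(s)\diff s}e^{-\int_t^T Y^{(m)}_s\diff s}$ using $R^{(m)}_s = Y^{(m)}_s + \theta(s)$; (ii) pull the deterministic factor $e^{-\int_t^T\theta(s)\diff s}$ out of the conditional expectation and note the equivalence of the conditioning events; (iii) apply Proposition \ref{propZeroCouponBondApprox} (equivalently Corollary \ref{corrZeroCouponBondHomo}) to the time-homogeneous CTMC $Y^{(m)}$, whose constant generator $\mathbf{Q}_Y^{(m)}$ makes the matrix product telescope into a single exponential $e^{(\mathbf{Q}_Y^{(m)} - \mathbf{D}_Y)(T-t)}$, using the hypothesis $Y_t\geq y^\star$ in place of Assumption \ref{assumpStateSpaceR} for the dominated convergence argument; (iv) combine to obtain \eqref{eqZeroCouponBrigo}.

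I do not expect any genuine obstacle here — this is essentially a bookkeeping corollary. The only point requiring a moment of care is confirming that $\theta(t)$ being nonzero does not disturb the generator construction of $R^{(m)}$: since $\theta$ is deterministic, the diffusion coefficient of $R$ equals that of $Y$, and the drift of $R$ differs from that of $Y$ only by the deterministic $\theta'(t)$ term, which is exactly what is absorbed when one writes $R^{(m)}_s = Y^{(m)}_s + \theta(s)$ with $Y^{(m)}$ carrying the time-homogeneous generator $\mathbf{Q}_Y^{(m)}$. Once this is acknowledged, the factorization is exact and the rest is immediate.
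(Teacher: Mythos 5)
Your proposal is correct and follows essentially the same route as the paper, which simply invokes Corollary \ref{corrZeroCouponBondHomo} after the deterministic factor $e^{-\int_t^T\theta(s)\diff s}$ is pulled out of the expectation using $R^{(m)}_s = Y^{(m)}_s+\theta(s)$. Your more explicit write-up of the factorization and of the reduction of Assumption \ref{assumpStateSpaceR} to the hypothesis $Y_t\geq y^\star$ is just a fuller version of the paper's one-line argument.
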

The proof follows directly from Corollary \ref{corrZeroCouponBondHomo}.

\subsection{Zero-Coupon Bond Option}\label{appendixZeroBondOptionBrigo}
The next result concerns the price of European call and put options on zero-coupon bonds. Using the homogeneous property of the auxiliary process, we can obtain a simplified expression for \eqref{eqCallPutZero}. 
\begin{corollary}\label{corrCallPutBrigo}
 Suppose that there exists 
  $y^\star\in\mathcal{S}_Y$ such that $Y_t\geq y^\star$ for all $t\geq 0$. Given that $R^{(m)}_{t_{n_1}}=y_j+\theta(t)$, with $y_j\in\mathcal{S}_Y^{(m)}$, the price at $t_{n_1}\geq 0 $ of a European call (resp. put) option with maturity $t_{n_2}>t_{n_1}$ on a zero-coupon bond maturing at time $T>t_{n_2}$ with a strike $K>0$ can be approximated by
 \begin{multline}\label{eqCallPutBrigo}
       \e\left[e^{-\int_{t_{n_1}}^{t_{n_2}} R_s^{(m)}\diff s }h\left(P^{(m)}(t_{n_2},T)\right) \Big|R^{(m)}_{t_{n_1}}=y_j +\theta(t)\right] \\
         \qquad\qquad\qquad= \mathbf{e}_{j}e^{-\int_{t_{n_1}}^{t_{n_2}}\theta(s) \diff s}\left(e^{\left(\mathbf{Q}^{(m)}_Y-\mathbf{D}_Y\right)(t_{n_2}-t_{n_1})}\right)\mathbf{H},
 \end{multline}

    \begin{sloppypar}
         where $h(x)=\max(x-K,0)$ (resp. $h(x)=\max(k-x,0)$) denotes the payoff function with ${P^{(m)}(t_{n_2},T):=\e\left[e^{-\int_{t_{n_2}}^{T}R_s^{(m)}\diff s}\Big| R_{t_{n_2}}^{(m)}\right]}$ and $\mathbf{H}$ denotes a column vector of size $m\times 1$ whose $k$-th entry, $h_k$, is given by 
       $$h_k= h\left(P_k^{(m)}(t_{n_2},T)\right),$$
   with $P_k^{(m)}(t_{n_2},T)$ defined in \eqref{eqZeroCouponBrigo}.
    \end{sloppypar}
\end{corollary}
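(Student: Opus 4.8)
The plan is to reduce the claim to the time-homogeneous statement of Corollary \ref{corCallPutHomo} applied to the auxiliary chain $Y^{(m)}$, using the additive decomposition $R^{(m)}_t = Y^{(m)}_t + \theta(t)$ coming from \eqref{eqShortRateExtendedModels}. First I would note that, since $\theta$ is deterministic, conditioning on $R^{(m)}_{t_{n_1}} = y_j + \theta(t_{n_1})$ is the same as conditioning on $Y^{(m)}_{t_{n_1}} = y_j$, and that the discount factor over $[t_{n_1},t_{n_2}]$ splits as
$$e^{-\int_{t_{n_1}}^{t_{n_2}} R^{(m)}_s\diff s} = e^{-\int_{t_{n_1}}^{t_{n_2}}\theta(s)\diff s}\, e^{-\int_{t_{n_1}}^{t_{n_2}} Y^{(m)}_s\diff s},$$
so the deterministic factor $e^{-\int_{t_{n_1}}^{t_{n_2}}\theta(s)\diff s}$ can be pulled out of the conditional expectation.

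Second, I would use Corollary \ref{corrZeroCouponBrigo} together with the Markov property of $Y^{(m)}$ to express the payoff $h\big(P^{(m)}(t_{n_2},T)\big)$ as a deterministic function of $Y^{(m)}_{t_{n_2}}$: on the event $\{Y^{(m)}_{t_{n_2}} = y_k\}$ one has $P^{(m)}(t_{n_2},T) = P^{(m)}_k(t_{n_2},T) = e^{-\int_{t_{n_2}}^T\theta(s)\diff s}\tilde P^{(m)}_k(t_{n_2},T)$ as in \eqref{eqZeroCouponBrigo}--\eqref{eqZeroCouponHomoBrigo}, so the payoff equals $h_k$, the $k$-th entry of $\mathbf H$. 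What remains, namely $\e\big[e^{-\int_{t_{n_1}}^{t_{n_2}} Y^{(m)}_s\diff s}\, h_K \,\big|\, Y^{(m)}_{t_{n_1}} = y_j\big]$ with $Y^{(m)}_{t_{n_2}} = y_K$, is exactly of the form handled by Corollary \ref{corCallPutHomo} (the time-homogeneous specialization of Proposition \ref{propCallPutInhomogeneousModels}) for the chain $Y^{(m)}$ with generator $\mathbf Q^{(m)}_Y$, diagonal matrix $\mathbf D_Y$, and terminal vector $\mathbf H$; this gives $\mathbf e_j\, e^{(\mathbf Q^{(m)}_Y - \mathbf D_Y)(t_{n_2}-t_{n_1})}\mathbf H$, and reinstating the factor $e^{-\int_{t_{n_1}}^{t_{n_2}}\theta(s)\diff s}$ yields \eqref{eqCallPutBrigo}. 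If a self-contained argument is preferred, the proofs of Lemma \ref{lemmaZeroCouponBondApprox} and Proposition \ref{propZeroCouponBondApprox} in Appendix \ref{appendixProof} can be reproduced verbatim for $Y^{(m)}$: expand over a refined partition as a path sum, recognize it as a product of matrices $e^{\mathbf Q^{(m)}_Y\Delta}e^{-\mathbf D_Y\Delta}$ acting on $\mathbf H$, and pass to the limit via dominated convergence and the Lie product formula.

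The hypothesis $Y_t \ge y^\star$ plays here the role of Assumption \ref{assumpStateSpaceR} in Proposition \ref{propZeroCouponBondApprox}: on the fixed interval $[t_{n_1},t_{n_2}]$ it makes $e^{-\int_{t_{n_1}}^{t_{n_2}} Y^{(m)}_s\diff s}$ uniformly bounded, and it keeps $\tilde P^{(m)}_k(t_{n_2},T)$ bounded so that the continuous payoff $h$ is bounded on the relevant range, legitimizing the dominated convergence step. The only genuinely delicate point is the identification of $h\big(P^{(m)}(t_{n_2},T)\big)$ with the vector $\mathbf H$ — i.e. verifying via Corollary \ref{corrZeroCouponBrigo} and the Markov property that the bond-price payoff is $\sigma\big(Y^{(m)}_{t_{n_2}}\big)$-measurable with the stated entries; once this is in place, the remainder is bookkeeping on top of the already-established time-homogeneous result.
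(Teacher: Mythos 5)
Your proposal is correct and follows the paper's intended route: the paper leaves this corollary without an explicit proof precisely because it is obtained by factoring the deterministic shift $e^{-\int_{t_{n_1}}^{t_{n_2}}\theta(s)\diff s}$ out of the discount and applying the time-homogeneous result (Corollary~\ref{corCallPutHomo}) to the auxiliary chain $Y^{(m)}$ with generator $\mathbf{Q}^{(m)}_Y$, together with Corollary~\ref{corrZeroCouponBrigo} to identify the payoff vector $\mathbf{H}$. Your additional remarks on the role of $Y_t\geq y^\star$ and the $\sigma\bigl(Y^{(m)}_{t_{n_2}}\bigr)$-measurability of the payoff are consistent with the paper's treatment.
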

%%%%%%%%%%%%%%%%%%%%%%%%%%%%%%%%%%%%%%%%%%%%%%%%%%%%%%%%%%%%%%%%%%%%%%%%%%%%%%%%%%%%%%%%%%%%%%%%%%%%%%%%%%%%%%%%%%%%%%%
\subsection{Calibration to the Initial Term Structure of Interest Rates}\label{appendixBrigoThetaFit}
%%%%%%%%%%%%%%%%%%%%%%%%%%%%%%%%%%%%%%%%%%%%%%%%%%%%%%%%%%%%%%%%%%%%%%%%%%%%%%%%%%%%%%%%%%%%%%%%%%%%%%%%%%%%%%%%%%%%%%%
When the short-rate process is of the form $R_t=Y_t+\theta(t)$, as in Table \ref{tblmodelsNonHomoBrigo}, the fitting to the term structure of interest rates can be greatly simplified, since the function $\theta$ now appears explicitly in the zero-coupon bond formula \eqref{eqZeroCouponBrigo}.

Proposition \ref{propCalibrationBrigo} provides an explicit expression for the function $\theta$ that makes the model zero-coupon bond prices equal to the market prices when the short-rate process is of the form \eqref{eqShortRateExtendedModels}. As in Section \ref{subsecFitTermStruc}, we assume that the time deterministic function $\theta$ is piecewise constant in time, such that
%\begin{equation}
    $$\theta(t)= \sum_{n=1}^N\theta_n\ind_{[t_{n-1},t_n)}(t),$$ %\label{eqFctTheta2}
%\end{equation}
for some $\bm{\theta}=(\theta_1,\theta_2,\ldots,\theta_N)\in\reals^N$. The objective is thus to find parameters $\bm{\theta}$ such that $P^{(m)}_j(0,t_n)=P^\star(0,t_n)$ for $n=1,2,\ldots,N$, where $P^\star$ represents the market zero-coupon bond prices. Those parameters are called the calibrated parameters and are denoted by a star $\bm{\theta}^\star$.
\begin{prop}\label{propCalibrationBrigo}
 Suppose that there exists $y^\star\in\mathcal{S}_Y$ such that $Y_t\geq y^\star$ for all $t\geq 0$. Given $R_{0}=R^{(m)}_{0}=y_j\in\mathcal{S}_Y^{(m)}$, we have that
\begin{equation}
    \theta^\star_n:= \left\{\begin{array}{ll}
     -\frac{1}{t_n}\ln\left(\frac{P^\star(0,t_n)}{\tilde{P}^{(m)}_j(0,t_n)}\right) & \text{if } n=1 \\
     -\frac{1}{t_n-t_{n-1}}\ln\left(\frac{P^\star(0,t_n)}{P^\star(0,t_{n-1})}\frac{\tilde{P}_j^{(m)}(0,t_{n-1})}{\tilde{P}_j^{(m)}(0,t_{n})}\right) & \text{if } n=2,3,\ldots,N,
    \end{array}\right.\label{eqThetasBrigo}
\end{equation}
where $\tilde{P}_j^{(m)}(0, \cdot)$ is defined in \eqref{eqZeroCouponHomoBrigo}. 
\end{prop}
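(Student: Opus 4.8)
The plan is to use the zero-coupon bond formula \eqref{eqZeroCouponBrigo} from Corollary \ref{corrZeroCouponBrigo} together with the piecewise-constant structure of $\theta$ and solve for $\bm{\theta}^\star$ recursively, exactly mirroring the logic of Algorithm \ref{algoThetasCalibration} but exploiting the explicit appearance of $\theta$ in the bond price. From \eqref{eqZeroCouponBrigo}, under the hypothesis that $Y_t\geq y^\star$, the approximated time-$0$ price of a zero-coupon bond maturing at $t_n$ is $P^{(m)}_j(0,t_n)=e^{-\int_0^{t_n}\theta(s)\diff s}\tilde P^{(m)}_j(0,t_n)$, where $\tilde P^{(m)}_j(0,t_n)=\mathbf{e}_j e^{(\mathbf{Q}^{(m)}_Y-\mathbf{D}_Y)t_n}\mathbf{1}_{m\times1}$ does not depend on $\theta$. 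Since $\theta$ is piecewise constant as in \eqref{eqFctTheta}, $\int_0^{t_n}\theta(s)\diff s=\sum_{k=1}^{n}\theta_k(t_k-t_{k-1})$, so the calibration equation $P^{(m)}_j(0,t_n)=P^\star(0,t_n)$ becomes $e^{-\sum_{k=1}^n \theta_k(t_k-t_{k-1})}\tilde P^{(m)}_j(0,t_n)=P^\star(0,t_n)$.

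First I would treat the case $n=1$: the equation reads $e^{-\theta_1 t_1}\tilde P^{(m)}_j(0,t_1)=P^\star(0,t_1)$, and solving for $\theta_1$ by taking logarithms gives immediately $\theta^\star_1=-\frac{1}{t_1}\ln\!\big(P^\star(0,t_1)/\tilde P^{(m)}_j(0,t_1)\big)$, which matches \eqref{eqThetasBrigo}. Then for $n\geq 2$ I would proceed by induction: assuming $\theta^\star_1,\dots,\theta^\star_{n-1}$ have been chosen so that $P^{(m)}_j(0,t_k)=P^\star(0,t_k)$ for $k\leq n-1$, I form the ratio of the calibration equations at $t_n$ and $t_{n-1}$. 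Dividing, the sum $\sum_{k=1}^{n-1}\theta_k(t_k-t_{k-1})$ cancels in the exponent and only $\theta_n(t_n-t_{n-1})$ survives, yielding $e^{-\theta_n(t_n-t_{n-1})}\,\tilde P^{(m)}_j(0,t_n)/\tilde P^{(m)}_j(0,t_{n-1})=P^\star(0,t_n)/P^\star(0,t_{n-1})$. Taking logarithms and rearranging gives the displayed formula for $\theta^\star_n$. One then checks that with this choice the equation at $t_n$ indeed holds, closing the induction; the explicit constants $\tilde P^{(m)}_j(0,\cdot)$ are positive (being expectations of strictly positive discount factors), so the logarithms are well defined, and the hypothesis $Y_t\geq y^\star$ is exactly what licenses the use of \eqref{eqZeroCouponBrigo} via Corollary \ref{corrZeroCouponBrigo}.

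There is no genuinely hard step here; the proof is essentially an algebraic manipulation made possible by the fact that the deterministic shift $\theta$ factors out of the bond price as a pure exponential and by the telescoping structure of $\int_0^{t_n}\theta(s)\diff s$ when $\theta$ is piecewise constant. The only points requiring a small amount of care are (i) noting that $\tilde P^{(m)}_j(0,t_n)$ is genuinely independent of $\bm\theta$ (it is built from $\mathbf{Q}^{(m)}_Y$ and $\mathbf{D}_Y$, which involve only the homogeneous dynamics of $Y$), so that the recursion is well posed; (ii) verifying positivity of all quantities inside the logarithms, which follows since $P^\star(0,\cdot)>0$ for a valid market discount curve and $\tilde P^{(m)}_j(0,\cdot)>0$; and (iii) confirming that the recursion starting from $n=1$ and propagating the relation $P^{(m)}_j(0,t_k)=P^\star(0,t_k)$ forward does reproduce the whole curve, which is the content of the induction. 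I would present the argument in this order: state the factorization from \eqref{eqZeroCouponBrigo}, expand the integral using \eqref{eqFctTheta}, solve the $n=1$ equation, then carry out the inductive ratio step for $n\geq 2$ and conclude.
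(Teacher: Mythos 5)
Your proof is correct and follows exactly the route the paper intends: the paper simply states that the result is ``straightforward from Corollary \ref{corrZeroCouponBrigo}'', and your argument fills in precisely those details — factoring the deterministic shift out of \eqref{eqZeroCouponBrigo}, expanding $\int_0^{t_n}\theta(s)\,\mathrm{d}s$ via the piecewise-constant form \eqref{eqFctTheta}, solving the $n=1$ case, and telescoping the ratio for $n\geq 2$. Nothing is missing.
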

The proof is straightforward from Corollary \ref{corrZeroCouponBrigo}.
\begin{algorithm}[h]
	\caption{Calibration of $\bm{\theta}$ to the Current Market Term-Structure - Extended Models of \cite{brigoMercurio2006}}
	\label{algoThetasCalibrationBrigo}
	\DontPrintSemicolon
	\KwInput{Initialize $\mathbf{Q}^{(m)}_Y$ and let $t\mapsto P^\star(0,t)$ be the current market zero-bond curve\;
		$N\in\mathbb{N}$, the number of time steps \;
		$\Delta_N\leftarrow T/N$, the size of a time step}
    Set $t_n=n\Delta_N$, $n=1,2,\ldots,N$\;
    Set $\mathbf{D}_Y\leftarrow \diag(\bm{y})$ with $\bm{y}=(y_1,y_2,\ldots,y_m)$, $y_k\in\mathcal{S}^{(m)}_Y$, $k=1,2,\ldots,m$\;
	\tcc{Adjusted transition probability matrix of $Y$ over a period of length $\Delta_N$}
	$\mathbf{A}_{\Delta_N}\leftarrow e^{(\mathbf{Q}^{(m)}_Y- \mathbf{D}_Y)\Delta_N}$\;\label{lineAlgoExpMat}
	\tcc{Calibration to the current market zero-bond curve $t\mapsto P^\star(0,t)$}
     Set $\tilde{\mathbf{P}}(t_1)\leftarrow \mathbf{A}_{\Delta_N}\mathbf{1}_{m\times 1}$\;
     Set $\theta_1^\star=  -\frac{1}{t_1}\ln\left(\frac{P^\star(0,t_1)}{\mathbf{e}_j\tilde{\mathbf{P}}(t_1)}\right)$\;
	\For{$n=2,\ldots, N$}{
        $\tilde{\mathbf{P}}(t_n)\leftarrow \mathbf{A}_{\Delta_N}\tilde{\mathbf{P}}(t_{n-1})$\;
        $\theta_n^\star= -\frac{1}{t_n-t_{n-1}}\ln\left(\frac{P^\star(0,t_n)}{P^\star(0,t_{n-1})}\frac{\mathbf{e}_j\tilde{\mathbf{P}}(t_{n-1})}{\mathbf{e}_j\tilde{\mathbf{P}}(t_{n})}\right)$
	}	
	\KwRet $\{\theta_n^\star\}_{n=1}^N$\;
\end{algorithm}

The calibrated parameters in \eqref{eqThetasBrigo} can easily be obtained by calculating the zero-coupon bond price at each time $\{t_1,t_2,\ldots,t_n\}$. This procedure involves calculating matrix exponentials at each time step, which can slow down the execution considerably. However, by taking advantage of the homogeneous property of $Y$, the matrix exponentials can be calculated only once at the beginning of the procedure, which makes it highly efficient. This is illustrated in Algorithm \ref{algoThetasCalibrationBrigo}. In Algorithm \ref{algoThetasCalibrationBrigo}, $\tilde{\mathbf{P}}(t_n):=[\tilde{P}_j^{(m)}(0,t_n)]_{j=1}^m$ is a column vector of size $m\times 1$, %representing the current zero-coupon bond prices of the auxiliary process $Y^{(m)}$ for maturity $t_n$, $n=1,2,\ldots,n$, 
with $\tilde{P}_j^{(m)}(0,t_n)$ defined in \eqref{eqZeroCouponHomoBrigo}.
 %%%%%%%%%%%%%%%%%%%%%%%%%%%%%%%%%%%%%%%%%%%%%%%%%%%%%%%%%%%%%%%%%%%%%%%%%%%%%%%%%%%%%%%%%%%%%%%%%%%%%%%%%%%%%%%%%%%%%%%
%%%%%%%%%%%%%%%%%%%%%%%%%%%%%%%%%%%%%%%%%%%%%%%%%%%%%%%%%%%%%%%%%%%%%%%%%%%%%%%%%%%%%%%%%%%%%%%%%%%%%%%%%%%%%%%%%%%%%%%
%%%%%%%%%%%%%%%%%%%%%%%%%%%%%%%%%%%%%%%%%%%%%%%%%%%%%%%%%%%%%%%%%%%%%%%%%%%%%%%%%%%%%%%%%%%%%%%%%%%%%%%%%%%%%%%%%%%%%%%
%%%%%%%%%%%%%%%%%%%%%%%%%%%%%%%%%%%%%%%%%%%%%%%%%%%%%%%%%%%%%%%%%%%%%%%%%%%%%%%%%%%%%%%%%%%%%%%%%%%%%%%%%%%%%%%%%%%%%%%
\section{Algorithms} \label{appendixCallablePutableBondAlgo}
 This section presents the results of Section \ref{sectionConvBond} into an algorithm format. More precisely, Proposition \ref{propConvBondCTMC} is provided in Algorithm \ref{algoCBpriceTF_CTMC}. Using the results of \cite{mackay2023analysis} Proposition 4.3, a fast version of Proposition \ref{prop:vanillaCB_CTMC} for the pricing of European-style CBs is also provided in Algorithm \ref{algoVanillaCBfast}, whereas the fast-version of Proposition \ref{propConvBondCTMC} for the pricing American-style CBs is provided in Algorithm \ref{algoCBpriceTF_CTMCfast}.
\subsection{European-Style Convertible Bond}\label{appendixAlgoEuroCBfast}
Using the tower property of conditional expectations and Proposition \ref{propExpectionApprox} inspired from the work of \cite{mackay2023analysis}, we present a new algorithm (Algorithm \ref{algoVanillaCBfast}) that speeds up the pricing of European-style CBs.
\begin{prop}[\cite{mackay2023analysis}, Proposition 4.3]\label{propExpectionApprox}
Let $h>0$ with $h\ll T$ and $0\leq t\leq T-h$. For any function $\phi$ for which the expectation on the left-hand side of \eqref{eqCondExpapprox} is finite, we have that
\begin{multline}
\e\left[\phi\left(t+h,\XmM_{t+h},\Rm_{t+h}\right)\big|\XmM_t=x_i,\Rm_t=r_j\right]\\
=\sum_{k=1}^m\e\left[\phi\left(t+h,\XmM_{t+h},\Rm_{t+h}\right)\big|\Rm_t=\Rm_{t+h}=r_k,\XmM_t=x_i\right]\\
\times\prob{P}\left(\Rm_{t+h}=r_k |\Rm_{t}=r_{j}\right)+\hat{c}(h),\label{eqCondExpapprox}
\end{multline}
where $\hat{c}(h)$ denotes a function satisfying $\lim_{h\rightarrow 0}\frac{\hat{c}(h)}{h}=0$.
\end{prop}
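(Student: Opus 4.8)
The plan is to derive \eqref{eqCondExpapprox} by first conditioning on the value of $\Rm_{t+h}$ and then estimating the error made when the conditioning event $\{\Rm_t=r_j,\Rm_{t+h}=r_k\}$ is replaced by $\{\Rm_t=\Rm_{t+h}=r_k\}$. Since $\Rm$ is autonomous and Markov, the future path $(\Rm_{t+s})_{s\ge0}$ is, given $\Rm_t=r_j$, independent of $\XmM_t$, so the tower property yields
\begin{multline*}
\e\bigl[\phi(t+h,\XmM_{t+h},\Rm_{t+h})\,\big|\,\XmM_t=x_i,\Rm_t=r_j\bigr]\\
=\sum_{k=1}^m \e\bigl[\phi(t+h,\XmM_{t+h},r_k)\,\big|\,\XmM_t=x_i,\Rm_t=r_j,\Rm_{t+h}=r_k\bigr]\,p_{jk}(t,t+h),
\end{multline*}
where $p_{jk}(t,t+h)$ denotes the $(j,k)$-entry of the transition matrix $\mathbf P(t,t+h)$ of \eqref{eqTransProb}. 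Subtracting, term by term, the right-hand side of \eqref{eqCondExpapprox} with $\hat c\equiv0$, the remainder equals $\sum_{k=1}^m \Delta_{jk}(h)\,p_{jk}(t,t+h)$, where $\Delta_{jk}(h)$ is the difference between the conditional expectation of $\phi(t+h,\XmM_{t+h},r_k)$ given $\XmM_t=x_i$ under $\{\Rm_t=r_j,\Rm_{t+h}=r_k\}$ and under $\{\Rm_t=r_k,\Rm_{t+h}=r_k\}$. For $k=j$ the two events coincide, so $\Delta_{jj}(h)=0$ and only the terms with $k\ne j$ contribute.

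The heart of the argument is then the pair of uniform bounds $p_{jk}(t,t+h)=O(h)$ for $k\ne j$ and $\Delta_{jk}(h)=O(h)$. The first holds because $\mathbf P(t,t+h)$ is a product of matrix exponentials $e^{\mathbf Q^{(m)}_n\tau_n}$ with $\sum_n\tau_n=h$ and $\|\mathbf Q^{(m)}_n\|$ bounded uniformly in $n$ (the entries in \eqref{eq:generatorR} take finitely many finite values), so $\|\mathbf P(t,t+h)-\mathbf I\|=O(h)$ as $h\to0$, whence $p_{jk}(t,t+h)=O(h)$ for every $k\ne j$. For the second, I would condition further on the whole trajectory of $\Rm$ on $[t,t+h]$: given any admissible trajectory, $\XmM$ evolves on $[t,t+h]$ as a pure-jump process whose total jump intensity is at most $\Lambda_{\max}:=\max_{n,\ell,p}|\lambda^{(n,\ell)}_{pp}|<\infty$, hence $\mathbb P\bigl(\XmM_{t+h}\ne x_i\,\big|\,\XmM_t=x_i,\text{trajectory}\bigr)\le\Lambda_{\max}h$; since $\phi(t+h,\cdot,\cdot)$ has the finite domain $\mathcal S_X^{(M)}\times\mathcal S_R^{(m)}$ and is therefore bounded, this gives $\e[\phi(t+h,\XmM_{t+h},r_k)\mid\XmM_t=x_i,\text{trajectory}]=\phi(t+h,x_i,r_k)+O(h)$ uniformly in the trajectory, and averaging over the trajectory leaves this unchanged for either starting value of $\Rm$, so $|\Delta_{jk}(h)|=O(h)$. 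Therefore $\sum_{k=1}^m\Delta_{jk}(h)\,p_{jk}(t,t+h)=\sum_{k\ne j}O(h)\cdot O(h)=O(h^2)$, and taking $\hat c(h)$ to equal this remainder gives $\hat c(h)/h\to0$.

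The step needing the most care is the second bound: making the conditioning on the random $\Rm$-trajectory precise, and checking that the no-jump estimate for $\XmM$ — and hence the $O(h)$ in $\Delta_{jk}(h)$ — is uniform over $i,j,k$ and over $t\in[0,T-h]$, since it is this uniformity that keeps the $m$-term sum $o(h)$. The remaining ingredients — finiteness of the generator entries in \eqref{eq:generatorR}--\eqref{eq:generatorX}, boundedness of $\phi(t+h,\cdot,\cdot)$, and $\|\mathbf P(t,t+h)-\mathbf I\|=O(h)$ — are immediate consequences of the explicit tridiagonal form of the generators and the finiteness of the state spaces $\mathcal S_R^{(m)}$ and $\mathcal S_X^{(M)}$.
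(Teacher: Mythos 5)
Your argument is correct, and it is worth noting that the paper itself supplies no proof of this statement: Proposition \ref{propExpectionApprox} is imported verbatim from \cite{mackay2023analysis} (their Proposition 4.3), with only the remark that the reasoning carries over from stochastic volatility to stochastic interest rate models. So there is no in-paper proof to compare against; what you have written is a self-contained derivation of the imported result, and it is the natural one. Your two key bounds are both sound for fixed $m,M$: $\|\mathbf{P}(t,t+h)-\mathbf{I}\|=O(h)$ follows from the uniform boundedness of the finitely many generator entries in \eqref{eq:generatorR}, giving $p_{jk}(t,t+h)=O(h)$ for $k\neq j$; and the bound $\Delta_{jk}(h)=O(h)$ follows because, conditionally on the $\Rm$-trajectory, $\XmM$ is a pure-jump chain with intensity bounded by $\Lambda_{\max}$, so both conditional expectations collapse to $\phi(t+h,x_i,r_k)+O(h)$. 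The one step you should make explicit rather than assume is why conditioning on the full $\Rm$-trajectory leaves $\XmM$ evolving with generator $\mathbf{\Lambda}^{(n,M)}_{\ell(s)}$ (and why $\Rm$ is autonomous, which your tower-property step also uses): both facts come from the block structure of $\mathbf{G}^{(mM)}_n$ in \eqref{eq:GeneratorY}, whose off-diagonal blocks are scalar multiples of $\mathbf{I}_M$, so the two components never jump simultaneously and the $\Rm$-marginal is itself a CTMC with generator $\mathbf{Q}^{(m)}_n$. With that observation added, the $O(h^2)$ remainder you obtain is uniform over the finitely many indices $i,j,k$ and over $t\in[0,T-h]$, which is more than enough for $\hat{c}(h)/h\rightarrow 0$.
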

In their paper, \cite{mackay2023analysis} work with stochastic volatility models. However, the reasoning behind the proof is the same for stochastic interest rate models as in the present context.

The following notation is used in Algorithm \ref{algoVanillaCBfast}.
\begin{enumerate}
	%\item We use $N\in\mathbb{N}$ time steps of length$\Delta_N=T/N.$
	\item $\mathbf{B}=[b_{kl}]_{k,l=1}^{m,M}$ denotes a matrix of size $m\times M$, containing the value of the CB. 
	\item $\mathbf{B}_{*,l}=[b_{kl}]_{k=1}^{m}$ denotes the $l$-th column of $\mathbf{B}$, $l=1,2,\ldots, M$,
	\item  $\mathbf{B}_{k,*}=[b_{kl}]_{l=1}^{M}$ denotes the $k$-th row of $\mathbf{B}$, $k=1,2,\ldots, m$.
	\item The symbol $\top$ indicates the matrix (vector) transpose operation.
\end{enumerate}  
\begin{algorithm}[h]
	\caption{European-style CB -- Fast Algorithm}
	\label{algoVanillaCBfast}
	\DontPrintSemicolon
	\KwInput{Initialize $\mathbf{Q}_n^{(m)}$ as in \eqref{eq:generatorR}, and $\mathbf{\Lambda}^{(n,M)}_{k}$ as in \eqref{eq:generatorX}, for $k=1,2,\ldots,m$, $n=1,2,\ldots,N$\;
		$N\in\mathbb{N}$, the number of time steps, \;
		$\Delta_N\leftarrow T/N$, the size of a time step}
		For each $k\in\{1,2,\ldots,m\}$, set 
        $\mathbf{B}_{k,*}\leftarrow\left[\eta e^{x_l+\rho f(r_k)}\ind_{\{e^{x_l+\rho f(r_k)}\geq F/\eta\}} +e^{-\int_0^Tc_u\diff u}F\ind_{\{e^{x_l+\rho f(r_k)}< F/\eta\}}\right]_{l=1}^{M}$\;  
	%\tcc{CB valuation}
	\For{$n=N-1,\ldots,0$}{
		\For{$k=1,2,\ldots, m$}
		{ $\mathbf{P}^X_{n,k}\leftarrow e^{\mathbf{\Lambda}_{k}^{(n+1,M)} \Delta_N} e^{-r_k\Delta_N}$\;
        $\mathbf{E}_{*,k}\leftarrow \mathbf{P}^X_{n,k}\mathbf{B}_{k,*}^{\top}$}\label{algoEuroCBFastLineEk}
        $\mathbf{P}_n^R\leftarrow e^{\mathbf{Q}^{(m)}_{n+1} \Delta_N}$\;
		\For{$l=1,2,\ldots, M$}
		{ $\mathbf{B}_{*,l}\leftarrow \mathbf{P}_n^R\mathbf{E}_{l,*}^{\top}$}
	}	
	\KwRet $b_{ji}$\;
\end{algorithm}
\begin{remark}[Extension to coupon-bearing bonds]\label{}
    Algorithm \ref{algoVanillaCBfast} is set up for zero-coupon CBs. However, extension to coupon-bearing bonds is straightforward. More precisely, when a coupon $\alpha>0$ is paid at time $t_{n+1}$, then the column vector $\mathbf{E}_{*,k}$ at time $t_n$, line \ref{algoEuroCBFastLineEk} of the algorithm,  must be modified as follow
    $$\mathbf{E}_{*,k}\leftarrow \mathbf{P}^X_{n,k}\left(\mathbf{B}_{k,*}^{\top}+\alpha\mathbf{1}_{M\times1}\right),$$ 
    for each $k \in \{1,2,\ldots,m\}$.
\end{remark}

 Proposition \ref{propExpectionApprox} allows the separation of the matrices $\mathbf{\Lambda}^{(n,M)}_{k}$ and $\mathbf{Q}^{(m)}_n$ at each time step $n\in\{1,2,\ldots,N\}$. Hence, the matrix exponential of a large sparse matrix $\mathbf{G}_n^{(mM)}$ of size $mM\times mM$ is replaced by $m$ calculations of the exponential of an $M\times M$ matrix and one calculation of the exponential of an $m\times m$ matrix. Numerical experiments in Section \ref{sectNumResults} show the accuracy and efficiency of the fast algorithm empirically. 

\subsection{Convertible Bond (American-style)}
 Based on Proposition \ref{propConvBondCTMC}, Algorithm \ref{algoCBpriceTF_CTMC} provides the CTMC approximation for the value of a CB given that $X_0^{(m,M)}=\ln(S_0)-\rho f(R_0)=x_i\in\mathcal{S}^{(M)}_X$ and $R_0^{(m)}=R_0=r_j\in\mathcal{S}^{(m)}_R$. The algorithm is set up for zero-coupon CBs with no additional features, such as call and put options. However, such extensions are straightforward and are discussed further below and in Remark \ref{rmk:AlgoUpdateCBAmCoupon}. 

 \begin{sloppypar}
Similar to the European-style CB, the performance of Algorithm \ref{algoCBpriceTF_CTMC} can be increased by assuming that the short-rate process is constant over small time periods (Proposition \ref{propExpectionApprox}). 
        Let ${\tilde{\mathbf{H}}:=[\eta e^{x_l+\rho f(r_k)}]_{k,l=1}^{m,M}}$ be an $m\times M$ matrix representing the conversion value.
        %, and $\tilde{\mathbf{H}}_{*,l}$ be the $l$-th column of $\tilde{\mathbf{H}}$. 
          At each time step, matrices $\tilde{\mathbf{B}}^{E}$, $\tilde{\mathbf{B}}^{CO}$, and $\tilde{\mathbf{B}}$, of size $m\times M$, contain the equity part, cash-only part, and the whole value of the CB, respectively. Furthermore, we denote by $\tilde{b}_{ij}$ (resp $\tilde{h}_{ij}$), the $(i,j)$-entry of $\tilde{\mathbf{B}}$ (resp. $\tilde{\mathbf{H}}$), and define the matrix indicator $\ind_{\{\tilde{\mathbf{B}}=\tilde{\mathbf{H}}\}}$, where each element $(i,j)$ of the matrix, denoted $\ind_{\{\tilde{\mathbf{B}}=\tilde{\mathbf{H}}\}}(i,j)$, is given by
$\mathbf{1}_{\{\tilde{\mathbf{B}}=\tilde{\mathbf{H}}\}}(i,j)= \ind_{\{\tilde{b}_{ij}=\tilde{h}_{ij}\}},$
for $1\leq i\leq m$, $1\leq j\leq M$.
The fast Algorithm to value CBs is provided in Algorithm \ref{algoCBpriceTF_CTMCfast}.
\end{sloppypar} 

 \begin{algorithm}[h]
	\caption{American-style CB}
	\label{algoCBpriceTF_CTMC}
	\DontPrintSemicolon
	\KwInput{Initialize $\mathbf{G}_n^{(mM)}$ as in \eqref{eq:GeneratorY} for $n=1,2,\ldots,N$, $\mathbf{H}^{CO}$ as in \eqref{eq:defH_CO_ctmc}, and $\mathbf{H}^{E}_n$ as in \eqref{eq:defH_SO_ctmc}, for $n=0,1,\ldots,N$\;
	$N\in\mathbb{N}$, the number of time steps, \;
	$\Delta_N\leftarrow T/N$, the size of a time step}
    Set $\mathbf{D}_{mM}\leftarrow \diag\left(\bm{d}\right)$ with $\bm{d}=(d_1,d_2,\ldots,d_{mM})$, and ${d_{(k-1)M+l}=r_k\in\mathcal{S}_R^{(m)}}$, $k=1,2,,\ldots,m$, $l=1,2,,\ldots,M$\;
    %\tcc{CB valuation}
	Set $\mathbf{B}_N^{CO}\leftarrow\mathbf{H}^{CO}$, $\mathbf{B}_N^{E}\leftarrow\mathbf{H}_N^{E}$, $\mathbf{B}_N \leftarrow \mathbf{B}^{E}_N +\mathbf{B}^{CO}_N$\;
	\For{$n=N-1,N-2,\ldots,0$}
     {$\mathbf{A}_{n+1}\leftarrow\exp\left\{\Delta_N \left(\mathbf{G}^{(mM)}_{n+1}-\mathbf{D}_{mM}\right)\right\}$,\; $\mathbf{B}_n^{CO}\leftarrow e^{-\int_{t_n}^{t_{n+1}}c_u\diff u} \mathbf{A}_{n+1}\mathbf{B}_{n+1}^{CO}$\label{line:algoAmCBregCoupon},
      $\mathbf{B}_n^{E}\leftarrow \mathbf{A}_{n+1}\mathbf{B}_{n+1}^{E}$,\;
      $\mathbf{B}_n\leftarrow\max\left(\mathbf{H}_n^{E}, \mathbf{B}_n^{E}+ \mathbf{B}_n^{CO}\right)$\;
      $\mathbf{B}_n^{CO}\leftarrow \mathbf{B}^{CO}_n\left(\mathbf{1}_{mM\times 1}-\ind_{\{\mathbf{B}_n=\mathbf{H}_n^{E}\}} \right)$,
      $\mathbf{B}_n^{E}=\mathbf{B}_n-\mathbf{B}_n^{CO}$\;
      }
    
	$v^{(m,M)}(0,S_0, R_0)\leftarrow\mathbf{e}_{ji} \mathbf{B}_0$\;
	\KwRet $v^{(m,M)}(0,S_0, R_0)$
\end{algorithm}

\begin{algorithm}[h]
	\caption{American-style CB -- Fast Algorithm}
	\label{algoCBpriceTF_CTMCfast}
	\DontPrintSemicolon
	\KwInput{Initialize $\mathbf{Q}_n^{(m)}$ as in \eqref{eq:generatorR} and $\mathbf{\Lambda}^{(n,M)}_{k}$ as in \eqref{eq:generatorX}, for $k=1,2,\ldots,m$, $n=1,2,\ldots,N$\;
		$N\in\mathbb{N}$, the number of time steps, \;
		$\Delta_N\leftarrow T/N$, the size of a time step}
      Set ${\tilde{\mathbf{H}}:=[\eta e^{x_l+\rho f(r_k)}]_{k,l=1}^{m,M}}$\;
		Set $\tilde{\mathbf{B}}_{k,*}^{E}\leftarrow\left[\eta e^{x_l+\rho f(r_k)}\ind_{\{e^{x_l+\rho f(r_k)}\geq F/\eta\}}\right]_{l=1}^{M}$,  $k=1,2,\ldots,m$\; 
            Set $\tilde{\mathbf{B}}_{k,*}^{CO}\leftarrow\left[F\ind_{\{e^{x_l+\rho f(r_k)}< F/\eta\}}\right]_{l=1}^{M}$,  $k=1,2,\ldots,m$\;
        $\tilde{\mathbf{B}}\leftarrow \tilde{\mathbf{B}}^{E}+\tilde{\mathbf{B}}^{CO}$\;
	%\tcc{CB valuation}
	\For{$n=N-1,\ldots,0$}{
   % $\tilde{c}\leftarrow e^{-\int_{t_n}^{t_{n+1}}c_u\diff u}$\;
		\For{$k=1,2,\ldots, m$}
		{$\mathbf{P}^X_{n,k}\leftarrow  e^{\mathbf{\Lambda}_{k}^{(n+1,M)} \Delta_N} e^{-r_k\Delta_N}$\; 
        $\mathbf{E}^{CO}_{*,k}\leftarrow  e^{-\int_{t_n}^{t_{n+1}}c_u\diff u}\mathbf{P}^X_{n,k}(\tilde{\mathbf{B}}^{CO}_{k,*})^{\top}$, \label{line:algoAmCBfastCoupon}
        $\mathbf{E}^{E}_{*,k}\leftarrow \mathbf{P}^X_{n,k}(\tilde{\mathbf{B}}^{E}_{k,*})^{\top}$
        }
        $\mathbf{P}_n^R\leftarrow e^{\mathbf{Q}^{(m)}_{n+1} \Delta_N}$\;
		\For{$l=1,2,\ldots, M$}
		{ $\tilde{\mathbf{B}}_{*,l}^{CO}\leftarrow \mathbf{P}_n^R(\mathbf{E}_{l,*}^{CO})^{\top}$,
         $\tilde{\mathbf{B}}_{*,l}^{E}\leftarrow \mathbf{P}_n^R(\mathbf{E}_{l,*}^{E})^{\top}$,
         $\tilde{\mathbf{B}}\leftarrow \max\left(\tilde{\mathbf{H}},\tilde{\mathbf{B}}^{E}+\tilde{\mathbf{B}}^{CO}\right)$\;
         $\tilde{\mathbf{B}}^{CO}\leftarrow \tilde{\mathbf{B}}^{CO}\left(\mathbf{1}_{m\times M}-\ind_{\{\tilde{\mathbf{B}}=\tilde{\mathbf{H}}\}} \right)$,
      $\tilde{\mathbf{B}}^{E}=\tilde{\mathbf{B}}-\tilde{\mathbf{B}}^{CO}$\;}
	}	
	\KwRet $b_{ji}$\;
\end{algorithm}

\newpage

\begin{remark}\label{rmk:AlgoUpdateCBAmCoupon}
The extension of Algorithms \ref{algoCBpriceTF_CTMC} and \ref{algoCBpriceTF_CTMCfast} to coupon-bearing bonds is straightforward. When a coupon $\alpha>0$ is paid at time $t_{n+1}$, the continuation value at time $t_n$ of the cash-only part must be adjusted accordingly. Specifically, line \ref{line:algoAmCBregCoupon} of Algorithm \ref{algoCBpriceTF_CTMC} should be updated to 
$$\mathbf{B}_n^{CO}\leftarrow e^{-\int_{t_n}^{t_{n+1}}c_u\diff u} \mathbf{A}_{n+1}\left(\mathbf{B}_{n+1}^{CO}+\alpha\mathbf{1}_{mM\times 1}\right),$$  
and line \ref{line:algoAmCBfastCoupon} of Algorithm \ref{algoCBpriceTF_CTMCfast} should be changed to
$$\mathbf{E}^{CO}_{*,k}\leftarrow  e^{-\int_{t_n}^{t_{n+1}}c_u\diff u}\mathbf{P}^X_{n,k}\left((\tilde{\mathbf{B}}^{CO}_{k,*})^{\top}+\alpha\mathbf{1}_{M\times1}\right),$$ 
for each $k\in\{1,2,\ldots,M\}$.
\end{remark}

\newpage
\newpage
\setcounter{page}{1} 
\section{Supplemental Material}
This document provides supplemental material to A Unifying Approach for the Pricing of Debt Securities.
%%%%%%%%%%%%%%%%%%%%%%%%%%%%%%%%%%%%%%%%%%%%%%%%%%%%%%%%%%%%%%%%%%%%%%%%%%%%%%%%%%%%%%%%%%%%%%%%%%%%%%%%%%%%%%%%%%%%%
%%%%%%%%%%%%%%%%%%%%%%%%%%%%%%%%%%%%%%%%%%%%%%%%%%%%%%%%%%%%%%%%%%%%%%%%%%%%%%%%%%%%%%%%%%%%%%%%%%%%%%%%%%%%%%%%%%%%%
%%%%%%%%%%%%%%%%%%%%%%%%%%%%%%%%%%%%%%%%%%%%%%%%%%%%%%%%%%%%%%%%%%%%%%%%%%%%%%%%%%%%%%%%%%%%%%%%%%%%%%%%%%%%%%%%%%%%%
%%%%%%%%%%%%%%%%%%%%%%%%%%%%%%%%%%%%%%%%%%%%%%%%%%%%%%%%%%%%%%%%%%%%%%%%%%%%%%%%%%%%%%%%%%%%%%%%%%%%%%%%%%%%%%%%%%%%%
%%%%%%%%%%%%%%%%%%%%%%%%%%%%%%%%%%%%%%%%%%%%%%%%%%%%%%%%%%%%%%%%%%%%%%%%%%%%%%%%%%%%%%%%%%%%%%%%%%%%%%%%%%%%%%%%%%%%%
%%%%%%%%%%%%%%%%%%%%%%%%%%%%%%%%%%%%%%%%%%%%%%%%%%%%%%%%%%%%%%%%%%%%%%%%%%%%%%%%%%%%%%%%%%%%%%%%%%%%%%%%%%%%%%%%%%%%%
\subsection{Closed-Form Expression for European-style Convertible Bonds under TF approach}\label{appendixCBeuroTF}
In the following, we derive a closed-form analytical formula for European-style CBs (or when the conversion option can only be exercised at maturity). We suppose that credit risk is incorporated into the model using the approach of \cite{tsiveriotis1998valuing}. 

Accordingly, we make some simplifying assumptions. We suppose that the dynamics of the stock process are given by
\begin{equation}
        \begin{aligned}
         \diff S_t &=(R_t-q_t)S_t\diff t+\tilde{\sigma}_S S_t \left(\rho \diff \widetilde{W}^{(2)}_t + \sqrt{1-\rho^2} \diff \widetilde{W}^{(1)}_t\right),\\
	  \diff R_t &=(\theta(t)-\kappa R_t)\diff t+\tilde{\sigma}_R \diff \widetilde{W}_t^{(2)},\label{eqEDS_Sv2}
         \end{aligned}
	\end{equation}
    with $\tilde{\sigma}_S,\tilde{\sigma}_R>0$, $\rho\in[-1,1]$ and $\widetilde{W}=\{(\widetilde{W}^{(1)}_t, \widetilde{W}^{(2)}_t)\}_{t\geq0}$ is a standard bi-dimensional Brownian motion\footnote{The formulation in \eqref{eqEDS_Sv2} in terms of independent Brownian motion is equivalent to that in \eqref{eqEDS_S} in terms of correlated Brownian motion. Indeed, define $Z_t= \rho \widetilde{W}^{(2)}_t + \sqrt{1-\rho^2} \widetilde{W}^{(1)}_t$, $t\geq 0$. From Cholesky decomposition, the process $(Z,\widetilde{W}^{(2)})$ is a correlated Brownian motion with cross-variation $[Z,\tilde{W}^{(2)}]_t=\rho t$.}. Under this assumption, the short-rate process corresponds to the Hull--White model in Table \ref{tblmodelsNonHomo}. When function $\theta(\cdot)$ is constant over time, then the short-rate dynamics collapsed to the Vasicek model\footnote{It suffices to set $\theta(t)=\tilde{\theta}\kappa$, for $0\leq t\leq T$, for some constant $\tilde{\theta}>0$ to obtain the Vasicek model with a long-term mean parameter equal to $\tilde{\theta}$.}, 
    whereas when $\kappa=0$, the Ho--Lee model is obtained. Under these three models, the short-rate process is Gaussian, and the price of the zero-coupon bond can be obtained explicitly by
    \begin{equation}
    P(t,T)=e^{A(t,T)-B(t,T)R_t},\label{eqZBpriceAnalytic}
    \end{equation}
    for some time-deterministic functions $A$ and $B$ given in Table \ref{tblAandBparamZB}, see \cite{bjork2009}, Section 24.4 for details. Finally, note that when $\sigma_R(\cdot)=\kappa=\theta(\cdot)=0$, then the short-rate process is constant to $R_0$ and \eqref{eqEDS_Sv2} collapsed to the Black--Scholes model.
    \begin{table}[h]
	\begin{tabular}{ccc}
		\hline
		Model  & $A(t,T)$ & $B(t,T)$\\
		\hline 
		Vasicek & $\left(\theta-\frac{\tilde{\sigma}_R^2}{2\kappa^2}\right)\left[B(t,T)-(T-t)\right]-\frac{\tilde{\sigma}_R^2}{4\kappa}B^2(t,T)$ & $\frac{1}{\kappa}[1-e^{-\kappa(T-t)}]$ \\
        Ho--Lee & $\int_t^T\theta(s)(s-T)\diff s+\frac{\tilde{\sigma}_R^2}{2}\frac{(T-t)^3}{3}$ & $T-t$\\
        Hull--White & $\int_t^T\frac{1}{2}\tilde{\sigma}_R^2 B(s,T)-\theta(s)B(s,T)\diff s$ &  $\frac{1}{\kappa}[1-e^{-\kappa(T-t)}]$  \\[3pt]
		\hline
	\end{tabular}
 
	\caption{\small{Definition of $A(t,T)$ and $B(t,T)$ for different models of Tables \ref{tblmodelsHomo} and \ref{tblmodelsNonHomo}.}}
	\label{tblAandBparamZB}
\end{table} 

The following proposition provides a closed-form pricing formula for European-style CBs under general stochastic short-rate models \eqref{eqEDS_Sv2}. The pricing of European call options under stochastic interest rate are discussed in \cite{geman1995changes} (Theorem 2 and Section 3.2), \cite{bjork2009} (Section 26.5), and \cite{brigoMercurio2006} (Appendix B), among others. %\footnote{We also found the proof provided on Quantitative Finance Stack Exchange available at \url{https://quant.stackexchange.com/questions/18289/black-scholes-under-stochastic-interest-rates} very well explain.}. 
The general proof uses the change of numéraire techniques developed by \cite{geman1995changes}.  %the $T$-forward measure, denoted below by $\mathbb{Q}^T$. 
An alternative derivation is also provided in \cite{abudy2013pricing} when the short-rate dynamics is given by the Ornstein--Uhlenbeck process (or, equivalently, the Vasicek model). 
    \begin{prop}\label{prop:ExactFormulaVanillaCB_TF}
        Given $S_t=x>0$ and $R_t=r\in\reals$, the price at time $t$ of a European-style CB %that pays $N$ periodic coupons $\alpha>0$ at time $t_1<t_2<\ldots<t_N=T$
        with maturity $T>0$, face value $F>0$, and conversion ratio $\eta>0$ is given by
    \begin{align}\label{eqCBeuroTFgenFormulaFinal}
        v_e(t,r,x) 
       = \eta x e^{-\int_t^Tq_s\diff s}\Phi(d_1)+e^{-\int_t^{T} c_u\diff u}P(t,T)F\Phi(d_2)
    \end{align}
    \begin{sloppypar}
          where $\Phi(\cdot)$ denotes the cumulative distribution of a standard normal distribution, ${d_1 =\frac{\ln\left(\frac{\eta x }{F P(t,T)}\right)-\int_t^Tq_s\diff s+\frac{1}{2}V(t,T)}{\sqrt{V(t,T)}}}$, and ${d_2 =\sqrt{V(t,T)}-d_1}$,  with
          \begin{equation*}
              V(t,T)=\begin{cases}
                  \tilde{\sigma}_S^2(T-t)+\frac{2\rho\tilde{\sigma}_S\tilde{\sigma}_R}{\kappa}\left[(T-t)-B(t,T)\right]+\frac{\tilde{\sigma}_R^2}{\kappa^2}\left[(T-t)-\frac{\kappa}{2}B^2(t,T)-B(t,T)\right] & \textrm{if } \kappa\neq 0\\
                   \tilde{\sigma}_S^2(T-t)+\rho\tilde{\sigma}_S\tilde{\sigma}_R(T-t)^2+\frac{\tilde{\sigma}_R^2}{3}(T-t)^3 & \textrm{if } \kappa=0,
              \end{cases}
          \end{equation*}
          %$${V(t,T) =\sigma_S^2(T-t)+2\rho\sigma_S\sigma_R \int_t^T B(s,T)\diff s +\sigma_R^2 \int_t^T B^2(s,T)\diff s},$$ 
          and ${P(t,T) =e^{A(t,T)+B(t,T)r}}$, where functions $A$ and $B$ are defined in Table \ref{tblAandBparamZB}. 
    \end{sloppypar}
    \end{prop}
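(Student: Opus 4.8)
The plan is to price the European-style convertible bond by decomposing its payoff at $T$ into an equity piece and a cash-only (debt) piece, exactly as in the Tsiveriotis--Fernandes split introduced in Section \ref{sectionConvBond}, and then to value each piece under an appropriately chosen numéraire. Recall from \eqref{eq:CBeuroTF} that the payoff is $\eta S_T\ind_{\{S_T\geq F/\eta\}}$ discounted at the risk-free rate plus $F\ind_{\{S_T<F/\eta\}}$ discounted at the risky rate $R+c$. Since the credit spread $c$ is deterministic, the risky discount factor factors as $e^{-\int_t^T c_u\diff u}e^{-\int_t^T R_u\diff u}$, so pulling the deterministic factor out in front reduces both pieces to expectations under the risk-neutral measure $\qp$ of the form $\e\big[e^{-\int_t^T R_u\diff u}g(S_T)\big| S_t=x,R_t=r\big]$ with $g(S_T)=\eta S_T\ind_{\{S_T\geq F/\eta\}}$ or $g(S_T)=F\ind_{\{S_T<F/\eta\}}$. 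This is a Black--Scholes-type call spread under stochastic interest rates.

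First I would handle the equity piece $\e\big[e^{-\int_t^T R_u\diff u}\eta S_T\ind_{\{S_T\geq F/\eta\}}\big|\mathcal F_t\big]$ by changing to the stock numéraire $\{S_t/S_0\cdot\text{(dividend reinvestment)}\}$; more precisely, since $S$ pays the deterministic dividend yield $q$, the discounted-and-dividend-adjusted stock $e^{\int_0^t(q_u) \diff u}e^{-\int_0^t R_u\diff u}S_t$ is a $\qp$-martingale, and using it as numéraire gives a measure $\qp^S$ under which the equity term becomes $\eta x e^{-\int_t^T q_s\diff s}\,\qp^S(S_T\geq F/\eta\,|\,\mathcal F_t)$. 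Second I would handle the cash piece $\e\big[e^{-\int_t^T R_u\diff u}F\ind_{\{S_T<F/\eta\}}\big|\mathcal F_t\big]$ by changing to the $T$-forward measure $\qp^T$ associated with the zero-coupon bond numéraire $P(\cdot,T)$, giving $F\,P(t,T)\,\qp^T(S_T<F/\eta\,|\,\mathcal F_t)$. Under each of these measures, the relevant ratio — namely $S_T\eta/(FP(T,T))=\eta S_T/F$, or equivalently the forward price $S_T/P(T,T)$ viewed at time $t$ against its forward value $\eta x/(FP(t,T))$ — is lognormally distributed because both $S$ and $P(\cdot,T)$ are (driftless, after the measure change) geometric-Gaussian processes in the Vasicek/Ho--Lee/Hull--White family, where $P(t,T)=e^{A(t,T)-B(t,T)R_t}$ from \eqref{eqZBpriceAnalytic}. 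Computing the two probabilities then reduces to standard Gaussian tail evaluations, producing $\Phi(d_1)$ and $\Phi(d_2)$.

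The main computational content is the identification of the log-variance $V(t,T)=\mathrm{Var}\big[\ln(S_T/P(T,T))\,|\,\mathcal F_t\big]$ (equivalently, the integrated quadratic variation of $\ln S_t-\ln P(t,T)$). The hard part will be carrying out this variance computation cleanly: one must write $\diff\ln S_t=(\cdots)\diff t+\tilde\sigma_S\diff W^{(1)}_t$ and $\diff\ln P(t,T)=(\cdots)\diff t+\tilde\sigma_R B(t,T)\diff W^{(2)}_t$ (up to sign), then integrate $\tilde\sigma_S^2$, the cross term $2\rho\tilde\sigma_S\tilde\sigma_R B(s,T)$, and $\tilde\sigma_R^2 B^2(s,T)$ over $[t,T]$, using $B(s,T)=\tfrac1\kappa(1-e^{-\kappa(T-s)})$ for $\kappa\neq 0$ and $B(s,T)=T-s$ for $\kappa=0$. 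The integrals $\int_t^T B(s,T)\diff s$ and $\int_t^T B^2(s,T)\diff s$ give elementary expressions in $B(t,T)$ and $T-t$, and collecting them yields the stated two-case formula for $V(t,T)$. The $\kappa=0$ case is then just the limit of the $\kappa\neq 0$ case, which can serve as a consistency check. Everything else — the numéraire changes, the martingale property of the discounted stock (invoking $\sigma_S$ constant so that there are no true-martingale subtleties), and the final assembly into \eqref{eqCBeuroTFgenFormulaFinal} — is routine, and I would cite \cite{geman1995changes} and \cite{bjork2009} for the change-of-numéraire machinery rather than reproving it, and \cite{abudy2013pricing} for the Vasicek special case.
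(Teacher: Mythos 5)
Your proposal is correct and follows essentially the same route as the paper: the Tsiveriotis--Fernandes payoff split, a change to the stock numéraire for the equity leg and to the $T$-forward measure (zero-coupon bond numéraire) for the cash leg via \cite{geman1995changes}, and identification of $V(t,T)$ as the integrated quadratic variation of $\ln S - \ln P(\cdot,T)$ using $\int_t^T B(s,T)\,\diff s$ and $\int_t^T B^2(s,T)\,\diff s$. The only cosmetic difference is that you absorb the dividend yield directly into the numéraire, whereas the paper first proves the $q\equiv 0$ case and then recovers $q>0$ via $S_t=\tilde S_t e^{-\int_0^t q_s\diff s}$.
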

    \begin{proof}
         %Denote by $\tilde{S}=\{\tilde{S}_t\}_{t\geq0}$, with $\tilde{S}_t=S_te^{\int_0^tq_s\diff s}$, the stock price value assuming that the dividend is continuously reinvested in the stock. It follows from Ito's formula that
         %\begin{equation}
        %\begin{aligned}
         %\diff \tS_t &=R_t\tS_t\diff t+\tilde{\sigma}_S \tS_t \left(\rho \diff \widetilde{W}^{(2)}_t + \sqrt{1-\rho^2} \diff \widetilde{W}^{(1)}_t\right),\\
	  %\diff R_t &=(\theta(t)-\kappa R_t)\diff t+\tilde{\sigma}_R \diff \widetilde{W}_t^{(2)}.\label{eqEDS_Sv3}
        % \end{aligned}
	%\end{equation}
         First, assume first that $q_t=0$ for all $t\geq 0$ and recall from \eqref{eq:CBeuroTF} that
         \begin{equation}
            v_e(t,x,r) =\e \Big[e^{-\int_t^T R_u\diff u} \eta S_T \ind_{\{S_T\geq F/\eta\}} 
            + e^{-\int_t^T R_u+c_u\diff u} F  \ind_{\{S_T< F/\eta\}} \big| S_t=x, R_t=r\Big]. \label{eqCBeuroTFv2}
     \end{equation}
    To solve this problem and avoid working with the joint density of $(\int_t^T R_u\diff u, S_T)$, we use the change of numéraire technique discussed in \cite{geman1995changes}, Theorem 2 and Section 3.2 (a). Consequently, we introduce the $T$-forward measure, denoted by $\mathbb{Q}^T$, which has the zero-coupon bond price process $P=\{P(t,T)\}_{0\leq t\leq T}$ as numéraire. We also introduce the measure $\mathbb{Q}_S$ under which the risky asset $S$ is the chosen numéraire. Following \cite{geman1995changes}, Theorem 2, we have that
    \begin{align}
      v_e(t,x,r) &=P(t,T)\e^{\mathbb{Q}^T}\left[ \eta S_T \ind_{\{S_T\geq F/\eta\}}  \big| S_t=x, R_t=r\right]\nonumber \\
     &\quad\quad + e^{-\int_t^T c_u\diff u}P(t,T) F \mathbb{Q}^T\left( S_T< F/\eta\big| S_t=x, R_t=r\right)\nonumber\\
     \begin{split}
     &=\eta x\mathbb{Q}_S\left(S_T\geq F/\eta \big| S_t=x, R_t=r\right) \\
     & \quad\quad +e^{-\int_t^T c_u\diff u}P(t,T) F \mathbb{Q}^T\left( S_T< F/\eta\big| S_t=x, R_t=r\right),
     \end{split}\label{CBeuroTF_genFormula}
    \end{align}
    where the first term of the second equality follows from Corollary 2 of \cite{geman1995changes}, and $\e^{\mathbb{Q}^T}[\cdot]$ denotes the expectation under the $T$-forward measure. 
     To solve \eqref{CBeuroTF_genFormula}, we need to find the distribution of $S_T$ under measures $\mathbb{Q}^T$ and $\mathbb{Q}^S$, respectively. This is what we do in the following. 
     
     The first step to obtain the dynamics of $S$ under $\mathbb{Q}^T$ consists of finding the dynamics of the zero-coupon bond price process $P$ under $\mathbb{Q}$. Accordingly, recall from \eqref{eqZBpriceAnalytic} that
     $$P(t,T)=e^{A(t,T) - B(t,T)R_t},$$
     for some time-deterministic functions $A$ and $B$ defined in Table \ref{tblAandBparamZB}, and note that 
     $$A_t(t,T):=\frac{\partial A}{\partial t}(t,T) = \theta (t) B(t,T) -\frac{1}{2} \tilde{\sigma}_R^2 B^2(t,T),\textrm{ and } B_t(t,T) :=\frac{\partial B}{\partial t}(t,T) = \kappa B(t,T)-1,$$
     see \cite{bjork2009}, Section 24.4. Hence, applying Ito's formula to $\ln P(t,T)$, we obtain that
     \begin{align}
         \diff \ln P(t,T) & = \left(A_t(t,T) - B_t(t,T)R_t \right) \diff t - B(t,T) \diff R_t\nonumber\\
         & = \left[ \theta(t) B(t,T)-\frac{1}{2}\tilde{\sigma}_R^2 B^2(t,T) -  B_t(t,T)R_t - B(t,T) \left(\theta(t)-\kappa R_t\right)\right] \diff t - \tilde{\sigma}_R B(t,T) \diff  \widetilde{W}^{(2)}_t\nonumber\\
         & = \left[R_t - \frac{1}{2}\tilde{\sigma}_R^2 B^2(t,T)\right] \diff t  - \tilde{\sigma}_R B(t,T) \diff  \widetilde{W}^{(2)}_t,\label{eqLNZBpriceEDS}
     \end{align}
     so that,
     \begin{align}
         \diff  P(t,T) 
         & = R_tP(t,T) \diff t  - \tilde{\sigma}_R B(t,T)P(t,T) \diff  \widetilde{W}^{(2)}_t.\label{eqZBpriceEDS2}
     \end{align}
   Now, observe that
    \begin{equation}\label{eqProbST_QT}
         \mathbb{Q}^T\left( S_T< F/\eta\big| S_t=x, R_t=r\right)=\mathbb{Q}^T\left( \frac{S_T}{P(T,T)}< F/\eta\big| S_t=x, R_t=r\right).
    \end{equation}
    From Theorem 1 (i) of \cite{geman1995changes}, we know that process $f=\{f_t:=\frac{S_t}{P(t,T)}\}_{0\leq t\leq T}$ is a $\mathbb{Q}^T$-local martingale (thus, it as no drift). Moreover, since a change of measures only affects the drift of a process, we can deduce the dynamics of process $f$ under $\mathbb{Q}^T$ from its dynamics under $\mathbb{Q}$ by setting the drift term to nil, which we do next. Using \eqref{eqEDS_Sv2} , \eqref{eqZBpriceEDS2} and Itô's formula, we find that 
    \begin{align}
     \diff f_t& =\frac{1}{P(t,T)}\diff S_t -\frac{S_t}{P^2(t,T)}\diff P(t,T) - \frac{1}{P^2(t,T)}\diff\langle S_t,P(t,T)\rangle +\frac{S_t}{P^3(t,T)}\diff \langle P(t,T) \rangle\nonumber\\
     & =\left(\rho\tilde{\sigma}_S\tilde{\sigma}_R B(t,T) +\tilde{\sigma}_R^2 B^2(t,T)\right)f_t\diff t + \tilde{\sigma}_S\sqrt{1-\rho^2}f_t \diff \widetilde{W}^{(1)}_t +\left[\rho\tilde{\sigma}_s+\tilde{\sigma}_R B(t,T)\right]f_t\diff \widetilde{W}^{(2)}_t.\label{eqEDS_ft_QT}
     \end{align}
   Because $f$ must be a local martingale under $\mathbb{Q}^T$, we deduce %that the Girsanov kernel must be given by $\{\tilde{\sigma}_R B(t,T)\}_{0\leq t\leq T}$. Moreover, 
   from Girsanov theorem that process $(\widehat{W}^{(1)}, \widehat{W}^{(2)})$, defined by
      \begin{equation}
          \begin{split}
              \diff \widehat{W}^{(1)}_t & = \diff \widetilde{W}^{(1)}_t,\\
              \diff \widehat{W}^{(2)}_t & = \diff \widetilde{W}^{(2)}_t +\tilde{\sigma}_R B(t,T) \diff t,
          \end{split}
      \end{equation}
      for $0\leq t\leq T$, is a standard bi-dimensional Brownian motion under $\mathbb{Q}^T$, which implies that
       \begin{align}
     \diff f_t
     & = \tilde{\sigma}_S\sqrt{1-\rho^2}f_t \diff \widehat{W}^{(1)}_t +\left[\rho\tilde{\sigma}_s+\tilde{\sigma}_R B(t,T)\right]f_t\diff \widehat{W}^{(2)}_t.\label{eqEDS_ft_QT2}
     \end{align}
     Hence, $\ln f_T$ given $\ln f_t$ is normally distributed with a mean $\hat{\mu}:=\ln f_t -\frac{1}{2}V(t,T)$ (see for instance, \cite{bjork2009}, Lemma 4.15), and variance $\hat{\sigma}^2 :=V(t,T)$, where
     \begin{align}
     V(t,T)&= \int_t^T \tilde{\sigma}_S^2(1-\rho^2)+ \left[\rho\tilde{\sigma}_S+\tilde{\sigma}_RB(s,T)\right]^2\diff s\nonumber\\
        %&= \sigma_S^2(1-\rho^2)(T-t) + \rho^2\sigma_S^2(T-t)+2\rho\sigma_S\sigma_R \int_t^T B(s,T)\diff s +\sigma_R^2 \int_t^T B^2(s,T)\diff s\\
        &= \tilde{\sigma}_S^2(T-t)+2\rho\tilde{\sigma}_S\tilde{\sigma}_R \int_t^T B(s,T)\diff s +\tilde{\sigma}_R^2 \int_t^T B^2(s,T)\diff s.\label{eqVarianceGenFormulaEuroCBTF}
     \end{align}
     The probability \eqref{eqProbST_QT} can thus be calculated explicitly using the property of the normal distribution, and we obtain that,
       \begin{equation}\label{eqPhid2}
         \mathbb{Q}^T\left(f_T< F/\eta\big|f_t=x/P(t,T)\right)=\Phi\left(\frac{\ln\left(\frac{FP(t,T)}{\eta x}\right)+\frac{1}{2}V(t,T)}{\sqrt{V(t,T)}}\right)= \Phi(d_2).
    \end{equation}
     This completes the proof for the second term of \eqref{CBeuroTF_genFormula}.
    
    The challenge now consists of finding an expression for the first term of \eqref{CBeuroTF_genFormula}. We thus need the distribution of $S_T$ under $\mathbb{Q}_S$. To this end, we observe that 
         \begin{equation}
         \begin{split}
         \mathbb{Q}_S\left( S_T\geq F/\eta \big| S_t=x, R_t=r\right)& = \mathbb{Q}_S\left( 1/S_T\leq \eta/F \big| S_t=x, R_t=r\right)\\
         &=\mathbb{Q}_S\left( P(T,T)/S_T\leq \eta/F \big| S_t=x, R_t=r\right) \\
         &=\mathbb{Q}_S\left( 1/f_T\leq \eta/F \big| f_t=x/P(t,T)\right)\label{eqProbY_Qs2}
         \end{split}
     \end{equation}
       \begin{sloppypar} 
       Similarly, as above, we use the results of Theorem 1 of \cite{geman1995changes} to conclude that process ${Y=\{Y_t:=P(t,T)/S_t=1/f_t\}_{0\leq t\leq T}}$ is a local martingale under $\mathbb{Q}_S$. Thus, it has no drift, and because a change of measures only affects the drift of a process, we can deduce the dynamics of process $Y$ under $\mathbb{Q}_S$ from its dynamics under $\mathbb{Q}^T$, by setting the drift term to nil. This what we do in the following. Using \eqref{eqEDS_ft_QT2} and Itô's formula, we find that
        \end{sloppypar}
    \begin{align*}
        \diff Y_t &= -\frac{1}{f_t^2}\diff f_t + \frac{1}{f_t^3}\diff \langle f_t \rangle \\
        &= \left[\tilde{\sigma}_S^2(1-\rho^2) +(\rho\tilde{\sigma}_S+\tilde{\sigma}_R B(t,T))^2\right] Y_t \diff t \\
        &\quad\quad\quad-\tilde{\sigma}_S\sqrt{1-\rho^2}Y_t \diff \widehat{W}^{(1)}_t -\left[\rho\tilde{\sigma}_s+\tilde{\sigma}_R B(t,T)\right]Y_t\diff \widehat{W}^{(2)}_t.
    \end{align*}
    From the local martingale property of $Y$ under $\mathbb{Q}_S$, we deduce from the Girsanov  theorem %kernel must be given by $\{(-\tilde{\sigma}_S\sqrt{1-\rho^2}, -\rho\tilde{\sigma}_S+\tilde{\sigma}_R B(t,T))\}_{0\leq t\leq T}$, such 
    that process $(\bar{W}^{(1)}, \bar{W}^{(2)})$, defined by
     \begin{equation}
          \begin{split}
              \diff \bar{W}^{(1)}_t & = \diff \widehat{W}^{(1)}_t-\tilde{\sigma}_S\sqrt{1-\rho^2}\diff t,\\
              \diff \bar{W}^{(2)}_t & = \diff \widehat{W}^{(2)}_t -\rho\tilde{\sigma}_S+\tilde{\sigma}_R B(t,T)\diff t,
          \end{split}
      \end{equation}
      for $0\leq t\leq T$, is a standard bi-dimensional Brownian motion under $\mathbb{Q}_S$. Thus, the dynamics of $Y$ under the new measure are given by
    \begin{align*}
        \diff Y_t = -\tilde{\sigma}_S\sqrt{1-\rho^2}Y_t \diff \bar{W}^{(1)}_t -\left[\rho\tilde{\sigma}_s+\tilde{\sigma}_R B(t,T)\right]Y_t\diff \bar{W}^{(2)}_t.
    \end{align*}
    \begin{sloppypar}
   From there, we conclude that
    $\ln Y_T$ given $\ln Y_t$ (or $\ln f_t$, since $\ln Y_t=\ln 1/f_t$) is normally distributed with mean ${\bar{\mu}:=-\ln f_t-\frac{1}{2}V(t,T)}$ and variance ${\bar{\sigma}^2:=\hat{\sigma}^2=V(t,T)}$, and the probability \eqref{eqProbY_Qs2} can be calculated explicitly as
    \begin{equation}\label{eqPHid1}
          \mathbb{Q}_S\left( 1/f_T\leq \eta/F \big| f_t=x/P(t,T)\right)=\Phi\left(\frac{\ln\left(\frac{\eta x}{ F P(t,T)}\right) +\frac{1}{2}V(t,T)}{\sqrt{V(t,T)}}\right)=\Phi(d_1).
    \end{equation}
    The final assertion then follows from \eqref{CBeuroTF_genFormula},\eqref{eqVarianceGenFormulaEuroCBTF}, \eqref{eqPhid2}, \eqref{eqPHid1}, and the expression for function $B$ in Table \ref{tblAandBparamZB}. 

    When $q_t>0$ for some $t\geq 0$, the results can be derived using the relationship $S_t=\tilde{S}_t e^{-\int_0^tq_s\diff s}$, $t\geq 0$, where $\tilde{S}=\{\tS_t\}_{t\geq 0}$ represents the stock price process when the dividend is assumed to be nil\footnote{The value of $\tilde{S}$ can also be interpreted as the value of the stock price when dividends are continuously reinvested in the stock.}.
    %\textcolor{red}{When $q_t>0$ for some $t\geq 0$, the results follow by using the fact that $S_t=\tilde{S}_t e^{-\int_0^tq_s\diff s}$, $t\geq 0$, where $\tilde{S}=\{\tS_t\}_{t\geq 0}$ denotes the value of the stock price when the dividend is assumed to be nil\footnote{The value of $\tilde{S}$ can also be interpreted as the value of the stock price when dividends are continuously reinvested in the stock.}.}
    \end{sloppypar}
    \end{proof}
\begin{remark}
    When $\sigma_R=\kappa=\theta(t)=0$ for all $t\in[0,T]$, the short-rate is constant to $R_0$ over time and the model in \eqref{eqEDS_Sv2} collapsed to the Black--Scholes model. In that case, $A(t,T)=0$, $B(t,T)=T-t$, such that $P(t,T)=e^{-(T-t)R_0}$, and $V(t,T)=\sigma_S^2(T-t)$. The general formula \eqref{eqCBeuroTFgenFormulaFinal} then becomes
     \begin{align}\label{eqCBeuroTF_BS}
        v_e(t,r,x) 
       = \eta x e^{-\int_t^Tq_s\diff s}\Phi(d_1)+e^{-(T-t)R_0}e^{-\int_t^{T} c_u\diff u}F\Phi(d_2),
    \end{align}
    with $d_1=\frac{\ln\left(\frac{\eta x }{F}\right)+\left(r-\int_t^Tq_s\diff s+\frac{1}{2}\sigma_S^2\right)(T-t)}{\sigma_S\sqrt{T-t}}$, and $d_2=\sigma_S\sqrt{T-t}-d_1$.
\end{remark}

%%%%%%%%%%%%%%%%%%%%%%%%%%%%%%%%%%%%%%%%%%%%%%%%%%%%%%%%%%%%%%%%%%%%%%%%%%%%%%%%%%%%%%%%%%%%%%%%%%%%%%%%%%%%%%%%%%%%%%%
\subsection{Additional Numerical Experiments}\label{appendixSupplMatNumExperiment}
%%%%%%%%%%%%%%%%%%%%%%%%%%%%%%%%%%%%%%%%%%%%%%%%%%%%%%%%%%%%%%%%%%%%%%%%%%%%%%%%%%%%%%%%%%%%%%%%%%%%%%%%%%%%%%%%%%%%%%%
%%%%%%%%%%%%%%%%%%%%%%%%%%%%%%%%%%%%%%%%%%%%%%%%%%%%%%%%%%%%%%%%%%%%%%%%%%%%%%%%%%%%%%%%%%%%%%%%%%%%%%%%%%%%%%%%%%%%%%%
We now investigate the accuracy, convergence, and efficiency of the CTMC method to approximate various debt securities in the Vasicek, CIR, and Dothan models. Note that Assumption \ref{assumpStateSpaceR} is not respected under the Vasicek model. 

Unless stated otherwise, we use the following model and CTMC parameters in all numerical experiments, summarized in Table \ref{tblModelCTMCParamHomo}.
\begin{table}[h!]
      \centering
      \begin{tabular}{ccccc|ccccc}
		\hline
		 & ${r_0}$ & ${\kappa}$ & ${\theta}$ & ${\sigma}$& ${m}$  & ${r_1}$ & ${r_m}$ & $\tilde{\alpha}$& $\Delta_N$ \\
		\hline 
		Vasicek  & $0.04$ & $1$ & $0.04$ & $0.20$ &$160$ & $-30 R_0$& $25 R_0$ & $0.5$ & 1/252  \\
        CIR  & $0.04$ & $2$ & $0.035$ & $0.20$ &$160$ & $R_0/100$& $7 R_0$ & $0.5$  & 1/252 \\
        Dothan & 0.02 & 0 & N/A & 0.15&$160$ & $R_0/100$& $7 R_0$ & $0.5$ & 1/252 \\
		\hline
		%\hline
	\end{tabular}
 \caption{Model and CTMC parameters}\label{tblModelCTMCParamHomo}
\end{table}

To construct the state-space of the CTMC, $\mathcal{S}^{(m)}_R=\{r_1,r_2,\ldots,r_m\}$ with $m\in\mathbb{N}$, we use the non-uniform grid proposed by Tavella and Randall (\cite{tavellapricing}, Chapter 5.3), as in Section \ref{sectNumResults}. 

%%%%%%%%%%%%%%%%%%%%%%%%%%%%%%%%%%%%%%%%%%%%%%%%%%%%%%%%%%%%%%%%%%%%%%%%%%%%%%%%%%%%%%%%%%%%%%%%%%%%%%%%%%%%%%%%%%%%%%%
\subsubsection{Approximation of Zero-Coupon Bond Prices}\label{appendixNumZBpriceHomo}
%%%%%%%%%%%%%%%%%%%%%%%%%%%%%%%%%%%%%%%%%%%%%%%%%%%%%%%%%%%%%%%%%%%%%%%%%%%%%%%%%%%%%%%%%%%%%%%%%%%%%%%%%%%%%%%%%%%%%%%
We now examine the accuracy of the CTMC methods in calculating zero-coupon bond prices, Corollary \ref{corrZeroCouponBondHomo}. Under the Vasicek and CIR models, the price of zero-coupon bonds has a closed-from expression, which can be found, for instance, in \cite{brigoMercurio2006}, Section 3.2. The analytical value of the zero-coupon bond price can thus be used as a benchmark in our experiment. We test the accuracy of the approximated prices across different values of model parameters. The results are summarized in Table \ref{tblAccuracyZBpriceHomo}. Column ``CTMC'' reports the CTMC approximated value using the results of Corollary \ref{corrZeroCouponBondHomo}. 
\begin{table}[b]
	\begin{subtable}[c]{0.495\linewidth}
		\centering
		\scalebox{0.90}{
			\begin{tabular}{cccc}
		\hline
		$\mathbf{\kappa}$ & \textbf{CTMC} & \textbf{Benchmark} & \textbf{Abs. error} \\
		\hline 
		\textbf{0.5}&      0.9625591 &	 0.9625609 &	1.77E-06 \\
        \textbf{1}  &   0.8964870 &	 0.8964877 &	7.12E-07 \\
        \textbf{2} &   0.8661056 &	 0.8661057 &	9.47E-08 \\
        \textbf{3} &  0.8587974 &	 0.8587974 &	2.27E-08 \\
        \textbf{4} &  0.8560138 &	 0.8560138 &	7.77E-09 \\
		\hline
		%\hline
	\end{tabular}}
	\end{subtable}
	\begin{subtable}[c]{0.495\linewidth}
		\centering
		\scalebox{0.90}{
			\begin{tabular}{cccc}
		\hline
		$\mathbf{\kappa}$ & \textbf{CTMC} & \textbf{Benchmark} & \textbf{Abs. error} \\
		\hline 
		\textbf{0.5}&       0.8656670  & 	 0.8656663 & 	7.09E-07 \\
        \textbf{1}  &    0.8676884 	&  0.8676884 	& 1.11E-08 \\
        \textbf{2} &    0.8676884 	 & 0.8676884 	& 1.11E-08 \\
        \textbf{3} &   0.8681491 	 & 0.8681491 	& 3.96E-10 \\
        \textbf{4} &  0.8684110	&  0.8684110	& 1.98E-11 \\
		\hline
	\end{tabular}}
	\end{subtable}
 \begin{subtable}[c]{0.49\linewidth}
		\centering
		\scalebox{0.90}{
			\begin{tabular}{cccc}
		\hline
		$\mathbf{R_0}$ & \textbf{CTMC} & \textbf{Benchmark} & \textbf{Abs. error} \\
		\hline 
		\textbf{0.02}&	 0.9142555 	& 0.9142630 &	7.43E-06\\	 
        \textbf{0.03} &	 0.9053312 	& 0.9053317 &	5.12E-07\\	 
        \textbf{0.04}&	 0.8964870 	& 0.8964877 &	7.12E-07\\	 
        \textbf{0.05}&	 0.8877292 	& 0.8877301 &	8.23E-07\\	 
		\hline
	\end{tabular}}
	\end{subtable}
 \begin{subtable}[c]{0.49\linewidth}
		\centering
		\scalebox{0.90}{
			\begin{tabular}{cccc}
		\hline
		$\mathbf{R_0}$ & \textbf{CTMC} & \textbf{Benchmark} & \textbf{Abs. error} \\
		\hline 
        \textbf{0.02} &	 0.8763624& 	 0.8763627& 	2.82E-07\\
        \textbf{0.03}&	 0.8720147 &	 0.8720147 &	2.21E-09\\
        \textbf{0.04}&	 0.8676884 	& 0.8676884 	&1.11E-08\\
        \textbf{0.05}&	 0.8633835 	 &0.8633835 	&2.90E-08\\
		\hline
	\end{tabular}}
	\end{subtable}
  \begin{subtable}[c]{0.49\linewidth}
		\centering
		\scalebox{0.90}{
			\begin{tabular}{cccc}
		\hline
		$\mathbf{\theta}$ & \textbf{CTMC} & \textbf{Benchmark} & \textbf{Abs. error} \\
		\hline 
        \textbf{0.01} &	 0.9814527 &	 0.9814529 &	1.93E-07\\
        \textbf{0.02} &	 0.9522718 &	 0.9522722 &	3.77E-07\\
        \textbf{0.03} &	 0.9239585 &	 0.9239590 &	5.50E-07\\
        \textbf{0.04} &	 0.8964870 &	 0.8964877 &	7.12E-07\\
		\hline
	\end{tabular}}
	\end{subtable}
  \begin{subtable}[c]{0.49\linewidth}
		\centering
		\scalebox{0.90}{
			\begin{tabular}{cccc}
		\hline
		$\mathbf{\theta}$ & \textbf{CTMC} & \textbf{Benchmark} & \textbf{Abs. error} \\
		\hline 
        \textbf{0.015} &	     0.9303530 &		 0.9303518 &		1.20E-06\\
        \textbf{0.025} &		 0.8984741 &		 0.8984740 &		1.82E-07\\
        \textbf{0.035}&		 0.8676884 &		 0.8676884 &		1.11E-08\\
        \textbf{0.045} &		 0.8379576 &		 0.8379576 &		5.63E-10\\
		\hline
	\end{tabular}}
	\end{subtable}
  \begin{subtable}[c]{0.49\linewidth}
		\centering
		\scalebox{0.90}{
			\begin{tabular}{cccc}
		\hline
		$\mathbf{\sigma}$ & \textbf{CTMC} & \textbf{Benchmark} & \textbf{Abs. error} \\
		\hline 
        \textbf{0.1}	& 0.8630196 &	 0.8630198 & 	1.70E-07 \\
        \textbf{0.2}	& 0.8964870 &	 0.8964877 &	7.12E-07 \\
        \textbf{0.3}	& 0.9551747 &	 0.9551765 &	1.77E-06 \\
        \textbf{0.4}& 1.0438353 &	 1.0438513 &	1.60E-05 \\
		\hline
	\end{tabular}}
		\subcaption{Vasicek model}
	\end{subtable}
  \begin{subtable}[c]{0.49\linewidth}
		\centering
		\scalebox{0.90}{
			\begin{tabular}{cccc}
		\hline
		$\mathbf{\sigma}$ & \textbf{CTMC} & \textbf{Benchmark} & \textbf{Abs. error} \\
		\hline 
        \textbf{0.1} &	 0.8673140 &	 0.8673140 &	1.93E-10\\
        \textbf{0.2}	& 0.8676884 	& 0.8676884 	&1.11E-08\\
        \textbf{0.3}	& 0.8683033 	& 0.8683025 	&7.69E-07\\
        \textbf{0.4} & 0.8691384 	& 0.8691428 	&4.36E-06\\

		\hline
	\end{tabular}}
		\subcaption{CIR model}
	\end{subtable}
	\caption[Accuracy of the zero-coupon bond price approximation under Vasicek and CIR models]{\small{Accuracy of the zero-coupon bond price approximation \eqref{eqZeroCouponHomo} under Vasicek and CIR models. Model and CTMC parameters are as listed in Table \ref{tblModelCTMCParamHomo}, and zero-coupon bond parameters are $t=0$ and $T=4$.}}
	\label{tblAccuracyZBpriceHomo}
\end{table} 
\begin{figure}[b]
	\centering
		\includegraphics[scale=0.35]{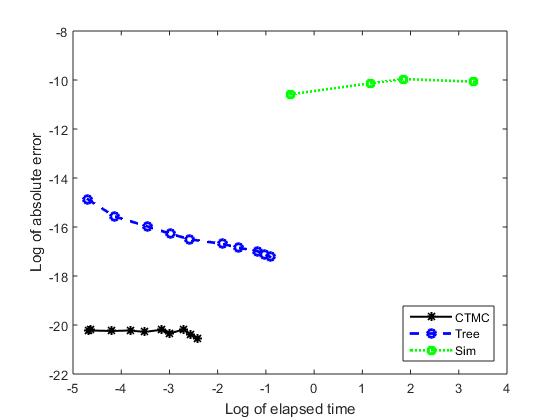} 
	\caption[Efficiency of the CTMC method in approximating zero-coupon bond prices under the Dothan model]{\small{Efficiency of the CTMC method in approximating zero-coupon bond prices under the Dothan model. Except for the number of grid points $m$, which range from $100$ to $1000$, model and CTMC parameters are as listed in Table \ref{tblModelCTMCParamHomo}. Zero-coupon bond price parameters are $t=0$ and $T=4$.}}\label{figEfficiencyZBPriceDothan}
\end{figure}
\begin{figure}[t]
	\centering
	\begin{tabular}{cc}
		\includegraphics[scale=0.3]{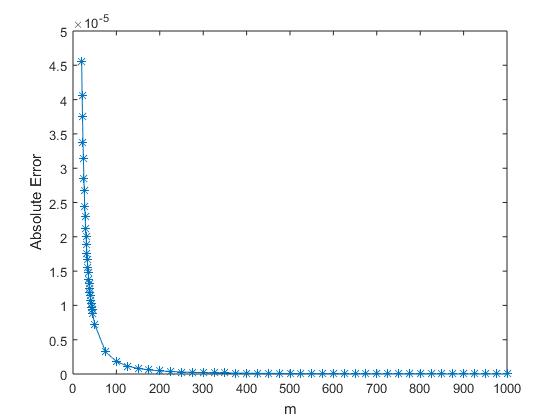} & \includegraphics[scale=0.3]{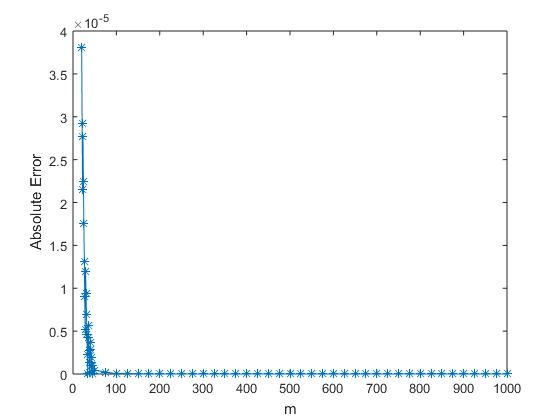}\\
         \small{\textsc{(a)} Vasicek} & \small{\textsc{(b)} CIR}  
	\end{tabular}
	\caption[Convergence pattern of the approximated zero-coupon bond prices]{\small{Convergence pattern of the approximated zero-coupon bond prices. Except for the number of grid points $m$, model and CTMC parameters are as listed in Table \ref{tblModelCTMCParamHomo}. Zero-coupon bond price parameters are $t=0$ and $T=4$.}}\label{figConvergenceZB_CIRVa}
\end{figure}

\begin{table}[h]
	\begin{subtable}[c]{0.495\linewidth}
		\centering
		%\scalebox{0.70}{
			\begin{tabular}{ccccc}
		\hline
		\textbf{m}  & \textbf{Abs. error} & \textbf{Rate}  \\
		\hline 
		50	&7.24E-06	& - \\
        100	&1.82E-06 &	 1.99\\ 
        200	&4.56E-07 &	 2.00\\ 
        300	&2.03E-07 &	 2.00 \\
        900	& 2.19E-08&	 2.03\\
	\hline
		%\hline
	\end{tabular}%}
    \subcaption{Vasicek}
	\end{subtable}
	\begin{subtable}[c]{0.495\linewidth}
		\centering
		%\scalebox{0.70}{
			\begin{tabular}{ccccc}
		\hline
		\textbf{m}  & \textbf{Abs. error} & \textbf{Rate}  \\
        \hline 
        50	&4.57E-07& -\\	
        100	&3.90E-08&	 3.55\\ 
        200	&4.63E-09&	 3.08\\ 
        300	&1.55E-09&	 2.70\\ 
        900	&9.18E-10&	 0.48\\ 
		\hline
		%\hline
	\end{tabular}%}
    \subcaption{CIR}
	\end{subtable}
 
	\caption[Convergence Rate Approximation Zero-Coupon Bonds]{\small{Approximation of the convergence rate of the zero-coupon bond prices. Except for the number of grid points $m$, model and CTMC parameters are as listed in Table \ref{tblModelCTMCParamHomo}. Zero-coupon bond price parameters are $t=0$ and $T=4$.}}
	\label{tblConvRateZBpriceHomo}
\end{table}

We observe that the approximation achieves a high level of precision across all parameters, with an average calculation time of $0.009$ seconds, illustrating the speed of the methodology.

Figures \ref{figEfficiencyZBPriceDothan} demonstrate the efficiency of CTMC methodology in valuing zero-coupon bond prices when the short-rate process follows a geometric Brownian motion (the Dothan model). Thus, alternative methods such as binomial trees (``Tree'') or Monte Carlo simulation (``Sim'') can be easily implemented. We compare the performance of the CTMC approximation to the Cox--Ross--Rubinstein binomial tree (\cite{cox1979option}) and to Monte Carlo simulation. For Monte Carlo simulation, we use an exact scheme with a number of simulations ranging from 10,000 to 200,000 with the same number of antithetic variables and 500-time steps per year. The benchmark is calculated using the CTMC method with $m=5,000$.
Figure \ref{figEfficiencyZBPriceDothan} shows the high efficiency of the CTMC method compared to these other numerical techniques. Indeed, CTMC approximation clearly outperforms other methods in terms of both calculation time and precision. 

Finally, Figure \ref{figConvergenceZB_CIRVa} shows the convergence pattern of the approximated zero-coupon bond prices as the number of grid points $m$ increases, whereas Table \ref{tblConvRateZBpriceHomo} shows the convergence rate. For the Vasicek (resp. CIR) model, we note that the approximations achieve quadratic (resp. superquadratic) convergence on average. We also observe that the two models converge smoothly and rapidly to their analytical values. Moreover, since Assumption \ref{assumpStateSpaceR} is not satisfied under the Vacisek model, the results show that theoretical convergence is possible under less restrictive conditions for a certain set of parameters. Theoretical proof is left as future research.
%%%%%%%%%%%%%%%%%%%%%%%%%%%%%%%%%%%%%%%%%%%%%%%%%%%%%%%%%%%%%%%%%%%%%%%%%%%%%%%%%%%%%%%%%%%%%%%%%%%%%%%%%%%%%%%%%%%%%%%
\subsubsection{Approximation of the Zero-Coupon Bond Option Prices}
%%%%%%%%%%%%%%%%%%%%%%%%%%%%%%%%%%%%%%%%%%%%%%%%%%%%%%%%%%%%%%%%%%%%%%%%%%%%%%%%%%%%%%%%%%%%%%%%%%%%%%%%%%%%%%%%%%%%%%%
In this section, we study the accuracy and the numerical convergence of the zero-coupon bond option prices provided in \eqref{eqCallPutHomo}, under the Vasicek and CIR models. Under these two models, the price of zero-coupon bond options has a closed-form expression, which can be found in \cite{brigoMercurio2006} Section 3.2., and thus, can serve as a benchmark in our example. We test the accuracy of the approximated option prices for different levels of moneyness and volatilities. The results are summarized in Table \ref{tblAccuracyZBoptionHomo}. Column ``price-to-strike'' shows the price-to-strike ratio, calculated as the actual zero-coupon bond price over the option strike price $K>0$. 
\begin{table}[b]
	\begin{subtable}[c]{0.495\linewidth}
		\centering
		\scalebox{0.70}{
			\begin{tabular}{ccccc}
		\hline
		$\mathbf{\sigma}$&\textbf{price-to-strike} & \textbf{CTMC} & \textbf{Benchmark} & \textbf{Abs. error} \\
		\hline 
		\multirow{5}{*}{$\mathbf{0.1}$}& 1.67&	 0.38319569 &	 0.38319569 &	4.29E-09\\
        &1.25&	 0.22325433 &	 0.22325433 &	4.25E-09 \\
        &1.00	 &     0.06581737& 	 0.06581712 & 2.56E-07 \\
        &0.83&	 0.01014165 &	 0.01014179 &	1.40E-07\\
        &0.71&	 0.00000011 &	 0.00000011 &	6.79E-10\\

		\hline
		%\hline
	\end{tabular}}
		%\subcaption{Wasserstein metric}
	\end{subtable}
	\begin{subtable}[c]{0.495\linewidth}
		\centering
		\scalebox{0.70}{
			\begin{tabular}{ccccc}
		\hline
		$\mathbf{\sigma}$&\textbf{price-to-strike} & \textbf{CTMC} & \textbf{Benchmark} & \textbf{Abs. error} \\
		\hline 
		\multirow{5}{*}{$\mathbf{0.1}$} &1.67&	0.38326833& 	0.38326833& 	5.81E-10\\
                                        &1.25&	0.22191975& 	0.22191976& 	5.04E-10\\
                                        &1.00&	0.06057118& 	0.06057118& 	4.27E-10\\
                                        &0.95&	0.02020450& 	0.02020450& 	3.55E-09\\
                                        &0.92&	0.00000000& 	0.00000000& 	1.22E-11\\
		\hline
	\end{tabular}}
		%\subcaption{Kolmogorov-Smirnov statistic}
	\end{subtable}
 \begin{subtable}[c]{0.49\linewidth}
		\centering
		\scalebox{0.70}{
			\begin{tabular}{ccccc}
		\hline
		$\mathbf{\sigma}$&\textbf{price-to-strike} & \textbf{CTMC} & \textbf{Benchmark} & \textbf{Abs. error} \\
		\hline 
		\multirow{5}{*}{$\mathbf{0.2}$}&1.67&	 0.39232996& 	 0.39232997 &	1.76E-08\\
        &1.25&	 0.22455160& 	 0.22455166 &	5.84E-08	\\
        &1.00&	 0.07590525 &	 0.07590491 &	     3.39E-07	\\
        &0.83&	 0.01014165 &	 0.01014179 &	1.40E-07	\\
        &0.71&	 0.00053736 &	 0.00053749 &	1.33E-07	\\

		\hline
	\end{tabular}}
		%\subcaption{Kolmogorov-Smirnov statistic}
	\end{subtable}
 \begin{subtable}[c]{0.49\linewidth}
		\centering
		\scalebox{0.70}{
			\begin{tabular}{ccccc}
		\hline
		$\mathbf{\sigma}$&\textbf{price-to-strike} & \textbf{CTMC} & \textbf{Benchmark} & \textbf{Abs. error} \\
		\hline 
		\multirow{5}{*}{$\mathbf{0.2}$}&1.67&	0.38334989& 	0.38334989& 	3.29E-09\\
                                        &1.25&	0.22190374& 	0.22190374& 	2.86E-09\\
                                        &1.00&	0.06045761& 	0.06045761& 	2.43E-09\\
                                        &0.95&	0.02020450& 	0.02020450& 	3.55E-09\\
                                        &0.92&	0.00000690& 	0.00000690& 	5.39E-09\\

		\hline
	\end{tabular}}
		%\subcaption{Kolmogorov-Smirnov statistic}
	\end{subtable}
  \begin{subtable}[c]{0.49\linewidth}
		\centering
		\scalebox{0.70}{
			\begin{tabular}{ccccc}
		\hline
		$\mathbf{\sigma}$&\textbf{price-to-strike} & \textbf{CTMC} & \textbf{Benchmark} & \textbf{Abs. error} \\
		\hline 
		\multirow{5}{*}{$\mathbf{0.3}$}&1.67& 	 0.40772961& 0.40772968 &	7.30E-08\\
                                    &1.25& 	 0.22983971 	& 0.22983969 &	1.39E-08\\
                                    &1.00& 	 0.09108337 	 &0.09108360 &	2.27E-07\\
                                    &0.83& 	 0.01014165 	 &0.01014179 &	1.40E-07\\
                                    &0.71 &	 0.00465377 	 &0.00465387 &	1.06E-07\\
		\hline
	\end{tabular}}
		%\subcaption{Kolmogorov-Smirnov statistic}
	\end{subtable}
  \begin{subtable}[c]{0.49\linewidth}
		\centering
		\scalebox{0.70}{
			\begin{tabular}{ccccc}
		\hline
		$\mathbf{\sigma}$&\textbf{price-to-strike} & \textbf{CTMC} & \textbf{Benchmark} & \textbf{Abs. error} \\
		\hline 
		\multirow{5}{*}{$\mathbf{0.3}$}& 1.67 &	 0.38348301 &	 0.38348311 	&1.01E-07\\
                                      &  1.25& 	 0.22187654& 	 0.22187663 	&9.48E-08\\
                                      &  1.00& 	 0.06027987 &	 0.06028002 	&1.53E-07\\
                                       &  0.95& 	 0.02020450 &	 0.02020450 	&3.55E-09\\
                                        &0.92 &	 0.00006003 	& 0.00006008 	&5.13E-08\\
		\hline
	\end{tabular}}
		%\subcaption{Kolmogorov-Smirnov statistic}
	\end{subtable}
  \begin{subtable}[c]{0.49\linewidth}
		\centering
		\scalebox{0.70}{
			\begin{tabular}{ccccc}
		\hline
		$\mathbf{\sigma}$&\textbf{price-to-strike} & \textbf{CTMC} & \textbf{Benchmark} & \textbf{Abs. error} \\
		\hline 
		\multirow{5}{*}{$\mathbf{0.4}$}& 1.67 &	 0.43035317 &	 0.43036293 &	9.76E-06\\
                                        &1.25 &	 0.24315058 &	 0.24315678 &	6.20E-06\\
                                        &1.00 &	 0.10988334 &	 0.10988798 &	4.64E-06\\
                                        &0.83 &	 0.01014165 &	 0.01014179 &	1.40E-07\\
                                        &0.71 &	 0.01314694 &	 0.01315072 &	3.77E-06\\
		\hline
	\end{tabular}}
		\subcaption{Vasicek model}
	\end{subtable}
  \begin{subtable}[c]{0.49\linewidth}
		\centering
		\scalebox{0.70}{
			\begin{tabular}{ccccc}
		\hline
		$\mathbf{\sigma}$&\textbf{price-to-strike} & \textbf{CTMC} & \textbf{Benchmark} & \textbf{Abs. error} \\
		\hline 
		\multirow{5}{*}{$\mathbf{0.4}$}& 1.67 	&0.38366164 &	 0.38366831 &	6.68E-06\\
                                        &1.25& 	 0.22183726 &	 0.22184350 &	6.23E-06\\
                                        &1.00 &	 0.06011562 &	 0.06012477 &	9.15E-06\\
                                        &0.95& 	 0.02020450 &	 0.02020450 &	3.55E-09\\
                                        &0.92& 	 0.00011258 &	 0.00011395 &	1.37E-06\\
		\hline
	\end{tabular}}
		\subcaption{CIR model}
	\end{subtable}
	\caption[Accuracy of the zero-coupon bond call option approximation under Vasicek and CIR models]{\small{Accuracy of the zero-coupon bond call option approximation \eqref{eqCallPutHomo} under Vasicek and CIR models. Benchmark is calculated using closed-form analytical formulas. Except for the number of grid points set to $m=1,000$, model and CTMC parameters are as listed in Table \ref{tblModelCTMCParamHomo}. Zero-coupon bond call option parameters using the notation of Corollary \ref{corCallPutHomo}: $t_{n_1}=0$, $t_{n_2}=2$, and $T=4$.}}
	\label{tblAccuracyZBoptionHomo}
\end{table} 
We observe that the approximation achieves a high level of accuracy across all volatilities and strikes. 
The convergence of the approximated call prices to the analytical formulas is illustrated in Figure \ref{figConvergenceZBoption_CIRVa} for the two models, whereas the approximated convergence rates are shown in Table \ref{tblConvRateZBoptionHomo}.
\begin{figure}[t]
	\centering
	\begin{tabular}{cc}
		\includegraphics[scale=0.3]{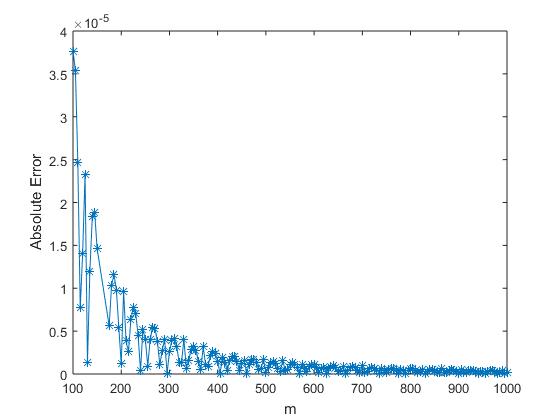} & \includegraphics[scale=0.3]{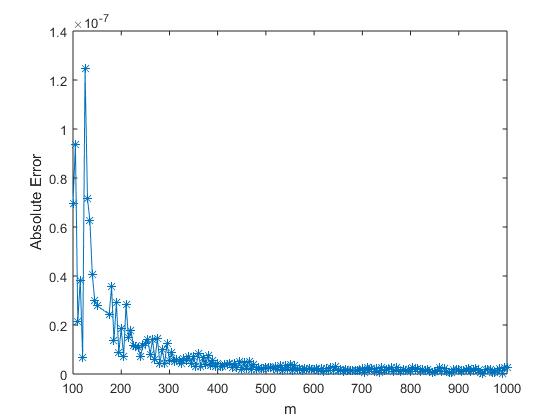}\\
  \small{\textsc{(a)} Vasicek} & \small{\textsc{(b)} CIR}  
	\end{tabular}
	\caption[Convergence pattern of the approximated zero-coupon bond call option prices under Vasicek and CIR models]{\small{Convergence pattern of the approximated zero-coupon bond call option prices as the number of grid points $m$ increases. Benchmark is calculated using closed-form analytical formulas. Except for the number of grid points $m$, model and CTMC parameters are as listed in Table \ref{tblModelCTMCParamHomo}. Zero-coupon bond call option parameters using the notation of Corollary \ref{corCallPutHomo}: $t_{n_1}=0$, $t_{n_2}=2$, $T=4$, $K=0.9$.}}\label{figConvergenceZBoption_CIRVa}
\end{figure}
\begin{table}[h]
	\begin{subtable}[c]{0.495\linewidth}
		\centering
		%\scalebox{0.70}{
			\begin{tabular}{ccccc}
		\hline
		\textbf{m}  & \textbf{Abs. error} & \textbf{Rate}  \\
		\hline 
		100&	3.77E-05&	-\\
        250&	4.05E-06&	2.43\\
        400&	1.49E-06&	2.12\\
        550&	1.07E-06&	1.06\\
        800&	5.90E-07&	1.58\\

	\hline
		%\hline
	\end{tabular}%}
    \subcaption{Vasicek}
	\end{subtable}
	\begin{subtable}[c]{0.495\linewidth}
		\centering
		%\scalebox{0.70}{
			\begin{tabular}{ccccc}
		\hline
		\textbf{m}  & \textbf{Abs. error} & \textbf{Rate}  \\
        \hline 
        100&	6.97E-08&	- \\
        250&	1.25E-08&	1.88\\
        400&	4.09E-09&	2.37\\
        550&	4.00E-09&	0.07\\
        800&	1.12E-09&	3.40\\
		\hline
		%\hline
	\end{tabular}%}
    \subcaption{CIR}
	\end{subtable}
 
	\caption[Convergence Rate Approximation Zero-Coupon Bond Options]{\small{Approximation of the convergence rate of the zero-coupon bond option prices. Benchmark is calculated using closed-form analytical formulas. Except for the number of grid points $m$, model and CTMC parameters are as listed in Table \ref{tblModelCTMCParamHomo}. Zero-coupon bond call option parameters using the notation of Corollary \ref{corCallPutHomo}: $t_{n_1}=0$, $t_{n_2}=2$, $T=4$, $K=0.9$.}}
	\label{tblConvRateZBoptionHomo}
\end{table} 
We note that the approximated call prices converge rapidly to their analytical values but exhibit a sawtooth pattern. As mentioned in Section \ref{sectionNumExpBond}, such oscillatory behavior has been observed in other research (see, for instance, \cite{zhang2019analysis}). However, the technique proposed by the authors to remove oscillation and improve convergence is not directly applicable in the present context. Further investigation into how grid design can improve convergence is left for future research.
%%%%%%%%%%%%%%%%%%%%%%%%%%%%%%%%%%%%%%%%%%%%%%%%%%%%%%%%%%%%%%%%%%%%%%%%%%%%%%%%%%%%%%%%%%%%%%%%%%%%%%%%%%%%%%%%%%%%%%%
\subsubsection{Approximation of Callable/Putable Bond Prices}\label{subsectCallableDebtHomo}
%%%%%%%%%%%%%%%%%%%%%%%%%%%%%%%%%%%%%%%%%%%%%%%%%%%%%%%%%%%%%%%%%%%%%%%%%%%%%%%%%%%%%%%%%%%%%%%%%%%%%%%%%%%%%%%%%%%%%%%
We now examine the accuracy and the convergence of Proposition \ref{propCallableputableDebt} in approximating callable/putable bonds under the Vasicek model. Accordingly, we consider a coupon-bearing bond with semi-annual coupons that mature in $4$ years $T=4$. The coupon rate, denoted below by $\alpha$, is set to $4\%$ per annum compounded semi-annually. The notional of the debt is set to $F=100$, and we assume that it can be called at any time between the second and the fourth year for no additional cost, that is, $K^c_t=100$ for $2\leq t\leq T$ and we let $K_t^c\rightarrow\infty$ when $t<2$ (since exercise is not allowed). Moreover, as there is no put feature, $K^p:=K^p_t=0$ for $0\leq t\leq T$. Finally, we assume that accrued interest is paid to the bondholder upon redemption. The contract specifications are summarized in Table \ref{tblCallableBondParam}. 
 \begin{table}[h]
	\begin{tabular}{cccccc}
		\hline
		$F$  &  $\alpha$  & $T$ & $K_t^c$& $K^p$    \\
           $100$ & $0.04$ & $4$ & $100$ & $0$  \\
		\hline
	\end{tabular}
	\caption{Callable bond contract specifications\tablefootnote{$K_t^c$=100 for $2\leq t\leq T$ and $K_t^c$ is set to a large constant when $t<2$.}}
	\label{tblCallableBondParam}
\end{table} 

Proposition \ref{propCallableputableDebt} is also used to calculate the value of the straight bond (i.e., the value of the coupon-bearing bond with no optionality, when $K^c_t\rightarrow \infty$ and $K^p_t=0$ for all $t\in[0,T]$). 
The results are summarized in Table \ref{tblAccuracyStraightBond}, whereas those for the callable bond are outlined in Table \ref{tblAccuracyCallableBondOnly}. The value of the optionality is obtained from the difference between the value of the callable and the straight bonds. 
The call option has a negative value because it is in favor of the issuer and, thus, reduces the value of the bond. 
For the straight debt, a benchmark can be obtained using a closed-form analytical formula since it can be decomposed in a series of zero-coupon bonds (see Remark \ref{rmkCouponBearingBond}). For the callable debt, the benchmark is calculated using CTMC approximation with $m=2,000$. 
\begin{table}[h]
	\begin{subtable}[c]{0.495\linewidth}
		\centering
		\scalebox{0.90}{
			\begin{tabular}{cccc}
		\hline
		$\mathbf{\kappa}$ & \textbf{CTMC} & \textbf{Benchmark} & \textbf{Rel. error} \\
		\hline 
		\textbf{0.5}&	111.5779419&	111.5781246&	1.64E-06\\
        \textbf{1}&	    104.6007798&	104.6008544&	7.13E-07\\
        \textbf{2}&	    101.3606496&	101.3606597&	9.94E-08\\
        \textbf{3}&	    100.5741608&	100.5741632&	2.42E-08\\
        \textbf{4}&	    100.2732264&	100.2732272&	8.40E-09\\

		\hline
	\end{tabular}}
    \end{subtable}
	\begin{subtable}[c]{0.495\linewidth}
		\centering
		\scalebox{0.90}{
			\begin{tabular}{cccc}
		\hline
		$\mathbf{\kappa}$ & \textbf{CTMC} & \textbf{Benchmark} & \textbf{Rel. error} \\
		\hline 
		 \textbf{0.5}&94.4289102& 	 94.4293068 &	4.20E-06\\
          \textbf{1}&  95.5617290& 	 95.5616095 &	1.25E-06\\
          \textbf{2}&	96.9174873 &	 96.9173130 &	1.80E-06\\
          \textbf{3}&	97.6890278 &	 97.6890198 &	8.14E-08\\
          \textbf{4}&	98.1735503&	98.1735203	& 3.06E-07\\

		\hline
	\end{tabular}}
	\end{subtable}
 \begin{subtable}[c]{0.49\linewidth}
		\centering
		\scalebox{0.90}{
			\begin{tabular}{cccc}
		\hline
		$\mathbf{R_0}$ & \textbf{CTMC} & \textbf{Benchmark} & \textbf{Rel. error} \\
		\hline 
		\textbf{0.02}&	106.6205602&	106.6213440&	7.35E-06\\
        \textbf{0.03}&	105.6061649&	105.6062188&	5.11E-07\\
        \textbf{0.04}&	104.6007798&	104.6008544&	7.13E-07\\
        \textbf{0.05}&	103.6050710&	103.6051563&	8.23E-07\\
		\hline
	\end{tabular}}
	\end{subtable}
 \begin{subtable}[c]{0.49\linewidth}
		\centering
		\scalebox{0.90}{
			\begin{tabular}{cccc}
		\hline
		$\mathbf{R_0}$ & \textbf{CTMC} & \textbf{Benchmark} & \textbf{Rel. error} \\
		\hline 
        \textbf{0.02}&97.3030355& 	 97.3030555& 	2.05E-07\\
        \textbf{0.03}&96.4287924& 	 96.4287773& 	1.57E-07\\
        \textbf{0.04}&95.5617290& 	 95.5616095& 	1.25E-06\\
        \textbf{0.05}&94.7018444&	 94.7020397& 	2.06E-06\\
		\hline
	\end{tabular}}
	\end{subtable}
  \begin{subtable}[c]{0.49\linewidth}
		\centering
		\scalebox{0.90}{
			\begin{tabular}{cccc}
		\hline
		$\mathbf{\theta}$ & \textbf{CTMC} & \textbf{Benchmark} & \textbf{Rel. error} \\
		\hline 
        \textbf{0.01}&	113.7506391&	113.7506574&	1.61E-07\\
        \textbf{0.02}&	110.6102083&	110.6102465&	3.46E-07\\
        \textbf{0.03}&	107.5611515&	107.5612085&	5.30E-07\\
        \textbf{0.04}&	104.6007798&	104.6008544&	7.13E-07\\
		\hline
	\end{tabular}}
	\end{subtable}
  \begin{subtable}[c]{0.49\linewidth}
		\centering
		\scalebox{0.90}{
			\begin{tabular}{cccc}
		\hline
		$\mathbf{\theta}$ & \textbf{CTMC} & \textbf{Benchmark} & \textbf{Rel. error} \\
		\hline 
      \textbf{0.01}&100.5832613& 	 100.5844116 &	1.14E-05\\
      \textbf{0.02}&98.9373817 &	 98.9375213 &	1.41E-06\\
      \textbf{0.03}&97.2623005 &	 97.2631690 &	8.93E-06\\
      \textbf{0.04}&95.5617290 &	 95.5616095 &	1.25E-06\\
		\hline
	\end{tabular}}
	\end{subtable}
  \begin{subtable}[c]{0.49\linewidth}
		\centering
		\scalebox{0.90}{
			\begin{tabular}{cccc}
		\hline
		$\mathbf{\sigma}$ & \textbf{CTMC} & \textbf{Benchmark} & \textbf{Rel. error} \\
		\hline 
        \textbf{0.1}&	101.0176528&	101.0176706&	1.76E-07\\
        \textbf{0.2}&	104.6007798&	104.6008544&	7.13E-07\\
        \textbf{0.3}&	110.8775749&	110.8777598&	1.67E-06\\
       \textbf{0.4}&	120.3458095	& 120.3474922&	1.40E-05\\
		\hline
	\end{tabular}}
		\subcaption{Straight bond}\label{tblAccuracyStraightBond}
	\end{subtable}
  \begin{subtable}[c]{0.49\linewidth}
		\centering
		\scalebox{0.90}{
			\begin{tabular}{cccc}
		\hline
		$\mathbf{\sigma}$ & \textbf{CTMC} & \textbf{Benchmark} & \textbf{Rel. error} \\
		\hline 
        \textbf{0.1}&97.0423712& 	 97.0419581& 	4.26E-06\\
        \textbf{0.2}&95.5617290& 	 95.5616095& 	1.25E-06\\
        \textbf{0.3}&95.3070805& 	 95.3073647& 	2.98E-06\\
        \textbf{0.4}&96.1965169& 	 96.1965745& 	6.00E-07\\
		\hline
	\end{tabular}}
		\subcaption{Callable bond}\label{tblAccuracyCallableBondOnly}
	\end{subtable}
	\caption[Accuracy of the CTMC method to approximate the values of straight and callable bonds under Vasicek model]{\small{Accuracy of Proposition \ref{propCallableputableDebt} to approximate the values of straight and callable bonds under Vasicek model. Model and CTMC parameters are as listed in Table \ref{tblModelCTMCParamHomo}. Contract specifications are as listed in Table \ref{tblCallableBondParam}, with the call option exercise window starting from $t=2$ to maturity.}}
	\label{tblAccuracyCallableDebtHomo}
\end{table} 
\begin{figure}[h]
	\centering
	\begin{tabular}{cc}
		\includegraphics[scale=0.3]{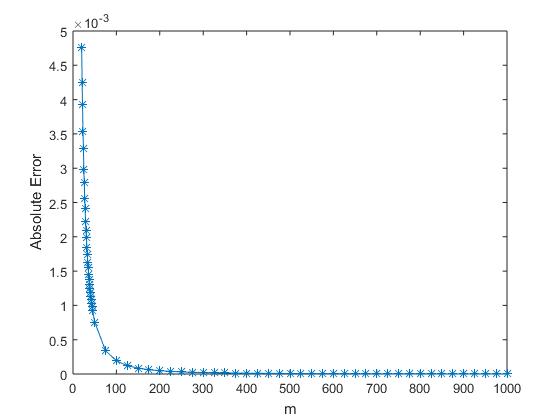} & \includegraphics[scale=0.3]{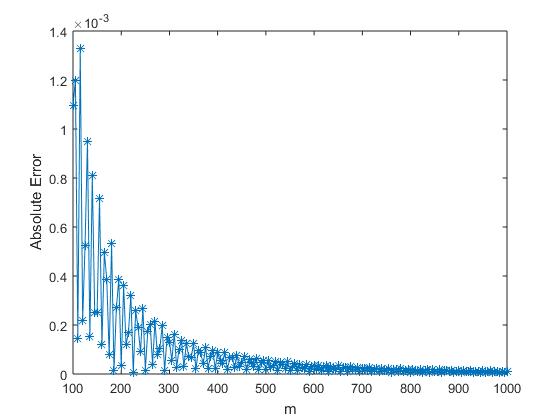}\\
  \small{\textsc{(a)} Straight bond} & \small{\textsc{(b)} Callable bond} 
	\end{tabular}
	\caption[Convergence pattern of the approximated prices of straight and callable bonds under Vasicel model]{\small{Convergence pattern of the approximated prices of straight and callable bonds as the number of grid points $m$ increases under the Vasicek model. Except for the number of grid points $m$, model and CTMC parameters are as listed in Table \ref{tblModelCTMCParamHomo}. Contract specifications are as listed in Table \ref{tblCallableBondParam}, with the call option exercise window starting from $t=2$ to maturity.}}\label{figConvergenceCallableDebt_Vasicek}
\end{figure}
\begin{table}[h]
	\begin{subtable}[c]{0.495\linewidth}
		\centering
		%\scalebox{0.70}{
			\begin{tabular}{ccccc}
		\hline
		\textbf{m}  & \textbf{Abs. error} & \textbf{Rate}  \\
		\hline 
		100&	1.91E-04&	- \\
        125&	1.22E-04&	2.02\\
        150&	8.47E-05&	1.99\\
        175&	6.23E-05&	1.99\\
        200&	4.78E-05&	1.99\\
	\hline
		%\hline
	\end{tabular}%}
    \subcaption{Straight bond}
	\end{subtable}
	\begin{subtable}[c]{0.495\linewidth}
		\centering
		%\scalebox{0.70}{
			\begin{tabular}{ccccc}
		\hline
		\textbf{m}  & \textbf{Abs. error} & \textbf{Rate}  \\
        \hline 
        100&	1.10E-03&	- \\
        125&	5.25E-04&	3.29\\
        150&	2.49E-04&	4.09\\
        175&	7.96E-05&	7.41\\
        200&	3.57E-05&	6.00\\
		\hline
		%\hline
	\end{tabular}%}
    \subcaption{Callable bond}
	\end{subtable}
 
	\caption[Convergence Rate Approximation Callable Bonds]{\small{Approximation of the convergence rate of straight and callable bond prices under Vasicek model. Except for the number of grid points $m$, model and CTMC parameters are as listed in Table \ref{tblModelCTMCParamHomo}. Contract specifications are as listed in Table \ref{tblCallableBondParam}, with the call option exercise window starting from $t=2$ to maturity.}}
	\label{tblConvRateCallableBondHomo}
\end{table} 
Again, we note that the approximation achieves a high level of accuracy in a fraction of a second across all model parameters. The average calculation time is 0.02 seconds. 

The convergence pattern of straight and callable bonds is displayed in Figure \ref{figConvergenceCallableDebt_Vasicek}, and the approximated convergence rates are shown in Table \ref{tblConvRateCallableBondHomo}. For the straight bond (resp. callable bond), we note that the approximations achieve quadratic (resp. superquadratic) convergence on average. We also observe that the straight bond converges smoothly to the analytical price. However, the callable debt exhibits a sawtooth pattern. For the two securities, the absolute error decreases rapidly to $0$.
%Such differences in the convergence behaviors of the two securities suggest that how strikes are incorporated into the grid design can impact the convergence pattern (see \cite{zhang2019analysis}, Sections 4.6 and 4.7, for details). However, for callable debt, the technique proposed by \cite{zhang2019analysis} to remove oscillatory behavior is not directly applicable. Further investigation into how grid designs can improve convergence is left for future research.

\subsubsection{Approximation of Convertible Bond Prices}\label{subsectionNumExConvBond}
We now investigate the accuracy of Algorithm \ref{algoCBpriceTF_CTMCfast} in approximating CB prices under the Black--Scholes--Vasicek model, along with the numerical convergence of the price estimates. 

That is, we suppose that the stock price process follows a geometric Brownian motion with stochastic interest rates satisfying
 \begin{equation}
        \begin{aligned}
         \diff S_t &=(R_t-q_t) S_t\diff t+\tilde{\sigma}_S S_t \diff W^{(1)}_t,\\
    	   \diff R_t &=\kappa(\theta-R_t)\diff t+\tilde{\sigma}_R\diff W^{(2)}_t,\label{eqEDS_BSvasicek}
         \end{aligned}
	\end{equation}
  with $\kappa$, $\theta$, $\tilde{\sigma}_S$, $\tilde{\sigma}_R$>0, and $[W^{(1)},W^{(2)}]_t=\rho t$, $\rho\in[-1,1]$. 
  
  From Lemma \ref{lemmaStoX}, we find that $f(r)=\frac{\tilde{\sigma}_S}{\tilde{\sigma}_R}r$. The dynamics of the auxiliary process ${X_t=\ln(S_t)-\rho f(R_t)}$ can then be derived as
\begin{equation}
    \begin{split}\label{eqEDS_XBSvasicek}
        \diff X_t & =\mu_X(t,R_t)\diff t+\sigma_X(R_t)\diff W^\star_t\\
        \diff R_t & = \kappa(\theta-R_t)\diff t+\tilde{\sigma}_R\diff W^{(2)}_t,
    \end{split}
\end{equation}
with $\mu_X(t,R_t)=R_t -q_t-\frac{\tilde{\sigma}_S^2}{2}-\rho\frac{\tilde{\sigma}_S}{\tilde{\sigma}_R}\kappa(\theta-R_t)$, $\sigma_X=\tilde{\sigma}_S\sqrt{1-\rho^2}$, and $X_0=\ln(S_0)-\rho f(R_0)$.

  Unless stated otherwise, the model parameters for the short-rate process are the same as those used in previous examples, reported in Table \ref{tblModelCTMCParamHomo} under Vasicek model. We suppose further that $\tilde{\sigma}_S=0.2$, $q_t=0.02$ for all $t\in[0,T]$, and $\rho=-0.2$. The model parameters are summarized in Table \ref{tblModelParamBSvasicek}.
  \begin{table}[h!]
	\begin{tabular}{ccccccccc}
		\hline
		Model & ${R_0}$ & ${\kappa}$ & ${\theta}$ & ${\tilde{\sigma}_R}$ & $S_0$ & $q_t$ & $\tilde{\sigma}_S$ & $\rho$\\
		\hline 
		Black--Scholes--Vasicek  & $0.04$ & $1$ & $0.04$ & $0.20$  &$100$& $0.02$& $0.2$ & $-0.2$\\
		\hline
		%\hline
	\end{tabular}
	\caption[Model parameters for Black--Scholes--Vasicek model]{Model parameters}
	\label{tblModelParamBSvasicek}
\end{table} 
  
  The grid used to approximate the short-rate process, $\mathcal{S}^{(m)}_R=\{r_1,r_2,\ldots,r_m\}$, and the auxiliary process $\mathcal{S}^{(M)}_X=\{x_1,x_2,\ldots,x_M\}$, are constructed using the methodology of Tavella and Randall (\cite{tavellapricing}, Chapter 5), as explained in Section \ref{sectNumResults}, with $\tilde{\alpha}_R$ (resp.  $\tilde{\alpha}_X$) representing the non-uniformity parameter of the grid for $R^{(m)}$ (resp. $X^{(m)}$). Unless otherwise indicated, all numerical experiments are conducted using the CTMC parameters listed in Table \ref{tblCTMCParamConvBondHomo}.

  \begin{table}[h]
	\begin{tabular}{cccccccccc}
		\hline
		  & ${m}$  & $M$ & ${r_1}$ & ${r_m}$ & $\tilde{\alpha}_R$& ${x_1}$ & ${x_M}$ & $\tilde{\alpha}_X$& $\Delta_N$ \\
       \hline
		Black--Scholes--Vasicek  &$160$ & $100$ & $-30 R_0$& $25 R_0$ & $0.5$ & $0.64 X_0$ & $1.42 X_0$& $2$ &$1/100$\\
		\hline
	\end{tabular}
	\caption{CTMC parameters}
	\label{tblCTMCParamConvBondHomo}
\end{table}
 \begin{table}[h]
	\begin{tabular}{ccccc}
		\hline
		$F$  &  $\alpha$  & $T$ & $\eta$    \\
           100 & 0.05 & 1 & 1 \\
		\hline
	\end{tabular}
	\caption{CB contract specifications}
	\label{tblCBparam}
\end{table} 
The contract specifications are summarized in Table \ref{tblCBparam}. We consider a convertible bond that pays semi-annual coupons at an annual rate of $\alpha=0.05$ with a notional $F=100$. We suppose that the bond can be converted at any time from inception to maturity ($T=1$) at a conversion rate $\eta=1$.

Under this set of parameters and when both dividend yield and credit spread are assumed to be nil ($q_t=c_t=0$ for all $t\in[0,T]$), the valuation of American-style CBs is simplified to that of European-style CBs, see Corollary \ref{corCBtrivialTFcoupon}. The results of Proposition \ref{prop:ExactFormulaVanillaCB_TF}\footnote{The expected present value of future coupons should be added to the formula obtained in  Proposition \ref{prop:ExactFormulaVanillaCB_TF}.}, available online as supplemental material, can thus serve as a benchmark in our analysis. When $q_t,c_t>0$ for some $t\in [0,T]$, the benchmark is calculated using CTMC approximation with $M=160$ and $\Delta_N=1/252$, all other CTMC parameters are as listed in Table \ref{tblCTMCParamConvBondHomo}. %the CTMC approximation is compared with the least-squares Monte Carlo (``LSMC'') simulation method of \cite{longstaff2001valuing}\footnote{For the LSMC method, we use $50,000$ simulation paths with $252$-time steps and the first three Laguerre polynomials and a constant as basis functions.}. The precision of the estimation at a confidence level of 95\% is provided under the LSMC approach. 
The results are summarized in Table \ref{tblAccuracyCBHomo}. 
\begin{table}[h]
	\begin{subtable}[c]{0.495\linewidth}
		\centering
		\scalebox{0.90}{
			\begin{tabular}{cccc}
		\hline
		$\mathbf{S_0}$ & \textbf{CTMC} & \textbf{Benchmark} & \textbf{Rel. error} \\
		\hline 
		  \textbf{90}&   105.99171 &	 105.99224 &	4.97E-06\\
           \textbf{95}	&   108.28100 &	 108.28568 	& 4.33E-05\\
           \textbf{100}&   111.08883 &	 111.09580 	& 6.27E-05\\
           \textbf{105}&  114.37154 &	 114.37855 	&6.12E-05\\
           \textbf{110}&  118.06513 &	 118.07046 	&4.51E-05\\

		\hline
		%\hline
	\end{tabular}}
	\end{subtable}
	\begin{subtable}[c]{0.495\linewidth}
		\centering
		\scalebox{0.90}{
			\begin{tabular}{cccc}
		\hline
		$\mathbf{S_0}$ & \textbf{CTMC} & \textbf{Benchmark} & \textbf{Rel. error}  \\
		\hline 
		\textbf{90}&    101.80110 	& 101.80830 	&7.07E-05\\
         \textbf{95}&   104.31639 	& 104.32189 	&5.27E-05\\
         \textbf{100}&   107.35768 	& 107.35983 	&2.00E-05\\
          \textbf{105}&  110.84269 	& 110.84438 	&1.53E-05\\
          \textbf{110}&  114.70970 	& 114.70773 	&1.71E-05\\
		\hline
	\end{tabular}}
	\end{subtable}
 \begin{subtable}[c]{0.495\linewidth}
		\centering
		\scalebox{0.90}{
			\begin{tabular}{cccc}
		\hline
		$\mathbf{\sigma_S}$ & \textbf{CTMC} & \textbf{Benchmark} & \textbf{Rel. error} \\
		\hline 
		\textbf{0.10} &107.85312 	& 107.88135 &	2.62E-04\\
        \textbf{0.15}&  109.38108 	& 109.39318 &	1.11E-04\\
        \textbf{0.20}&111.08883 	 &111.09580 	&6.27E-05\\
        \textbf{0.30}&114.71935 	 &114.72313 	&3.30E-05\\
        \textbf{0.40}&118.44863 	 &118.45114 	&2.12E-05\\
		\hline
	\end{tabular}}
	\end{subtable}
 \begin{subtable}[c]{0.49\linewidth}
		\centering
		\scalebox{0.90}{
			\begin{tabular}{cccc}
		\hline
		$\mathbf{\sigma_S}$ & \textbf{CTMC} & \textbf{Benchmark} & \textbf{Rel. error}  \\
		\hline 
        \textbf{0.10} &104.28246 &	 104.29241& 	9.54E-05\\
        \textbf{0.15} &105.72587 &	 105.73050& 	4.38E-05\\
        \textbf{0.20} &107.35768 &	 107.35983 &	2.00E-05\\
        \textbf{0.30} &110.84483 &	 110.84572 &	7.97E-06\\
        \textbf{0.40} &114.43689 &	 114.43768 &	6.86E-06\\
		\hline
	\end{tabular}}
	\end{subtable}
  \begin{subtable}[c]{0.49\linewidth}
		\centering
		\scalebox{0.90}{
			\begin{tabular}{cccc}
		\hline
		$\mathbf{\rho}$ & \textbf{CTMC} & \textbf{Benchmark} & \textbf{Rel. error} \\
		\hline 
        \textbf{-0.3}&110.80352 & 	 110.81156 	&7.25E-05\\
        \textbf{-0.2}&111.08883 &	 111.09580 	&6.27E-05\\
        %\textbf{0.0}&111.61369 	& 111.63569 	&1.97E-04\\
        \textbf{0.2}&112.14557 	& 112.14307 	&2.22E-05\\
        \textbf{0.3}&112.39226 	& 112.38624 	&5.36E-05\\
		\hline
	\end{tabular}}
		\subcaption{$q_t=c_t= 0$ for all $t\in[0,T]$}
	\end{subtable}
  \begin{subtable}[c]{0.49\linewidth}
		\centering
		\scalebox{0.90}{
			\begin{tabular}{cccc}
		\hline
		$\mathbf{\rho}$ & \textbf{CTMC}  & \textbf{Benchmark} & \textbf{Rel. error} \\
		\hline 
         \textbf{-0.3}&107.08461 & 	 107.08698 	&2.22E-05\\
         \textbf{-0.2}&107.35768& 	 107.35983 	&2.00E-05\\
        %\textbf{0.0}&107.69923 	& 107.96478 	&2.46E-03\\
        \textbf{0.2}&108.37107 	& 108.36868 	&2.21E-05\\
        \textbf{-.3}&108.60805 	& 108.59625 	&1.09E-04\\
		\hline
	\end{tabular}}
		\subcaption{$q_t=0.02$, $c_t= 0.05$ for all $t\in[0,T]$}
	\end{subtable}
	\caption[Accuracy of the CB price approximations under Black--Scholes--Vasicek model]{\small{Accuracy of the CB price approximations, Algorithm \ref{algoCBpriceTF_CTMCfast}, under Black--Scholes--Vasicek model. Model, CTMC, and contract parameters are as listed in Tables \ref{tblModelParamBSvasicek} and \ref{tblCTMCParamConvBondHomo} and \ref{tblCBparam}, respectively.}}\label{tblAccuracyCBHomo}
\end{table} 
We note that the model achieves a high level of accuracy across all model parameters. 
The average calculation time for the CTMC approximated prices is less than 1.70 seconds. 
\begin{figure}[t]
	\centering
	\begin{tabular}{cc}
		\includegraphics[scale=0.3]{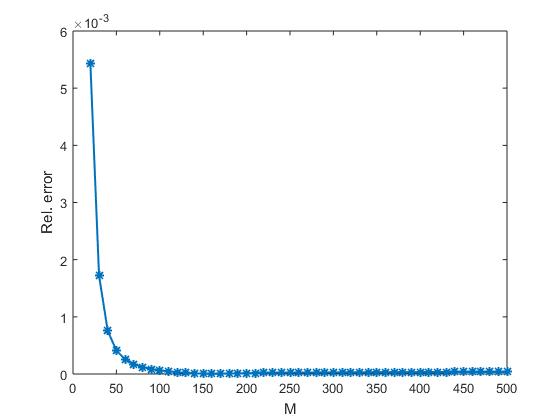} & \includegraphics[scale=0.3]{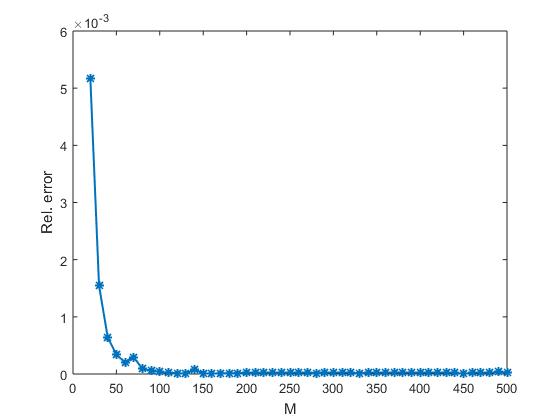}\\
  \small{\textsc{(a)} $q_t=c_t=0$ for all $t\in[0,T]$} & \small{\textsc{(b)} $q_t=0.02,c_t=0.05$ for all $t\in[0,T]$}  
	\end{tabular}
	\caption{\small{Convergence pattern of the CB price approximations, Algorithm \ref{algoCBpriceTF_CTMCfast}, under Black--Scholes--Vasicek model. Except for the number of grid points $M$ of the auxiliary process, the model, CTMC, and contract parameters are as listed in Tables \ref{tblModelParamBSvasicek}, \ref{tblCTMCParamConvBondHomo}, and \ref{tblCBparam}, respectively.}}\label{figConvergenceCB_BSvasicek}
\end{figure}
\begin{table}[h]
	\begin{subtable}[c]{0.495\linewidth}
		\centering
		%\scalebox{0.70}{
			\begin{tabular}{ccccc}
		\hline
		\textbf{m}  & \textbf{Rel. error} & \textbf{Rate}  \\
		\hline 
		20&	5.44E-03 &-	 \\
        50&	4.11E-04&	2.819 \\
        100&	6.27E-05&	2.711 \\
        120&	3.12E-05&	3.838 \\
        150&	5.39E-06&	7.861 \\
	\hline
		%\hline
	\end{tabular}%}
    \subcaption{$q_t=c_t=0$ for all $t\in[0,T]$}
	\end{subtable}
	\begin{subtable}[c]{0.495\linewidth}
		\centering
		%\scalebox{0.70}{
			\begin{tabular}{ccccc}
		\hline
		\textbf{m}  & \textbf{Rel. error} & \textbf{Rate}  \\
        \hline 
        20	&5.16E-03& - 	\\
        50	&3.36E-04&	2.982\\
        100	&3.56E-05&	3.240\\
        120	&1.46E-05&	4.895\\
        150	&8.16E-06&	2.598\\
		\hline
		%\hline
	\end{tabular}%}
    \subcaption{$q_t=0.02,c_t=0.05$ for all $t\in[0,T]$}
	\end{subtable}
 
	\caption[Convergence Rate Approximation CBs]{\small{Approximation of the convergence rate of CB prices, Algorithm \ref{algoCBpriceTF_CTMCfast}, under Black--Scholes--Vasicek model. Except for the number of grid points $M$ of the auxiliary process, the model, CTMC, and contract parameters are as listed in Tables \ref{tblModelParamBSvasicek}, \ref{tblCTMCParamConvBondHomo}, and \ref{tblCBparam}, respectively.}}
	\label{tblConvRateCBHomo}
\end{table} 

The convergence patterns of the approximation as $M$ increases are illustrated in Figure \ref{figConvergenceCB_BSvasicek}, whereas the approximated convergence rates are shown in Table \ref{tblConvRateCBHomo}. When both credit spread and dividend yield are set to nil ($q_t=c_t=0$ for all $t\in[0,T]$), the benchmark is calculated using the exact pricing formula of Proposition \ref{prop:ExactFormulaVanillaCB_TF}, available online as supplemental material. When credit risk and dividend yield are considered ($q_t,c_t>0$ for some $t\in[0,T]$), the benchmark is obtained using CTMC approximation with $M=1,000$ and $\Delta_N=1/252$, all other CTMC parameters from table \ref{tblCTMCParamConvBondHomo}.
Figure \ref{figConvergenceCB_BSvasicek} shows that the approximated prices converge rapidly and smoothly to the benchmark prices. 
%\bibliographystyle{abbrvnat}
%\bibliography{myBib}
%We also note that the approximations achieve \textcolor{red}{linear} convergence on average.
\end{document}